\newcommand{\DD}{\mathcal{D}}
\newcommand{\tuI}{\tilde{u}_I}
\newcommand{\Par}{\lambda}
\newcommand{\ce}{\Upsilon_{\ep}}
\newcommand{\p}{\partial}
\newcommand{\ep}{\epsilon}
\newcommand{\rs}{r_{*}}
\newcommand{\duh}{\delta \check{u}_H}
\newcommand{\uhs}{ u_H^{*}}
\newcommand{\rd}{\partial}
\newcommand{\Ai}{\text{Ai}}
\newcommand{\Bi}{\text{Bi}}
\newtheorem{theo}{Theorem}
\theoremstyle{plain}
\newtheorem{thm}{Theorem}[section]
\newtheorem{lemma}[thm]{Lemma}
\newtheorem{prop}[thm]{Proposition}
\newtheorem{cor}[thm]{Corollary}
\theoremstyle{remark}
\newtheorem{rmk}{Remark}[section]
\theoremstyle{definition}
\newcommand{\RR}{\mathbb{R}}
\newcommand{\ls}{\lesssim}
\newcommand{\nofa}{d}
\theoremstyle{plain}
\theoremstyle{remark}
\theoremstyle{definition}
\numberwithin{equation}{section}
\begin{document}
	\date{}
	\title{The asymptotics of massive fields on stationary\\ spherically symmetric black holes  for all angular momenta} 
	
	\author[1]{Federico Pasqualotto}
	\author[2]{Yakov Shlapentokh-Rothman}
	\author[3]{Maxime Van de Moortel}

	\affil[1]{\small Department of Mathematics, University of California, Berkeley,  970 Evans Hall, Berkeley, CA, USA}
	\affil[2]{\small  Department of Mathematics, University of Toronto,  40 St.~George~Street, Toronto, ON, Canada} 
	\affil[2]{\small Department of Mathematical and Computational Sciences, 
		University of Toronto Mississauga, 3359 Mississauga Road, Mississauga, ON, Canada}
	\affil[3]{\small Department of Mathematics, Rutgers University, Hill Center-Busch Campus, 110 Frelinghuysen Road, Piscataway, NJ, USA }
	
	\maketitle

	\abstract{
		We study the  massive scalar field equation $\Box_g \phi = m^2 \phi$ on a  stationary and spherically symmetric black hole $g$ (including in particular the Schwarzschild and Reissner--Nordstr\"om   black holes in the full sub-extremal range) for solutions $\phi$ projected on a  fixed spherical harmonic. Our problem involves the scattering of an attractive  long-range potential (Coulomb-like)  and thus cannot be treated perturbatively. 
		
		We prove precise (point-wise) asymptotic tails of the form $t^{-\frac{5}{6}} f(t)+ O(t^{-1+\delta})$, where $f(t)$ is an explicit oscillating profile. Our asymptotics appear to be the first rigorous decay result  for a massive scalar field on a black hole. Establishing these asymptotics  is also  an important step in retrieving the assumptions used in work of the third author regarding the interior of dynamical black holes and Strong Cosmic Censorship.

	}

	\section{Introduction}
	We study the asymptotic properties for large times of the massive Klein--Gordon equation  	\begin{equation}\label{KG.intro} \begin{split}
			& \Box_g \phi= m^2 \phi,\\ &
			g= -\Big(1-\frac{2M}{r}+\frac{e^2}{r^2}+O(r^{-3})\Big) dt^2 +  \Big(1-\frac{2M}{r}+\frac{e^2}{r^2}+O(r^{-3})\Big)^{-1} r^2 + r^2 \left( d\theta^2+ \sin^2(\theta) d\varphi^2\right).
		\end{split}
	\end{equation} where $g$ is the general expression of a spherically symmetric and stationary black hole, which includes in particular the  Schwarzschild  ($e=0$) and the subextremal ($0<|e|<M)$ Reissner--Nordstr\"om black hole.

	The study of the asymptotic properties of matter fields\footnote{This subject is also naturally related to the problem of black hole stability, with matter fields or in vacuum. We refer to \cite{Andras,MihalisStabExt,Wang,KS.polarized,SchwarzschildStab,Blue,HHV,Giorgi.RN,SRTdC2020boundedness,SRTdC2023boundedness2,millet,klainerman2021kerr,GKS} for the most up-to-date mathematical works.} on black holes is a classical subject, which can be traced back to the work of Price \cite{Pricepaper} in the 1970s on the decay of massless scalar fields, i.e.\ \eqref{KG.intro} with $m^2=0$  (see also \cite{KonoplyaZhidenko} for a more extended review of the physics literature). 
	In the massless case $m^2=0$, the numerics of Price \cite{Pricepaper} predicted a late-time tail with inverse-power-law decay of the form $t^{-3}$ at large times $t$ (Price's law). Price's law was later proven by Dafermos--Rodnianski \cite{PriceLaw} (for spherically symmetric Einstein-scalar-field equations), Donninger--Schlag--Soffer \cite{Schlag2,Schlag1} (on a Schwarzschild black hole), Tataru \cite{Tataru} (on a Kerr black hole) and later with Metcalfe and Tohaneanu \cite{Tataru2} (on certain non-stationary black holes), Hintz \cite{Hintz} (precise late-time tails on a Schwarzschild/Reissner--Nordstr\"om black hole for the solution projected to $L\geq L_0$ angular modes  for any non-negative integer $L_0$, on a Kerr black hole for the whole solution ), and Angelopoulos--Aretakis--Gajic \cite{AAG1,AAG2,AAG3} (precise late-time tails on a Schwarzschild/Reissner--Nordstr\"om black hole  for the solution projected to  $L\geq L_0$ angular modes  for any non-negative integer $L_0$,  on a Kerr black hole for the solution projected to $L=0,\ L=1, or\ L\geq 2$ angular modes).  We also mention \cite{Ma1,Ma3,Ma4,Ma5} for the analogue of Price's law for wave equations with non-zero spins. 
	
	In the massive case (i.e.\ \eqref{KG.intro} with $m^2>0$, arguably the simplest model for massive matter), no decay result	 have been rigorously established in any black hole setting, despite an abundance of works \cite{Burko,BurkoKhanna,HodPiran.mass,KonoplyaZhidenko,Konoplya.Zhidenko.num,KoyamaTomimatsu,KoyamaTomimatsu3,KoyamaTomimatsu2,Dilaton1,Dilaton2} in the physics literature. For $m^2>0$, a $t^{-\frac{5}{6}} \sin(m t+O(t^{\frac{1}{3}}))$ asymptotic tail was predicted \cite{KoyamaTomimatsu,KoyamaTomimatsu3,KoyamaTomimatsu2} 
	using heuristic arguments (see also the numerics of \cite{BurkoKhanna,Konoplya.Zhidenko.num}). Compared to the Price's law tail $t^{-3}$ prevailing in the massless case, the massive tail appears -- in drastic contrast -- to be  slower, and oscillates. 
	
	We prove that for each fixed angular frequency $L$ (including the case $L=0$ of spherical symmetry), solutions of \eqref{KG.intro} with (sufficiently regular and localized) initial data indeed decay at $t^{-\frac{5}{6}}$ and oscillate, although we do not exactly obtain the $t^{-\frac{5}{6}} \sin(m t+O(t^{\frac{1}{3}}))$ profile predicted in \cite{Konoplya.Zhidenko.num,KoyamaTomimatsu,KoyamaTomimatsu3,KoyamaTomimatsu2} due to the influence of infinitely many sublinear oscillating corrections. Our result includes precise late-time tails, and applies to the Schwarzschild  ($e=0$) and Reissner--Nordstr\"om  ($0<|e|<M)$   black hole \emph{in the full sub-extremal range}.
	
	\begin{theo} \label{thm.intro}  There exists $r_+(M,e)>0$ (the radius of the event horizon),  $\tilde{C}(M,e,m^2)>0$,  $\gamma_m(M,e,m^2) \neq 0$ with  $|\gamma_m|<1$,  a linear form  $\mathcal{L}[\phi_0,\phi_1]=|\mathcal{L}|[\phi_0,\phi_1]e^{i \theta[\phi_0,\phi_1]}$, $\Psi_q(t^{*})$  and $\check{\varphi}_H(r)$ such that for any  $\delta \in (0,1)$, $R_0 > r_+$, the solution $\phi=\sum_{L=0}^{L_{max}}  r^{-1}\psi_{L}(t,r) Y_{L}(\theta,\varphi)$ of \eqref{KG.intro}  with smooth compactly supported initial data $(\phi_0,\phi_1)$ at $\{t^{*}=0\}$, where $t^{*}$ is a time-coordinate, satisfies for all $r_+\leq r \leq R_0$, $t^{*}\geq 1$:
		\begin{align}\label{decay.intro}\begin{split}
				&\psi_{L}(t^{*},r)= \tilde{C}(M,e,m^2)   \bigl| \mathcal{L}[\phi_0,\phi_1] \bigr|  |\uhs|(r)\  (t^{*})^{-\frac{5}{6}} \left[\sum_{q=1}^{+\infty} (-\gamma_m)^{q-1}q^{\frac{1}{3}}\cos(\Psi_q(t^{*})+  \check{\varphi}_H(r)+\theta[\phi_0,\phi_1]) \right]\\ & +O_{R_0,L_{max}}(  (t^{*})^{-1+\delta}),\\ & \Psi_q(t^{*})= mt^{*}  -\frac{3}{2} [2\pi M]^{\frac{2}{3}} m q^{\frac{2}{3}} (t^{*})^{\frac{1}{3}}  + \phi_-(M,e,m^2) -\frac{\pi}{4} +O_{m}((t^{*})^{-\frac{1}{3}}),\\ & u_H^{*}(r)=|u_H^{*}|(r) e^{ i \check{\varphi}_H(r)} \sim 1 \text{ as } r \rightarrow r_+,\    |u_H^{*}|(r) \ls r^{\frac{1}{4}},
				\text{ for } r \geq r_+. \end{split}
		\end{align} 
	\end{theo}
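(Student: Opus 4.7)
The plan is to reduce~\eqref{KG.intro} on each fixed spherical harmonic to a $1+1$ problem on the black hole exterior, and then read off the late-time profile from a spectral representation of the solution. Introducing a tortoise-type coordinate $r_{*}$ such that the horizon sits at $r_{*}=-\infty$ and spatial infinity at $r_{*}=+\infty$, the Klein--Gordon equation reduces on the $L$-th mode to $(-\partial_t^2 + \partial_{r_{*}}^2 - V_L(r))\psi_L = 0$, where
\begin{equation*}
V_L(r) = \Big(1-\tfrac{2M}{r}+\tfrac{e^2}{r^2}+O(r^{-3})\Big)\Big(m^2 + \tfrac{L(L+1)}{r^2} + O(r^{-3})\Big)
\end{equation*}
decays exponentially at the horizon and tends to $m^{2}$ at spatial infinity with an attractive Coulomb tail $-2 m^2 M/r$. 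This long-range correction forbids any perturbative (massless-type) analysis and is ultimately responsible for the $(t^{*})^{-5/6}$ rate.

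After a Fourier transform in $t^{*}$, one obtains the Schr\"odinger-type eigenproblem $(\partial_{r_{*}}^2 - V_L + \omega^2)u = 0$. I would then construct Jost solutions $u_H(r;\omega)\sim e^{-i\omega r_{*}}$ at the horizon and $u_{\infty}(r;\omega)$ decaying at spatial infinity for $\mathrm{Im}\,\omega > 0$ and oscillating like $e^{i\sqrt{\omega^2 - m^2}\, r_{*}+ i \sigma(\omega;r)}$ on the continuous spectrum $\{|\omega|>m\}$, with $\sigma$ the Coulomb-type phase correction. Denoting their Wronskian by $W(\omega)$, the field admits a contour representation
\begin{equation*}
\psi_L(t^{*},r) = \frac{1}{2\pi i}\int_{\Gamma} e^{-i\omega t^{*}}\,\frac{u_H(r;\omega)\,\mathcal{I}[\phi_0,\phi_1](\omega)}{W(\omega)}\, d\omega,
\end{equation*}
with $\Gamma$ above the real axis and $\mathcal{I}$ encoding the initial data via the complementary Jost solution. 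The asymptotics are then extracted by deforming $\Gamma$ onto a keyhole around the branch cut $\{|\omega|\geq m\}$, the threshold $\omega = m$ providing the dominant contribution since the Coulomb singularity of $u_\infty$ there is the least regular object in the integrand.

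The heart of the argument is a uniform WKB/Airy analysis of $u_\infty$ as $\omega \to m^{+}$. Near the threshold, the outer classical turning point of $V_L - \omega^2$ is pushed to infinity along the Coulomb tail, forcing a uniform Langer-type transformation to Airy functions; the reflection off this outer Coulomb barrier, combined with reflection off the effective barrier near the horizon (producing the reflection coefficient $-\gamma_m$ with $|\gamma_m|<1$), yields an approximate quantization condition whose solutions are quasi-resonances accumulating at $\omega = m$. Expanding $1/W(\omega)$ as a Neumann series in the product of the inner and outer reflection coefficients produces the sum $\sum_q (-\gamma_m)^{q-1}(\cdots)$, and an Airy-type stationary-phase analysis of each term in the contour integral produces the $(t^{*})^{-5/6}$ decay, the amplitude $q^{1/3}$, and the phase $\Psi_q(t^{*}) = m t^{*} - \tfrac{3}{2}(2\pi M)^{2/3} m q^{2/3}(t^{*})^{1/3} + \phi_{-} - \pi/4$; the $q^{2/3}(t^{*})^{1/3}$ correction is exactly the Airy saddle value at the $q$-th quasi-resonance, and the factor $|u_H^{*}|(r)e^{i\check\varphi_H(r)}$ is the value at $\omega = m$ of the horizon-normalized Jost solution $u_H(r;\omega)$.

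The main obstacle will be establishing the uniform Airy/WKB asymptotics for $u_{\infty}$ in a full complex neighborhood of the threshold, across the turning-point region whose width diverges as $\omega\downarrow m$; this is what licenses the keyhole contour deformation and is absent from the massless problem. A second difficulty is controlling the tail of the resonance series so that the remainder is genuinely $O((t^{*})^{-1+\delta})$ uniformly in $q$ and in $L\leq L_{max}$, which requires tracking higher-order Airy corrections and exploiting the geometric decay from $|\gamma_m|<1$ to absorb the $q^{1/3}$ amplitude growth.
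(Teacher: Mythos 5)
Your proposal shares the paper's main building blocks---Jost solutions $u_H$, $u_I$, a turning-point/Airy analysis near the threshold, a geometric-series expansion of the Wronskian ratio producing the $\sum_q(-\gamma_m)^{q-1}$ structure, and a stationary-phase argument producing the $(t^*)^{-5/6}$ rate with $q^{1/3}$ amplitudes---so the overall spectral-decomposition strategy is recognizably the same. However, there are two significant gaps in the way you propose to extract the tail.

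\textbf{The contour geometry is wrong.} You propose to deform the contour onto a keyhole around the branch cut $\{|\omega|\geq m\}$ and read off the asymptotics from the threshold $\omega=m$ as an endpoint of that cut. But the dominant $(t^*)^{-5/6}$ contribution does \emph{not} come from $|\omega|>m$; in the paper that frequency range is shown to give only $O((t^*)^{-1})$ (Propositions~\ref{betterbounduI} and~\ref{decay.easy.prop}), because for $\omega>m$ the effective potential $V_\omega - \omega^2 < 0$ near infinity and a simple WKB analysis shows the relevant Wronskian ratio is absolutely continuous there. The dominant contribution instead comes from the classically forbidden range $(1-\theta)m < \omega < m$, where the outer turning point exists and where $\overline{W}/W$ oscillates infinitely fast through $e^{2i\pi k}$ with $k = Mm^2(m^2-\omega^2)^{-1/2}\to\infty$. (Relatedly, your text repeatedly writes $\omega\to m^{+}$ and $\omega\downarrow m$ for the turning-point regime; this should be $\omega\to m^{-}$.) You do mention quasi-resonances accumulating at $\omega=m$ from below the real axis, which are the correct objects, but the keyhole you draw does not encircle them, and adding a lower-half-plane lip around $|\omega|<m$ is exactly what the paper avoids: the analytic continuation of $W^{-1}$ there has poles accumulating at $\omega=m$, and controlling the sum of residues plus the arcs at infinity rigorously is precisely the obstacle that has kept the heuristic physics arguments (Koyama--Tomimatsu et al.) from being made into proofs. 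The paper's route---stay on the real axis, justify the Green's representation softly via exponential damping, and apply Van der Corput / stationary-phase estimates directly to the oscillatory integrand over $\omega \in ((1-\theta)m, m)$---is the technical novelty, and you would need to commit to one concrete contour that includes the forbidden region and then prove bounds there, which is a nontrivial departure from your sketch.

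\textbf{The source of $|\gamma_m|<1$ and the absence of bound states is not addressed.} You assert $|\gamma_m|<1$ as coming from ``reflection off the effective barrier near the horizon,'' but the paper's actual argument is the energy/Wronskian identity $\partial_s\Im(\partial_s u\,\bar u)=0$: evaluated between the horizon (where $u_H\sim e^{-i\omega s}$ forces a nonzero flux $-\omega$) and the matching region near infinity, it yields $|E_+|^2-|E_-|^2 = -\pi\omega$, hence $|\Gamma_-|>|\Gamma_+|$ and $|\gamma_m|<1$ (Lemma~\ref{uH.scat.lemma}). The same energy identity, via Lemmas~\ref{wronknotzerobigm}--\ref{wronknotzerosmallm}, rules out bound states and zero-energy resonances uniformly. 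This is where the black-hole structure (absorbing horizon) enters crucially; a Schr\"odinger operator on $\mathbb{R}^3$ with an attractive Coulomb tail has genuine bound states and the result would be false. Your proposal would need to supply this step explicitly, since without a quantitative $|\gamma_m|<1$ the Neumann series does not converge and the $q^{1/3}$ growth of the amplitudes is not absorbed.

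Finally, your plan omits the cutoff argument (Proposition~\ref{final.decay.prop}) that converts the inhomogeneous representation formula (valid for a source vanishing in the past) into a statement about a free solution with Cauchy data at $\{t^*=0\}$, and the careful change between $(t,s)$ and $(t^*,r)$ coordinates needed to make the final estimate uniform up to the event horizon; these are soft but necessary to close the proof.
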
\begin{rmk} The coordinate  $t^{*}$, defined as  $t^{*}(t,r) = t+p(r)$, with $p(r) \sim \log(r-r_+)$ as $r\rightarrow r_+$ and $p(r)=O(r^{-1})$ as $r \rightarrow+\infty$, induces an asymptotically flat spacelike foliation $\Sigma_{t^{*}}$ (see Section~\ref{setup.section}).	Theorem~\ref{thm.intro}  expresses decay in $(t^{*},r)$ coordinates to encompass both decay on any  $\{r=r_0\}$ hypersurface for $r_0>r_+$ and decay on the event horizon $\mathcal{H}=\{r=r_+\}$. Note indeed that, if $r=r_0>r_+$ is fixed, then $t^{*} \sim t $ are comparable, hence \eqref{decay.intro} indeed shows that the decay is of order $t^{-\frac{5}{6}}$. On $\mathcal{H}=\{r=r_+\}$, however, $t=+\infty$, but $t^{*}<+\infty$  is an adequate advanced-time variable which   parametrizes $\mathcal{H}$.
	\end{rmk}
	\begin{rmk}  \label{bad.decay.rmk}
		While we prove Theorem~\ref{thm.intro} assuming smooth and compactly supported initial data (note that we allow the compact support to contain the event horizon $\mathcal{H}=\{r=r_+\}\times \mathbb{S}^2$), it is not difficult using our methods to generalize to data in some Sobolev space 	with polynomial weights in $r$. One can also easily show that the  $O_{R_0}((t^{*})^{-1+\delta})$ error in \eqref{decay.intro} grows at worse polynomially in $r$ (e.g.\ one can obtain $O(r^{3.25} (t^{*})^{-1+\delta})$ for large $r$). We leave to future works the precise rates in  $(r,t^{*})$ when $r$, $t^{*}$ are both large. Note, however,  that $|\phi|(t^{*},r) \ls r^{-1}$ (with no decay in $t^{*}$) by energy estimate, thus one can  interpolate  between $|\phi|(t^{*},r) \ls r^{-1}$ and \eqref{decay.intro} and get a polynomial decay estimate  $|\phi|(t^{*},r) \ls~ (t^{*})^{-s}$ uniformly in $r$ for some $0<s<\frac{5}{6}$.
	\end{rmk}\begin{rmk}
		As we will see, the function $u_H^{*}=|u_H^{*}| e^{ i \check{\varphi}_H(r)}$ -- which plays an important role in \eqref{decay.intro} -- is explicit: it is the unique solution which is regular at the event horizon $\{r=r_+\}$ at the frequency $\omega=\pm m$.
	\end{rmk}\begin{rmk} Note that the main $(t^{*})^{-\frac{5}{6}}$ term in \eqref{decay.intro} is independent of  $L$; however the error $O((t^{*})^{-1+\delta})$ depends on  $L$ and is not summable over infinitely many angular modes $L$. Thus, the restriction to finitely many angular modes of the initial data (thus of the solution) is crucial, see Section~\ref{massive.section}.  However, Theorem~\ref{thm.intro} combined with the boundedness of the energy also proves the \emph{decay without-a-rate} of $\phi$ in the general unbounded-$L$ case, i.e.\ $\lim_{t^{*}\rightarrow +\infty} \phi(r,t^{*},\theta,\varphi)=0$ for any fixed $r \geq r_+$, via a soft interpolation argument. 
	\end{rmk}
	
	\begin{rmk}
		The analysis leading to Theorem~\ref{thm.intro} is a building block for our subsequent work \cite{KGNew} in which we study solutions of \eqref{KG.intro} \emph{without any fixed angular mode restriction } and obtain uniform-in-$L$ estimates.
	\end{rmk}
	
	It follows from Theorem~\ref{thm.intro} that the decay rate $(t^{*})^{-\frac{5}{6}}$ is \emph{sharp for generic solutions}: as expected, there exists an hyperplane of initial data $(\phi_0,\phi_1)$ leading to faster -- at least $(t^{*})^{-1+\delta}$ -- decay (corresponding to the kernel of the linear form $\mathcal{L}^{*}(\phi_0,\phi_1)$).   For generic solutions (thus for initial data $(\phi_0,\phi_1)$ not in the kernel of $\mathcal{L}^{*}$),  it is moreover easy to derive  the following $L^1$ or $L^2$ lower bounds for fixed $r \ge r_+$: \begin{equation}\label{lower.bounds.intro}
		\int_1^{t^{*}} |\phi|(t',r) dt'  \gtrsim_r |\mathcal{L}(\phi_0,\phi_1)|\  (t^{*})^{\frac{1}{6}} \text{ and } \int_{t^{*}}^{+\infty} |\phi|^2(t',r) dt'  \gtrsim_r |\mathcal{L}(\phi_0,\phi_1)|^2\  (t^{*})^{-\frac{2}{3}}.
	\end{equation}
	
	Expanding \eqref{KG.intro}, we see that it is analogous to a long-range scattering problem, in the sense that \eqref{KG.intro} looks  like a wave equation with a Coulomb-type $O(r^{-1})$  potential: namely, for large enough $r$: \begin{equation}\label{V.intro}
		-\rd_{t t}^2 \phi + \Delta_{\RR^3} \phi = m^2 \left(1- \frac{2M}{r} + O(r^{-2}) \right) \phi.
	\end{equation} It is well-known that such potentials cannot be treated perturbatively, and in fact we do not know of any result showing asymptotic decay in time for analogous problems: this is in particular because such potentials admit \emph{bound states} in general. The two important features of the potential $V(r)= m^2 \left(1- \frac{2M}{r} + O(r^{-2}) \right)$  in the simplified equation \eqref{V.intro} (and also in \eqref{KG.intro}) are as follows: \begin{enumerate}[I.]
		\item \label{VI}$V(r)$ is positive and time-independent: by the energy estimate, we show that there are \emph{no bound states}. This aspect of our argument crucially relies on the existence of a black hole region.
		\item  \label{VII}$V(r)$ admits a \emph{turning point}: after taking the Fourier transform of $\phi$, it occurs when $V(r_c(\omega))-\omega^2=0$, i.e.\ $r_c(\omega)\approx \frac{2Mm^2}{m^2-\omega^2}$. As $\omega \rightarrow \pm m$, $r_c(\omega) \rightarrow+\infty$: the turning point is pushed to infinity.
		
	\end{enumerate} In fact, we will show that Theorem~\ref{thm.intro} still holds true if one adds a perturbation $G(r)=O(\log(r) r^{-2})$ to $V(r)$ (ensuring \eqref{VII} is satisfied),  provided that  the perturbed potential still satisfies assumption \eqref{VI}, in addition to regularity near the event horizon $r=r_+(M,e)$ (see Section~\ref{main.result.section}).
	The origin of the $(t^{*})^{-\frac{5}{6}}$ decay rate is explained in our proof by a branch-cut type singularity of the resolvent at frequencies $\omega \in  (-m,m)$. Specifically we will show that the resolvent $\mathcal{R}(\omega,r)$  for fixed $\omega \in (-m,m)$ admits the schematic form (up to errors that we do not write, see Section~\ref{proof.section}) with infinite oscillations as $\omega \rightarrow \pm m$:  for any $\omega \in (-m,m)$ \begin{equation}\label{resolvent.intro}
		\mathcal{R}(\omega,r) =    \frac{ \Gamma_+\ e^{i \pi  M m^2 (m^2-\omega^2)^{-\frac{1}{2}}} +  \Gamma_-\ e^{-i \pi  M m^2 (m^2-\omega^2)^{-\frac{1}{2}}}}{\bar{\Gamma}_+\ e^{-i \pi  M m^2 (m^2-\omega^2)^{-\frac{1}{2}}} +  \bar{\Gamma}_-\ e^{i \pi  M m^2 (m^2-\omega^2)^{-\frac{1}{2}}}}\   F(r) + \text{ errors}.
	\end{equation}  Here, $\Gamma_{\pm}$ are constants and $F(r)$ is a  function of $r$ only.  The above form in \eqref{resolvent.intro} is a result of the existence of the turning point (see Section~\ref{proof.section}). The $(t^{*})^{-\frac{5}{6}}$ rate then emerges after taking the inverse Fourier transform of the resolvent $\int_{-\infty}^{+\infty} e^{- i \omega t}  \mathcal{R}(\omega) d\omega $ because of the existence of a critical point $\omega_c(t^{*})= \pm m + O ((t^{*})^{-\frac{2}{3}}) $ of the phase $\omega \rightarrow -\omega t+ (m^2-\omega^2)^{-\frac{1}{2}}$, after a careful application of the stationary phase lemma. The main linear oscillation $mt^{*}$ in \eqref{decay.intro} comes from the fact that  $\omega_c(t^{*}) \rightarrow \pm m$ as $t^{*} \rightarrow +\infty$, and the $O((t^{*})^{\frac{1}{3}})$ sublinear corrections come from the  $O ((t^{*})^{-\frac{2}{3}})$ term in $\omega_c(t^{*})$;  finally the presence of a series comes from Taylor-expanding the denominator of \eqref{resolvent.intro}.
	
	To explain the contrast with the massless case, it may be useful to point out that in \eqref{KG.intro} for $m^2=0$,  \begin{equation}\label{massless}
		-\rd_{t t}^2 \phi + \Delta_{\RR^3} \phi = \left(\frac{2M}{r^3}+ O(r^{-4}) \right) \phi,
	\end{equation} where the $r^{-3}$ (short-range) decay of the potential ultimately accounts for  the $t^{-3}$ tail of Price's law using a physical space method \cite{AAG1}. In Fourier space, one can see \cite{Hintz} that the resolvent  fails to be $C^2$ at $\omega=0$: $$  \mathcal{R}(\omega,r) \approx \omega^{2}\log(\omega+i0)  \cdot F(r) + \text{ errors},$$ which is less singular than \eqref{resolvent.intro} (merely in $W^{\frac{1}{2}-,1}_{loc}$ near $\omega=\pm m$), and does not generate a critical point of the phase: the inverse Fourier transform of $\omega^{2}\log(\omega+i0)$ gives indeed a $t^{-3}$ decay rate exactly \cite{Hintz}.
	
	\paragraph{Outline of the rest of the introduction} In Section~\ref{massless.intro.section}, we will discuss the known results on massless scalar fields, including \eqref{KG.intro} with $m^2=0$, but also the charged (massless) scalar field case where the potential scales like $V(r)=O(r^{-2})$. In Section~\ref{massive.section}, we discuss past works on massive scalar fields on black holes: this includes a few works in the physics literature which have argued for the plausibility of a $(t^{*})^{-\frac{5}{6}}$ rate, albeit without predicting the complete form of Theorem~\ref{thm.intro}; and the rigorous work of the second author on Kerr black holes. In Section~\ref{interior.section}, we discuss the consequences of Theorem~\ref{KG.intro} on the interior of dynamical black holes -- in particular, the slowness of the $(t^{*})^{-\frac{5}{6}}$ rate (if it persists in the dynamical black hole setting) leads to a stronger singularity \cite{Moi,Moi4,r0} and  the oscillations involved in $\cos(mt^{*}+ ...)$ become decisive to the question of Strong Cosmic Censorship in spherical symmetry \cite{MoiChristoph,MoiChristoph2}; as such, Theorem~\ref{KG.intro} is an important first step towards retrieving the decay assumptions made  by the third author in \cite{Moi,Moi4,r0,MoiChristoph,MoiChristoph2}. Finally, in Section~\ref{proof.section}, we outline the method of proof in more details: in particular, how the turning point in the potential gives \eqref{resolvent.intro} and how the inverse Fourier transform of \eqref{resolvent.intro} gives the $(t^{*})^{-\frac{5}{6}}$ decay rate of Theorem~\ref{KG.intro}.

	\subsection{Decay for massless fields on black holes, comparison with Theorem~\ref{thm.intro}}
	\label{massless.intro.section} \paragraph{Uncharged scalar  fields} We   further discuss Price's law \cite{PriceLaw}, i.e.\ the analogue of Theorem~\ref{thm.intro} in the massless case \eqref{KG.intro} with $m^2=0$. Price's law on a Schwarzschild/Reissner--Nordstr\"om black hole is  
	\begin{equation}\label{Price.law}
		\phi_{\geq L}(t^{*},r,\theta,\varphi) \sim F(r,\theta,\varphi) \cdot (t^{*})^{-3-2L},
	\end{equation}
	where $\phi_{\geq L}$ denotes the restriction to angular modes greater or equal to $L$. As we explained earlier, \eqref{Price.law} has been proven (with various degrees of precision) in \cite{AAG1,AAG3,PriceLaw,Schlag1,Schlag2,Hintz,Tataru2,Tataru}  with the most definitive version obtained on a Schwarzschild/Reissner--Nordstr\"om black hole in \cite{AAG2}.  
	In comparing \eqref{Price.law} to \eqref{decay.intro}, we note that in the massive case $m^2>0$, \emph{the decay rate is the same for all fixed angular modes} $L$.

	\paragraph{Charged scalar fields} We now turn to the charged scalar field equation: for some $q_0 \in \RR-\{0\}$: \begin{equation}\label{charged.SF}
		g^{\mu \nu } D_{\mu} D_{\nu} \phi= m^2 \phi,\qquad D_{\mu} = \nabla_{\mu} + i q_0  A_{\mu},\qquad d A=F=\frac{e}{r^2}\ dt \wedge (1-\frac{2M}{r}+\frac{e^2}{r})dr,
	\end{equation} 
	\eqref{KG.intro} is thus the uncharged particular case of \eqref{charged.SF} (recall indeed that $\Box_g=   g^{\mu \nu } \nabla_{\mu} \nabla_{\nu} $), where $q_0=0$ (no coupling constant). In the charged and massive case (\eqref{charged.SF} with $m^2>0$), the massive contribution $\frac{m^2 M}{r}$ dominates the $\frac{iq_0 e}{r^2}$ charged contribution for large $r$, so one expects the rate of \eqref{decay.intro} to prevail (see \cite{KonoplyaZhidenko,Konoplya.Zhidenko.num} for numerics). 
	However,  the massless charged equation, i.e.\ \eqref{charged.SF} with $m^2=0$, schematically looks like:
	
	\begin{equation}\label{V.charged}
		-\rd_{t t}^2 \phi + \Delta_{\RR^3} \phi = \left[ \frac{iq_0 e}{r^2} +O(r^{-3}) \right]\phi+\text{first order terms.}
	\end{equation} The $O(r^{-2})$ fall-off rate of the potential in \eqref{V.charged} is to be compared to \eqref{V.intro} and \eqref{massless} where the fall-off rates of the potentials were $O(r^{-1})$ and $O(r^{-3})$ respectively. 
	The following tails  for \eqref{charged.SF} with $m^2=0$ on a Schwarzschild/Reissner--Nordstr\"{o}m black hole have been predicted in the physics literature
	\cite{HodPiran1,HodPiran2,OrenPiran}:
	\begin{equation}\label{Price.law.charged}
		\phi_{\geq L}(t^{*},r,\theta,\varphi) \sim G(r,\theta,\varphi) \cdot (t^{*})^{-1-\sqrt{1-4(q_0e)^2+4L(L+1)}}.
	\end{equation} 
	While the rate predicted by \eqref{Price.law.charged} remains inverse-polynomial, note that it is  slower than Price's law \eqref{Price.law}, but  faster than the massive case \eqref{decay.intro}.  Sharp upper bounds (with a $\ep$-loss)  corresponding to \eqref{Price.law.charged}   have been proven by the third author \cite{Moi2} in spherical\footnote{\cite{Moi2}  more generally treats the nonlinear Maxwell-charged-scalar-field equations, in which the Maxwell field $F$ is dynamical.} symmetry ($L=0$) and for $|q_0 e| \ll 1$. Since the leading order term in \eqref{V.charged} only involves $q_0 e$ and not the mass of the black hole, one expects the \eqref{Price.law.charged} to still hold  for \eqref{charged.SF} on Minkowski space: this has indeed been proven in \cite{MoiDejan}. We end this paragraph by contrasting \eqref{charged.SF} with the Maxwell-charged-scalar-field model on Minkowski spacetime, where $F$ becomes dynamical: in this case, $F$ disperses\footnote{On a black hole, $F$ solving the Maxwell-charged-scalar-field equations does \emph{not disperse } and converges to  $\frac{e}{r^2}\ dt \wedge (1-\frac{2M}{r}+\frac{e^2}{r})dr$ as $t \rightarrow +\infty$: thus, \eqref{charged.SF} is a good  model for the  Maxwell-charged-scalar-field equations on a black hole, but not on Minkowski.} to $0$, and thus the sharp decay is faster, 
	see   \cite{LindbladKG,Bieri.KG,ShiwuKG,YangYu,FangWangYang,KlainermanKG,LinbladKG2,LindbladKG3}.
	\paragraph{Wave with inverse-square potential} The wave equation with an inverse-square potential (i.e.\ with a $O(r^{-2})$ fall-off rate similar to \eqref{V.charged}) is of the following form, where $a>-\frac{1}{4}$ is \emph{real-valued}:	\begin{equation}\label{V.inverse}
		-\rd_{t t}^2 \phi + \Delta_{\RR^3} \phi = \frac{a}{r^2}\phi
	\end{equation} The sharp decay rate is known    \emph{on Minkowski spacetime} to be $t^{-1-\sqrt{1+4a+L(L+1)}}$ \cite{SchlagCostin,Dean,MoiDejan}. It turns out  the same decay  holds for equations of the form \eqref{V.inverse} on a  Schwarzschild/Reissner--Nordstr\"{o}m black hole \cite{inverse.Dejan}.
	
	\paragraph{Other linear decay results} For spin-weighted wave equations, the decay is faster  than \eqref{Price.law}, see \cite{Ma1,Ma3}. We mention specifically the case of Maxwell equations (spin $s=1$) 
	on a Schwarzschild black hole, see \cite{Ma3,Maxwell1,Maxwell2,Maxwell3, pasqualotto2019spin} for results in that direction. Also note that for wave equations with a potential decaying as $O(r^{-2-a})$  with $a>0$,  $O(t^{-2-a})$ decay holds, see \cite{LOOI,Faster2,Faster3}.

	\subsection{Previous works on massive scalar fields on black holes}\label{massive.section}
	
	\paragraph{Previous heuristics and numerics for Klein--Gordon fields on a Schwarzschild or Reissner--Nordstr\"{o}m black hole}

	Koyama and Tomimatsu predicted in the early 2000's \cite{KoyamaTomimatsu} that Klein--Gordon fields on a sub-extremal  Reissner--Nordstr\"{o}m black hole \emph{sufficiently near extremality} (i.e.\ for $M-|e|$ small) decay with the  leading tail \begin{equation}\label{decay.Koyama}
		(t^{*})^{-\frac{5}{6}}\cos(\Psi_1(t^{*}) +\Omega(t^{*})),
	\end{equation}   where $\Psi_1(t^{*}) $ is as in \eqref{decay.intro} and $\Omega(t^{*})$ is bounded. We note that the  formula~\eqref{decay.Koyama} does not exactly agree\footnote{More specifically, the work~\cite{KoyamaTomimatsu} (see also \cite{diracjing}) produces a formula for the leading order singular behavior as $|\omega| \rightarrow_{|\omega|<m} m$ of the cut-off resolvent which agrees with our corresponding formula. However, the authors of~\cite{KoyamaTomimatsu} use certain approximations in their stationary phase analysis which lead to discrepencies with our derived asymptotics for the scalar field.}    with~\eqref{decay.intro}; 
	however, the two formulas are qualitatively similar in that  the term $q=1$ in the sum of \eqref{decay.intro} corresponds to \eqref{decay.Koyama} with $\Omega(t^{*})=0$.
	In  subsequent work \cite{KoyamaTomimatsu2}, the authors claim  \eqref{decay.Koyama} also holds for a Schwarzschild black hole of small mass $M m \ll 1$ or large mass $M m \gg 1$.

	The works    \cite{KoyamaTomimatsu,KoyamaTomimatsu2} both rely on heuristic arguments consisting in  taking the Fourier transform in $t$, working with complex $\omega\in  \mathbb{C}$ and dropping certain terms in the equation to end up with an explicitly solvable Whittaker ODE (for large $r$). Solving this Whittaker ODE and using the algebraic properties of the Whittaker special function gives a formula of the form \eqref{resolvent.intro} (with all error terms being ignored).   It is not clear, however, how to control the errors involved in this approximation: first, the terms being ignored are not perturbative in the usual sense, and second, the algebraic properties of Whittaker special function are not stable to perturbations. Instead, our strategy identifies the key point as  being the turning point in the potential and  relies on a more robust analysis of this turning point which works in greater generality, see Section~\ref{proof.section}. 
	
	The case of a Schwarzschild black hole of intermediate mass $Mm \approx 1$ (or a  Reissner--Nordstr\"{o}m black hole far away from extremality)  was not treated in \cite{KoyamaTomimatsu,KoyamaTomimatsu2} due, in particular, to potential cancellations (that the authors could not plausibly exclude) in \eqref{resolvent.intro} leading to $|\Gamma_+|=|\Gamma_-|$, which would lead to decay faster than $(t^{*})^{-\frac{5}{6}}$. In \cite{KoyamaTomimatsu3}, the authors however provide a plausibility argument that such a cancellation does not occur on any Schwarzschild/Reissner--Nordstr\"{o}m black hole and thus claim that \eqref{decay.Koyama} holds in those cases as well. In our work, the issue of ruling out such cancellations is treated in a unified way on the whole Reissner--Nordstr\"{o}m sub-extremal range and for any mass, using the energy identity, see Section~\ref{proof.section}.

	We conclude this section mentioning that independent numerical work \cite{BurkoKhanna,Konoplya.Zhidenko.num}  also predicted a tail  with the following rate:  $(t^{*})^{-\frac{5}{6}} \sin(mt^{*})$  (albeit without the $O((t^{*})^{\frac{1}{3}})$ sublinear phase corrections).

	We also mention \cite{massivedirac} studying the solutions of   the massive Dirac equation  on a Kerr--Newman black hole, projected on a fixed spherical {harmonic $Y_L$ and finding a $O((t^{*})^{-\frac{5}{6}})$ tail in this situation. Specializing \cite{massivedirac} to a Reissner--Nordstr\"{o}m black hole,   we note that both the asymptotics and\footnote{While the   $(t^{*})^{-5/6}$ rate is identical, note that \eqref{decay.intro} involves  an infinite sum of terms $(-\gamma_m)^{q-1}q^{1/3}\cos(\Psi_q(t^{*})+  \check{\varphi}_H(r)+\theta[\phi_0,\phi_1])$ while the corresponding formula in~\cite{massivedirac} involves only one such term. At the level of the cut-off resolvent, \cite{massivedirac} finds a $e^{i (m^2-\omega^2)^{-1/2}}$ term as $|\omega| \rightarrow_{|\omega|<m} m$, to be compared with the non-trivial  rational function of  $e^{i (m^2-\omega^2)^{-1/2}}$ in \eqref{resolvent.intro}.} the singular behavior of the cut-off resolvent as $|\omega| \rightarrow_{|\omega|<m} m$ found in \cite{massivedirac} differ from those found in Theorem~\ref{KG.intro}. It is unclear to us how to adapt the methods used in \cite{massivedirac} to our situation.

		\paragraph{Non-decaying massive scalar fields on a black hole} On a Kerr black hole, the combination of \emph{superradiance} and the massive character of the scalar field leads to the existence of growing modes, i.e.\ solutions of the form $F(r,\theta,\varphi) e^{-i \omega  t^{*}}$, with $\Im(\omega)>0$ and $F(r,\theta,\varphi)$ smooth, as was proven in work of the second author \cite{Yakov} (for an open set of Klein--Gordon masses $m^2$). In particular, this means that decay in time cannot hold generically\footnote{Note that decay, however, could potentially hold on a Kerr black hole after projecting away from the growing modes, i.e.\ potentially for a countable-codimension set of initial data.}, in contrast to the situation of Theorem~\ref{thm.intro}. 
		
		In the Kerr setting, there exist oscillating modes \cite{Yakov}, i.e.\ solutions of the form $F(r,\theta,\varphi) e^{-i \omega  t^{*}}$ with $\omega \in \RR$.  On the nonlinear level, these oscillating modes lead to the existence of hairy black holes: specifically stationary bifurcations of the Kerr metric,  as was proven by the second author and Chodosh	 \cite{OtisYakov}.

		Relatedly, we also mention situations in spherical symmetry   where the massive field allows for stationary event horizon differing from Schwarzschild's or Reissner--Nordstr\"{o}m's, see \cite{fluctuating,violent} for an extended discussion.
		\subsection{Previous works on long range potentials}
		Long range potentials merely satisfy  weak decay $V(r)=O(r^{-\alpha})$  as $r\rightarrow +\infty$, for  $\alpha \in (0,1]$ (as opposed to the short-range case  $\alpha>1$). The case $\alpha=1$ includes the  important Coulomb potential $V(r)=  \pm\frac{ 1}{r}$. It is well known that basic properties of long-range scattering differ from the short-range case: however, it is a classical result that  modified wave operators exist and asymptotic completeness is available, see  e.g.\ \cite{HormanderIV}. 
		
		For comparison, our potential in \eqref{KG.intro} is of the form $V(r)= -\frac{1}{r}+O(r^{-2})$ at $r\rightarrow+\infty$, thus Coulomb-like of attractive type. The main difficulty in this case is to analyze the resolvent  near the threshold $E=0$ (corresponding to $\omega=\pm m$ in  the Fourier transform of \eqref{KG.intro}). There are two regimes: the positive energy case ($E>0$, corresponding to $|\omega|>m$ in \eqref{KG.intro}) and the negative energy case ($E<0$, i.e.\ $|\omega|<m$ in \eqref{KG.intro}). 
		
		The analysis is the regime $E\rightarrow 0^+$ is more straightforward  in the attractive case and has been carried out in details for a general attractive Coulomb-like potential in \cite{Ethan1} (corresponding to the $|\omega|>m$ regime in our case, i.e.\ the frequencies contributing the least to the decay \eqref{decay.intro}), see \cite{long1,long2,long3,long4} for earlier works.

		The  regime $E\rightarrow 0^-$ is more delicate due to the existence of \emph{bound states} in general, with energy $E<0$ arbitrarily close to $0$ (see \cite{Ethan2} for a proof under very general conditions). In fact, the existence of bound states is directly responsible on the nonlinear level for the existence of so-called \emph{boson stars} \cite{Boson}, which are spherically symmetric solutions of the Einstein--Klein--Gordon equations. Returning to the black hole case: for our equation \eqref{KG.intro}, we prove that \emph{there are no bound states}: this is a direct consequence of the energy identity and relies on the existence of an event horizon: it is thus specific to the black hole case (note that the bound states  of \cite{Ethan2} are constructed on asymptotically Schwarzschildian spacetimes, but with no event horizon). Nonetheless, the regime $E \rightarrow 0^{-}$ (i.e.\ $|\omega|<m$ in \eqref{KG.intro}) remains the most delicate in our analysis and the slow decay of \eqref{KG.intro} can be interpreted as the remnant of a bound state.   We note that for the Kerr black hole with non-zero angular momentum $aM \neq 0$, the regime $E<0$ does contain bound states \cite{Yakov,OtisYakov};  although these bound states do not persist  in the $a = 0$ Schwarzschild black hole limit, it may be useful to think that the slow decay corresponds to their $a=0$ limit.
		\subsection{The consequences of decay on the interior of black holes}\label{interior.section}
		
		Theorem~\ref{thm.intro} implies decay  at the event horizon $\mathcal{H}=\{r=r_+(M,e)\}$, evaluating \eqref{decay.intro} at $r=r_{+}$. For the Einstein equations in vacuum (respectively with matter), it is well-known that 
		the decay of gravitational waves (respectively and of the matter fields) on the event horizon have a decisive impact on the geometry of the dynamical black hole interior	\cite{KerrStab,KerrInstab,JonathanStab,JonathanWeakNull,Moi,Moi4,MoiChristoph,MoiChristoph2,MoiThesis,r0,JonathanICM,DejanAn,JanC1,gajicluk,mass,violent,fluctuating} which is important to the celebrated Strong Cosmic Censorship Conjecture \cite{PenroseSCC,Penroseblue}, see also \cite{kehle2020diophantine,JonathanInstab,KerrInstab,Ma2,Kehle2018,Kehle2019,Jan} for related linear results.
		We restrict our attention to the Einstein--Maxwell equations coupled with a massive scalar field  and we discuss results of the third author on the geometry  of the black hole interior and Strong Cosmic Censorship for that Einstein--Maxwell--Klein--Gordon model  in spherical symmetry. In what follows $v$ is an advanced-time coordinate which coincides with $t^{*}$  on a fixed Reissner--Nordstr\"{o}m black hole.

		\begin{thm}\label{int.thm}\cite{Moi,Moi4,MoiChristoph,MoiChristoph2}, Black hole interior in spherical symmetry] Consider a dynamical black hole solution of the Einstein--Klein--Gordon equations in spherical symmetry  converging asymptotically  to a sub-extremal  Reissner--Nordstr\"{o}m black hole as $v\rightarrow +\infty$ on its event horizon. Then:
			\begin{enumerate}[A.]
				\item\label{int1} (Cauchy horizon stability, \cite{Moi}) If $|\phi|_{|\mathcal{H}}(v)+ |\rd_{v}\phi|_{|\mathcal{H}}(v) \ls v^{-s}$ for some $s>\frac{1}{2}$, then the black hole terminal boundary near $i^+$ consists of a non-empty null boundary called a Cauchy horizon  $\mathcal{CH}_{i^+}$.\\ If $s>1$, then the Cauchy horizon is $C^0$-extendible and $\phi$ is bounded $W^{1,1}_{loc}\cap L^{\infty}$ norm.
				
				\item \label{int2} (Weak singularity at the Cauchy horizon, \cite{Moi,Moi2})  If moreover $ \int_v^{+\infty} |\rd_{v}\phi|^2_{|\mathcal{H}}(w) dw \gtrsim v^{1-2s}$, then the Cauchy horizon $\mathcal{CH}_{i^+}$ admits a curvature singularity, is $C^2$-inextendible, and $\phi$ blows up in $H^1_{loc}$  norm.
				\item \label{int3} (Slightly stronger Cauchy horizon singularity as a consequence of slow decay, \cite{MoiChristoph})  If moreover  $s>\frac{3}{4}$ and  $\int_1^{+\infty} |\rd_{v}\phi|_{|\mathcal{H}}(w) dw=~\infty$, then  $\phi$ blows up in $W^{1,1}_{loc}$ norm at the Cauchy horizon.
				
				\item \label{int4} (Decisive character of oscillations, \cite{MoiChristoph,MoiChristoph2})  If moreover $ \bigl|\int_1^{+\infty} \rd_{v}\phi_{|\mathcal{H}}(w) dw\bigr|<\infty$, then the Cauchy horizon $\mathcal{CH}_{i^+}$  is $C^0$-extendible and $\phi$ is bounded $ L^{\infty}$ norm. If however $\bigl|\int_1^{+\infty} \rd_{v}\phi_{|\mathcal{H}}(w) dw\bigr|=\infty$, then the Cauchy horizon $\mathcal{CH}_{i^+}$  does not admit any $C^0$-admissible extension and $\phi$ blows up in $ L^{\infty}$ norm.
			\end{enumerate}
		\end{thm}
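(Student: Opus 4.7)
The plan is to work in double-null coordinates $(u,v)$ in the black hole interior, with metric $g = -\Omega^2(u,v)\,du\,dv + r^2(u,v)\,d\sigma_{\mathbb{S}^2}^2$, where the event horizon $\mathcal{H}$ is placed at $\{u=0\}$ and characteristic initial data come from the assumed decay of $\phi$ on $\mathcal{H}$. The Einstein--Maxwell--Klein--Gordon system in this gauge reduces to transport equations for $r, \Omega$ coupled with a wave equation for $\phi$, together with the two Raychaudhuri constraints and the formula $\partial_v \varpi = \frac{1}{2} r^2 (1-\mu)(\partial_v\phi)^2/(\partial_v r)$ for the Hawking mass $\varpi$, where $\mu = 1 - 2\varpi/r + e^2/r^2$. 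I would propagate estimates forward through the classical three-region decomposition: a \emph{red-shift region} adjacent to $\mathcal{H}$, a \emph{no-shift region} in which $r$ is bounded strictly between $r_-$ and $r_+$, and a \emph{blue-shift region} where $r \to r_-$ and the inner-horizon surface gravity $\kappa_-$ governs dynamics.

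For Part \ref{int1}, the red-shift vector field of Dafermos--Rodnianski together with stability of $\varpi$ transfers the $v^{-s}$ decay from $\mathcal{H}$ into the no-shift region essentially losslessly. In the blue-shift region, writing $\kappa = \partial_v r/\Omega^2$ and exploiting the outgoing Raychaudhuri equation, the assumption $s>\tfrac{1}{2}$ ensures $\int^\infty (\partial_v\phi)^2\,dv < \infty$, which prevents $\kappa$ from degenerating too quickly and keeps $\varpi$ bounded uniformly up to the null ideal boundary; this produces the non-empty Cauchy horizon $\CH$ at finite affine parameter along outgoing rays. For $s>1$, Cauchy--Schwarz further gives $\int |\partial_v \phi|\,dv < \infty$, so $\phi$ is bounded in $W^{1,1}_{loc} \cap L^\infty$ and, combined with uniformly bounded Hawking mass, this provides a double-null chart extending the metric continuously across $\CH$.

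For Parts \ref{int2}--\ref{int4}, I would rely on the mass-inflation mechanism of Poisson--Israel. Integrating the Hawking mass formula along outgoing null rays, the lower bound $\int_v^{+\infty} (\partial_v\phi)^2(w)\,dw \gtrsim v^{1-2s}$ combines with the exponentially vanishing $\partial_v r$ near $\CH$ to force $\varpi \to +\infty$; this gives Kretschmann blow-up, $C^2$-inextendibility, and -- via a contradiction using the Klein--Gordon equation -- the $H^1_{loc}$ blow-up of $\phi$ (Part \ref{int2}). Part \ref{int3} uses the refined hypothesis $s>\tfrac{3}{4}$ to pass from the $L^2$ lower bound to $L^1$-non-integrability of $\partial_v\phi$ transverse to $\CH$, then integrates the wave equation through the blue-shift region to conclude $W^{1,1}_{loc}$ blow-up of $\phi$ itself. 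Part \ref{int4} is a refinement at the $L^\infty$ level: the primitive $v\mapsto \int_1^v \partial_v\phi_{|\mathcal{H}}(w)\,dw$, once transported transversely into the spacetime via a Duhamel representation along outgoing rays, controls the boundary value of $\phi$ at $\CH$; its finiteness yields a bounded profile and a $C^0$-admissible extension, whereas its divergence precludes any continuous extension.

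The main obstacle throughout is the blue-shift region analysis, where $\Omega^2$ degenerates exponentially in $v$ while one tries to propagate merely polynomial decay of the scalar field. The sharp thresholds $s=\tfrac{1}{2}$, $s=\tfrac{3}{4}$, $s=1$ arise from matching integrability exponents against $\kappa_-$; obtaining them requires a bootstrap on a renormalized Hawking mass (in the spirit of Dafermos), and a careful exploitation of the monotonicity structure of the Raychaudhuri equations to preclude premature breakdown of the coordinate system before $\CH$ is reached.
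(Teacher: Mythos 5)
This theorem is not proven anywhere in the present paper: it is a survey statement quoted verbatim from the third author's prior works \cite{Moi,Moi4,MoiChristoph,MoiChristoph2} (and \cite{Moi2}), included in Section~\ref{interior.section} only to motivate why the decay rate and oscillations in Theorem~\ref{thm.intro} matter for the black hole interior and Strong Cosmic Censorship. The correct ``proof'' here is therefore a citation, and your proposal cannot be checked against any internal argument. Judged on its own terms, what you have written is a faithful executive summary of the strategy of those references --- double-null gauge, the red-shift/no-shift/blue-shift decomposition, Raychaudhuri monotonicity, renormalized Hawking mass bootstraps, and the Poisson--Israel mass-inflation mechanism --- but each of the four items is itself the main theorem of a full-length paper, and your sketch asserts rather than establishes every quantitative step (the propagation of polynomial decay against the exponential degeneration of $\Omega^2$, the precise role of $\kappa_-$ in fixing the thresholds, the Duhamel/transport argument behind part~\ref{int4}). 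As a proof it is incomplete; as an attribution of the mechanism it is essentially right.

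Two specific points to correct. First, in part~\ref{int3} the non-integrability $\int_1^{+\infty}|\rd_v\phi|_{|\mathcal{H}}\,dw=\infty$ is a \emph{hypothesis} on the event horizon, not something to be derived from the $L^2$ lower bound of part~\ref{int2} together with $s>\tfrac34$; the content of \cite{MoiChristoph} is to propagate this $L^1$ divergence from $\mathcal{H}$ to a transverse blow-up of $\phi$ in $W^{1,1}_{loc}$ at $\mathcal{CH}_{i^+}$, with $s>\tfrac34$ entering to control quadratic error terms in that propagation. Your phrasing inverts the logic. Second, in part~\ref{int1} the case $s>1$ gives $\int|\rd_v\phi|\,dv<\infty$ directly from the pointwise bound $v^{-s}$; no Cauchy--Schwarz is needed or helpful there. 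Neither slip affects the overall picture, but both would matter if one actually tried to write out the arguments.
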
 Formally \eqref{decay.intro}  at $r=r_+$ satisfies \ref{int1}, \ref{int2} and \ref{int3}. The main term involving $(t^{*})^{-\frac{5}{6}}$  in \eqref{decay.intro} also obeys the oscillation condition  $\bigl|\int_1^{+\infty} \rd_{t^{*}}\phi_{|\mathcal{H}}(w) dw\bigr|<\infty$ in \ref{int4}, but the error $O((t^{*})^{-1+\delta})$ is not integrable so strictly speaking \eqref{decay.intro} does not imply $\bigl|\int_1^{+\infty} \rd_{t^{*}}\phi_{|\mathcal{H}}(w) dw\bigr|<\infty$. We also emphasize that \eqref{decay.intro} from Theorem~\ref{thm.intro} is but the first step in proving that \ref{int1}, \ref{int2}, \ref{int3}, \ref{int4} are satisfied generically for the \emph{nonlinear} Einstein--Maxwell--Klein--Gordon model in spherical symmetry (a purpose which  requires ideas that go beyond the proof of Theorem~\ref{thm.intro}). A proof of such a result  combined with Theorem~\ref{int.thm} would, however, provide a definitive treatment of the Strong Cosmic Censorship Conjecture in spherical symmetry for the Einstein--Maxwell--Klein--Gordon model.

		\subsection{Methods of the proof}\label{proof.section} Our strategy relies on \emph{scattering theory}:  concretely, we take the Fourier transform in time $t$ of \eqref{KG.intro} and after decomposition the solution in angular modes, we end up with an ODE in $r$ for each\footnote{We emphasize that the heart of  the argument will only involve real-valued time-frequency $\omega$ instead of the usual approach involving contours on the complex plane (we will however briefly work in the complex plane to softly justify the Green's formula).} time-frequency $\omega\in \RR$ and angular mode $L$.  The two important extremities of the black hole correspond to event horizon $\mathcal{H}=\{r=r_+(M,e)\}$ ($r_+(M,e)$ is the positive root of $(1-\frac{2M}{r}+\frac{e^2}{r^2})$)  and infinity  $\mathcal{I}=\{r=+\infty\}$. The standard method  consists in relating $u_H$ -- the solution of \eqref{KG.intro} regular at $\mathcal{H}$, and $u_I$ the one 	regular at $\mathcal{I}$. In view of the discussion of the prequel, the proof of Theorem~\ref{thm.intro} comes down to answering five questions. \begin{enumerate}
			\item Why is the problem  essentially equivalent to taking the Fourier transform of $\mathcal{R}(\omega,r)$ the LHS of \eqref{resolvent.intro}?
			\item Which frequencies $\omega$ contribute the ``most'' to the $(t^{*})^{-\frac{5}{6}}$ tail of \eqref{decay.intro}?
			\item Why does the inverse-Fourier transform of \eqref{resolvent.intro} give \eqref{decay.intro} and the errors in $\omega$ give $ O((t^{*})^{-1+\delta})$ ?
			\item Why are there no bound states, and how do we know\footnote{Note indeed that, if we had  $\Gamma_+ = \Gamma_-$ in \eqref{resolvent.intro}, then $\mathcal{R}(\omega)$ would be constant and thus the decay  faster than the $O\left((t^{*})^{-\frac{5}{6}}\right)$.} that $\Gamma_+ \neq \Gamma_-$ in \eqref{resolvent.intro}?
			
			\item Why is \eqref{resolvent.intro} true? How does one control the errors in this formula?
			
		\end{enumerate} We will dedicate a paragraph of explanations of each of these questions below.
		
		\paragraph{The scattering approach at fixed time-frequency and Green's formula} 
		
		Taking the Fourier transform in $t$ of \eqref{KG.intro} for the angular mode $\phi_{L}$ gives the following ODE in terms of the variable $s(r)$ defined by $\frac{ds}{dr} = (1-\frac{2M}{r}+\frac{e^2}{r^2})^{-1}$, and the function $u(\omega,s) = \int_{-\infty}^{+\infty} e^{i \omega t} (r\phi_{L})(t,r)  dt$:	\begin{equation}\label{ODE.intro}
			\p^2_{s} u= - \omega^2 u + \Big(1-\frac{2M}{r}+\frac{e^2}{r^2}\Big)\Big(m^2 + \frac{L (L+1)}{r^2} +\frac{2M}{r^3}-\frac{2e^2}{r^4}\Big)u + \hat H(\omega,s)= \left[m^2-\omega^2 - \frac{2Mm^2}{r} + O(r^{-2}) \right]u + \hat{H}(\omega,s),
		\end{equation} where  $\hat{H}(\omega,s)$ is  the Fourier transform of ${H}(t,s)$, the source term accounting for the initial data of $\phi_{L}$ at $t^{*}=0$. (Note that we take the Fourier transform with respect to $t$, not $t^{*}$, as the ODE analysis is more straightforward using $(t,s)$ variables. However, to obtain a decay statement which is uniform in $r$ as $r\rightarrow r_+$, we switch to $(t^{*},r)$ coordinates at the very end of our argument.)

		We  show (using standard ODE analysis) the following Green's formula, with    $W(u_I,u_H)(\omega)= \rd_s u_I\ u_H -  \rd_s u_H\ u_I$  the Wronskian of $u_I$ the solution regular at infinity $\mathcal{I}=\{r=\infty\}$ and $u_H$ the solution regular at the event horizon $\mathcal{H}=\{r=r_+\}$: \begin{equation}\label{Green.intro1}
			u(\omega,s) =  \frac{u_H(\omega,s) \int_{s}^{+\infty} u_I(\omega,S) \cdot \hat{H}(\omega,S) dS +   u_I(\omega,s) \int^{s}_{-\infty} u_H(\omega,S) \cdot\hat{H}(\omega,S) dS}{W(u_I,u_H)(\omega)},
		\end{equation}
		The  scattering argument is to write $u_I$ (which is real-valued) in terms of $u_H$ and its complex conjugate as:
		\begin{equation}
			u_I(\omega,s)=  \frac{W(u_I,\bar{u}_H)(\omega)}{W(\bar{u}_H,u_H)(\omega)}\ u_H(\omega,s) + \frac{W(u_I,u_H)(\omega)}{W(u_H,\bar{u}_H)(\omega)}\ \bar{u}_H(\omega,s).
		\end{equation}
		which combined to \eqref{Green.intro1} gives a term of the form \begin{equation}\label{Green.intro3}
			u(\omega,s) = \frac{W(u_I,\bar{u}_H)(\omega)}{W(u_I,u_H)(\omega)} \frac{u_H(\omega,s) }{2i\omega} \int_{-\infty}^{+\infty} u_H(\omega,S) \hat{H}(\omega,S) dS+ \text{ error},
		\end{equation} where  all the terms are regular except $\frac{W(u_I,\bar{u}_H)(\omega)}{W(u_I,u_H)(\omega)}$ which is singular as $\omega \rightarrow \pm m$.
		
		Thus, $\phi$ is essentially -- up to errors  -- given by the inverse Fourier transform of $\frac{W(u_I,\bar{u}_H)(\omega)}{W(u_I,u_H)(\omega)}$ --  as it appears in the LHS of \eqref{resolvent.intro}.
		\paragraph{The faster-decaying contributions of the frequencies away and slightly above $\pm m$} Since \eqref{Green.intro3} is regular away from $\omega =\pm m$, one expects arbitrarily fast polynomial decay, thus we concentrate in regime $\omega \approx \pm m$. Since $\omega \in \RR$, there are two sub-regimes: $|\omega|>m$ and $|\omega|<m$. In this paragraph,  we discuss   $|\omega|>m$: note that in this case the main part  $(m^2-\omega^2 -\frac{2Mm^2}{r}) $ of the RHS of \eqref{ODE.intro} has a constant negative sign: because of this, it turns that a single WKB-type analysis allows to solve \eqref{ODE.intro} approximately in the regime $s\gg 1$. The result of this analysis is that  	$\frac{W(u_I,\bar{u}_H)(\omega)}{W(u_I,u_H)(\omega)} $ is absolutely continuous (i.e.\ its derivative is integrable near $\omega=\pm m$) and thus its inverse Fourier transform decays at least like $O((t^{*})^{-1})$. The conclusion is that frequencies $|\omega|>m$ do not contribute to the main decaying term in \eqref{decay.intro}.
		\paragraph{The singularity of the resolvent and its consequences} As explained above, the frequencies $|\omega|>m$ do not contribute. As  hinted in \eqref{resolvent.intro}, we will show that for $|\omega|<m$ \begin{equation}\label{resolvent.intro2}
			\frac{W(u_I,\bar{u}_H)(\omega)}{W(u_I,u_H)(\omega)} = \frac{ \Gamma_+\ e^{i \pi  M m^2 (m^2-\omega^2)^{-\frac{1}{2}}} +  \Gamma_-\ e^{-i \pi  M m^2 (m^2-\omega^2)^{-\frac{1}{2}}}}{\bar{\Gamma}_+\ e^{-i \pi  M m^2 (m^2-\omega^2)^{-\frac{1}{2}}} +  \bar{\Gamma}_-\ e^{i \pi  M m^2 (m^2-\omega^2)^{-\frac{1}{2}}}}  + \text{ errors}.
		\end{equation}
		Taking the inverse Fourier transform of $e^{i \pi  M m^2 (m^2-\omega^2)^{-\frac{1}{2}}}$ involves a phase $\Phi(\omega,t)=-i\omega t+ i \pi  M m^2 (m^2-\omega^2)^{-\frac{1}{2}}$, which turns out to have a critical point $\omega_c(t) = \pm m + O(t^{-\frac{2}{3}})$. Since $\rd_{\omega \omega}^2\Phi(\omega_c(t),t)= O (t^{\frac{5}{3}})$, a stationary phase argument shows that the inverse Fourier transform of $e^{i \pi  M m^2 (m^2-\omega^2)^{-\frac{1}{2}}}$ (suitably localized to the region $\{ |\omega|<m\}$) decays at the rate $t^{-\frac{5}{6}}$.
		
		To handle the fraction in \eqref{resolvent.intro2}, observe that if $|\Gamma_+|<|\Gamma_-|$, one can expand in denominator in a Taylor series: this is what explains the summation in \eqref{decay.intro}.	We will explain in the paragraph below how the energy identity actually implies that $|\Gamma_+|<|\Gamma_-|$ (in fact $|\gamma_m|= \frac{|\Gamma_-|(m)}{|\Gamma_+|(m)}$ in \eqref{decay.intro}). To explain where \eqref{resolvent.intro2} originates, one has to discuss the role of the turning point in the potential of \eqref{ODE.intro} (see paragraphs below).
		
		\paragraph{The energy identity and the explicit oscillating profile} The well-known energy identity exploits that the potential in $\eqref{ODE.intro}$ is real-valued and thus $\Im(\rd_{s}u\ \bar{u})$ is independent of $s$ for all solution $u$ of \eqref{ODE.intro}. Using the energy identity shows that bound states do not exist. Moreover, if one relates to $u_H(\omega,s)$ to  $\tilde{B}_{\pm}$ the two independent solutions of \eqref{ODE.intro} where $\omega=\pm m $ (indeed \eqref{ODE.intro} can be solved explicitly up to errors if $\omega= \pm m$:  $\tilde{B}_{\pm}$ at leading order behave like Bessel functions), one gets \begin{equation}
			u_H(\omega,s) =  E_+(\omega) \cdot \left[\tilde{B}_+(s)+ \text{error}\right] + E_-(\omega) \cdot \left[\tilde{B}_-(s) + \text{error}\right], 
		\end{equation} and the energy identity implies $|E_+|^2 - |E_-|^2 \gtrsim 1$ (for $\omega>0$): it turns out that this is equivalent to showing $|\Gamma_+|(\omega)<~|\Gamma_-|(\omega)$ (i.e.\ $|\gamma_m|= \frac{|\Gamma_-|(m)}{|\Gamma_+|(m)}<1$ in \eqref{decay.intro}), which is necessary for the above summation argument.
		
		\paragraph{The main contribution from the low frequencies and the role of the turning point} In the regime $|\omega|<m$, however, the potential in the RHS of \eqref{ODE.intro} admits a \emph{turning point} when $m^2-\omega^2-\frac{2Mm^2}{r}=0$. The key point is that this turning point is located at $r  \approx  [m^2-\omega^2]^{-1}$ and thus intervenes closer to infinity as $\omega$ approaches $\pm m$. The resolution of the turning point yields coefficients of the form \eqref{resolvent.intro2} which oscillate  near $\omega=\pm m $ (this is due to the behavior of the Airy function, see Appendix~\ref{app:airy}). The management of the errors in the turning point analysis, and a subsequent WKB analysis below the turning point constitute the technical heart of the paper which leads to the $O(t^{-1+\delta})$ error in \eqref{decay.intro}. We conclude with the following  technical comment:  the turning point analysis describes the solution around and also ``above'' the turning point, that is, for $r \gg \left(m^2-\omega^2\right)^{-1}$. Below the turning point, a WKB analysis is more accurate: we match the two regimes for   $r  \approx  [m^2-\omega^2]^{-1}$ slightly below the turning point. Then, the main error comes from matching the solution at $\omega=\pm m$ (described by Bessel functions) with the WKB regime: the former becomes less accurate for large $r$, while the latter for smaller $r$. As it turns out, minimizing the error in \eqref{resolvent.intro2} leads to evaluating the scattering coefficients in the region  $r  \approx  [m^2-\omega^2]^{-\frac{1}{2}}$,  which is still near infinity, but below the turning point. 
		
		\subsection{Outline of the paper}
		In Section~\ref{setup.section}, we express the Klein--Gordon equation \eqref{KG.intro} in various forms and introduce our notations and conventions, notably regarding the Fourier transform. In Section~\ref{main.result.section}, we give the precise statement of Theorem~\ref{thm.intro}. In Section~\ref{yakov.section} we start the proof and address the justification of the Green's formula \eqref{Green.intro1}, the ODE asymptotic analysis (``boundary conditions for the ODE \eqref{ODE.intro}), and frequencies least contributing ($|\omega|>m$)  to the tail \eqref{decay.intro}. Section~\ref{maxime.section} is the technical heart of the paper in which we address the frequencies  most contributing ($|\omega|<m$) to the tail \eqref{decay.intro} through a careful and quantitative turning point and WKB analysis. In Section~\ref{FT.section}, we take the inverse Fourier transform of expressions found in Section~\ref{yakov.section} and Section~\ref{maxime.section} and exploit the Green's formula and various stationary phase arguments to yield \eqref{KG.intro}. Finally, we have two appendices: one on special functions intervening in our problem (Bessel and Airy) and one on various Volterra-type ODE results.
		
		\subsection{Acknowledgements} YS acknowledges support from an Alfred P. Sloan Fellowship in Mathematics and from NSERC discovery grants RGPIN-2021-02562 and DGECR-2021-00093. MVdM gratefully acknowledges support from the NSF Grant	 DMS-2247376.

		\section{Set-up}\label{setup.section}
		
		\subsection{Equations}\label{eq.section}
		
		We study	\begin{equation}\label{eq:kg}
			\Box_g \phi - m^2 \phi = F,
		\end{equation}   
		where $F$ denotes a potential inhomogeneity, $g$ is the following spherically symmetric black hole metric \begin{equation}\label{metricstandard}
			g= -\Big(1-\frac{2M}{r}+\frac{\DD(r)}{r^2}\Big) dt^2 +  \Big(1-\frac{2M}{r}+\frac{\DD(r)}{r^2}\Big)^{-1} dr^2 + r^2 \left( d\theta^2+ \sin^2(\theta) d\varphi^2\right),
		\end{equation} on  $(t,r,\theta,\varphi) \in \mathbb{R} \times (r_+,\infty) \times \mathbb{S}^2$, where  $r \rightarrow \DD(r)$ is a smooth function such that \begin{enumerate}
			\item \label{D1}$1-\frac{2M}{r}+\frac{\DD(r)}{r^2}$ admits a largest  zero $r_+>0$ such that $1-\frac{2M}{r}+\frac{\DD(r)}{r^2}>0$ for all $r>r_+$.
			\item \label{D2}  $\DD(r)$, $\DD'(r)$ and $\DD''(r)$ admit an asymptotic expansion in $\frac{1}{r}$, i.e.\ there exists $(D_N)_{N\in \mathbb{Z}_{\geq 0}}$ such that  for all $N\in \mathbb{Z}_{\geq 0}$ such that for all $p \in \{0,1,2\}$, there exists $C_{N+1+p}>0$, for all $r > r_+$:
			$$ \big|\frac{d^p}{dr^p} \left(\DD(r) - \sum_{k=0}^{N}  \frac{D_k}{r^k}\right)\bigr| \leq \frac{C_{N+1+p}}{r^{N+1+p}}.$$
			\item  \label{D3} We assume that $\kappa(r)\doteq \frac{1}{2}\frac{d}{dr}\left(1-\frac{2M}{r} + \frac{\DD(r)}{r^2}\right)>0$ for all $r\geq r_+$.  
		\end{enumerate}  Note that assumption \ref{D3}.~for $r=r_+$ corresponds to a strictly positive surface gravity of the  event horizon $\{r=r_+\}$. Assumption \ref{D3}.~is also used to ensure that the potential $V_{\omega}(r)$ in Section~\ref{sec:ftransform} is strictly positive, which prevents the occurrence of a resonance at zero frequency. \begin{rmk}
			From the proof, it is clear that it is not necessary to require $\DD(r)$ to admit an asymptotic expansion in $\frac{1}{r}$ at all orders: it is in fact sufficient to have $|\DD|(r)+ r|\DD'|(r)+ |\DD''|(r)\ls 1$. We will however not pursue this to avoid tedious technicalities.
		\end{rmk}
		The case where $\DD$ is constant, i.e.\  $\DD(r) =e^2 \geq 0$ with $0 \leq |e|<M$ corresponds  to
		the sub-extremal Reissner--Nordstr\"{o}m metric  (and satisfies \ref{D1}, \ref{D2}, \ref{D3})\begin{equation}\label{RNstandard}
			g= -\Big(1-\frac{2M}{r}+\frac{e^2}{r^2}\Big) dt^2 +  \Big(1-\frac{2M}{r}+\frac{e^2}{r^2}\Big)^{-1} dr^2 + r^2 \left( d\theta^2+ \sin^2(\theta) d\varphi^2\right).
		\end{equation}  (note that we  allow $e=0$, corresponding to the Schwarzschild metric).  We assume that $\phi$ is supported on the $L$ spherical harmonics.  More concretely,
		\begin{equation}\label{formofphi}
			\phi\left(t,r,\theta,\varphi\right) = \tilde{\phi}\left(t,r\right)Y_L\left(\theta,\varphi\right),
		\end{equation}
		where $Y$ is a spherical harmonic associated to the eigenvalue $-L(L+1)$ of the spherical Laplacian where $L \in \mathbb{Z}_{\geq 0}$. Unless said otherwise, in what follows, we shall always assume that $\phi$ is of the form~\eqref{formofphi} for some $L \in \mathbb{Z}_{\geq 0}$. We furthermore allow all constants to depend on $L$, and, because it is unlikely to cause confusion, we will identify $\phi$ and $\tilde{\phi}$ in what follows.
		
		We also recall that for $\DD(r)=e^2$ (Reissner--Nordstr\"{o}m case), $r_+(M,e)$ and $\kappa_+:=\kappa(r_+)$ are given by
		\[r_{\pm} \doteq M \pm \sqrt{M^2-e^2}>0,\]
		\[\kappa_{+} \doteq \frac{1}{2}\frac{r_+-r_-}{r_+^2}> 0.\]
		The Klein--Gordon equation~\eqref{eq:kg} then becomes
		\begin{align}\label{KGinrcoord}
			&-\left(1-\frac{2M}{r}+\frac{\DD(r)}{r^2}\right)^{-1}\partial_t^2\phi + \left(1-\frac{2M}{r}+\frac{\DD(r)}{r^2}\right)\partial_r^2\phi 
			\\ \nonumber &\qquad + \left(\frac{2M}{r^2} +\frac{r\DD'(r)-2\DD(r)}{r^3} +\frac{2}{r}\left(1-\frac{2M}{r}+\frac{\DD(r)}{r^2}\right)\right)\partial_r\phi -\left(m^2 + \frac{L(L+1)}{r^2}\right)\phi = F. 
		\end{align}
		This becomes, letting 
		\begin{equation}\label{defpsiandH}
			\psi:= r \phi,\qquad H \doteq r\left(1-\frac{2M}{r}+\frac{\DD(r)}{r^2}\right)F,
		\end{equation}
		and the coordinate $s=\rs \in \mathbb{R}$ defined by $\frac{ds}{dr}= (1-\frac{2M}{r}+ \frac{\DD(r)}{r^2})^{-1} $:
		\begin{equation}\label{eq:main}
			- \p_t^2 \psi + \p^2_{s} \psi  = \Big(1-\frac{2M}{r}+\frac{\DD(r)}{r^2}\Big)\Big(m^2 + \frac{L (L+1)}{r^2} +\frac{2M}{r^3}+\frac{r\DD'(r)-2\DD(r)}{r^4}\Big)\psi + H.
		\end{equation}
		It is useful to keep in mind that
		\[s = 2\kappa_+\log(r-r_+) + O(1)\text{ as }r\to r_+.\]
		Lastly, as is well-known, we recall that the metric expression~\ref{metricstandard} is not valid along the event horizon. In order to cover the horizon we introduce a coordinate $t^*\left(t,r\right) = t + p(r)$, where we define $p(r)$ by, for $r \lesssim 1$,
		\begin{equation}\label{compartrdefp}
			\frac{dp}{dr} \doteq \sqrt{P(r)} \doteq \sqrt{\left(1-\frac{2M}{r}+\frac{\mathcal{D}}{r^2}\right)^{-2} - \epsilon_0(r)\left(1-\frac{2M}{r}+\frac{\mathcal{D}}{r^2}\right)^{-1} - \epsilon_1(r)\frac{2M}{r}},
		\end{equation}
		for suitable functions $\epsilon_0(r)$ and $\epsilon_1(r)$ which are everywhere positive and sufficiently small, setting $p(r)$ to be constant for $r \gg 1$, and otherwise requiring that $t^*$ is smooth and that $t^* = \text{const}$ are spacelike hypersurfaces.
		
		Then in $\left(t^*,r\right)$ coordinates, we have
		\begin{align}\label{metriceddi}
			&g = -\left(1-\frac{2M}{r}+\frac{\mathcal{D}}{r^2}\right)\left(dt^*\right)^2 + 2\sqrt{1-\epsilon_0(r)\left(1-\frac{2M}{r} + \frac{\mathcal{D}}{r^2}\right) - \epsilon_1(r)\frac{2M}{r}\left(1-\frac{2M}{r}+\frac{\mathcal{D}}{r^2}\right)^2}dt^*dr 
			\\ &\qquad + \left(\epsilon_0(r)+\epsilon_1(r)\left(1-\frac{2M}{r}+\frac{\mathcal{D}}{r^2}\right)\frac{2M}{r}\right)dr^2 + r^2\left(d\theta^2 + \sin^2\theta d\varphi^2\right).
		\end{align}
		Note that a computation yields that when $r \lesssim 1$
		\[\left|\nabla t^*\right|^2 = -\epsilon_0(r) -\epsilon_1(r)\frac{2M}{r}\left(1-\frac{2M}{r}+\frac{\mathcal{D}}{r^2}\right),\]
		so that the hypersurfaces of constant $t^*$ are spacelike. 
		
		It is useful to keep in mind that
		\[p(r) - s(r) = O(1)\text{ as }r\to r_+.\]
		We let $Q\left(r\right)$ denote the product of the square root of the determinant of $-g$ in the $\left(t^*,r\right)$ coordinates with the function $r^{-2}|\sin|^{-1}(\theta)$. We will have
		\[Q \sim 1,\qquad  \partial_rQ \sim r^{-2} \qquad \forall r \in [r_+,\infty).\]
		Finally, for later use, we compute the commutator of $\Box_g$ with $f(t^*)$ for any smooth function $f(t^*)$:
		\begin{align}\label{commutatorboxftstar}\begin{split}
				\left[\Box_g,f(t^*)\right]\phi &= \left(\Box_gf\right)\phi + 2\nabla f \cdot \nabla \phi
				\\  &= A_1\left(r\right) f''\phi + 2A_2(r) f'\phi + 2A_1\left(r\right)f'\partial_{t^*}\phi + 2A_3\left(r\right)f'\partial_r\phi,\end{split}
		\end{align}
		where, in the region where~\eqref{compartrdefp} holds,
		\begin{equation}\label{thisisa1cool}
			A_1(r) \doteq Q^{-2}\left(\epsilon_0(r) + \epsilon_1(r)\left(1-\frac{2M}{r}+\frac{\mathcal{D}}{r^2}\right)\frac{2M}{r^2}\right),
		\end{equation}
		
		\begin{equation}\label{thisisa2cool}
			A_2(r) \doteq -r^{-2}Q^{-1}\partial_r\left(Q^{-1} r^2 \sqrt{1-\epsilon_0(r)\left(1-\frac{2M}{r} + \frac{\mathcal{D}}{r^2}\right) - \epsilon_1(r)\frac{2M}{r}\left(1-\frac{2M}{r}+\frac{\mathcal{D}}{r^2}\right)^2}\right),
		\end{equation}
		
		\begin{equation}\label{thisisa3cool}
			A_3(r) \doteq -Q^{-2}\sqrt{1-\epsilon_0(r)\left(1-\frac{2M}{r} + \frac{\mathcal{D}}{r^2}\right) - \epsilon_1(r)\frac{2M}{r}\left(1-\frac{2M}{r}+\frac{\mathcal{D}}{r^2}\right)^2}.
		\end{equation}

		\subsection{Fourier transform}\label{sec:ftransform} 
		We now take the Fourier transform in time, to obtain, letting $u(\omega,s)= \int_{\RR} e^{ i \omega t}\psi(t,s) dt$:
		\begin{equation}
			\p^2_{s} u= - \omega^2 u + \Big(1-\frac{2M}{r}+\frac{\DD(r)}{r^2}\Big)\Big(m^2 + \frac{L (L+1)}{r^2} +\frac{2M}{r^3}+\frac{r\DD'(r)-2\DD(r)}{r^4}\Big)u.
		\end{equation} We will justify rigorously this operation in Section~\ref{yakov.section}. For now, we record the equations we will later be using in the fixed time-frequency $\omega$ analysis. 
		Calling $V_\omega(s)$ the RHS of the above equation, we can rewrite it as
		\begin{equation}\label{eq:mainV}
			\p^2_{s} u =V_\omega(s) u . 
		\end{equation} For any two solutions $f(\omega,s)$ and $h(\omega,s)$ to~\eqref{eq:mainV} we introduce the Wronskian by
		\[W\left(f,g\right)(\omega) \doteq \partial_sf\  h - \partial_s h\ f.\]
		One may easily check that $\partial_sW = 0$ and so that $W$ may be considered a function of $\omega$. Furthermore, $W$ vanishes if and only if $f$ is a (possibly $\omega$-dependent) constant multiple of $h$.
		
		We note that
		\begin{equation}
			V_\omega(s) = (m^2-\omega^2) -  \frac {2M m^2}{s} + 2M m^2 \Big(\frac{1}{r}-\frac{1}{s}\Big) + \Big(1-\frac{2M}{r}+\frac{\DD(r)}{r^2}\Big)\Big( \frac{L (L+1)}{r^2} +\frac{2M}{r^3}+\frac{r\DD'(r)-2\DD(r)}{r^4}\Big) +\frac{\DD(r) m^2}{r^2}.
		\end{equation}
		We define
		$$
		G(s):= s^2 \Big(  2M m^2 \Big(\frac{1}{r}-\frac{1}{s}\Big) +\Big(1-\frac{2M}{r}+\frac{\DD(r)}{r^2}\Big)\Big( \frac{L (L+1)}{r^2} +\frac{2M}{r^3}+\frac{rD'(r)-2\DD(r)}{r^4}\Big) +\frac{\DD(r) m^2}{r^2}\Big),
		$$
		so that 
		\begin{equation}
			V_{\omega}(s) = (m^2-\omega^2) - \frac {2M m^2}{s} 	+ \frac{G(s)}{s^2}.
		\end{equation}
		Note that for $s$ large enough: \begin{equation}\label{G.properties}
			|G|(s) \lesssim \log(s),\  \Big|\frac{dG}{ds}\Big|(s) \lesssim s^{-1}.
		\end{equation} In the large $s$ regime, \eqref{G.properties} are the only properties of $G$ we will be using.  Note, however, that the positivity of the potential $V_{\omega}(s)>0$ is crucial  in the global part of our argument which rules out the possibility of a resonance at $0$, and is implicit in our use of energy estimates when we establish the basic representation formula for $u$.
		
		\subsection{Regular solutions at the event horizon and infinity}\label{regsolsection}
		
		We introduce the following fundamental solutions (also known as ``Jost solutions'') $u_H(\omega,s)$ and  $u_I(\omega,s)$ as the only solutions respectively regular at the event horizon $\mathcal{H}=\{s=-\infty\}$ and infinity $\mathcal{I}=\{s=+\infty\}$ (up to a multiplicative constant). More precisely, we will show (see Section~\ref{yakov.section}) \begin{equation}\label{uH.boundary}
			u_H(\omega,s) \sim e^{-i \omega s}\text{ as } s \rightarrow -\infty,
		\end{equation}\begin{equation}\label{uI.boundary}
			u_I(\omega,s) \sim  s^{\frac{Mm^2}{\sqrt{m^2-\omega^2}} }e^{-\sqrt{m^2-\omega^2} s} \text{ as } s \rightarrow +\infty.
		\end{equation}

		Here, and in the rest of the paper, the ``$\sim$'' means the following:
		\[h\left(s\right) \sim l\left(s\right)\text{ as }s\to s_0 \text{ if and only if }\lim_{s\to s_0} \frac{h}{l} = 1.\]
		Here $s_0$ may be a finite number or $\pm\infty$.

		\section{Precise statement of the main result}\label{main.result.section}

		We are now ready to state our main theorem. 
		\begin{thm}\label{main.thm}
			Let $g$ be a Lorentzian metric on   $(t^{*},r,\theta,\varphi) \in \mathbb{R} \times (r_+,+\infty) \times \mathbb{S}^2$,  where $r_+>0$, given by \eqref{metriceddi} satisfying the assumptions \ref{D1}, \ref{D2}, \ref{D3}. Let  $r \rightarrow (\phi_0(r),\phi_1(r))$ be smooth, compactly supported on $\{t^{*}=0,\ r_+ \leq r \leq R,\ \omega \in \mathbb{S}^2\}$  for some $R>r_+$. Let $m^2>0$, $L\in \mathbb{Z}_{\geq 0}$ and $\phi= \phi_L(t^{*},r) Y_L(\theta,\varphi)$  to be the solution of the Klein--Gordon equation supported on a single angular mode $L$ with initial data $(\phi_0(r),\phi_1(r))$, i.e.\,  defining $n$ to be the unit normal to $\{t^{*}=0\}$: \begin{align*}
				&  \Box_g \phi = m^2 \phi,\\ & (\phi,n\phi)_{|t^{*}=0}(r,\theta,\varphi)= Y_L(\theta,\varphi) (\phi_0(r),\phi_1(r)). 
			\end{align*}
			Then, there exists  $\gamma_m(M,\DD,m^2) \in \mathbb{C}\setminus \{0\}$ with $|\gamma_m|<1$ such that  for every $\delta>0$, $N \in \mathbb{Z}_{\geq 0}$ and $R_0\geq r_+$, there exists $\check{C}_N(\delta,N,R_0,R,\phi_0,\phi_1,L)>~0$ such that for all $r_+ \leq r  \leq R_0$ and $t^{*}\geq 1$, $\phi_L(t^{*},r)$ obeys the following decay-in-time estimates: \begin{equation}\label{main.formula}\begin{split}
					&  \left|   \rd_{t^{*}}^N\left(r\phi_L(t^{*},r)- \tilde{C}(M,\DD,m^2)  \cdot \bigl| \mathcal{L}[\phi_0,\phi_1] \bigr| \cdot |\uhs|(r) \cdot (t^{*})^{-\frac{5}{6}}\cdot \left[\sum_{q=1}^{+\infty} (-\gamma_m)^{q-1}q^{\frac{1}{3}}\cos(\Psi_q(t^{*})+  \check{\varphi}_H(r)+\theta[\phi_0,\phi_1]) \right]\right) \right| \\ & \leq \check{C}_N \cdot (t^{*})^{-1+\delta},\end{split}
			\end{equation} where \begin{itemize}
				\item $\Psi_q(t^{*})$ is a linear phase with a sublinear correction given by $\Psi_q(t^{*})=  t^{*}\Phi_q(t^{*}) +  \phi_-(M,\DD,m^2)-\frac{\pi}{4}$,   where $\phi_-(M,\DD,m^2)\in \mathbb{R}$ and, defining $\omega_q(t^{*}):=m - \frac{(4\pi^2 m^2 M)^{\frac{1}{3}}}{2} q^{\frac{2}{3}} (t^{*})^{-\frac{2}{3}}$, one has: $$\Phi_q(t^{*})=\omega_q(t^{*})-2\pi m^2M q \cdot (t^{*})^{-1} \left(m^2-\omega_q^2(t^{*})\right)^{-\frac{1}{2}}= m  -\frac{3}{2} [2\pi M]^{\frac{2}{3}} m q^{\frac{2}{3}} (t^{*})^{-\frac{2}{3}}  +O_{m}((t^{*})^{-\frac{4}{3}}).$$ 
				\item $u_H^{*}(r)=|u_H^{*}|(r) e^{i\check{\varphi}_H(r)}:=u_H(\omega=m,r)$, where  $u_H^{*}(r)$  is a solution of \eqref{eq:mainV} which never vanishes and  satisfies  the following: $$u_H^{*}(r) \to 1  \text{ as } r \rightarrow r_+,\    |u_H^{*}|(r) \ls r^{\frac{1}{4}}
				\text{ for } r \geq  r_+.$$
				\item $\mathcal{L}(\cdot,\cdot)= |\mathcal{L}|(\cdot,\cdot) e^{i \theta(\cdot,\cdot)} $ is a (non-trivial)   linear form   defined on  couples $(\phi_0,\phi_1)$ of smooth compactly supported functions,   using the $A_i(r)$ defined in \eqref{commutatorboxftstar}:\begin{equation}
					\mathcal{L}[\phi_0,\phi_1]:=  \int_{r_+}^{+\infty} e^{-ip(r) m}u_H(m,s(r))\left(-imA_1(r) \phi +2A_2(r)\phi + A_1(r) \partial_{t^*}\phi + 2A_3(r)\partial_r\phi\right)|_{t^*=0}r\, dr
				\end{equation} 
				\item $\tilde{C}(M,\DD,m^2):= 2^{-\frac{1}{6}} 3^{-\frac{1}{2}} \pi^{\frac{1}{3}}[1-|\gamma_m|^2]  M^{\frac{2}{3}} m^{\frac{1}{6}}>0$ is a constant.
			\end{itemize}
		\end{thm}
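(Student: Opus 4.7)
The plan is to reduce the evolution problem to a fixed time-frequency analysis via the Fourier transform in $t$, as already foreshadowed in Section~\ref{proof.section}. First I would justify that $\psi = r\phi_L$ can be expressed via a Green's representation
\[
u(\omega,s) = \frac{u_H(\omega,s)\int_s^{+\infty} u_I(\omega,S) \hat H(\omega,S)\,dS + u_I(\omega,s)\int_{-\infty}^{s} u_H(\omega,S) \hat H(\omega,S)\,dS}{W(u_I,u_H)(\omega)},
\]
where $u_H, u_I$ are the Jost solutions defined by \eqref{uH.boundary}--\eqref{uI.boundary}, and the source $\hat H$ encodes the initial data on $\{t^*=0\}$ through the commutator formula \eqref{commutatorboxftstar}. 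This step requires care near $\omega = \pm m$ where $u_I$ becomes exponentially localized: I would work with contours slightly off the real axis, using the energy identity to rule out resonances (the strict positivity of $V_\omega$ from assumption \ref{D3} is crucial here), and then pass to the real line.

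Next I would decompose the scattering coefficient. Writing $u_I = \frac{W(u_I,\bar u_H)}{W(\bar u_H, u_H)} u_H + \frac{W(u_I, u_H)}{W(u_H,\bar u_H)} \bar u_H$ and inserting this into the Green's formula isolates the singular factor $\mathfrak{R}(\omega) := \frac{W(u_I,\bar u_H)}{W(u_I,u_H)}$. For $|\omega|\geq m+\eta$ or $\omega$ in a compact set away from $\pm m$ the factor is smooth, so a single integration by parts in the inverse Fourier transform yields $O((t^*)^{-N})$ for any $N$. For $|\omega|>m$ close to $\pm m$ the potential $V_\omega$ has constant negative sign at infinity and a global WKB analysis gives absolute continuity of $\mathfrak{R}$, hence $O((t^*)^{-1})$ after one integration by parts. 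The core work lies in the regime $0<m-|\omega|\ll 1$, where the equation $\partial_s^2 u = V_\omega u$ has a turning point at $r_c \sim (m^2-\omega^2)^{-1}$ pushed to infinity as $|\omega|\to m$.

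For this regime I would perform a three-zone matched asymptotic analysis: (i) an Airy model near the turning point $r \sim r_c$, rigorously controlled via a Volterra iteration on the Airy Green's function (the appendices on Airy/Bessel and Volterra are set up exactly for this); (ii) a WKB zone well below the turning point to propagate the matched solution down to bounded $r$; (iii) a Bessel-type comparison with the $\omega = \pm m$ solutions $\tilde B_\pm$, matched in the overlap $r \sim (m^2-\omega^2)^{-1/2}$, which is the point that minimizes the combined error. The composition of these matchings produces the scattering formula
\[
\mathfrak{R}(\omega) = \frac{\Gamma_+\, e^{i\pi M m^2(m^2-\omega^2)^{-1/2}} + \Gamma_-\, e^{-i\pi M m^2(m^2-\omega^2)^{-1/2}}}{\bar\Gamma_+\, e^{-i\pi M m^2(m^2-\omega^2)^{-1/2}} + \bar\Gamma_-\, e^{i\pi M m^2(m^2-\omega^2)^{-1/2}}} + \mathrm{error},
\]
with error terms whose $\omega$-derivatives are integrable. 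Then the key non-cancellation $|\Gamma_+|<|\Gamma_-|$ (equivalently $|\gamma_m|<1$) follows from the conservation of $\mathrm{Im}(\partial_s u\,\bar u)$ applied to $u_H$ and the decomposition $u_H = E_+(\omega)[\tilde B_+ + \text{err}] + E_-(\omega)[\tilde B_- + \text{err}]$, because the energy flux through the horizon forces $|E_+|^2 - |E_-|^2 \gtrsim 1$, a strict inequality coming from superradiance-free behavior of the Schwarzschild/Reissner--Nordström event horizon.

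Finally I would invert the Fourier transform. Expanding the denominator of $\mathfrak{R}$ as a geometric series in $\gamma_m e^{-2i\pi M m^2(m^2-\omega^2)^{-1/2}}$ (convergent pointwise since $|\gamma_m|<1$, with the tail summed via a uniform bound) reduces matters to oscillatory integrals $\int e^{-i\omega t^* + i(2q-1)\pi M m^2(m^2-\omega^2)^{-1/2}} \chi(\omega)\,d\omega$. The phase has a unique critical point $\omega_q(t^*)=m - \tfrac12 (4\pi^2 m^2 M)^{1/3} q^{2/3}(t^*)^{-2/3}$, and $\partial_\omega^2 \Phi_q \sim (t^*)^{5/3}q^{-1}$, so stationary phase yields the $q^{1/3}(t^*)^{-5/6}$ contribution with the phase $\Psi_q(t^*)$ as stated. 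The main obstacle I expect is \emph{quantitative} control of the error $O((t^*)^{-1+\delta})$: the turning point is not uniformly separated from infinity as $\omega\to\pm m$, so the Volterra remainders in the Airy/Bessel/WKB matching must be tracked with sharp $\omega$-dependence and their $\omega$-derivatives estimated, since each integration by parts needed to improve the rate beyond $(t^*)^{-5/6}$ costs one $\omega$-derivative on $\mathfrak{R}$ in a neighborhood of $\pm m$ where $\mathfrak{R}$ is highly oscillatory. A secondary difficulty is converting the Fourier inversion, naturally performed in $(t,s)$, into the uniform $(t^*,r)$ estimate up to the event horizon via the commutator formula \eqref{commutatorboxftstar}, which is what produces the specific form of $\mathcal{L}[\phi_0,\phi_1]$.
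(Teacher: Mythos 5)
Your proposal follows essentially the same route as the paper: Fourier transform in $t$, Green's formula via Jost solutions $u_H, u_I$ justified by contour deformation and the energy identity, decomposition of the scattering factor $W(u_I,\bar u_H)/W(u_I,u_H)$, WKB absolute-continuity for $\omega>m$, a three-zone Airy/WKB/Bessel matching for $\omega<m$ with the overlap placed at $r\sim (m^2-\omega^2)^{-1/2}$, the energy identity yielding $|\gamma_m|<1$, geometric-series expansion of the scattering factor, stationary phase, and the cut-off argument to convert Cauchy data into an inhomogeneity and obtain the linear form $\mathcal{L}$. The only discrepancies are cosmetic numerics that do not affect the structure — the phase exponent that arises from the geometric expansion is $2\pi q\,Mm^2(m^2-\omega^2)^{-1/2}$ for $q\geq 1$ rather than $(2q-1)\pi Mm^2(m^2-\omega^2)^{-1/2}$, and the second derivative of the phase at the critical point scales like $(t^*)^{5/3}q^{-2/3}$ rather than $(t^*)^{5/3}q^{-1}$.
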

		\begin{rmk}
			Our compact support assumption allows for $(\phi_0(r),\phi_1(r))$ to be supported near $\{r=r_+\}$. Otherwise said, we do not require the initial data to be supported away from the event horizon   $\{r=r_+\}$.
		\end{rmk}
		\begin{rmk}
			While for simplicity we restrict our initial data to be smooth and compactly supported, we reiterate that it is not difficult to adapt our methods to obtain a statement for data with finite regularity and polynomial decay in $r$ at infinity.
		\end{rmk}
		\begin{rmk}
			With little extra work, it follows from our methods that for all $N \in \mathbb{Z}_{\geq 0}$, $M \in \mathbb{Z}_{\geq 0}$ one  has \begin{equation}\label{main.formula2}\begin{split}
					&  \left|    \rd_{r}^M \rd_{t^{*}}^N\left(r\phi_L(t^{*},r)- C(M,\DD,m^2)  \cdot \bigl| \mathcal{L}[\phi_0,\phi_1] \bigr| \cdot |\uhs|(r) \cdot (t^{*})^{-\frac{5}{6}}\cdot \left[\sum_{q=1}^{+\infty} (-\gamma_m)^{q-1}q^{\frac{1}{3}}\cos(\Psi_q(t^{*})+  \check{\varphi}_H(r)+\theta[\phi_0,\phi_1]) \right]\right) \right| \\ & \leq \breve{C}_{N,M} \cdot (t^{*})^{-1+\delta}.\end{split}
			\end{equation} 
		\end{rmk}
		
		\section{Building $u_H$, $u_I$, and establishing the fundamental representation formula}\label{yakov.section}

		The main goal of this section is to prove the following two propositions (which involves the functions $u_H(\omega,s)$ and $u_I(\omega,s)$ constructed below in Section~\ref{wherewemakeuhui}).
		
		The first establishes a representation formula for the Fourier transform of suitable solutions to the inhomogeneous Klein--Gordon equation.
		\begin{prop}\label{repuform}Let $F$ be a smooth function on the Reissner--Nordtr\"{o}m spacetime which is compactly supported within $\{t > c\}$ for some $c \in \mathbb{R}$, and suppose that $\phi$ solves the Klein--Gordon equation~\eqref{KGinrcoord} and moreover vanishes for $t < c$. Then, if we define 
			\[u\left(\omega,s\right) \doteq \int_{-\infty}^{+\infty} e^{i\omega t}\psi\left(t,s\right)\, dt,\qquad \hat{H} \doteq \int_{-\infty}^{+\infty}e^{i\omega t}H\left(t,s\right)\, dt, \]
			(where $\psi$ and $H$ are defined by~\eqref{defpsiandH}) $u$ will satisfy, for $\omega \in [0,\infty)\setminus \{m\}$
			\begin{equation}\label{agoodrepforu}
				u(\omega,s) = W^{-1}(\omega)\left[u_I(\omega,s)\int_{-\infty}^su_H(\omega,S)\hat{H}(\omega,S)\, dS+ u_H(\omega,s)\int_s^{+\infty}u_I\left(\omega,S\right)\hat{H}\left(\omega,S\right)\, dS \right], 
			\end{equation}
			where $W(\omega)$ is the Wronskian of $u_I$ and $u_H$, and satisfies the following two estimates. Firstly, when $\omega > m$, we have for any small $\delta >0$, there exists a constant $C(\delta) > 0$ so that:
			\[\sup_{\omega \in (m,+\infty)}\frac{(1+\omega)^{\frac{3}{4}}\left|\omega-m\right|^{\frac{1}{4}}}{\left|W\right|}+  (1+\omega)^{-2\delta}\left|\omega-m\right|^{\frac{1}{2}+2\delta}\left|\frac{d}{d\omega}\left(\frac{W(\omega)}{(\omega^2-m)^{\frac{1}{4}}}\right)\right| \leq C(\delta).\]
			Secondly, for any small $\delta > 0$, there exists a constant $C'(\delta) > 0$ so that 
			\[\sup_{\omega \in [0,m-\delta]}\left|W\right|^{-1}(\omega)+ \left|\frac{dW}{d\omega}\right|(\omega) \leq C'(\delta).\]

			Finally, we have that for any compact sets $K_1,K_2 \subset \mathbb{R}$, $u\left(\omega,s\right) \in L^{\infty}_{\omega \in K_1}L^{\infty}_{s \in K_2}$.
			
		\end{prop}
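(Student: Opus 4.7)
The plan is to prove the representation formula by variation of parameters after justifying the Fourier transform through an $\epsilon$-regularization argument, and then extract the Wronskian estimates from a careful matching analysis of $u_H$ and $u_I$ at their respective asymptotic ends.

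For the first step, since $\phi$ and $F$ vanish for $t < c$ and the stationary Klein--Gordon equation admits energy estimates giving at most polynomial growth of $\psi$ in $t$ on compact $s$-sets, the damped function $\psi_\epsilon(t,s) \doteq e^{-\epsilon t}\psi(t,s)$ is integrable in $t$ for any $\epsilon > 0$. Its time Fourier transform $u_\epsilon(\omega,s) = u(\omega+i\epsilon,s)$ is holomorphic in the upper half-plane and satisfies $\partial_s^2 u_\epsilon = V_{\omega+i\epsilon} u_\epsilon + \hat H_\epsilon$, with $\hat H$ entire in $\omega$ by compact support of $F$. Because $V_{\omega+i\epsilon}$ has positive real part for $|s|$ large, $u_\epsilon$ decays exponentially at both ends and is uniquely determined, so variation of parameters using the analytically continued Jost solutions $u_H$ and $u_I$ from Section~\ref{regsolsection} produces~\eqref{agoodrepforu} for $\Im\omega > 0$. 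One then takes $\epsilon \downarrow 0$ and extends by continuity. The extension relies on the non-vanishing of $W$ on $[0,\infty)\setminus \{m\}$: for $\omega \in [0,m)$ this follows from $V_\omega(s) > 0$ (assumption~\ref{D3} together with $\omega^2 < m^2$), which rules out $L^2$ solutions of $\partial_s^2 u = V_\omega u$ by multiplication by $\bar u$ and integration by parts; for $\omega > m$ it follows from the fact that $u_H$ is oscillatory at $s\to -\infty$ while any multiple of $u_H$ cannot match the outgoing Coulomb structure of $u_I$ at $s \to +\infty$.

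For the Wronskian bounds, since $W$ is $s$-independent one computes it at large $s$. In the regime $\omega > m$, $u_I$ is a single outgoing Coulomb wave with phase $\sqrt{\omega^2-m^2}\, s + (Mm^2/\sqrt{\omega^2-m^2})\log s$, while $u_H$ decomposes into a combination of the two independent Coulomb modes; the Wronskian then equals $2i\sqrt{\omega^2-m^2}$ times the ``reflected'' coefficient of $u_H$. For $\omega$ bounded away from $m$ a standard WKB analysis controls this coefficient from below, yielding the $(1+\omega)^{-3/4}$ factor at high frequency. As $\omega \to m^+$, the turning point of $V_\omega$ migrates to infinity and the matching is governed by Airy/Bessel asymptotics analogous to those developed in Section~\ref{maxime.section} for $\omega < m$, producing the sharp $|W| \gtrsim (\omega-m)^{1/4}(1+\omega)^{-3/4}$. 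Differentiating in $\omega$ costs one inverse power of the transition width, which is the source of the $(\omega-m)^{1/2+2\delta}$ weight and the $(1+\omega)^{2\delta}$ loss. For $\omega \in [0,m-\delta]$ the bound $|W|^{-1}+ |dW/d\omega| \le C(\delta)$ is routine: standard Volterra iteration produces $C^1$-in-$\omega$ control of the Jost solutions in the coercive-potential regime and the non-vanishing argument above provides a uniform lower bound on $|W|$. The final $L^\infty_\omega L^\infty_s$-bound on $u$ on compacts follows by inserting the Wronskian estimates into~\eqref{agoodrepforu} and using that $\hat H$ is entire and rapidly decaying in $\omega$.

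The main obstacle will be the sharp Wronskian estimate near $\omega = m^+$: the horizon-regular Jost solution must be tracked through a turning region that recedes to infinity as $\omega \to m$, mirroring the below-threshold analysis but in a regime where $u_I$ itself is a Coulomb oscillation rather than an exponential; controlling phases accurately enough to extract the $(\omega-m)^{1/4}$ exponent, and tracking its $\omega$-derivative to obtain the stated $\delta$-losses, is the delicate part of the argument.
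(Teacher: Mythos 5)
Your $\epsilon$-damping derivation of the representation formula~\eqref{agoodrepforu} matches the paper's approach, but the Wronskian estimates and the $L^\infty$ bound are where the proposal goes wrong, and these are the substantive content of the proposition.

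For the Wronskian lower bound on $(m,\infty)$ you invoke a WKB/Airy matching tied to a ``turning point that migrates to infinity as $\omega\to m^+$.'' That picture belongs to the $\omega<m$ regime. For $\omega>m$ the far-field potential is $V_\omega(s)=(m^2-\omega^2)-2Mm^2/s+O(\log s/s^2)$, which is strictly negative for all $s\geq A$ uniformly in $\omega>m$; there is no turning point in the matching region, and no Airy behaviour to exploit. The paper's Lemma~\ref{wronknotzerobigm} needs none of this: it uses the conserved flux $Q_T=\Im(\partial_s u_I\,\overline{u_I})$, evaluated at $s\to+\infty$ (giving $\sqrt{\omega^2-m^2}$) and at $s\to-\infty$ using the decomposition $u_I=\alpha u_H-(2i\omega)^{-1}W\overline{u_H}$ (giving $-\omega|\alpha|^2+\tfrac{1}{4\omega}|W|^2$). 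Equating the two yields $|W|^2\geq 4\omega\sqrt{\omega^2-m^2}$ immediately, with no asymptotic matching, no reflection coefficient, and no transition-width bookkeeping. Your route, besides being much heavier, would not apply as stated.

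Your claim that the $L^\infty_{\omega\in K_1}L^\infty_{s\in K_2}$ bound ``follows by inserting the Wronskian estimates into~\eqref{agoodrepforu}'' cannot be right when $K_1$ contains $m$. The very estimate you would insert has $|W|^{-1}\lesssim(\omega-m)^{-1/4}$, which is unbounded; the proposition gives no control at all on $W^{-1}$ on $(m-\delta,m)$; and $u_I$ is not uniformly normalizable as $\omega\to m$. The paper abandons the representation formula near $\omega=m$: it returns to the damped equation $\partial_s^2 u_\epsilon+(\omega_\epsilon^2-\mathring V)u_\epsilon=H_\epsilon$, multiplies by $\overline{\omega_\epsilon u_\epsilon}$ (imaginary part) and by $-y\,\overline{\partial_s u_\epsilon}$ with a Morawetz-type weight satisfying $y'>0$ (real part), obtains a uniform-in-$\epsilon$ local $L^2$ bound, and finishes with Sobolev. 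That direct energy argument is the only mechanism that gives the bound through $\omega=m$; your route has a genuine hole there. Separately, your treatment of $[0,m-\delta]$ as ``routine'' glosses over the degeneracy at $\omega=0$, where $u_I=\tfrac{\overline W}{2i\omega}u_H-\tfrac{W}{2i\omega}\overline{u_H}$ is singular and the paper needs a replacement solution $\lim_{\omega\to0^+}(\overline{u_H}-u_H)/(2i\omega)\sim s$ together with the quantitative positivity $V_0>0$ in an integral identity (Lemma~\ref{wronknotzerosmallm}).
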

		\begin{rmk}For simplicity of notation we restrict our attention to $\omega \geq 0$. However, since the scalar field $\psi$ is real valued, we have that $u\left(-\omega,s\right) = \overline{u\left(\omega,s\right)}$ and thus one may immediately obtain the corresponding formulas for $\omega < 0$.
		\end{rmk}
		\begin{rmk}By a density argument it is straightforward to see that one can replace the assumption that $F$ vanishes for large $r$ with, for example, the assumption that $F$ has a finite norm for which smooth compactly supported functions are dense, which guarantees that $\psi$ is uniformly bounded, and which guarantees that the integrals in~\eqref{agoodrepforu} converge absolutely.  
		\end{rmk}
		
		\begin{rmk}\label{horizonrepform}By working in regular coordinates  $(t^*,r)$ one can derive also an analogue of~\eqref{agoodrepforu} which holds up to the event horizon. Namely, in the regular $(t^*,r)$ coordinates, we may define
			\begin{equation}
				\check{u}\left(\omega,r\right) \doteq e^{i\omega p(r)}u\left(\omega,r\right) = \int_{-\infty}^{+\infty}e^{i\omega t^*}\psi\left(t^*,r\right)\, dt^*,\qquad \check{H}\left(\omega,r\right) \doteq  e^{i\omega p(r)}\hat{H} =  \int_{-\infty}^{+\infty}e^{i\omega t^*}H\left(t^*,r\right)\, dr.  
			\end{equation}
			We then have 
			\begin{align}\label{agoodrepformonthehorizon}
				&\check{u}\left(\omega,r\right) = W^{-1}(\omega)\left[\check{u}_I(\omega,r)\int_{-\infty}^{s(r)}u_H(\omega,S)\hat{H}(\omega,S)\, dS+ \check{u}_H(\omega,r)\int_{s(r)}^{+\infty}u_I\left(\omega,S\right)\hat{H}\left(\omega,S\right)\, dS \right],
			\end{align}
			where $\check{u}_I\left(\omega,r\right) \doteq  e^{i\omega p(r)}u_I\left(\omega,s\left(r\right)\right)$ and $\check{u}_H\left(\omega,r\right) \doteq  e^{i\omega p(r)}u_H\left(\omega,s\left(r\right)\right)$. A key point of this form of the representation formula is that we expect $\partial_{\omega}\check{u}_H$ uniformly bounded in any region $r \in [r_+,R_0]$ for $R_0 < \infty$. 
		\end{rmk}
		\begin{rmk}\label{switcharoo}It will be useful to observe the (immediate) fact that the formulas~\eqref{agoodrepforu} and~\eqref{agoodrepformonthehorizon} will continue to hold if $u_I$ and $\tilde{u}_I$ are both multiplied by any non-vanishing function of $\omega$. 
		\end{rmk}
		The second main result of this section uses the WKB method to establish estimates for the solution $u_I$ which cover the whole range of $\omega \in (m,+\infty)$.
		\begin{prop}\label{betterbounduI}For any $\omega \in (m,+\infty)$, let $u_I\left(\omega,s\right)$ be the unique solution to~\eqref{eq:mainV} which satisfies
			\[u_I =  r^{\frac{Mm^2}{-i\sqrt{\omega^2-m^2}}} e^{i\sqrt{\omega^2-m^2} s}\left(1+ O_{\omega}\left(r^{-1}\right)\right) \text{ as }s \to \infty.\]
			Furthermore, choose $A \gg 1$ sufficiently large (independent of $\omega$) so that, in particular, $V_{\omega}$ is strictly negative for all $s \in [A,\infty)$.
			
			Then, there exists a function $h\left(\omega,s\right)$, defined for $s \in [A,\infty)$, so that we may write
			\[u_I = \left(\omega^2-m^2\right)^{1/4}\left(-V_{\omega}\right)^{-1/4}\exp\left(i\int_A^s\left(-V_{\omega}\right)^{1/2}\, dx\right)\left(1+h\right),\]
			and so that for any $\delta > 0$, there exists a constant $C(\delta)$ independent of $\omega$ so that for all $s \in [A,\infty)$: (recall that $\xi=\omega$ and $\rd_{\xi}$ is expressed in the $(\xi,s)$ coordinates):
			\[\left|h\right| + \left|\partial_sh\right| \leq C(\delta)r^{-\frac{1}{2}},\qquad \left|\partial_{\xi }h\right| + \left|\partial_{\xi s}^2h\right| \leq C(\delta)\ \left(\left(\omega-m\right)^{-1/2-2\delta}r^{-\delta}+  r\right).\]
		\end{prop}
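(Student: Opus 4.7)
The approach is the classical Liouville--Green (WKB) substitution, refined to yield estimates uniform in $\omega \in (m, \infty)$ as $\omega \downarrow m$. First I check that $A$ can be taken large enough, independently of $\omega$, so that $-V_\omega(s) \geq M m^2/s > 0$ on $[A, \infty)$; this follows from the expansion $-V_\omega(s) = (\omega^2-m^2) + 2Mm^2/s + O(\log s \cdot s^{-2})$ and $|G(s)| \lesssim \log s$. I then make the substitution
\[
u_I(\omega, s) \;=\; (\omega^2-m^2)^{1/4}\,(-V_\omega(s))^{-1/4}\,\exp\!\left(i\int_A^s\sqrt{-V_\omega(x)}\,dx\right)(1+h(\omega,s)).
\]
Using the standard identity $\partial_s\log(-V_\omega)^{-1/4} = -\tfrac{1}{2}\partial_s\log\sqrt{-V_\omega}$, the first-order term in the ODE for $1+h$ coming from the amplitude exactly cancels the one coming from the phase, and one is left with
\[
h'' + 2i\sqrt{-V_\omega}\,h' \;=\; -E_\omega(1+h), \qquad E_\omega \doteq (-V_\omega)^{1/4}\partial_s^2\!\left[(-V_\omega)^{-1/4}\right].
\]
A direct computation using $|\partial_s V_\omega| \lesssim s^{-2}$, $|\partial_s^2 V_\omega| \lesssim s^{-3}$ and $-V_\omega(s) \gtrsim s^{-1}$ yields the uniform-in-$\omega$ estimate $|E_\omega(s)| \lesssim s^{-2}$. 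The asymptotic normalisation of $u_I$ at $s = +\infty$ is equivalent to the boundary conditions $h(\omega,\infty) = h'(\omega,\infty) = 0$.

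\paragraph{Volterra formulation and base estimate.} Introducing the integrating factor gives
\[
\partial_s\!\left(e^{2i\int_A^s\sqrt{-V_\omega}}\,h'\right) \;=\; -\,e^{2i\int_A^s\sqrt{-V_\omega}}\,E_\omega\,(1+h),
\]
so that after integrating from $s$ to $\infty$ (using $h'(\omega,\infty) = 0$) and again from $s$ to $\infty$ one obtains a Volterra integral equation for $h$. The naive $L^1$ bound $\int_s^\infty|E_\omega(x)|\,dx \lesssim s^{-1}$ is borderline: a direct iteration produces a logarithmic divergence in the bound for $h$ itself. The improvement comes from a single integration by parts against the oscillatory factor $e^{2i\int_A^{\cdot}\sqrt{-V_\omega}}$, which gains one power of $(-V_\omega)^{-1/2}$ in the amplitude while shifting a derivative onto slowly-varying coefficients. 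Combined with the pointwise estimate on $E_\omega$ this produces a Volterra kernel with $L^\infty$ norm controlled by $s^{-1/2}$, uniformly in $\omega \in (m,\infty)$. A standard Volterra contraction argument from the appendix then closes the bound $|h| + |\partial_s h| \lesssim r^{-1/2}$.

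\paragraph{Differentiation in $\xi = \omega$ and the main obstacle.} The bounds on $\partial_\xi h$ and $\partial_{\xi s}^2 h$ come from differentiating the Volterra equation in $\omega$. The source terms from $\partial_\omega E_\omega$ and $\partial_\omega(-V_\omega)^{-1/2}$ are controlled by the previous step, and the only dangerous contribution is the $\omega$-derivative of the WKB phase,
\[
\partial_\omega\!\left(\int_A^s\sqrt{-V_\omega(x)}\,dx\right) \;=\; -\int_A^s\frac{\omega}{\sqrt{-V_\omega(x)}}\,dx,
\]
which grows linearly in $s$ and becomes singular of order $(\omega^2-m^2)^{-1/2}$ once $s$ exceeds the transition scale $s_*(\omega) \sim (\omega^2-m^2)^{-1}$ at which the term $(\omega^2-m^2)$ in $-V_\omega$ overtakes the term $2Mm^2/s$. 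The main obstacle is to balance the two bad behaviours uniformly as $\omega \downarrow m$: a uniform integration by parts would produce $(\omega-m)^{-1/2}r$, whose strength in $(\omega-m)$ is sharp but whose $r$-growth is lossy, while a uniform trivial bound would produce $r^{3/2}$ with no singularity in $\omega - m$. The resolution is to split at $s_*(\omega)$: on $[A, s_*(\omega)]$, where $\sqrt{-V_\omega(x)}$ is comparable to the $\omega$-independent quantity $\sqrt{Mm^2/x}$, I estimate directly, producing a contribution of size $O(r)$; on $[s_*(\omega), \infty)$, where the oscillation is non-degenerate, I integrate by parts once more to trade the factor $(\omega^2-m^2)^{-1/2}$ for additional $r^{-\delta}$ decay (for any $\delta > 0$, paying a $\delta$-loss on the singular exponent in $\omega-m$), producing the $(\omega-m)^{-1/2-2\delta}r^{-\delta}$ contribution. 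Choosing the threshold and $\delta$ to minimise the sum yields the stated estimate on $\partial_\xi h$; the estimate on $\partial_{\xi s}^2 h$ follows by applying the same splitting to the first-order equation for $\partial_s h$ obtained by differentiating the Volterra equation once in $s$.
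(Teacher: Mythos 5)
Your high-level plan is the paper's plan (Liouville--Green substitution, Volterra iteration, exploiting oscillation to upgrade the borderline $L^1$ estimate), but the ODE you write for $h$ is wrong in a way that matters. Set $a(s) \doteq (-V_\omega)^{-1/4}$ and $\phi(s) \doteq \int_A^s\sqrt{-V_\omega}\,dS$. Substituting $u_I = (\omega^2-m^2)^{1/4}\,a\,e^{i\phi}(1+h)$ into $\partial_s^2 u_I = V_\omega u_I$ gives
\[
h'' + 2\left(\frac{a'}{a} + i\sqrt{-V_\omega}\right)h' + \frac{a''}{a}(1+h) = 0,
\]
not the equation you wrote. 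The cancellation you invoke is real, but it occurs inside $(a e^{i\phi})''$: the cross-terms $2ia'\phi' e^{i\phi}$ and $ia\phi'' e^{i\phi}$ cancel, which is precisely what reduces $(ae^{i\phi})'' - V_\omega\,ae^{i\phi}$ to $a''e^{i\phi}$. It is \emph{not} a cancellation in the $h'$-coefficient, which remains $2(a'/a) + 2i\sqrt{-V_\omega}$. Since $|a'/a| \sim s^{-1}$ is actually \emph{larger} than $|E_\omega| \lesssim s^{-2}$, the omitted term cannot be dismissed as perturbative, and with $|h'| \lesssim s^{-1/2}$ the naive contribution of $(a'/a)h'$ against your post-IBP kernel (with its $(-V_\omega)^{-1/2} \lesssim s^{1/2}$ gain) is already logarithmically divergent, so the closing argument breaks. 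The paper sidesteps this entirely by first passing to $\xi = \int_A^s\sqrt{-V_\omega}\,dS$ (with $\tilde\omega$ the $\omega$-variable at fixed $\xi$), in which the $a'/a$ term is exactly absorbed by the Jacobian and the equation becomes $\partial_\xi^2 h + 2i\partial_\xi h + \mathcal{E}(1+h) = 0$ with $\mathcal{E} = (-V_\omega)^{-3/4}\partial_s^2((-V_\omega)^{-1/4})$ and a unit-frequency kernel $-\tfrac{1}{2i}(1-e^{2i(\eta-\xi)})$; your ad hoc integration by parts is the informal version of this change of variables, but it has to be carried out carefully or the dropped term resurfaces.

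Your $\omega$-derivative argument also does not quite reproduce the stated bound. Splitting the $s$-integral at $s_*(\omega) \sim (\omega^2-m^2)^{-1}$ and using the two extreme bounds on the phase derivative naturally yields something like $\min(r^{3/2},\, (\omega-m)^{-1/2}r)$, which is weaker than the claimed $(\omega-m)^{-1/2-2\delta}r^{-\delta} + r$. The paper instead estimates $\partial_{\tilde\omega}h$ (the $\omega$-derivative \emph{at fixed} $\xi$), for which the Volterra kernel is $\omega$-independent and the entire $\omega$-dependence sits in $\mathcal{E}$; the key inputs are the interpolated pointwise estimates~\eqref{potentialestimatesWKBwithomega}--\eqref{potentialestimatesWKBwithomega4} on $\partial_{\tilde\omega}\mathcal{E}$, which trade part of the $(\omega^2-m^2)^{-1}$ singularity for additional $r$-decay without any domain splitting. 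The $O(r)$ term in the final statement then does \emph{not} come from a region $[A,s_*]$ of the Volterra integral; it comes purely from converting $\partial_{\tilde\omega}$ back to $\partial_\omega$ at fixed $s$ via $\partial_\omega h = \partial_{\tilde\omega}h + (\partial_\omega\xi)\partial_\xi h$, with $|\partial_\omega\xi| \lesssim s^{3/2}$ meeting $|\partial_\xi h| \lesssim r^{-1/2}$. You have correctly identified the two bad behaviors that need balancing, but the mechanism you describe for balancing them does not produce the $r^{-\delta}$ gain and does not cleanly separate the phase-derivative contribution from the amplitude one; both are handled at once by working in $(\xi,\tilde\omega)$ coordinates.
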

		\begin{rmk}Since this frequency range ultimately does not contribute to the leading order behavior as $t\to \infty$, we have made no attempt to provide sharp estimates in Proposition~\ref{betterbounduI}, either as $\omega \to_{\omega >m} m$ or as $r\to +\infty$. 
		\end{rmk}
		\subsection{Construction of $u_H$ and $u_I$}\label{wherewemakeuhui}
		In the following lemma we invoke the theory of regular singularities to define the function $u_H(\omega,s)$. The arguments are standard, but we provide a sketch of the proof for the convenience of the reader.
		\begin{lemma}\label{makeuh}Let $\mathbb{S} \doteq \left\{z \in \mathbb{C} : \Im(z) > -\kappa_+\right\}$. There exists a unique function $u_H\left(\omega,s\right) : \mathbb{S} \times \mathbb{R} \to \mathbb{C}$ which solves~\eqref{eq:mainV} and satisfies 
			\[u_H = e^{-i\omega s}\left(1+o\left(1\right)\right)\text{ as }s \to-\infty.\]
			When $\Im(\omega) > 0$, then $u_H$ is, up to multiplication by a constant, the unique solution which is uniformly bounded as $s\to -\infty$.  If $\omega \neq 0$, then $u_H$ never vanishes.
			
			Moreover, considering $s$ to be a function of $r$, we have that for each fixed $\omega$,  $e^{i\omega s(r)}u_H(\omega,s(r))$ is a  smooth function for $r \in [r_+,\infty)$, and for fixed $r$, $Y\left(\omega,r\right) \doteq e^{i\omega s(r)}u_H(\omega,s(r))$ is a holomorphic function of $\omega \in \mathbb{S}$. Finally, we have the following uniform bounds for $Y$ and its derivatives on any compact set $K \subset [r_+,\infty)$ and $\omega$ with $\Im(\omega) \geq 0$:
			\begin{equation}\label{boundforY}
				\sum_{i,j=0}^1\sup_{\Im(\omega) \geq 0, r\in K}\left|\partial_r^i\partial_{\omega}^jY\right| \lesssim 1. 
			\end{equation}
		\end{lemma}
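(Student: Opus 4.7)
My plan is to reduce the problem to a linear ODE with a regular singular point at $r=r_+$. Introducing $Y(\omega,r) \doteq e^{i\omega s(r)}u_H(\omega,s(r))$ and using $dr/ds = \Delta(r) \doteq 1 - 2M/r + \DD(r)/r^2$, the equation $\partial_s^2 u_H = V_\omega u_H$ transforms into
\[
\Delta(r)\,Y'' + \bigl(\Delta'(r)-2i\omega\bigr)\,Y' - P(r)\,Y = 0,
\]
where $P(r)$ is a smooth, $\omega$-independent function on $[r_+,\infty)$ collecting all terms arising from $(V_\omega+\omega^2)/\Delta$. Since $\Delta(r_+)=0$ and $\Delta'(r_+)=2\kappa_+>0$, writing $\Delta(r)=(r-r_+)h(r)$ and dividing by $h(r)$ exhibits $r=r_+$ as a regular singular point with indicial exponents $0$ and $i\omega/\kappa_+$.

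Next, I would apply the Frobenius method at the exponent $0$, building $Y(\omega,r) = \sum_{n\geq 0} c_n(\omega)(r-r_+)^n$ with $c_0=1$ and recurrence $n(n - i\omega/\kappa_+)\,c_n = (\text{polynomial in } c_0,\ldots,c_{n-1})$. The leading factor $n(n - i\omega/\kappa_+)$ is nonvanishing throughout the strip $\mathbb{S}$ (precisely because $\Im(\omega)>-\kappa_+$ rules out the obstructions $\omega=-i\ell\kappa_+$ with $\ell\in\mathbb{Z}_{\geq 1}$), so every $c_n(\omega)$ is a rational function of $\omega$ holomorphic on $\mathbb{S}$. A standard majorant argument gives convergence of the series in a fixed neighborhood of $r_+$ uniformly on compact subsets of $\mathbb{S}$. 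Away from $r_+$, the ODE has smooth coefficients entire in $\omega$, so posing a Cauchy problem at some $r_0>r_+$ with data read off from the Frobenius series extends $Y(\omega,\cdot)$ smoothly to all of $[r_+,\infty)$ and preserves holomorphy in $\omega$ by the theorem on analytic parametric dependence. The bounds \eqref{boundforY} then follow from a Gronwall argument applied to the linear ODEs satisfied by $Y$ and $\partial_\omega Y$ on the compact set $K$.

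Uniqueness and non-vanishing are the last ingredients. The second Frobenius branch behaves like $(r-r_+)^{i\omega/\kappa_+}$ (with possible log corrections at integer indicial differences): it blows up at $r_+$ when $\Im(\omega)>0$, which immediately singles out $u_H$ as the unique bounded solution; for real $\omega\neq 0$ it oscillates rather than tending to a limit as $r\to r_+$, so the normalization $Y\to 1$ forces its coefficient to vanish; for $-\kappa_+<\Im(\omega)<0$ the stronger requirement that $Y$ extend smoothly across $r=r_+$ again excludes this second branch. To prove non-vanishing of $u_H$ for $\omega\neq 0$, I would argue as follows. For real $\omega$ the potential $V_\omega$ is real, so $\bar u_H$ solves the same ODE, and the Wronskian $W(u_H,\bar u_H)$ is constant in $s$; evaluating at $s\to-\infty$ with $u_H\sim e^{-i\omega s}$ yields $W=-2i\omega\neq 0$, which forbids any zero of $u_H$. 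For $\Im(\omega)>0$, if $u_H(s_0)=0$ one multiplies $\partial_s^2 u_H = V_\omega u_H$ by $\bar u_H$ and integrates from $-\infty$ to $s_0$; the boundary terms vanish (exponential decay of $u_H$ at $-\infty$ from the Frobenius asymptotics plus $u_H(s_0)=0$), and the imaginary part of the resulting identity (or its real part when $\Re(\omega)=0$, using $V_\omega>0$) forces $u_H\equiv 0$, a contradiction. The remaining range $-\kappa_+<\Im(\omega)<0$ is handled by applying the Wronskian argument to the pair $(u_H(\omega,\cdot), \overline{u_H(\bar\omega,\cdot)})$, both of which solve the same ODE and whose Wronskian again evaluates to $-2i\omega\neq 0$ at $s=-\infty$.

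The main technical obstacle will be maintaining estimates on $Y$ and $\partial_\omega Y$ that are uniform in $\omega$ all the way up to the boundary $\{\Im(\omega)=0\}$ of the upper half plane, since this uniformity is what the stationary-phase analysis later in the paper actually consumes. The failure of the non-vanishing claim at $\omega=0$ is a genuine feature rather than a proof artefact: it reflects precisely the degeneration of the conserved Wronskian $-2i\omega$ at that point, which is why $\omega=0$ must be excluded from the statement.
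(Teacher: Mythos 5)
Your Frobenius construction of $Y$ is essentially the paper's, and your identification of the obstruction (the indicial difference $i\omega/\kappa_+$ landing on a positive integer being exactly what $\Im(\omega)>-\kappa_+$ forbids) is correct and makes the geometry of $\mathbb{S}$ transparent. For real $\omega$, your Wronskian argument $W(u_H,\overline{u_H})=-2i\omega$ is precisely the paper's conserved quantity $\mathscr{Q}=\Im(\partial_su_H\,\overline{u_H})$ in different notation (indeed $W=2i\mathscr{Q}$). Your integration-by-parts extension of non-vanishing to $\Im(\omega)>0$ is correct and is a genuine addition: the paper's $\mathscr{Q}$ argument, as written, only applies to real $\omega$, which is the only case used downstream (at $\omega=m$).

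There are, however, two gaps. First, the non-vanishing argument for $-\kappa_+<\Im(\omega)<0$ does not work as stated. The Wronskian $W\bigl(u_H(\omega,\cdot),\overline{u_H(\bar\omega,\cdot)}\bigr)=-2i\omega$ shows that the two functions are linearly independent, but the usual ``nonzero Wronskian implies no zeros'' deduction relies on the partner vanishing at the same point as $u_H$ --- which $\overline{u_H(\omega,\cdot)}$ does for real $\omega$, but $\overline{u_H(\bar\omega,\cdot)}$ has no reason to do for complex $\omega$. (This range is not needed for anything in the paper, so you should simply drop it.) Second, and more substantively, the bound~\eqref{boundforY} is required to be uniform over the entire closed upper half-plane $\{\Im(\omega)\geq 0\}$, including $|\omega|\to\infty$, and ``a Gronwall argument'' on the $r$-ODE for $Y$ cannot deliver this: the coefficient of $Y'$ has size $|\omega|/(r-r_+)$, so Gronwall yields $\exp(C|\omega|)$ rather than $O(1)$. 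You flag uniformity ``up to $\Im(\omega)=0$'' as the obstacle, but that is the benign direction; the difficulty is the high-frequency limit. The paper handles it by writing $u_H=e^{-i\omega s}(1+h)$ and deriving the Volterra equation
\[
h(\omega,s)=\int_{-\infty}^{s}G(\omega,s-S)\bigl(V_\omega(S)-\omega^2\bigr)\bigl(1+h(\omega,S)\bigr)\,dS,\qquad G(\omega,y)=\frac{1-e^{2i\omega y}}{-2i\omega},
\]
whose kernel obeys $|G(\omega,y)|\lesssim \min(|y|,|\omega|^{-1})$ for $\Im(\omega)\geq 0$, $y>0$, with a similar gain for $\partial_\omega G$; since $V_\omega-\omega^2$ is $\omega$-independent and decays like $e^{2\kappa_+S}$ as $S\to-\infty$, the Volterra iteration (Theorems~\ref{thm:volterra} and~\ref{thm:volterraparam}) closes with constants uniform in $\omega$. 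That $|\omega|^{-1}$ gain from the oscillating kernel is exactly what a crude Gronwall estimate has no access to, so this step of your proposal would need to be replaced.
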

		\begin{proof}
			From~\eqref{KGinrcoord} and a short calculation, we see that for $r \in (r_+,\infty)$
			\[\frac{d^2u}{dr^2} + \left(\frac{1}{r-r_+} + O\left(1\right)\right)\frac{du}{dr} + \left(\frac{\omega^2\kappa_+^{-2}}{4(r-r_+)^2} + O\left((r-r_+)^{-1}\right)\right)u = 0.\]
			Thus we have a regular singularity at $r = r_+$ (see Section 4 of Chapter 5 of~\cite{olver}). The corresponding indicial roots are $\pm \frac{i\omega}{2\kappa_+}$. It thus follows that as long as $\frac{i\omega}{\kappa_+}$ does not coincide with a negative integer, there will exist a solution of the form
			\[u\left(r,\omega\right) = (r-r_+)^{\frac{i\omega}{2\kappa_+}}R_H\left(r,\omega\right),\]
			for some $R_H$ which is  smooth  in $r \in [r_+,+\infty)$. It furthermore immediately follows that for fixed r, $R_H$  is holomorphic for $\omega \in \mathbb{S}$. Switching from the $r$ variable to the $s$ variable then easily leads to the definition of $u_H$.
			
			To see that $u_H$ never vanishes vanishes when $\omega \neq 0$, we consider $\mathscr{Q} \doteq \Im\left(\frac{du_H}{ds}\overline{u_H}\right)$. We have that $\frac{d\mathscr{Q}}{ds} = 0$ and that $\mathscr{Q}|_{s=-\infty} = -\omega \neq 0$. We immediately conclude that $u_H$ cannot vanish anywhere.
			
			It remains only to establish the bound~\eqref{boundforY}. For this it is convenient to instead use the equation~\eqref{eq:mainV}.  Let us define a function $h(\omega,s)$ by $u_{H}(\omega,s) = e^{-i\omega s}\left(1+h(\omega,s)\right)$. We have that $h$ is a solution to
			\begin{equation}\label{aniceeqnforhwow}
				\partial_{s}^2h - 2i\omega \partial_{s}h - \left(V_{\omega}-\omega^2\right)h = \left(V_{\omega}-\omega^2\right).
			\end{equation}
			In turn, this is equivalent to the Volterra integral equation
			\begin{equation}\label{anothereqnforhthingsaregood}
				h\left(\omega,s\right) = \int_{-\infty}^{s}G\left(\omega,s-S\right)\left(V_{\omega}(S)-\omega^2\right)\left(1+h\left(\omega,S\right)\right)\, dS,
			\end{equation}
			where
			\[G\left(\omega,y\right) \doteq \frac{1-e^{2i\omega y}}{-2i\omega}\text{ for }\omega \neq 0,\qquad G\left(0,y\right) = y.\]
			Keeping in the face that $\left(V_{\omega}-\omega^2\right)\left(S\right) \sim \exp\left(2\kappa_+ S \right)$ as $S \to -\infty$, the proof is then concluded  by an application of Theorems~\ref{thm:volterra} and~\ref{thm:volterraparam} (and Remarks~\ref{switchorder} and~\ref{switchorderparam}) and the formulas which relate $\frac{\partial}{\partial r}$ and $\frac{\partial}{\partial s}$. We note in particular that, in the notation of Theorems~\ref{thm:volterra} and~\ref{thm:volterraparam}, one may take $P_1$, $P_2$, and $Q$ to be constant and $\Phi  = \tilde{\Phi}= e^{2\kappa_+s}$.
		\end{proof}
		
		Next we turn to the construction of $u_I\left(\omega,s\right)$. Again the arguments are standard, but we provide a sketch of the proof for the convenience of the reader.
		\begin{lemma}\label{makeuI}Let $\mathbb{S} \doteq \left\{\omega \in \mathbb{C}\setminus\{m\} : \Re(\omega) \geq 0\text{ and } \Im(\omega) \geq 0\right\}$, and let us define a branch of $\sqrt{\cdot}$ by placing a branch cut along the negative imaginary axis and letting $\sqrt{\cdot}$ agree with its standard value along $\mathbb{R}_{>0}$. Then there exists a unique function $u_I\left(\omega,s\right) : \mathbb{S} \times \mathbb{R} \to \mathbb{C}$ which solves~\eqref{eq:mainV}, satisfies 
			\[u_I =  r^{\frac{Mm^2}{-i\sqrt{\omega^2-m^2}}} e^{i\sqrt{\omega^2-m^2} s}\left(1+ O\left(r^{-1}\right)\right) \text{ as }s \to +\infty,\]
			is holomorphic for $\omega \in {\rm Int}\left(\mathbb{S}\right)$ and satisfies that 
			\[\sup_{\omega \in K}\sup_{s \in [s_0,+\infty)}\sum_{i,j=0}^1\left|\partial_s^i\partial_{\omega}^j\left(r^{\frac{Mm^2}{i\sqrt{\omega^2-m^2}}} e^{-i\sqrt{\omega^2-m^2} s}u_I\right)\right| \lesssim_{K,s_0} 1,\]
			where $K$ is any compact subset of $\mathbb{S}$ and $s_0 \in (-\infty,+\infty)$.
			
			Finally, it is useful to observe that when $\Im(\omega) > 0$, then $u_I$ is the unique solution to~\eqref{eq:mainV} which is uniformly bounded as $s\to +\infty$. 
		\end{lemma}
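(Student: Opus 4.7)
The plan is to construct $u_I$ by a Volterra-integral-equation argument at infinity, in close analogy with the construction of $u_H$ in Lemma~\ref{makeuh}, using as explicit WKB reference the profile
\[
Z(\omega,s) \doteq r(s)^{\,Mm^2/(-i\sqrt{\omega^2-m^2})}\, e^{i\sqrt{\omega^2-m^2}\,s},
\]
and writing $u_I = Z(1+h)$. A direct calculation using $\frac{dr}{ds} = 1 - \frac{2M}{r} + \frac{\mathcal{D}}{r^2} = 1+O(r^{-1})$ shows that the cross-term in $((\log Z)_s)^2$, namely $2\,i\sqrt{\omega^2-m^2}\cdot \frac{Mm^2}{-i\sqrt{\omega^2-m^2}}\cdot \frac{1}{r} = -\frac{2Mm^2}{r}$, cancels the leading potential term $-\frac{2Mm^2}{s}$ of $V_\omega$ up to $O(s^{-2}\log s)$ accuracy (using $1/s - 1/r = O(r^{-2}\log r)$). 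Consequently, the remainder $R_\omega(s) \doteq V_\omega(s) - Z''(s)/Z(s)$ satisfies $|R_\omega(s)| \lesssim_K s^{-2}\log s$ uniformly on compact subsets $K\subset\mathbb{S}$.

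Requiring $h(\omega,s) \to 0$ as $s\to+\infty$ and using the identity $(Z^2 h')' = -R_\omega Z^2(1+h)$, I would integrate twice and swap the order of integration to obtain the Volterra equation
\[
h(\omega,s) = -\int_s^{+\infty} R_\omega(s')\,Z^2(s')\left(\int_s^{s'} Z^{-2}(\omega,s'')\,ds''\right)(1+h(\omega,s'))\,ds'.
\]
On any compact $K\subset\mathbb{S}$, the parameter $\mu(\omega) \doteq -i\sqrt{\omega^2-m^2}$ either has $\Re\mu > 0$ (in which case the factor $Z^2(s')$ exactly cancels the exponential growth of the inner integral), or $\mu$ is purely imaginary and bounded away from $0$ on $(m,+\infty)\cap K$ (in which case one integration by parts using $(Z^{-2})_s = -2\mu(1+O(r^{-1}))Z^{-2}$ gives $\bigl|\int_s^{s'}Z^{-2}\,ds''\bigr| \lesssim_K 1$). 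In both cases the full kernel is bounded by $C_K\cdot (s')^{-2}\log s'$, and Theorem~\ref{thm:volterra} then produces a unique continuous $h$ with $h(\omega,s)\to 0$ and $|h|+|\partial_s h|\lesssim_{K,s_0} s^{-1}\log s$ on $[s_0,+\infty)$. Uniqueness of $u_I$ satisfying the prescribed asymptotic is immediate from uniqueness of this Volterra fixed point; and when $\Im(\omega) > 0$, the second independent solution of \eqref{eq:mainV} has leading behavior like $Z^{-1}$ and grows unboundedly as $s\to+\infty$, proving the last sentence of the statement.

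Holomorphicity of $h$ in $\omega$ on $\mathrm{Int}(\mathbb{S})$ and the uniform bounds on $\partial_\omega^j\partial_s^i h$ for $i,j\in\{0,1\}$ follow from the parameter-dependent Volterra result Theorem~\ref{thm:volterraparam} together with Remark~\ref{switchorderparam}: differentiating $Z$ in $\omega$ introduces at worst a factor of $s'$ from $\partial_\omega e^{i\sqrt{\omega^2-m^2}s'}$, but this is multiplied against the $O((s')^{-2}\log s')$ decay of $R_\omega$, keeping the kernel integrable. Since we have divided out the explicit prefactor $Z$ before taking $\omega$-derivatives, the apparent secular growth of $\partial_\omega Z/Z$ is absorbed into the prefactor and does not contaminate the uniform bound on $\partial_\omega(Z^{-1}u_I) = \partial_\omega h$. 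The main technical point will be uniform control of the kernel on those parts of $K$ that meet the real half-line $(m,+\infty)$, where $|Z|\asymp 1$ and smallness relies entirely on the oscillation of $Z^{-2}$; the integration-by-parts gain is uniform precisely when $\mu(\omega)$ stays bounded away from $0$, which is built into the definition $\mathbb{S} = \{\omega\in\mathbb{C}\setminus\{m\} : \Re(\omega) \geq 0\text{ and }\Im(\omega) \geq 0\}$. The sharper estimates needed later as $\omega\to m^+$ are addressed separately in Proposition~\ref{betterbounduI}.
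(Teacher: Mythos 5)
Your overall strategy---reduce to a Volterra integral equation at infinity with the WKB reference profile $Z = r^{Mm^2/(-i\sqrt{\omega^2-m^2})}e^{i\sqrt{\omega^2-m^2}s}$---is the same in spirit as the paper's proof, and the parts of your argument concerning the existence of $h$, its uniqueness, the asymptotics, the estimate $|h|+|\partial_s h|\lesssim s^{-1}\log s$, the dichotomy for $\Im(\omega)>0$, and holomorphicity on $\mathrm{Int}(\mathbb{S})$ are all sound. However, there is a genuine gap in the $\omega$-derivative estimate on compact sets $K$ that meet the real ray $(m,\infty)$, and the claim that ``differentiating $Z$ in $\omega$ introduces at worst a factor of $s'$ \ldots\ but this is multiplied against the $O((s')^{-2}\log s')$ decay of $R_\omega$, keeping the kernel integrable'' is exactly where it fails: $(s')\cdot (s')^{-2}\log s' = (s')^{-1}\log s'$ is \emph{not} integrable on $(s,\infty)$. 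More precisely, the $\omega$-derivative of the inner oscillatory integral $\mathcal{I}(s,s')=Z^2(s')\int_s^{s'}Z^{-2}(s'')\,ds''$ satisfies, after the integration by parts you invoke, $|\partial_\omega\mathcal{I}|\lesssim_K |s'-s|$ (the unavoidable contribution comes from the boundary term at $s''=s$, where $\partial_\omega\log Z(s')-\partial_\omega\log Z(s)\sim \nu'(s'-s)$ grows linearly, is not set to zero by the boundary condition, and cannot be made small by further integration by parts in $s''$). Multiplying by $R_\omega(s')=O((s')^{-2}\log s')$ gives $|\partial_\omega K|\sim (s')^{-1}\log s'$, so the forcing $\tilde R_0$ in Theorem~\ref{thm:volterraparam} diverges. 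Note that Theorem~\ref{thm:volterraparam} bounds everything through $|\partial_\Par K|$ in absolute value and cannot exploit residual oscillation in $s'$; any attempt to salvage the estimate by further stationary phase in $s'$ would be a genuine extra argument that is not supplied.

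The paper circumvents this precisely by \emph{not} stopping at the leading WKB profile: it builds a formal asymptotic series $p_N(\omega,r)= e^{i\sqrt{\omega^2-m^2}r}\,r^{M(2\omega^2-m^2)/(-i\sqrt{\omega^2-m^2})}(1+\sum_{n=1}^N a_n/r^n)$ (the textbook expansion at an irregular singular point, Olver Ch.\,7 \S1), so that $\mathscr{L}p_N = O(r^{-N})$ with $N$ chosen large. With that much decay in the inhomogeneity, the $O(s')$ loss from $\omega$-differentiation is harmless and a direct application of Theorems~\ref{thm:volterra} and~\ref{thm:volterraparam} goes through. The fix to your proof is therefore to replace $Z$ by such a truncated formal expansion $p_N$ before setting up the Volterra equation, after which the rest of your argument works as written.

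Two small remarks: the sign in your identity should read $(Z^2h')'=+R_\omega Z^2(1+h)$ with $R_\omega=V_\omega-Z''/Z$ (the Volterra equation you write then needs the sign flipped, but this is harmless); and the cross-term cancellation you compute, which matches the leading $-2Mm^2/r$ piece of the potential, does produce the claimed $R_\omega = O(s^{-2}\log s)$---this part is correct and consistent with the profile asserted in the lemma, since switching between the $r$- and $s$-exponents ($Mm^2$ versus $M(2\omega^2-m^2)$) only changes $Z$ by a nonvanishing $\omega$-dependent constant.
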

		\begin{proof}We start by observing that the equation for $u$ may be written as
			\begin{equation}\label{largerformeqnforu}
				\frac{d^2u}{dr^2} +O(r^{-2})\frac{du}{dr}+ \left(\omega^2-m^2 + \frac{2M(2\omega^2-m^2)}{r} + O(r^{-2})\right)u = 0.
			\end{equation}
			One may then follow the well-known procedure for constructing formal expansions solving~\eqref{largerformeqnforu} (see Section 1 of Chapter 7 of~\cite{olver}). In particular, there exists $\{a_n\left(\omega\right)\}_{n=1}^{\infty}$, so that if we define a function $p_N(\omega,r)$ by
			\[p_N(\omega,r) \doteq e^{i\sqrt{\omega^2-m^2} r}r^{\frac{M(2\omega^2-m^2)}{-i\sqrt{\omega^2-m^2}}}\left(1+ \sum_{n=1}^N\frac{a_n}{r^n}\right),\]
			then we have that the functions $\{a_n\}_{n=1}^N$ are holomorphic in ${\rm Int}\left(\mathbb{S}\right)$, $\left(a_n,\partial_{\omega}a_n\right)$ are uniformly bounded on any compact set $K \subset \mathbb{S}$, and, if we denote by $\mathscr{L}$ the operator $\partial_s^2 - V_{\omega}$, we have
			\[\mathscr{L}p_N = e^{i\sqrt{\omega^2-m^2} r}r^{\frac{M(2\omega^2-m^2)}{-i\sqrt{\omega^2-m^2}}}B_N\left(\omega,r\right),\]
			where $B_N$ is a rational function of $r$, is holomorphic in $\omega \in {\rm Int}\left(\mathbb{S}\right)$, and satisfies
			\[\sup_K\left(\left|B_N\right| + \left|\partial_{\omega}B_N\right|\right)\lesssim_K r^{-N},\]
			where, again, $K$ is any compact subset of $\mathbb{S}$. We note further the useful fact that $\omega \in \mathbb{S}$ implies that
			\begin{equation}\label{someusefulpositivity}
				\Im\left(\sqrt{\omega^2-m^2}\right) \geq 0.
			\end{equation}
			
			Letting $N$ be sufficiently large, we thus look for a solution $u$ of the form $u = p_N + \epsilon_N$. Variation of parameters then leads to trying to find $\epsilon_N$ by solving a suitable Volterra equation:
			\begin{equation}\label{eqnforepsilonNtogetuI}
				\epsilon_N\left(\omega,r\right) = \int_r^{+\infty}G\left(\omega,r,R\right)\left(\left(\mathscr{L}p_N\right) + A_1\left(\omega,R\right)\frac{\partial \epsilon_N}{\partial R} + A_2\left(\omega,R\right)\epsilon_N\right)\, dR,
			\end{equation}
			where
			\[G\left(\omega,r,R\right) \doteq \frac{e^{-i\sqrt{\omega^2-m^2}(R-r)} - e^{i\sqrt{\omega^2-m^2}(R-r)}}{2i\sqrt{\omega^2-m^2}},\]
			and the functions $A_1$ and $A_2$ satisfy that 
			\[\sup_K\left(r\left|A_1\right| + \left|A_2\right| + r\left|\partial_{\omega}A_1\right|+\left|\partial_{\omega}A_2\right|\right)\lesssim_K r^{-1},\]
			for any compact subset $K\subset \mathbb{S}$.
			
			It suffices to solve the Volterra equation for all sufficiently large $r$. In order to do this, we apply Theorems~\ref{thm:volterra} and~\ref{thm:volterraparam} while also keeping in mind Remarks~\ref{volterramodify}, \ref{volterramodifyparam}, and~\ref{volterraholomorphic}. More specifically, letting $\tilde{N}$ satisfy $1 \ll \tilde{N} \ll N$ and $A \gg 1$ depending on the choice of compact set $K$, we may make the following choices:
			\[P_j(r) = \tilde{P}_j(r) = A\exp\left(-\Im\left(\sqrt{\omega^2-m^2}\right)r\right),\qquad Q(r) = \tilde{Q}(r) = A\exp\left(\Im\left(\sqrt{\omega^2-m^2}\right)r\right),\]
			\[\Phi = \tilde{\Phi} = A r^{-\tilde{N}}.\]

			Finally, the fact that $u_I$ is the unique solution which is bounded as $s\to+\infty$ is a consequence of the fact that any solution which is not a multiple of $u_I$ must blow-up exponentially fast as $s\to+ \infty$ (see Section 1 of Chapter 7 of~\cite{olver}). 
		\end{proof}
		
		\subsection{Wronskian estimates}
		In this section we establish various estimates for the Wronskian $W$ of $u_H$ and $u_I$. We start by showing that the Wronskian of $u_H$ and $u_I$ never vanishes for $\omega \in (m,\infty)$, and moreover establish a quantitative estimate for $W^{-1}$.
		\begin{lemma}\label{wronknotzerobigm}Let $\omega \in (m,+\infty)$. Then the Wronskian $W=W(u_I,u_H)$ of $u_I$ and $u_H$ satisfies the estimate
			\[\sup_{\omega \in (m,\infty) }\frac{\omega^{\frac{3}{4}}\left|\omega-m\right|^{\frac{1}{4}}}{\left|W\right|(\omega)} \lesssim 1.\]
		\end{lemma}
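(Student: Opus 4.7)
The strategy is to exploit that for $\omega \in (m,+\infty)$ the potential $V_\omega$ is real-valued, so that any solution $u$ to~\eqref{eq:mainV} has the conserved current $\Im(\partial_s u \, \bar u)$, which is the cornerstone of a standard scattering decomposition. First, since $W(u_H,\bar u_H) \neq 0$ (to be checked below), the pair $\{u_H,\bar u_H\}$ is a basis of the solution space, so I would write
\[
u_I(\omega,s) = \alpha(\omega)\, u_H(\omega,s) + \beta(\omega)\, \bar u_H(\omega,s),
\]
for some scattering coefficients $\alpha,\beta$. The bilinearity of the Wronskian and the identity $W(u_H,u_H)=W(\bar u_H,\bar u_H)=0$ then immediately give
\[
W(u_I,u_H) = \beta\, W(\bar u_H,u_H), \qquad W(u_I,\bar u_I) = (|\beta|^2 - |\alpha|^2)\, W(u_H,\bar u_H).
\]
So the task reduces to computing $W(u_H,\bar u_H)$ and $W(u_I,\bar u_I)$ as functions of $\omega$, and bounding $|\beta|$ from below.

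Next I would evaluate both Wronskians at the endpoints, where the asymptotics from Lemmas~\ref{makeuh} and~\ref{makeuI} are valid. At $s\to-\infty$, using $u_H = e^{-i\omega s}(1+h)$ with $h,\partial_sh \to 0$, a direct computation yields
\[
W(u_H,\bar u_H) = -2i\omega.
\]
At $s\to+\infty$, from $u_I \sim e^{i\sqrt{\omega^2-m^2}\,s}\,r^{M m^2/(-i\sqrt{\omega^2-m^2})}(1+O(r^{-1}))$, the prefactor $r^{Mm^2/(-i\sqrt{\omega^2-m^2})}$ has unit modulus (since the exponent is purely imaginary for $\omega>m$), so $|u_I|\to 1$; its $s$-derivative contribution from the logarithmic-type phase is $O(r^{-1})$ and drops out in the limit. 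The dominant term is $\partial_s u_I \sim i\sqrt{\omega^2-m^2}\, u_I$, giving
\[
W(u_I,\bar u_I) = 2 i \sqrt{\omega^2-m^2}.
\]

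Combining the two expressions produces the scattering identity
\[
|\beta|^2 - |\alpha|^2 \;=\; \frac{\sqrt{\omega^2-m^2}}{\omega}.
\]
Since the right-hand side is positive, one immediately gets $|\beta|^2 \geq \sqrt{\omega^2-m^2}/\omega$, hence
\[
|W(u_I,u_H)| \;=\; 2\omega\, |\beta| \;\geq\; 2\,\omega^{1/2}\,(\omega^2-m^2)^{1/4} \;\gtrsim\; \omega^{3/4}\,|\omega-m|^{1/4},
\]
where the last comparison uses $(\omega+m)^{1/4} \sim \omega^{1/4}$ for $\omega \geq m$. This is exactly the claimed bound.

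\textbf{Main obstacle.} The only delicate issue is justifying the Wronskian computations in the limit $s\to\pm\infty$, where the asymptotics of $u_H$ and $u_I$ carry multiplicative error factors $1 + O(r^{-1})$ together with an oscillatory ``Coulomb'' logarithmic phase in $u_I$. One must confirm that differentiating these error factors produces contributions that vanish in the limit (rather than merely being subleading). For $u_H$ at $-\infty$ this is immediate from the proof of Lemma~\ref{makeuh}, since $h$ and $\partial_s h$ both decay exponentially in $s$. For $u_I$ at $+\infty$, the differentiated long-range phase contributes a factor of order $r^{-1}$, which is absorbed into the error. Once these limits are validated, the remainder of the argument is purely algebraic, which is why the bound depends only on the two conservation laws and the asymptotic normalizations, and therefore holds uniformly for all $\omega \in (m,+\infty)$.
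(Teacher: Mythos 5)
Your proof is correct and essentially identical to the paper's: both arguments rest on the conserved current $\Im(\partial_s u_I\,\overline{u}_I) = (2i)^{-1}W(u_I,\overline{u}_I)$, evaluated once at $s\to+\infty$ (giving $\sqrt{\omega^2-m^2}$) and once at $s\to-\infty$ through the decomposition $u_I = \alpha u_H + \beta\,\overline{u}_H$, after which the $|\alpha|^2$ term is discarded. One small slip: the bilinearity computation actually yields $W(u_I,\overline{u}_I) = (|\alpha|^2 - |\beta|^2)W(u_H,\overline{u}_H)$ rather than $(|\beta|^2 - |\alpha|^2)W(u_H,\overline{u}_H)$, but since $W(u_H,\overline{u}_H) = -2i\omega$ the two signs cancel and your stated identity $|\beta|^2 - |\alpha|^2 = \sqrt{\omega^2-m^2}/\omega$ (and hence the final bound) is correct.
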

		\begin{proof}
			
			We have that $u_H$ and $\overline{u_H}$ form a basis of solutions to~\eqref{eq:mainV}, and moreover, one has 
			\[W\left[u_H,\overline{u_H}\right] = -2i\omega,\]
			and thus, there exists $\alpha(\omega) \in \mathbb{C}$ so that
			\begin{equation}\label{decompuitouh}
				u_I = \alpha u_H - (2i\omega)^{-1}W \overline{u_H}.
			\end{equation}
			We then set
			\[Q_T \doteq \Im\left(\rd_{s}u_I\overline{u}_I\right).\]
			A short calculation using $u_I$'s equation yields that $\rd_{s}Q_T(\omega,s) = 0$. On the other hand, in view of the asymptotics of $u_H$ and $u_I$ we have that
			\[\lim_{s\to +\infty}Q_T\left(s\right) = \sqrt{\omega^2-m^2} > 0,\qquad \lim_{s\to -\infty}Q_T\left(s\right) = -\omega \left|\alpha\right|^2  + \frac{1}{4\omega}\left|W\right|^2.\]
			Equating these two terms leads to the estimate
			\[\left|W\right|^{-1} \leq \omega^{-\frac{3}{4}}\left(\omega-m\right)^{-\frac{1}{4}}.\]
		\end{proof}

		Next we consider the case when $\omega \in [\delta,m-\delta]$ for some sufficiently small $\delta> 0$.
		\begin{lemma}\label{wronknotzerosmallbutnottoosmallm}Let $\delta > 0$ be any sufficiently small constant, and let $\omega \in [\delta,m-\delta]$. Then the Wronskian $W$ of $u_I$ and $u_H$ satisfies the estimate
			\[\sup_{\omega \in [\delta,m)}\left|W\right|^{-1} \leq C(\delta).\]
			The constant $C(\delta)$ is, in principle, possible to make explicit.
		\end{lemma}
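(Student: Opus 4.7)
The plan is to first show $W(\omega) \neq 0$ for every $\omega \in (0,m)$ via a short reality argument, and then upgrade this qualitative non-vanishing to the quantitative uniform bound on the compact interval $[\delta, m-\delta]$ by continuity. For the first step, I exploit that for $\omega \in (0,m)$ the potential $V_\omega$ of \eqref{eq:mainV} is real-valued, $\sqrt{m^2-\omega^2}$ is real and strictly positive, and the asymptotic normalization \eqref{uI.boundary} is real. Together with the uniqueness clause in Lemma \ref{makeuI}, this forces $u_I(\omega,\cdot)$ to be a real-valued solution of \eqref{eq:mainV}. Suppose for contradiction that $W(\omega)=0$ for some such $\omega$; then $u_H$ is a nonzero constant multiple of $u_I$, hence itself real up to a single global complex scalar, so $\Im\bigl(\partial_s u_H\,\overline{u_H}\bigr) \equiv 0$. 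On the other hand, this last quantity is $s$-independent (a direct computation from \eqref{eq:mainV} using that $V_\omega$ is real), and evaluating as $s\to -\infty$ against the boundary behavior \eqref{uH.boundary} gives the value $-\omega \neq 0$, a contradiction.

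For the second step, I invoke the parameter-regularity furnished by Lemmas \ref{makeuh} and \ref{makeuI}: both $u_H$ and $u_I$, together with their first $\omega$- and $s$-derivatives, depend continuously on $\omega$ on compact subsets of $(0,m)$ when evaluated at any fixed $s_0 \in \mathbb{R}$. Writing
\[
W(\omega) = \partial_s u_I(\omega,s_0)\, u_H(\omega,s_0) - \partial_s u_H(\omega,s_0)\, u_I(\omega,s_0)
\]
at any convenient such $s_0$, we see that $W$ is $C^1$ on $(0,m)$. Combined with the pointwise non-vanishing from the first step, $|W|$ attains a strictly positive minimum and $|dW/d\omega|$ a finite maximum on the compact interval $[\delta, m-\delta]$, yielding the asserted bound.

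The only genuinely subtle point in this scheme is the reality argument, and it is essentially dual to the energy-identity computation already carried out in Lemma \ref{wronknotzerobigm}: there, the conserved $\Im(\partial_s u_I\,\overline{u_I})$ combined with nontrivial asymptotics at \emph{both} ends produced a quantitative lower bound; here, instead, the conserved $\Im(\partial_s u_H\,\overline{u_H})$ combined with the reality of $u_I$ forces qualitative non-vanishing. To make $C(\delta)$ fully explicit (as the statement remarks is possible in principle), one could decompose $u_I = \alpha(\omega)\, u_H + \beta(\omega)\,\overline{u_H}$ with $\beta = -(2i\omega)^{-1}W$, note that the reality of $u_I$ forces $|\alpha| = |\beta| = |W|/(2\omega)$, and then track $\alpha$ and $\beta$ quantitatively via the Volterra representations underlying the constructions of $u_H$ and $u_I$.
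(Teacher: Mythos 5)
Your argument is correct, and the uniformity step takes a genuinely different route from the paper's. Both proofs first show $W(\omega)\neq 0$ pointwise on $(0,m)$, and both rest on the reality of $u_I$: you do it via the conserved current $\Im\bigl(\partial_s u_H\,\overline{u_H}\bigr) = -\omega$ evaluated on a putative scalar multiple $u_H = c\,u_I$, while the paper reads non-vanishing directly off the decomposition \eqref{expanduiinrealcase}, which already gives $W = 0 \Rightarrow u_I \equiv 0$. The two diverge in the upgrade from pointwise to uniform. You invoke compactness of $[\delta,m-\delta]$ and the $C^1$-in-$\omega$ regularity of $u_H$, $u_I$ furnished by Lemmas \ref{makeuh} and \ref{makeuI} -- a soft argument which proves the stated bound but gives no handle on $C(\delta)$. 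The paper instead introduces the damped quantity $\mathscr{Q}(\omega,s) = y(s)\bigl(|\partial_s u_I|^2 + \omega^2|u_I|^2\bigr)$ with $y(s) = \exp(-C(r(s)-r_+))$, shows $\partial_s\mathscr{Q} < 0$ for $C$ large using the sign and $O(r-r_+)$ vanishing of $V_\omega + \omega^2$, identifies $\lim_{s\to-\infty}\mathscr{Q} = |W|^2$ from \eqref{expanduiinrealcase}, and finally bounds $\mathscr{Q}$ from below at a single large $s$ via Lemma \ref{makeuI}, giving $|W|^2\gtrsim 1$ with all constants traceable. This is exactly the content of the ``in principle explicit'' remark appended to the statement; your closing sketch (carrying the Volterra estimates into $\alpha, \beta$) would also make $C(\delta)$ explicit, but at the cost of genuine Volterra bookkeeping that the monotone-quantity argument bypasses.
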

		\begin{proof}As in the proof of Lemma~\ref{wronknotzerobigm} we have that~\eqref{decompuitouh} holds. In this case, however, because $u_I$ is real valued, we must in fact have that $\alpha = (2i\omega)^{-1}\overline{W}$:
			\begin{equation}\label{expanduiinrealcase}
				u_I = \frac{\overline{W}}{2i\omega}u_H - \frac{W}{2i\omega}\overline{u_H}.
			\end{equation}
			Note that~\eqref{expanduiinrealcase} already implies that $W$ cannot vanish, but does not yet provide a quantitative estimate for $W^{-1}$. In order to establish a quantitative estimate, we define
			\begin{equation}\label{qycurrent}
				\mathscr{Q}(\omega,s) \doteq y(s)\left|\rd_{s}u_I\right|^2 + \omega^2 y(s) \left|u_I\right|^2,
			\end{equation}
			where $y\left(s\right) \doteq \exp\left(-C\left(r(s)-r_+\right)\right)$ for a sufficiently large constant $C>0$ depending on $\delta$. Using the largeness of $C$ and that the fact that $V_{\omega} + \omega^2$ vanishes at $r=r_+$, we have that  
			\begin{align}\label{qyinc}
				\rd_s\mathscr{Q}(\omega,s) = y'(s)\left|\rd_s u_I\right|^2 + \omega^2 y'(s)\left|u_I\right|^2 + 2y(s)\left(V_{\omega}+\omega^2\right)\Re\left(\rd_s u_I\overline{u_I}\right) < 0.
			\end{align}
			
			On the other hand, using~\eqref{expanduiinrealcase}, we have
			\begin{equation}
				\lim_{s\to -\infty}\mathscr{Q}\left(\omega,s\right) = \left|W\right|^2(\omega) \Rightarrow \mathscr{Q}\left(\omega,s\right) < \left|W\right|^2(\omega),\qquad \forall s \in (-\infty,+\infty).
			\end{equation}
			To conclude the proof, it suffices to observe that as a consequence of Lemma~\ref{makeuI}, one may find a sufficiently large $s$, depending on $\delta$, so that for all $\omega \in [\delta,m-\delta]$, $\mathscr{Q}(s) \gtrsim 1$.
		\end{proof}

		Finally, we consider the case when $\omega \in [0,\delta]$.
		\begin{lemma}\label{wronknotzerosmallm}Let $\delta > 0$ be any sufficiently small constant, and let $\omega \in [0,\delta]$. Then the Wronskian $W=W(u_I,u_H)$ of $u_I$ and $u_H$ satisfies the estimate
			\[\sup_{\omega \in [0,\delta]}\left|W\right|^{-1}(\omega) \lesssim 1.\]
		\end{lemma}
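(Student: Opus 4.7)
The plan is to establish $W(0)\neq 0$ by a zero-energy no-resonance argument, and then upgrade to a uniform lower bound on $[0,\delta]$ using continuity of $W$ in $\omega$ together with compactness. The positivity of $V_\omega$ coming from Assumption~\ref{D3} plays the decisive role, exactly as flagged in Section~\ref{sec:ftransform}.

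For the first step, suppose toward contradiction that $W(u_I,u_H)(0)=0$, so that $u_I(0,\cdot)=c\,u_H(0,\cdot)$ for some constant $c$. By Lemma~\ref{makeuI}, $u_I(0,s)\sim r^{Mm}e^{-ms}$ as $s\to +\infty$, with $\partial_s u_I$ also decaying exponentially. By Lemma~\ref{makeuh} and the Volterra equation~\eqref{anothereqnforhthingsaregood}, using that $V_\omega-\omega^2\sim e^{2\kappa_+ s}$ as $s\to -\infty$, both $h(0,s)$ and $\partial_s h(0,s)$ decay exponentially at the horizon end, which gives $u_H(0,s)\to 1$ and $\partial_s u_H(0,s)\to 0$ as $s\to -\infty$; hence the same holds (with limit $c$) for $u_I(0,\cdot)$. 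Multiplying the equation $\partial_s^2 u_I=V_0 u_I$ by $\overline{u_I}$ and integrating by parts over $\mathbb{R}$, the boundary terms $u_I\,\partial_s\overline{u_I}$ vanish at both ends (exponential decay at $+\infty$, and a constant times a vanishing derivative at $-\infty$), yielding
\[
0 = \int_{-\infty}^{+\infty}\bigl(|\partial_s u_I|^2 + V_0\,|u_I|^2\bigr)\,ds.
\]
Since $V_0>0$ strictly by Assumption~\ref{D3}, this forces $u_I(0,\cdot)\equiv 0$, contradicting the nontrivial asymptotics at $+\infty$. Thus $W(0)\neq 0$.

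For the second step, for any fixed $s_0\in\mathbb{R}$ one has $W(\omega)=\partial_s u_I(\omega,s_0)\,u_H(\omega,s_0)-u_I(\omega,s_0)\,\partial_s u_H(\omega,s_0)$, and by the joint continuity of $u_H,u_I$ and their $s$-derivatives in $\omega$ (established in Lemmas~\ref{makeuh} and~\ref{makeuI}, which apply on compact subsets of $\mathbb{S}$ including a neighborhood of $[0,\delta]$ for $\delta$ small), $W$ is continuous on $[0,\delta]$. Combined with $W(0)\neq 0$ and compactness, $|W|$ is bounded below by a positive constant, as required.

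The main technical point is the vanishing of the boundary term at the horizon end of the energy identity, because $u_I(0,\cdot)$ does not itself decay there (it only tends to a constant when $W(0)=0$). Handling this rigorously requires showing $\partial_s u_H(0,s)\to 0$ as $s\to -\infty$, which in turn requires differentiating the Volterra equation for $h$ and invoking the exponential decay of $V_0$ at the horizon --- a fact that already underpins the construction of $u_H$ and is essentially built into the setup.
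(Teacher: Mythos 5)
Your proof is correct and follows essentially the same strategy as the paper: rule out a zero-frequency resonance via the energy identity $\int(|\partial_s u_I|^2+V_0|u_I|^2)\,ds=0$ with $V_0>0$ from Assumption~\ref{D3}, and then invoke continuity of $W(\omega)$ on $[0,\delta]$ to get the uniform bound. The paper formulates the $\omega=0$ step slightly differently---introducing the linearly growing second solution $\check{u}_H$ with $W[u_H,\check{u}_H]=-1$, writing $u_I=\alpha u_H+W\check{u}_H$, and reading off a quantitative lower bound on $|W(0)|$ from the boundary term $u_I'u_I|_{s=s_0}$ as $s_0\to-\infty$ rather than arguing by contradiction---but the content, including the role of $V_0>0$ and the exponential decay of $\partial_su_H(0,s)$ near the horizon, is the same.
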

		\begin{proof}It follows from Lemmas~\ref{makeuh} and~\ref{makeuI} that $W(\omega)$ is $C^1$ for $\omega \in [0,\delta]$. Thus it suffices to estimate $W^{-1}$ at $\omega = 0$.
			
			When $\omega = 0$, the formula~\eqref{expanduiinrealcase} is clearly invalid. In order to find a replacement, we define a function
			\[\check{u}_H\left(s\right) \doteq \lim_{\omega \to 0^+}\frac{\overline{u_H}\left(\omega,s\right) - u_H\left(\omega,s\right)}{2i\omega}.\]
			It is straightforward to see that $\check{u}_H$ is a well-defined real-valued solution to~\eqref{eq:mainV} when $\omega = 0$, that
			\[W\left[u_H,\check{u}_H\right] = -1, \]
			and that $\check{u}_H \sim s$ as $s \to -\infty$. The analogue of~\eqref{expanduiinrealcase} is then (keeping in mind that $u_H$ is real valued when $\omega = 0$)
			\begin{equation}\label{omega0uinearh}
				u_I = \alpha u_H +W \check{u}_H,
			\end{equation}
			for some constant $\alpha(\omega) = W\left[\check{u}_H,u_I\right] \in \mathbb{R}$, which furthermore satisfies
			\begin{equation}\label{keepthisinmindalpha}
				\left|\alpha(\omega)\right| \lesssim 1.
			\end{equation}
			This implies, in particular, that 
			\[ \rd_s u_Iu_I = \left(s+O(1)\right) W^2 + \alpha W O\left(se^{2\kappa_+ s}\right)+ \alpha^2\cdot O\left(e^{2\kappa_+s}\right)\text{ as }s\to -\infty.\]
			
			(note that we have used the fact that $r\rightarrow u_H(0,r)$ is smooth, hence $|u_H(0,s)-1| = O(e^{2\kappa_+s })$ as $s\rightarrow -\infty$).  We also have, in view of the decay of $u_I$ at $s = +\infty$, that
			\[\lim_{s\to +\infty} \rd_s u_Iu_I = 0.\]

			We next observe that when $\omega = 0$ we have
			\[V_0 = \left(1-\frac{2M}{r}+\frac{\mathcal{D}(r)}{r^2}\right)\left(m^2 + \frac{L(L+1)}{r^2}+\frac{2M}{r^3}+\frac{r\DD'(r)-2\DD(r)}{r^4}\right) > 0.\]
			In particular, the asymptotics of $u_I$ imply that 
			\[\int_0^{+\infty} \left(\left|\partial_su_I\right|^2+ V_0\left|u_I\right|^2\right)\, ds \gtrsim 1.\]
			
			Thus, for any $s_0 \in (-\infty,+\infty)$
			\begin{align}
				0 = \int_{s_0}^{+\infty}\left(-\partial_s^2u_I + V_0u_I\right)u_I\, ds = \int_{s_0}^{+\infty}\left[\left(\partial_su_I\right)^2 +V_0(u_I)^2\right]\, ds + u'_Iu_I|_{s=s_0}.
			\end{align}
			This then implies that, for any $s$  sufficiently negative  (depending on $\omega$)
			\begin{equation}\label{isthisthend}
				\left((-s)+O(1)\right) W^2(\omega) + \alpha(\omega) W(\omega) O\left(se^{2\kappa_+s}\right)+ \alpha^2(\omega) \cdot O\left(e^{2\kappa_+s}\right) \gtrsim 1,
			\end{equation}
			from which, keeping in mind~\eqref{keepthisinmindalpha}, the proof is easily concluded. 
		\end{proof}

		\subsection{Exponential damping and  derivation of the representation formula}
		In this section we will prove Proposition~\ref{repuform}.
		\begin{proof}
			We first note that it is a straightforward consequence of energy estimates and Sobolev inequalities that $\psi$ is pointwise uniformly bounded on the spacetime. With this in mind, for every $\epsilon > 0$, we define 
			\[\psi_{\epsilon} \doteq e^{-\epsilon t}\psi,\qquad H_{\epsilon} \doteq e^{-\epsilon t}H,\]
			recalling the notation $H$ from \eqref{defpsiandH}, and note that we will have for all $t\geq 0$ (recalling that $\psi_{\ep}=0$ for $t<0$)
			\[\left|\psi_{\ep}\right| \lesssim e^{-\epsilon t}.\]
			Keeping also in mind the fact that $\psi$ and $H$ is supported for $\{t > 0\}$, we may define
			\[u_{\epsilon}\left(\omega,s\right) \doteq \int_{-\infty}^{+\infty} e^{i\omega t}\psi_{\epsilon}\left(t,s\right)\, dt = \int_{-\infty}^{+\infty} e^{i\left(\omega+i\epsilon\right) t}\psi\left(t,s\right)\, dt,\]
			\[\hat{H}_{\epsilon}(\omega,s) \doteq \int_{-\infty}^{+\infty}e^{i\omega t}H_{\epsilon}\left(t,s\right)\, dt =\int_{-\infty}^{+\infty}e^{i\left(\omega+i\epsilon\right) t}{H}\left(t,s\right)\, dt.\]
			For each $\omega \in \mathbb{R}_{>0}\setminus \{m\}$ it is convenient to set $\omega_{\epsilon} \doteq \omega + i\epsilon$. Then, for each $\epsilon > 0$, we have that $u_{\epsilon}(\omega,s)$ is a continuous function of $\omega$ which satisfies
			\begin{equation}\label{uepeqn}
				\partial_s^2u_{\epsilon} - V_{\omega_{\epsilon}}u_{\epsilon} = H_{\epsilon}.
			\end{equation}
			On the other hand, $u\left(\omega,s\right) = \lim_{\epsilon \to 0}u_{\epsilon}\left(\omega,s\right)$ is, a priori, only defined as a distribution in $\omega$. 
			
			Now fix $\omega \in \mathbb{R}_{>0}\setminus \{m\}$ and let $\epsilon > 0$ be sufficiently small depending on $\omega$ so that $W_{\epsilon} \doteq \partial_su_I\left(\omega_{\epsilon},s\right) u_H\left(\omega_{\epsilon},s\right) - \partial_su_H\left(\omega_{\epsilon},s\right) u_I\left(\omega_{\epsilon},s\right) \neq 0$. Then we may define a function $\check{u}_{\epsilon}(s)$ by
			\begin{equation}\label{hatuthisisit}
				\check{u}_{\epsilon}(\omega,s) \doteq W_{\epsilon}^{-1}(\omega)\left[u_I\left(\omega_{\epsilon},s\right)\int_{-\infty}^su_H\left(\omega_{\epsilon},S\right)\hat{H}_{\epsilon}\left(\omega,S\right)\, dS + u_H\left(\omega_{\epsilon},s\right)\int_s^{+\infty}u_I\left(\omega_{\epsilon},S\right)\hat{H}_{\epsilon}\left(\omega,S\right)\, dS\right]. 
			\end{equation}
			Then $\underline{u}_{\epsilon} \doteq u_{\epsilon} - \check{u}_{\epsilon}$ will solve
			\begin{equation}\label{yougetthis}
				\partial_s^2\underline{u}_{\epsilon} - V_{\omega_{\epsilon}}\underline{u}_{\epsilon} = 0.
			\end{equation}
			Furthermore, a straightforward analysis shows that $\check{u}_{\epsilon}$ and hence also $\underline{u}_{\epsilon}$ is uniformly bounded as $s \to \pm\infty$. In turn, this implies that $\check{u}_{\epsilon}$ is both a constant multiple of $u_I$ and a constant multiple of $u_H$ (note that $u_I(\omega_\ep,s)$ (respectively $u_H(\omega_\ep,s)$) is the only bounded solution of \eqref{yougetthis} at $s=+\infty$ (respectively $s=-\infty$) up to a multiplicative factor). However, since the Wronskian of $u_I$ and $u_H$ is non-zero, we conclude that $\check{u}_{\epsilon}$ vanishes and thus that
			\begin{equation}\label{notjustforhatnow}
				u_{\epsilon}(s) \doteq W_{\epsilon}^{-1}\left[u_I\left(\omega_{\epsilon},s\right)\int_{-\infty}^su_H\left(\omega_{\epsilon},x\right)H_{\epsilon}\left(\omega,x\right)\, dx + u_H\left(\omega_{\epsilon},s\right)\int_s^{\infty}u_I\left(\omega_{\epsilon},x\right)H_{\epsilon}\left(\omega,x\right)\, dx\right]. 
			\end{equation}
			We then conclude the proof when $\omega \neq m$ by taking $\epsilon \to 0$.
			
			It remains to establish an $L^{\infty}$ bound for $u$.  In view of the representation formulas we have established, we may restrict attention to $\omega$ close to $m$. For any such $\omega$ the above arguments show that when $\epsilon > 0$, the formula~\eqref{notjustforhatnow} holds. Our goal is to establish uniform estimates for $u_{\epsilon}$ as $\epsilon \to 0$.  
			
			We start by writing the equation~\eqref{uepeqn} as
			\begin{equation}\label{eqnwithcomplexomega}
				\partial^2_su_{\epsilon} + \left(\omega_{\ep}^2 - \mathring{V}\right)u_{\epsilon} = H_{\epsilon},
			\end{equation}
			where $\mathring{V} = V_{\omega_{\epsilon}} +\omega_{\epsilon}^2$, is real valued and satisfies $\mathring{V} = O\left(r-r_+\right)$ as $r\to r_+$,  $\sup_{s \in \mathbb{R}}\left|\mathring{V}\right| \lesssim 1$, and that $\mathring{V} > 0$ for all $s \in \mathbb{R}$. We start by multiplying both sides of~\eqref{eqnwithcomplexomega} with $\overline{\omega_{\epsilon}u_{\epsilon}}$, integrating by parts, and taking the imaginary part. We obtain
			\begin{equation}\label{energyestincomplexplane}
				\epsilon \int_{-\infty}^{+\infty}\left[\left|\partial_su_{\epsilon}\right|^2 + \left(\left|\omega_{\ep}\right|^2 +\mathring{V}\right)\left|u_{\epsilon}\right|^2\right]\, dS = \int_{-\infty}^{+\infty}\Im\left(H\overline{\omega_{\epsilon}u_{\epsilon}}\right)\, dS.
			\end{equation}
			Another useful estimate is obtained by multiplying~\eqref{eqnwithcomplexomega} with $-y \overline{\partial_su_{\epsilon}}$ for a suitable $C^1$ function $y(s)$, and then integrating by parts and taking the real part. Assuming that $y$ is bounded as $s \to \pm\infty$, we obtain
			\begin{equation}\label{babycomplexmora}
				\int_{-\infty}^{+\infty}\left[\frac{1}{2}y'\left|\rd_{s} u_{\epsilon}\right|^2 + \frac{1}{2}y'\omega^2\left|u_{\epsilon}\right|^2 + yV\Re\left(u_{\epsilon}\overline{\partial_su_{\epsilon}}\right) - y\Re\left(\left(\omega_{\epsilon}^2-\omega^2\right)u_{\epsilon}\overline{\partial_su_{\epsilon}}\right)\right]\, dS = -\int_{-\infty}^{+\infty} y\Re\left(H\overline{\partial_su_{\epsilon}}\right)\, dS.
			\end{equation}
			There exists a function $\lambda(s)$ such that $s^{-2}\lambda^{-1}(s) \sim  1$ for $s \leq -1$ and $\sup_{s \geq -1}\left|\lambda^{-1}\right| \lesssim 1$, so that for  any large constant $C_0>0$ we can set $y(s) = -\exp\left(-C_0 \int_{-\infty}^s \lambda(S)\, dS\right)$ and we will have 
			\begin{equation}\label{whatcouldyevenbewhoknows}
				y'(s) = -y(s) C_0 \lambda(s) > 0,\qquad y(-\infty) = -1,\qquad y(+\infty) = 0.
			\end{equation}
			In particular, since $\left|\omega_{\epsilon}^2-\omega^2\right| \lesssim \epsilon$, and  $H$ vanishes for large $s$, one may combine~\eqref{energyestincomplexplane},~\eqref{babycomplexmora}, and~\eqref{whatcouldyevenbewhoknows} to obtain, for any compact set $K \subset \mathbb{R}$,
			\[\sup_{\epsilon > 0}\int_K\left[\left|\partial_su_{\epsilon}\right|^2 + \left|u_{\epsilon}\right|^2\right]\, dS \lesssim_{H,K} 1.\]
			This establishes that $u$ is bounded  as a map from $\omega$ to $L^2_{\rm loc}\left(s\right)$. It is then straightforward establish commuted versions of this estimate and  conclude the proof with Sobolev inequalities in $s$.
			
		\end{proof}
		\subsection{An improved estimate for $u_I$ when $\omega > m$}
		In this section we prove Proposition~\ref{betterbounduI}.
		\begin{proof}We start with estimates for the potential $V_{\omega}$ (defined in Section~\ref{setup.section}) under the assumption that $\omega \in (m,+\infty)$ and $s \in [A,+\infty)$ for sufficiently large $A$ (note that $s(r)\sim r$ in this range). We start with the trivial estimates that
			\[\left(-V_{\omega}\right)^{-1} \lesssim r,\qquad \left(-V_{\omega}\right)^{-1} \lesssim \frac{1}{\omega^2-m^2}.\]
			Keeping this in mind, we may easily establish that   for any $\delta>0$ small:
			\begin{equation}\label{potentialestimatesforWKBomegabigm}
				\left|\left(-V_{\omega}\right)^{-1/4}\frac{\partial^2}{\partial s^2}\left(-V_{\omega}\right)^{-1/4}\right| \lesssim r^{-3/2},
			\end{equation}
			\begin{equation}\label{potentialestimatesWKBwithomega}
				\left|\left(-V_{\omega}\right)^{1/2}\frac{\partial}{\partial \omega}\left(\left(-V_{\omega}\right)^{-3/4}\frac{\partial^2}{\partial s^2}\left(\left(-V_{\omega}\right)^{-1/4}\right)\right)\right| \lesssim \left(\omega^2-m^2\right)^{-1/2-\delta}r^{-1-\delta}, \qquad \text{ if }\omega \lesssim 1,
			\end{equation}
			\begin{equation}\label{potentialestimatesWKBwithomega2}
				\left|\left(-V_{\omega}\right)^{1/2}\frac{\partial}{\partial \omega}\left(\left(-V_{\omega}\right)^{-3/4}\frac{\partial^2}{\partial s^2}\left(\left(-V_{\omega}\right)^{-1/4}\right)\right)\right| \lesssim \omega^{-4}r^{-3},\qquad \text{ if }\omega - m \gtrsim 1,
			\end{equation}
			\begin{equation}\label{potentialestimatesWKBwithomega3}
				\omega\left|\left(\int_A^s\left(-V_{\omega}\right)^{-1/2}\, dS\right) \frac{\partial}{\partial s}\left(\left(-V_{\omega}\right)^{-3/4}\frac{\partial^2}{\partial s^2}\left(\left(-V_{\omega}\right)^{-1/4}\right)\right)\right| \lesssim \left(\omega^2-m^2\right)^{-1/2-2\delta}r^{-1-\delta}, \qquad \text{ if }\omega \lesssim 1,
			\end{equation}
			\begin{equation}\label{potentialestimatesWKBwithomega4}
				\omega \left|\left(\int_A^s\left(-V_{\omega}\right)^{-1/2}\, dS\right) \frac{\partial}{\partial s}\left(\left(-V_{\omega}\right)^{-3/4}\frac{\partial^2}{\partial s^2}\left(\left(-V_{\omega}\right)^{-1/4}\right)\right)\right| \lesssim \omega^{-4}r^{-3},\qquad \text{ if }\omega - m \gtrsim 1,
			\end{equation}
			
			Now we proceed as in Section 2 of Chapter 6 of~\cite{olver}. We define a new variable
			\[\xi\left(s,\omega\right) \doteq \int_A^s\sqrt{-V_{\omega}}\, dS. \]
			It is also convenient to use $\tilde{\omega}$ to refer to $\omega$ in the $\left(\xi,\omega\right)$ coordinate system. We then have the following equation for $h$:
			\begin{equation}\label{quiteaniceeqnforhwhenyouthinkaboutit}
				\frac{\partial^2h}{\partial \xi^2} + 2i\frac{\partial h}{\partial \xi} + \mathcal{E} h = -\mathcal{E},
			\end{equation}
			where
			\[\mathcal{E} \doteq \left(-V_{\omega}\right)^{-3/4}\frac{\partial^2}{\partial s^2}\left(\left(-V_{\omega}\right)^{-1/4}\right).\]
			In order solve~\eqref{quiteaniceeqnforhwhenyouthinkaboutit} we set-up the corresponding the Volterra equation:
			\begin{equation}\label{tofindhvolt}
				h\left(\xi,\tilde{\omega}\right) = -\frac{1}{2i}\int_{\xi}^{+\infty}\left(1-e^{2i(\eta-\xi)}\right)\left(-\mathcal{E}\left(\eta,\tilde{\omega}\right) - \mathcal{E}\left(\eta,\tilde{\omega}\right)h\left(\eta,\tilde{\omega}\right)\right)\, d\eta.
			\end{equation}
			The estimate~\eqref{potentialestimatesforWKBomegabigm} is enough to conclude by standard Volterra estimates from  Theorem~\ref{thm:volterra} (specifically, we  take $P_j$ and $Q$ to all be constant and set $\Phi(\xi) = A r^{-1/2}$ for a suitably large constant $A$) that 
			\[\left|h\right| + \left|\rd_{\xi} h\right|+ \left|\rd^2_{\xi \xi} h\right| \lesssim r^{-1/2}.\]
			The estimate for $\partial_{\xi\xi}^2h$ is then obtained directly by the equation.

			Keeping in mind the change of variables (and the fact that  $\rd_{s}\xi = \sqrt{-V_{\omega}}$ and $\rd_{\omega} \xi = \omega \int_A^{s}  (-V_{\omega})^{-\frac{1}{2}}dS$):
			\[\partial_{\tilde{\omega}} = \partial_{\omega} - \left(\partial_{\omega}\xi\right)\left(\partial_s\xi\right)^{-1}\partial_s= \partial_{\omega} - \left(\partial_{\omega}\xi\right)\partial_{\xi},\]
			the estimates~\eqref{potentialestimatesWKBwithomega},~\eqref{potentialestimatesWKBwithomega2},\eqref{potentialestimatesWKBwithomega3}, and~\eqref{potentialestimatesWKBwithomega4} and standard Volterra estimates from Theorem~\ref{thm:volterraparam} (setting, for a sufficiently large constant $A$, $\tilde{\Phi} = A\left(\omega^2-m^2\right)^{-1/2-2\delta}r^{-\delta}$ if $\omega \lesssim 1$ and $\tilde{\Phi} = A\omega^{-4}r^{-2}$ if $\omega \gtrsim 1$) imply that 
			\[\left|\partial_{\tilde{\omega}}h\right| + \left|\partial^2_{\xi\tilde{\omega}}h\right| \lesssim \left(\omega^2-m^2\right)^{-1/2-2\delta}r^{-\delta},\qquad \text{ if }\omega \lesssim 1,\]
			\[\left|\partial_{\tilde{\omega}}h\right| + \left|\partial^2_{\xi\tilde{\omega}}h\right| \lesssim \omega^{-4}r^{-2},\qquad \text{ if }\omega - m \gtrsim 1.\]

			The proposition then follows by changing variables from $\left(\xi,\tilde{\omega}\right)$ back to $\left(s,\omega\right)$.
			
		\end{proof}

		\section{Analysis of the low-frequency regime near $|\omega|=m$}\label{maxime.section}

		In this section, we restrict to the case $|\omega| < m$. It is useful to introduce a function $G(s)$ so that the ODE~\eqref{eq:mainV} becomes
		\begin{equation}\label{odewithG}
			\partial_s^2u = \left(m^2-\omega^2\right)u - \frac{2Mm^2}{s}u + \frac{1}{s^2}G\left(s\right)u.
		\end{equation}
		The function $G(s)$ will satisfy the following two inequalities for $s>1$ large enough \begin{equation}
			|G|(s) \lesssim \log(s),\ |G'|(s) \lesssim s^{-1}.
		\end{equation}
		
		We rewrite the ODE~\eqref{eq:mainV} with the following change of variables. We let $k:= Mm^2\left(m^2-\omega^2\right)^{-1/2}$, and $x$ such that
		$$
		s = \frac{2Mm^2}{m^2 - \omega^2} x.
		$$
		We obtain:
		\begin{equation}\label{eq:KGnice}
			- \partial_x^2u + 4k^2\Big(1 - \frac 1 x \Big) u + \frac{G(d^2k^2x)}{x^2} = 0,
		\end{equation}
		where we let $\nofa^2 := 2M^{-1}m^{-2}$, so that $s = \nofa^2 k^2 x$. Throughout this section we will assume that $k \gg 1$. We will also renormalize $u_I$ defined in Section~\ref{setup.section} and define $\tuI= C(\omega) u_I$ such that $\tilde{u}_I$ matches with the ``natural'' normalization of the airy functions involved  in the turning point analysis.

		The main objective of this section is to obtain a formula (including errors which become smaller as $\omega \rightarrow \pm m$, equivalently, $k\to \infty$) for $W(\tuI,u_H)= \partial_s\tuI u_H - \partial_su_H \tuI$ (recall the definitions \eqref{uI.boundary}, \eqref{uH.boundary}). The key difficulty is the fact that the original ODE~\eqref{eq:KGnice} has a turning point which goes to $r = \infty$ as $\omega$ tends to either $+m$ or $-m$. 
		We summarize the outcome of this section in the following proposition:
		\begin{prop}\label{ODE.interior.prop} 
			$\frac{W(\tuI,u_H)}{\overline{W}(\tuI,u_H)}(\omega)$ takes the following form in the $s$-coordinate system
			
			\begin{equation} \label{W.uI.uH.ratio}
				\frac{\overline{W}(\tuI,u_H)}{W(\tuI,u_H)}(\omega)= \gamma_m e^{i \phi_{-}(m)}+e^{i\phi_-(m)}[1-|\gamma_m|^2]\sum_{q=1}^{+\infty}  (-\gamma_m)^{q-1} e^{2i\pi k q}+\sum_{N=0}^{+\infty} F_N(\omega) e^{2i \pi k N},
			\end{equation} 
			where $\gamma_m := \gamma(m)$ and for some $\theta \in(0,1)$, the following estimates are true for all $(1-\theta)m<\omega <m$ \begin{align}
				&|\gamma(\omega)-\gamma_m|
				+ |\phi_{\pm}(\omega)-\phi_{\pm}(m)|  \ls k^{-\frac{1}{2}}\log(k)\\ &|\rd_{\omega}\gamma|(\omega)
				\ls  k^2\log^2(k) , \\ &  |F_N|(\omega)
				\ls \eta_0^{N}\cdot k^{-\frac{1}{2}}\log(k),\\ & |\rd_{\omega}F_N|(\omega) 
				\ls \eta_0^{N} \cdot k^2\log^2(k) . 
			\end{align} where $\eta_0 \in (0,1)$ is independent of $\omega$.  The functions $\gamma(\omega)$ and $\phi_{\pm}\left(\omega\right)$ are introduced in~\eqref{eq:gammas} and~\eqref{thisiswherephipmisdefined}.
			
		\end{prop}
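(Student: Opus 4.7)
The proof requires a three-region matched asymptotic analysis of the ODE \eqref{eq:KGnice} in the $k\to\infty$ (equivalently $|\omega|\to m$) limit. Since the turning point sits at $x=1$ (i.e.\ $s\approx (m^2-\omega^2)^{-1}$), I would partition the $s$-axis into: (i) an \emph{Airy region} near the turning point $x\approx 1$, where the coefficient $4k^2(1-1/x)$ changes sign and one uses Langer/Olver-type turning point theory; (ii) a \emph{WKB region} below the turning point, roughly $k^{-2}\log k\ll s\ll (m^2-\omega^2)^{-1}$; and (iii) a \emph{Bessel region} near the horizon, where the ODE at $\omega=\pm m$ reduces at leading order to a Bessel equation and the Jost solution $u_H$ is described by an explicit pair $\tilde B_\pm$ from Appendix~A. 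The two functions I propagate are $\tilde u_I$ (uniquely determined by the behavior \eqref{uI.boundary}, renormalized so that in the Airy region it matches the decaying Airy function $\Ai$ with the standard prefactor), and $u_H$ determined by \eqref{uH.boundary}. The Wronskian ratio $\overline W/W$ is extracted by writing $\tilde u_I$ as a linear combination of $u_H$ and $\bar u_H$ in the matching zone.

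\textbf{Step 1: Turning point and WKB propagation.} First I would apply Olver's turning point theorem (as in Ch.~11 of Olver) to the ODE \eqref{eq:KGnice}, producing an expansion of $\tilde u_I$ of the form $\alpha(\omega)\Ai(\cdot) + \text{errors}$ valid for $x$ in a neighborhood of $1$ and for $x$ slightly below $1$. In a sub-region where $x<1$ but $1-x$ is not too small (quantitatively, $k^{2/3}(1-x)\gg 1$), the standard large-argument asymptotics of $\Ai$ give the two exponential/oscillatory branches with phases $\pm\int^s (-V_\omega)^{1/2}$. Then in region (ii), I would justify WKB on the interval from the lower edge of the Airy zone down to some radius $s_\star\sim k^{-2}\log k$ (or more practically, down to a fixed $r$ of order $k^{-2+\epsilon}$) by solving a Volterra equation using Theorems B.1--B.2, with the small parameter $k^{-1}$. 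The output is an expression of the form
\begin{equation*}
\tilde u_I(\omega,s)=(-V_\omega)^{-1/4}\bigl[A_+(\omega)\, e^{+i\int^s_{s_\star}(-V_\omega)^{1/2}\,dS+i\phi_+(\omega)}+A_-(\omega)\, e^{-i\int^s_{s_\star}(-V_\omega)^{1/2}\,dS+i\phi_-(\omega)}\bigr]+\mathrm{err},
\end{equation*}
where the \emph{connection coefficients} $A_\pm$ come from Airy asymptotics and encode the phase $2\pi k$ accumulated across the turning point. The error is controlled of order $k^{-1/2}\log k$ using the bound $|G|\lesssim\log s$ and the volume of the matching zone; the $\omega$-derivative bound $k^2\log^2 k$ follows from differentiating the Volterra iteration, with the loss of two powers of $k$ coming from differentiating the phase.

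\textbf{Step 2: Matching to $u_H$.} In the lower matching region (iii), evaluated at $\omega=\pm m$, the WKB phase $\int(-V_\omega)^{1/2}$ degenerates and the equation is controlled by the Bessel functions $\tilde B_\pm$ of Appendix~A. I would express both $u_H$ and $\bar u_H$ in terms of the same Bessel pair $\tilde B_\pm$, using the asymptotic formulas already used elsewhere in the paper, and then substitute into the Step~1 expansion of $\tilde u_I$. This yields
\begin{equation*}
\tilde u_I(\omega,s)=\Gamma_-(\omega)\, e^{-i\pi k}\, u_H(\omega,s)+\Gamma_+(\omega)\, e^{+i\pi k}\, \bar u_H(\omega,s)+\text{errors of size }k^{-1/2}\log k,
\end{equation*}
where $\Gamma_\pm$ are smooth in $\omega$ and bounded at $\omega=\pm m$; the explicit factor $e^{\pm i\pi k}$ comes from the sub-WKB phase accumulated between the Airy zone and the matching circle $r\approx(m^2-\omega^2)^{-1/2}$. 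Since $W(u_H,\bar u_H)=-2i\omega$, the Wronskians read
\begin{equation*}
W(\tilde u_I,u_H)=2i\omega\,\Gamma_+ e^{i\pi k}+\text{err},\qquad \overline W(\tilde u_I,u_H)=-2i\omega\,\bar\Gamma_- e^{-i\pi k}+\text{err},
\end{equation*}
so that after multiplying numerator and denominator by $e^{i\pi k}$:
\begin{equation*}
\frac{\overline W(\tilde u_I,u_H)}{W(\tilde u_I,u_H)}(\omega)=-\frac{\bar\Gamma_-(\omega)+\bar\Gamma_+(\omega)\, e^{2i\pi k}}{\Gamma_+(\omega)+\Gamma_-(\omega)\, e^{2i\pi k}}\cdot\frac{1+O(k^{-1/2}\log k)}{1+O(k^{-1/2}\log k)}.
\end{equation*}

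\textbf{Step 3: The series expansion and the inequality $|\gamma_m|<1$.} I would set $\gamma(\omega):=\Gamma_-(\omega)/\Gamma_+(\omega)$ (times the unimodular factor making the phase match $\phi_-$). The \emph{crucial inequality} $|\gamma_m|<1$ is the nonperturbative black-hole input: it follows from the conserved current $\Im(\partial_s u_H\,\overline{u_H})=-\omega$ evaluated at the horizon, compared with the asymptotic evaluation of the same quantity at the matching region expressed via the Bessel pair, as outlined in the ``energy identity'' paragraph of Section~1.6. Once $|\gamma_m|<1$ is in hand, continuity gives $|\gamma(\omega)|\le\eta_0<1$ on a one-sided neighborhood $(1-\theta)m<\omega<m$; this lets me expand the geometric series in $(-\gamma(\omega)e^{2i\pi k})$ in the denominator. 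Distributing the terms and regrouping by powers of $e^{2i\pi k}$ yields exactly the claimed form, with $\gamma_m e^{i\phi_-(m)}$ as the $N=0$ principal piece, the $[1-|\gamma_m|^2]$ factor emerging from $\bar\Gamma_+\Gamma_+-|\Gamma_-|^2$ via the Wronskian identity $|\Gamma_+|^2-|\Gamma_-|^2=1$ (also coming from the energy identity), and the sum $\sum_N F_N(\omega)e^{2i\pi kN}$ absorbing the $\omega\ne m$ deviations. The claimed bounds $|F_N|\lesssim\eta_0^N k^{-1/2}\log k$ and $|\partial_\omega F_N|\lesssim\eta_0^N k^2\log^2 k$ follow from summing the geometric series term-by-term using the Step~1/Step~2 error bounds for $\Gamma_\pm(\omega)-\Gamma_\pm(m)$.

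\textbf{Expected main obstacle.} The hardest part is the quantitative turning point analysis: controlling the error in the Airy approximation uniformly in $\omega$ with the slowly decaying remainder $G(s)/s^2$ (only $G=O(\log s)$, not $O(1)$), and then matching the Airy expansion smoothly to the WKB expansion in a zone of width $\sim k^{-2/3}$, all while keeping the $\omega$-derivative losses to two powers of $k$. The algebraic derivation of the series in Step~3 is largely bookkeeping once Steps~1--2 and the energy identity $|\Gamma_+|^2-|\Gamma_-|^2=1$ are secured.
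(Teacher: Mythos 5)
Your proposal is correct and follows essentially the same route as the paper: the same three-region decomposition (Olver turning-point theory near $x=1$ producing an Airy description of $\tilde u_I$, a WKB corridor below the turning point, and a matching to perturbed Bessel functions $\tilde B_{\pm}$ at $r\sim(m^2-\omega^2)^{-1/2}$, all controlled by Volterra estimates from Appendix~B), the same identification of the $e^{\pm i\pi k}$ phase from $\frac{4}{3}[-\zeta]^{3/2}(0)=\pi$, the same energy identity $\Im(\partial_su_H\overline{u_H})=-\omega$ to obtain $|\gamma_m|<1$, and the same geometric-series expansion of the denominator. The only caveat is a minor bookkeeping imprecision in Step~3: the factor $1-|\gamma_m|^2$ comes out of the algebraic rearrangement of the geometric series (as in the paper, one expands $\frac{\bar\gamma+e^{2i\pi k}}{1+\gamma e^{2i\pi k}}$ and collects by powers of $e^{2i\pi k}$), and the energy identity is what guarantees $|\gamma|<1$ so that the series converges — it is not the direct source of the $1-|\gamma_m|^2$ prefactor — and the normalization is $|\Gamma_-|^2-|\Gamma_+|^2=4\pi\omega+O(k^{-1/2}\log k)$ rather than $1$, but neither affects the structure of the argument.
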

		\subsection{Turning point analysis}\label{TP.analysis}
		In this first lemma, we use the turning point analysis  from~\cite{olver} (Theorem 3.1, Chapter 11) to construct the aforementioned function $\tuI$. Following~\cite{olver} we will frequently use the weight functions $E$, $N$, and $M$. We review the relevant properties of these in Appendix~\ref{app:airy}.
		\begin{lemma}\label{TP.lemma} 
			The only (up to rescaling by a constant) solution of~\eqref{eq:KGnice} which is bounded as $x\rightarrow +\infty$ may be written as
			\begin{equation}
				\tuI(x)= \Big|1-\frac{1}{x}\Big|^{-\frac{1}{4}}  |\zeta|^{\frac{1}{4}}(x) \cdot\left( Ai( [2k]^{2/3} \zeta(x))  	+\epsilon^{TP}(\omega,x)\right)\end{equation} where $\lim_{x\to +\infty}\ep^{TP} = 0$, we recall the definition of the Airy functions in Appendix~\ref{app:airy}, and in addition we have
			\begin{align}
				&\label{zeta1}\frac{2}{3} \zeta(x)^{\frac{3}{2}}= \sqrt{x}\sqrt{x-1}-\ln(\sqrt{x}+\sqrt{x-1}) \text{ for } x \geq 1,\\ & \label{zeta2} \frac{2}{3} [-\zeta(x)]^{\frac{3}{2}}= \arccos(\sqrt{x})-\sqrt{x}\sqrt{1-x} \text{ for }  0 \leq x \leq 1.
			\end{align} 
			
			Moreover, for all $x \in (1/4, 3/4)$,
			\begin{align}
				\label{error.largeneg}
				&k^{\frac{1}{6}}\ |\epsilon^{TP}(\omega,x)|,\ \  k^{-\frac{5}{6}}\ |\partial_x \epsilon^{TP}|(\omega,x) \lesssim   k^{-1} \log(k).
			\end{align}

		\end{lemma}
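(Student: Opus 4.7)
} The equation \eqref{eq:KGnice} is a Schr\"odinger-type ODE with a single turning point at $x=1$ and a subordinate perturbation $G(d^2 k^2 x)/x^2$, so the natural tool is the Liouville--Langer reduction to an Airy model equation, followed by the large parameter turning point theorem of Olver (Theorem 3.1 of Chapter 11 of~\cite{olver}). The plan is to execute this reduction, verify the hypotheses of Olver's theorem with the large parameter $u=2k$, and then track the loss coming from $G$, which will produce the $\log k$ factor in \eqref{error.largeneg}.

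\emph{Step 1: Liouville--Langer transformation.} Define $\zeta(x)$ by \eqref{zeta1}--\eqref{zeta2}. A direct check shows that $\zeta$ is smooth on $(0,\infty)$ with $\zeta(1)=0$, $d\zeta/dx>0$, and that $(d\zeta/dx)^{2}\,\zeta=1-1/x$, so in particular $(d\zeta/dx)^{-1/2} = |1-1/x|^{-1/4}|\zeta|^{1/4}$. Setting $u(x)=(d\zeta/dx)^{-1/2} W(\zeta)$ transforms \eqref{eq:KGnice} into the perturbed Airy equation
\begin{equation}
\frac{d^{2}W}{d\zeta^{2}} = \left\{ (2k)^{2}\zeta + \Psi(\zeta) + \frac{\widetilde{G}(\zeta,k)}{(2k)^{2}}\right\} W,
\end{equation}
where $\Psi(\zeta)$ is the usual Schwarzian term of the change of variables (smooth and bounded uniformly on compact subsets of $(0,\infty)$ in $x$, including across $x=1$), and $\widetilde{G}(\zeta,k) = (d\zeta/dx)^{-2}\,G(d^{2}k^{2} x)/x^{2}$ is the push-forward of the small-potential correction.

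\emph{Step 2: Invoke Olver's turning point theorem.} Olver's theorem produces a distinguished solution of the perturbed Airy equation of the form $W=\Ai((2k)^{2/3}\zeta)+\varepsilon$ recessive as $\zeta\to+\infty$, with $\varepsilon$ controlled by the Volterra integral equation whose kernel is built from the Airy weight functions $E,M,N$ of Appendix~\ref{app:airy} and whose forcing is the perturbation $\Psi+\widetilde{G}/(2k)^{2}$. Concretely, the theorem gives an estimate of schematic form
\begin{equation}
\left|\frac{\varepsilon}{M((2k)^{2/3}\zeta)}\right| + \left|\frac{\partial_{\zeta}\varepsilon}{(2k)^{2/3} N((2k)^{2/3}\zeta)}\right|\ \lesssim\ \frac{1}{(2k)^{4/3}}\,\mathcal{V}_{\zeta}^{+\infty}\!\left(\Psi + \frac{\widetilde{G}}{(2k)^{2}}\right),
\end{equation}
where $\mathcal{V}$ denotes the total variation of the relevant expression in $\zeta$ with appropriate weights, evaluated between $\zeta(x)$ and $+\infty$. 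Uniqueness of the bounded solution at $+\infty$ picks out $\tuI$ up to a multiplicative constant, which we fix by the Airy normalisation.

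\emph{Step 3: Bounding the variation and concluding.} On $x\in(1/4,3/4)$, $\zeta$ is bounded away from $0$ and $-\infty$ and smooth, so $\Psi$ is uniformly bounded with uniformly bounded variation; its contribution to the error is therefore $O(k^{-2})$. For the $\widetilde{G}$-term we use the hypothesis $|G|(s)\lesssim\log s$ and $|G'|(s)\lesssim s^{-1}$: since $s=d^{2}k^{2}x$, one obtains on the relevant range $|\widetilde{G}|\lesssim \log(k)$ and analogous bounds on its derivative, whence the $\widetilde{G}/(2k)^{2}$ contribution is $O(k^{-2}\log k)$. Combining, the right-hand side of the displayed estimate above is $O(k^{-4/3} \log k)$. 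In the range $x\in(1/4,3/4)$ the argument $(2k)^{2/3}\zeta(x)$ lies in the oscillatory (large negative) regime of Appendix~\ref{app:airy}, where $M\sim k^{-1/6}$ and $N\sim k^{1/6}$. Translating back via $u=(d\zeta/dx)^{-1/2}W$ and $\partial_{x}=(d\zeta/dx)\partial_{\zeta}$ then yields the bounds \eqref{error.largeneg} with exactly the claimed weights $k^{1/6}$ and $k^{-5/6}$ and an overall factor $k^{-1}\log k$.

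\emph{Main obstacle.} The routine part is the Liouville--Langer change of variables and the direct application of Olver's theorem; the principal bookkeeping difficulty is that the perturbation $\widetilde{G}/(2k)^{2}$ is not a pure $1/k^{2}$ correction but carries a $\log$ factor stemming from the slow logarithmic growth of $G(s)$, and this logarithm must be tracked through both the Volterra iteration and the conversion between the $M,N$-weighted Olver bounds and the pointwise bounds in $x$. Some care is also required in picking the correct recessive solution globally (not merely on $(1/4,3/4)$), so that $\tuI$ matches the $u_{I}$ from \eqref{uI.boundary} up to the chosen normalisation $C(\omega)$; one justifies this by combining Olver's theorem with the Volterra analysis used in Lemma~\ref{makeuI} to match asymptotics at $x\to+\infty$.
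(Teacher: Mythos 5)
Your plan — a Liouville--Langer change of variables followed by Olver's turning-point theorem (Theorem~3.1, Chapter~11) with the error control function and the weights $E,M,N$ — is exactly the route taken in the paper. However, two errors in Step~2 propagate into an inconsistent Step~3, and as written the argument would not produce the $\log k$ in \eqref{error.largeneg}.

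First, the placement of the $G$-term in the transformed equation is wrong. After substituting $u=(d\zeta/dx)^{-1/2}W(\zeta)$ into~\eqref{eq:KGnice}, the coefficient $G(d^2k^2x)/x^2$ is carried to $(d\zeta/dx)^{-2}\,G(d^2k^2x)/x^2 = g/\hat f$; in the paper's notation this sits \emph{inside} the $O(1)$ perturbation $\psi(\zeta)$, on the same footing as the Schwarzian $-\hat f^{-3/4}\frac{d^2}{dx^2}\hat f^{-1/4}$, with no extra inverse power of $(2k)$. Your equation instead has $\widetilde G(\zeta,k)/(2k)^2$, a spurious factor $(2k)^{-2}$. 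This is not a harmless normalisation: the whole reason a $\log k$ survives in \eqref{error.largeneg} is that, at the Airy level, $g/\hat f$ competes with the Schwarzian at the same order in $k$ while being logarithmically larger (since $|G|(s)\lesssim \log s$ and $s\sim k^2 x$). With your normalisation $\mathcal{V}(\widetilde G/(2k)^2)=O(k^{-2}\log k)$ is dominated by $\mathcal{V}(\Psi)=O(1)$, and the $\log k$ disappears.

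Second, your schematic Olver estimate has prefactor $(2k)^{-4/3}$ and controls the total variation of $\psi$; the theorem actually gives a prefactor $u^{-1}=(2k)^{-1}$, and the controlling quantity is the total variation of the \emph{error-control function} $H(x)=-\int_0^{\zeta(x)}|v|^{-1/2}\psi(v)\,dv$, which is not the same object as $\mathcal{V}(\psi)$. The paper's key estimate is $TV_{x,+\infty}(H)\lesssim x^{-1}\log(k^2 x)$, so on $x\in(1/4,3/4)$ one gets $u^{-1}TV(H)\lesssim k^{-1}\log k$; combined with $M\sim k^{-1/6}$, $N\sim k^{1/6}$ and $\hat f^{1/2}\sim 1$ this yields exactly the two weighted bounds of \eqref{error.largeneg}. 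Your Step~3 arithmetic does not close: with your own numbers the right-hand side would be $O(k^{-4/3})$ with no $\log$, and the final sentence asserting "an overall factor $k^{-1}\log k$" is stated rather than derived. Fix the two normalisations above and the argument does go through as planned; the remark about uniqueness of the recessive solution and matching with Lemma~\ref{makeuI} is correct and is handled the same way in the paper.
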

		
		\begin{proof}
			Throughout the proof, we assume that $x \geq 1/4$. We apply the theory of turning points from~\cite{olver} (Theorem 3.1, Chapter 11): defining $u=2 k$ and
			\begin{equation}\label{f.def}
				f(x) = 1-\frac{1}{x},\ g(x)= \frac{G(\nofa^2 k^2 x)}{  x^2},\ \zeta(x) (\frac{d\zeta}{dx})^2 = f(x),\ \hat{f}(x) = \frac{f(x)}{\zeta(x)},\  \psi(\zeta) = - (\hat{f})^{-\frac{3}{4}} \frac{d^2}{dx^2} (\hat{f})^{-\frac{1}{4}}  + \frac{g(x)}{\hat{f}}.
			\end{equation}One can check that $ \zeta(x) (\frac{d\zeta}{dx})^2 = f(x)$ is solved by \eqref{zeta1}, \eqref{zeta2} (where we choose $\zeta$ so that $\zeta(x) f(x) \geq 0$ for all $x$), and note that $x \rightarrow \zeta(x)$ thus defined is a $C^1$ function. Note moreover that $(-\zeta(0))^{\frac{3}{2}}=\frac{3\pi}{4}$ and $\zeta(x)- \zeta(0) \sim 2(\frac{x}{-\zeta(0)})^{\frac{1}{2}}$: \begin{equation}\label{zeta.asymp}
				\frac{2}{3} \zeta^{\frac{3}{2}}(x)=  x - \frac{\ln(x)}{2} - \frac{1}{2}-\ln(2) + O(x^{-1}) \text{ as } x\rightarrow +\infty,\ \zeta(x) \sim (x-1) \text{ as } x\rightarrow1.
			\end{equation}
			We introduce the following error-control function \begin{equation}
				H(x) = -\int_0^{\zeta(x)}|v|^{-\frac{1}{2}} \psi(v) dv =\pm\int_x^{1} \frac{|f|^{\frac{1}{2}}(x) }{\zeta(x)}\psi(\zeta(x)) dx.
			\end{equation} 
			Using \eqref{zeta.asymp}, we have that, using $TV_{a,b}$ to denote the total variation of a function on $(a,b)$:
			
			\begin{align}\label{TVH1}
				& TV_{x,+\infty}(H) \lesssim  x^{-1}\log(k^2 x)  \text{ for all } x \geq 1/4.
			\end{align}
			Using Theorem 11.3.1 in~\cite{olver}, we have  a solution $\tuI\left(\omega,x\right)$ defined for $x > 1/4$ so that \begin{equation}
				\tuI(\omega,x)= \left|1-\frac{1}{x}\right|^{-\frac{1}{4}} \zeta^{\frac{1}{4}}(x) \left( Ai( [2k]^{2/3} \zeta(x))+ \epsilon^{TP}(\omega,x)\right)
			\end{equation} and \begin{equation}\label{ep2}
				E( [2k]^{2/3} \zeta(x))\cdot \left(\frac{\epsilon^{TP}(\omega,x)}{M( [2k]^{2/3} \zeta(x))},\ \frac{\partial_x \epsilon^{TP}(\omega,x)}{ (2k)^{\frac{2}{3}} \hat{f}^{\frac{1}{2}}(x)N([2k]^{2/3} \zeta(x))}\right)  \lesssim  k^{-1} TV_{x,+\infty}(H)
			\end{equation} Moreover, the solution $\tuI$ is the unique solution to~\eqref{eq:KGnice} (up to a multiplicative constant) which remains uniformly bounded as $x\to +\infty$. After combining with \eqref{TVH1} we obtain immediately \eqref{error.largeneg}.
			
		\end{proof}

		In this next lemma, we extend the analysis of Lemma~\ref{TP.lemma} by also considering the $\omega$-derivatives of $\tuI$.
		\begin{lemma}\label{TP.lemma2}  Let $\tuI$ be as in Lemma~\ref{TP.lemma}. For $x \in (1/4,3/4)$, we have
			\begin{align}\label{derror.largeneg}
				k^{\frac{1}{6}}\ |\partial_{\omega}\epsilon^{TP}|(\omega,x)+  k^{-\frac{5}{6}}\ |\partial_{\omega x}^2\epsilon^{TP}|(\omega,x) \lesssim  k^2\log^2(k).
			\end{align}
		\end{lemma}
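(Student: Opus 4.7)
\medskip

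\noindent\textbf{Proof proposal for Lemma~\ref{TP.lemma2}.} The approach is to differentiate in $\omega$ the Volterra integral equation used in Lemma~\ref{TP.lemma} (namely the one underlying Olver's Theorem 11.3.1) and to apply the parameter-dependent Volterra theorem (Theorem~\ref{thm:volterraparam}) to the resulting equation for $\partial_\omega \epsilon^{TP}$. Concretely, $\epsilon^{TP}$ solves, schematically,
\begin{equation*}
\epsilon^{TP}(x) = \int_{x}^{+\infty} K\bigl((2k)^{2/3}\zeta(x),(2k)^{2/3}\zeta(y);k\bigr)\,\frac{\psi(\zeta(y))}{\hat f(y)}\bigl(Ai((2k)^{2/3}\zeta(y))+\epsilon^{TP}(y)\bigr)\,dy,
\end{equation*}
where $K$ is the Green's function built from products of $Ai$ and $Bi$, and $\psi,\hat f$ are as in \eqref{f.def}. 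Differentiating in $\omega$ formally yields a Volterra equation for $\partial_\omega\epsilon^{TP}$ of the form
\begin{equation*}
\partial_\omega\epsilon^{TP}(x) = \int_{x}^{+\infty} K(\cdots)\,\frac{\psi(\zeta(y))}{\hat f(y)}\,\partial_\omega\epsilon^{TP}(y)\,dy + \mathcal S(\omega,x),
\end{equation*}
where the source $\mathcal S$ collects all contributions from $\partial_\omega$ hitting $K$, the Airy source $Ai$, the coefficient $\psi(\zeta(y))/\hat f(y)$, and the function $\epsilon^{TP}$ itself at its known size from Lemma~\ref{TP.lemma}.

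\medskip

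\noindent The $\omega$-dependence enters only through $k = Mm^2(m^2-\omega^2)^{-1/2}$ (since $\zeta, f, \hat{f}$ are $\omega$-independent) and through $G(\nofa^2 k^2 x)$, with $\partial_\omega k = \frac{\omega}{m^2-\omega^2}k \sim k^3/(Mm^2)$ in the regime $\omega\to m$. Every time $\partial_\omega$ lands on $Ai((2k)^{2/3}\zeta)$ or on the kernel $K$, it produces a factor $\partial_\omega((2k)^{2/3}) = O(k^{-1/3}\cdot \partial_\omega k) = O(k^{8/3})$ multiplied by an $Ai'$ or $Bi'$ factor; the weight-function identities for $E,M,N$ recalled in Appendix~\ref{app:airy} (which already encode the cancellations between Airy values and Airy derivatives) then convert this loss into the correct polynomial $k^2\log^2 k$ growth once one divides by the ambient Airy envelope. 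The derivative $\partial_\omega G(\nofa^2 k^2 x) = 2\nofa^2 k x\,(\partial_\omega k)\,G'(\nofa^2 k^2 x)$ is controlled using $|G'|(s)\lesssim s^{-1}$, which produces only a logarithmic contribution.

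\medskip

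\noindent Applying Theorem~\ref{thm:volterraparam} with control functions modelled on those used for \eqref{ep2} (but inflated by the extra factor $\partial_\omega k$) then gives the desired bounds, since the Volterra machinery propagates the source bound $\mathcal S$ to $\partial_\omega\epsilon^{TP}$ without deterioration. The bound on $\partial_{\omega x}^2\epsilon^{TP}$ follows either from the analogue of \eqref{ep2} for the derivative of the Volterra solution, or equivalently by invoking the ODE \eqref{eq:KGnice} once $\partial_\omega\epsilon^{TP}$ itself is under control. The main obstacle is purely bookkeeping: one must carefully track the interplay between the $O(k^3)$ loss from $\partial_\omega k$, the $k^{-1/3}$ gain from differentiating $(2k)^{2/3}$, the pointwise size of $Ai,Bi,Ai',Bi'$ as quantified through $M$ and $N$, and the $\log k$ bounds on $TV_{x,+\infty}(H)$ and on $\partial_\omega(\psi/\hat f)$, all while ensuring the resulting kernel remains a contraction in the weighted norm. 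Provided this bookkeeping is carried out with the weight system of Appendix~\ref{app:airy}, one obtains exactly the stated estimate \eqref{derror.largeneg}.
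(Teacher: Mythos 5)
Your proposal follows essentially the same route as the paper: both differentiate the turning-point Volterra equation in $\omega$, apply Theorem~\ref{thm:volterraparam}, and track the $k^{8/3}$ loss from $\partial_\omega\bigl((2k)^{2/3}\bigr)$ and the $k^2$-size of $\partial_\omega(g/\hat f)$ through the Airy weight system $(E,M,N)$, exactly as the paper does via the bounds $\left|\tfrac{dk}{d\omega}\right|\lesssim k^3$, the estimate \eqref{whenaimeetsbiitsok}, and the source bounds on $\partial_\omega R$, $\tilde R_1$, $\tilde R_2$. The only small caveat is your suggested fallback for $\partial^2_{\omega x}\epsilon^{TP}$ via the ODE \eqref{eq:KGnice}, which does not directly deliver the mixed derivative; the correct path (and the one the paper uses) is your first option, namely the built-in derivative estimate in Theorem~\ref{thm:volterraparam}.
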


		\begin{proof} In the whole proof, we will assume that $x\geq \frac{1}{4}$. Now note (see \cite{olver}, proof of Theorem 11.3.1 and the proof of Lemma~\ref{TP.lemma} above) that $\epsilon^{TP}$ satisfies the following Volterra equation
			\begin{equation}\label{etpthisisthevolt}
				\epsilon^{TP}\left(\zeta,\omega\right) = \int_{\zeta}^{+\infty}\left[K\left(\zeta,v,\omega\right)\psi(v,\omega)\left(Ai((2k)^{2/3}v) + \epsilon^{TP}\left(v,\omega\right)\right)\right]\, dv,
			\end{equation}
			with 
			\begin{equation}
				K(\zeta,v,\omega)= (2k)^{-2/3}\left(Bi( [2k]^{2/3} \zeta)Ai( [2k]^{2/3} v)-Ai( [2k]^{2/3} \zeta)Bi( [2k]^{2/3} v)\right),
			\end{equation}
			and $\psi$ is as in the proof of Lemma~\ref{TP.lemma}. Our plan will be to apply Theorem~\ref{thm:volterraparam}.
			
			Now, we clearly have $$ \left|\frac{dk}{d\omega}\right| \lesssim  k^3,\ \left|\frac{d(k^2)}{d\omega}\right| \lesssim  k^4,\ \left|\partial_{\omega}\psi\right|= \left|\frac{ \partial_{\omega}g}{\hat{f}}\right|\lesssim \frac{k^2}{x^2 \cdot \hat{f}(x)}  \tilde{G}(d^2k^2x),$$ where we defined $\tilde{G}(s)= s G'(s)=O(1)$.

			We next note the following various upper bounds for the Kernel $K$ (cf.~(11.3.14) in \cite{olver}) that hold for all $v \geq \zeta$: \begin{align}
				&|K|(\zeta,v,\omega) \leq    \underbrace{(2k)^{-2/3}E^{-1}([2k]^{\frac{2}{3}} \zeta)   M([2k]^{\frac{2}{3}} \zeta)}_{\doteq \tilde{P}_0}  \underbrace{E([2k]^{\frac{2}{3}} v) M([2k]^{\frac{2}{3}} v)}_{\doteq Q(v)}, \\
				&\left|\frac{\partial K}{\partial \zeta}\right|(\zeta,v,\omega) \leq \underbrace{E^{-1}([2k]^{\frac{2}{3}} \zeta)   N([2k]^{\frac{2}{3}} \zeta)}_{\doteq \tilde{P}_1}Q(v).
			\end{align}
			For the function $\psi(v,\omega)$ we have the following bounds
			\begin{align}\label{psibounds}
				&\left|\psi\left(v,\omega\right)\right| \lesssim \log(k)\left(1+\left|x\left(v\right)\right|\right)^{-4/3}\
				\\ \nonumber &\left|\partial_{\omega}\psi\left(v,\omega\right)\right| \lesssim k^2\left(1+|x(v)|\right)^{-\frac{4}{3}}, 
			\end{align}
			where $x(v)$ is determined by inverting the formulas~\eqref{zeta1} and~\eqref{zeta2}. We recall, in particular, that $v \sim x^{2/3}$ when $v \gg 1$ (note that we do not need to study the behavior as $x\rightarrow 0$ as we assumed $x\geq \frac{1}{4}$).
			
			Now, in the notation of Theorem~\ref{thm:volterraparam}, we have
			\[R\left(\zeta,\omega\right) = (2k)^{-2/3}\int_{\zeta}^{+\infty}K\left(\zeta,v,\omega\right)\psi\left(v,\omega\right)Ai\left((2k)^{2/3}v\right)\, dv.\]
			When estimating $\partial_{\omega}R$ it is useful to note that the following estimates holds (see Appendix~\ref{app:airy}):
			\begin{equation}\label{whenaimeetsbiitsok}
				\left|\frac{d}{dx}\left(Ai(x)Bi(x)\right)\right| \lesssim M^6(x),\qquad \left||v|^{1/4}M\left((2k)^{2/3}v\right)\right| \lesssim k^{-1/6}.
			\end{equation}
			Then, keeping~\eqref{whenaimeetsbiitsok} in mind, one may check that:
			\begin{equation}\label{tpRomegaest1}
				\left|\partial_{\omega}R\right| \lesssim  k^{8/3}\log(k)\tilde{P}_0\cdot .
			\end{equation}
			Similarly,
			\begin{equation}\label{tpRomegaest2}
				\left|\partial^2_{\omega\zeta}R\right| \lesssim k^{8/3}\log(k)\tilde{P}_1\cdot .
			\end{equation}
			
			Now, again in the notation of Theorem~\ref{thm:volterraparam}, we need to estimate $\tilde{R}_1$ and $\tilde{R}_2$. In view of Lemma~\ref{TP.lemma} we  take $\psi_0 = \psi$, $\psi_1 =0$, we have that 
			\[ [\log(k)]^{-1} \left|\Psi_0\right|,   k^{\frac{2}{3}} |\kappa_0| \lesssim 1 ,\qquad \Psi_1 = 0,\]
			we may take $\Phi(\zeta) = TV_{x(\zeta),+\infty}(H)$ for $H$ as in Lemma~\ref{TP.lemma}, and we may take $P_0 =    k^{-\frac{2}{3}}   E^{-1}\left((2k)^{2/3}\zeta\right)M\left((2k)^{2/3}\zeta\right)$. Keeping in mind that $\left|Ai\left((2k)^{2/3}\zeta\right)\right| \sim k^{2/3}P_0$ it is clear that by using essentially the same estimates as leads to~\eqref{tpRomegaest1} and~\eqref{tpRomegaest2}, that one obtains:
			\[\left|\tilde{R}_1\right| \lesssim k^{8/3}\log^2(k)\tilde{P}_0\]
			\[\left|\tilde{R}_2\right| \lesssim k^{8/3}\log^2(k)\tilde{P}_1.\]
			(Note that these estimates are strictly easier since $\Phi$, at the cost of an additional $\log(k)$, now provides extra decay in all the integrals.) The lemma then follows from Theorem~\ref{thm:volterraparam}.

		\end{proof}

		\subsection{Control of the oscillation error} In this section, we will take $x \in (1/4,3/4)$.	We denote, for $y > 0$,
		$$ \epsilon^{osc}(y) = Ai(-y)- \pi^{-\frac{1}{2}} y^{-\frac{1}{4}} \cos( \frac{2}{3} y^{\frac{3}{2}}-\frac{\pi}{4})$$ Standard facts on Airy functions (see \cite{olver} (1.09) p392 in Chapter 11 and Appendix~\ref{app:airy}) give that  \begin{equation}\epsilon^{osc}(y)= O(y^{-\frac{7}{4}}),\ \frac{d\epsilon^{osc}}{dy}(y) = O(y^{-\frac{5}{4}}),\ \frac{d^2\epsilon^{osc}}{dy^2}(y) = O(y^{-\frac{3}{4}})  \text{ as } y \rightarrow+\infty	\end{equation}	Now note that for $x \in (1/4,3/4)$ we have \begin{equation*}	Ai( [2k]^{\frac{2}{3}} \zeta(x))= \frac{\pi^{-\frac{1}{2}}}{2} (2k)^{-\frac{1}{6}} [-\zeta(x)]^{-\frac{1}{4}} \left(  e^{-i\frac{\pi}{4}} \exp(ik \frac{4[-\zeta(x)]^{\frac{3}{2}}}{3} )+ e^{i\frac{\pi}{4}}\exp(-ik \frac{4[-\zeta(x)]^{\frac{3}{2}}}{3} )\right)+ \epsilon^{osc}( -[2k]^{\frac{2}{3}} \zeta(x)).	\end{equation*}
		We may thus conclude that \begin{equation}\begin{split}\label{uIformula}
				\tuI(x)=& \frac{\pi^{-\frac{1}{2}}}{2} (2k)^{-\frac{1}{6}} |1-x^{-1}|^{-\frac{1}{4}}\left(  e^{-i\frac{\pi}{4}} \exp(ik \frac{4[-\zeta(x)]^{\frac{3}{2}}}{3} )+ e^{i\frac{\pi}{4}}\exp(-ik \frac{4[-\zeta(x)]^{\frac{3}{2}}}{3} )\right)\\ + &  |1-x^{-1}|^{-\frac{1}{4}} |\zeta|^{\frac{1}{4}}(x)\underbrace{\left[ \ep^{TP}(\omega,x) +\epsilon^{osc}( -[2k]^{\frac{2}{3}} \zeta(x))\right]}_{:=\eta^{TP}(\omega,x)}\end{split}
		\end{equation}

		\begin{lemma}\label{lemma.eta}
			Gathering the earlier estimates we get, for $ x \in (1/4,3/4) $: \begin{align}
				& k^{\frac{1}{6}}\left[|\eta^{TP}|(\omega,x)+k^{-1} |\rd_{x}\eta^{TP}|(\omega,x)\right]\ls k^{-1}  \log(k),\\
				& k^{\frac{1}{6}}\left[|\partial_{\omega }\eta^{TP}|(\omega,x)+ k^{-1} |\rd_{x \omega}^2\eta^{TP}|(\omega,x)\right] \ls   k^2 \log^2(k).
			\end{align}
		\end{lemma}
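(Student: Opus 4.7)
The statement is an essentially mechanical consolidation of two input bounds: the pointwise bounds on $\epsilon^{TP}$ (and its derivatives) supplied by Lemmas \ref{TP.lemma} and \ref{TP.lemma2}, plus the asymptotic estimates for the remainder $\epsilon^{osc}(y)$ of the Airy function as $y\to +\infty$. My plan is to split $\eta^{TP}(\omega,x)=\epsilon^{TP}(\omega,x)+\epsilon^{osc}\bigl(-(2k)^{2/3}\zeta(x)\bigr)$ and control each piece separately on the interval $x\in(1/4,3/4)$, where $-\zeta(x)\sim 1$ is bounded away from $0$ and $\infty$. The contribution of $\epsilon^{TP}$ to all four quantities is immediate: Lemma \ref{TP.lemma} already gives $k^{1/6}|\epsilon^{TP}|+k^{-5/6}|\partial_x\epsilon^{TP}|\ls k^{-1}\log k$, and Lemma \ref{TP.lemma2} gives $k^{1/6}|\partial_\omega\epsilon^{TP}|+k^{-5/6}|\partial^2_{x\omega}\epsilon^{TP}|\ls k^2\log^2 k$, which after multiplication by $k^{-1}$ where needed match exactly the target bounds.

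For the oscillatory remainder I would set $y(\omega,x)=-(2k)^{2/3}\zeta(x)$ and apply the chain rule. The Airy asymptotics recalled before the lemma give $|\epsilon^{osc}(y)|\ls y^{-7/4}$, $|(\epsilon^{osc})'(y)|\ls y^{-5/4}$, $|(\epsilon^{osc})''(y)|\ls y^{-3/4}$; specialized to $y\sim k^{2/3}$ these become $|\epsilon^{osc}|\ls k^{-7/6}$, $|(\epsilon^{osc})'|\ls k^{-5/6}$, $|(\epsilon^{osc})''|\ls k^{-1/2}$. For $\omega$-derivatives I would record the one computation that drives everything: since $k=Mm^2(m^2-\omega^2)^{-1/2}$,
\begin{equation}
\bigl|\partial_\omega k\bigr|\ls k^3,\qquad \bigl|\partial_\omega(2k)^{2/3}\bigr|\ls k^{8/3}.
\end{equation}
Combined with $|\zeta(x)|,|\zeta'(x)|\ls 1$ on $(1/4,3/4)$ this yields $|\partial_x y|\ls k^{2/3}$, $|\partial_\omega y|\ls k^{8/3}$ and $|\partial^2_{x\omega}y|\ls k^{8/3}$.

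Assembling these via the chain rule is then routine:
\begin{align}
|\epsilon^{osc}(y)|&\ls k^{-7/6}, \\
|\partial_x\epsilon^{osc}(y)|&\ls k^{-5/6}\cdot k^{2/3}=k^{-1/6}, \\
|\partial_\omega\epsilon^{osc}(y)|&\ls k^{-5/6}\cdot k^{8/3}=k^{11/6}, \\
|\partial^2_{x\omega}\epsilon^{osc}(y)|&\ls k^{-1/2}\cdot k^{2/3}\cdot k^{8/3}+k^{-5/6}\cdot k^{8/3}\ls k^{17/6}.
\end{align}
Multiplying by the prefactors $k^{1/6}$ or $k^{1/6}\cdot k^{-1}$ as in the lemma, each of these is absorbed into the corresponding $\epsilon^{TP}$ bound, completing the proof.

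\textbf{Main obstacle.} There is no conceptual obstacle: the result is a bookkeeping exercise. The only thing to be careful about is keeping track of powers of $k$ when the chain rule is iterated in $(\omega,x)$, and remembering that $\zeta$ depends only on $x$ (not on $\omega$), so that all $\omega$-dependence of $y$ comes through the factor $(2k)^{2/3}$. The mild $\log$ factors on the right-hand side come entirely from the $\epsilon^{TP}$ contribution; the $\epsilon^{osc}$ part is in fact strictly subdominant in each of the four estimates.
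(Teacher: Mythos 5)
Your proof is correct, and it is the intended argument: the paper leaves Lemma~\ref{lemma.eta} without a written proof (``Gathering the earlier estimates we get''), and what you have written out is exactly the bookkeeping that sentence refers to. The splitting $\eta^{TP}=\epsilon^{TP}+\epsilon^{osc}(-(2k)^{2/3}\zeta)$, the direct invocation of Lemmas~\ref{TP.lemma} and~\ref{TP.lemma2} for the $\epsilon^{TP}$ piece, the chain-rule computation using $|\partial_\omega k|\lesssim k^3$ and the boundedness of $\zeta,\zeta'$ on $(1/4,3/4)$ for the $\epsilon^{osc}$ piece, and the observation that the $\epsilon^{osc}$ contribution is strictly subdominant (no $\log$ loss) are all correct and complete. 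The only remark worth adding is that the claimed bounds $\epsilon^{osc}=O(y^{-7/4})$, $(\epsilon^{osc})'=O(y^{-5/4})$, $(\epsilon^{osc})''=O(y^{-3/4})$ rest on a cancellation of the leading oscillatory terms in the Airy asymptotics (in particular for the second derivative, where each summand individually grows like $y^{3/4}$), but the paper asserts these bounds just before the lemma, so you are entitled to use them as black boxes.
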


		\subsection{WKB below the turning point}

		In this section we use the WKB method to construct a convenient set of solutions to~\eqref{eq:KGnice} for which we have good control in the region $x \in [k^{-2+\delta},1/2]$, for any small constant $\delta > 0$ (independent of $k$). The key point is that in this region we are protected against the turning point of the potential.
		\begin{lemma}\label{WKB.lemma}In the region $x \in [ k^{-2+\delta},1/2]$,
			the general solution of~\eqref{eq:KGnice} is a linear combination of the following two functions
			
			\begin{equation}\begin{split} \label{w+WKB}
					w_{+}(x)=   (x^{-1}-1)^{-\frac{1}{4}} \exp[i \frac{4k}{3}[-\zeta]^{\frac{3}{2}}(x)] \cdot(1+\ep^{WKB}(\omega,x))\end{split}
			\end{equation}\begin{equation}\begin{split} \label{w-WKB}
					w_{-}(x)=  \overline{w_+}(x) =  (x^{-1}-1)^{-\frac{1}{4}} \exp[-i \frac{4k}{3}[-\zeta]^{\frac{3}{2}}(x)] \cdot(1+\bar{\ep}^{WKB}(\omega,x))\end{split}
			\end{equation} 
			where $\zeta$ is defined by~\eqref{zeta2}, and $\ep^{WKB}$ and $\bar{\ep}^{WKB}$ satisfy the following estimates for $x \in [k^{-2+\delta},1/2]$ 
			\begin{align}\label{WKB.error1}
				&|\ep^{WKB}|(\omega,x),\ k^{-1} x^{\frac{1}{2}}|\partial_{x} \ep^{WKB}|(\omega,x)\lesssim k^{-1}  \log(k)x^{-\frac 12},  \\
				\label{WKB.error2}
				&|\partial_{\omega}\ep^{WKB}|(\omega,x),\ k^{-1} x^{\frac{1}{2}}|\partial_{x \omega}^2 \ep^{WKB}|(\omega,x)  \lesssim  k^2\log(k) \log(x^{-1})  ,\\
				&\label{WKB.error3}|\partial_{x x}^2\ep^{WKB}|(\omega,x)\lesssim  \log(k) x^{-2} + k\log(k)x^{-3/2}.
			\end{align}
			Moreover, we may assume that
			\begin{equation}\label{wronskian.WKB}
				W_x(w_+,w_-) = -4i k, 
			\end{equation}
			where $W_x$ denotes the Wronskian computed in the $x$-coordinates:
			\[W_x\left(f,g\right) \doteq \partial_xf\ g - f\ \partial_xg.\]
			(We will suppress the $x$ subscript, when the meaning should be clear from context.)
			
			Finally, for later convenience we list the estimates~\eqref{WKB.error1},~\eqref{WKB.error2}, and~\eqref{WKB.error3} in the special case when $x\sim k^{-1}$:
			\begin{align}\label{WKB.error1sp}
				&|\ep^{WKB}|\lesssim k^{-1/2}\log(k),\qquad  |\partial_{x} \ep^{WKB}|  \lesssim k\log(k), \\
				\label{WKB.error2sp}
				&|\partial_{\omega}\ep^{WKB}|\lesssim k^2\log^2(k),\qquad |\partial_{x \omega}^2 \ep^{WKB}|(\omega,x)  \lesssim k^{7/2}\log^2(k)   ,\\
				&\label{WKB.error3sp}|\partial_{x x}^2\ep^{WKB}|(\omega,x)\lesssim  k^{5/2}\log(k).
			\end{align}
		\end{lemma}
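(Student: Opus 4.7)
The plan is to follow the standard Liouville-Green (WKB) procedure, which is appropriate because the region $x \in [k^{-2+\delta}, 1/2]$ is bounded away from the turning point $x = 1$ of the potential in \eqref{eq:KGnice}. First I would substitute the ansatz $w_+(x) = P(x)\, e^{i\phi(x)}(1 + \epsilon^{WKB}(\omega,x))$, with $P(x) \doteq (x^{-1}-1)^{-1/4}$ and $\phi(x) \doteq \tfrac{4k}{3}[-\zeta(x)]^{3/2}$, into \eqref{eq:KGnice}. The identity $\phi'(x) = -2k\sqrt{1/x-1}$ (immediate from differentiating \eqref{zeta2}) together with the algebraic identity $P^2\phi' \equiv -2k$ forces cancellation of the leading-order terms, producing the residual equation
\begin{equation*}
\bigl(P^2 e^{2i\phi}\,\partial_x \epsilon^{WKB}\bigr)' \;=\; P^2 e^{2i\phi}\,\Psi(x)\bigl(1+\epsilon^{WKB}\bigr), \qquad \Psi(x) \doteq \frac{G(\nofa^2 k^2 x)}{x^2}-\frac{P''(x)}{P(x)},
\end{equation*}
with source estimates $|\Psi(x)| \lesssim \log(k)\, x^{-2}$ and $|\Psi'(x)| \lesssim \log(k)\, x^{-3}$ following from the assumed bounds on $G, G'$. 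The companion $w_- \doteq \overline{w_+}$ then automatically solves \eqref{eq:KGnice} since the ODE has real coefficients.

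Next, I would integrate twice from $x_0 = 1/2$ with boundary conditions $\epsilon^{WKB}(1/2) = \partial_x\epsilon^{WKB}(1/2) = 0$, yielding the Volterra integral equation
\begin{equation*}
\epsilon^{WKB}(x) \;=\; -\int_x^{1/2}\!\frac{dz}{P^2(z) e^{2i\phi(z)}}\int_z^{1/2}\!P^2(y) e^{2i\phi(y)}\,\Psi(y)\bigl(1+\epsilon^{WKB}(y)\bigr)\, dy.
\end{equation*}
A naive triangle inequality only gives the weak bound $|\epsilon^{WKB}| \lesssim \log(k)\log(x^{-1})$; to reach the sharper $k^{-1}\log(k)\, x^{-1/2}$ I would exploit the crucial lower bound $|\phi'(y)| \gtrsim k$ and integrate by parts once in the inner integral, transferring a derivative onto $P^2\Psi$ at the cost of a factor $|\phi'|^{-1} \sim k^{-1} y^{1/2}$. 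After cancellation of the $e^{\pm 2i\phi(z)}$ factors from the boundary term at $y=z$, the main contribution becomes the non-oscillatory integral $\int_x^{1/2} \Psi(z)(1+\epsilon^{WKB}(z))/(2i\phi'(z))\, dz$, which is bounded by $k^{-1}\log(k)\, x^{-1/2}$. Theorem \ref{thm:volterra}, applied with weight $\Phi(x) = Ak^{-1}\log(k)\, x^{-1/2}$, then closes the iteration and proves the $\epsilon^{WKB}$ bound in \eqref{WKB.error1}. The weaker estimate on $\partial_x\epsilon^{WKB}$ follows from the non-oscillatory triangle inequality applied to the outer representation of $\partial_x\epsilon^{WKB}$, and \eqref{WKB.error3} is obtained by substituting back into the ODE.

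For the $\omega$-derivative bounds \eqref{WKB.error2}, I would differentiate the Volterra equation in $\omega$ and apply Theorem \ref{thm:volterraparam}. The dominant loss is $dk/d\omega = O(k^3)$ (from $k = Mm^2(m^2-\omega^2)^{-1/2}$) acting on the phase; since $\zeta(x)$ is $\omega$-independent, $\partial_\omega\phi(x) = \tfrac{4}{3}[-\zeta(x)]^{3/2}\,dk/d\omega$ is $O(k^3)$ uniformly, and after one further oscillatory integration by parts the net loss relative to $\epsilon^{WKB}$ is a factor of $k^3/k = k^2$, explaining the $k^2\log(k)$ prefactor. The $\log(x^{-1})$ factor originates from a remaining non-oscillatory contribution of the form $\int_x^{1/2} z^{-1}\, dz$, sourced by the term $\partial_\omega\Psi = O(k^2\log(k)\,x^{-2})$ (which cannot be improved by integration by parts). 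Finally, the Wronskian identity \eqref{wronskian.WKB} is verified directly at $x_0 = 1/2$: the boundary conditions give $w_\pm(1/2) = e^{\pm i\phi(1/2)}$ and $w_\pm'(1/2) = (P'(1/2)\pm i\phi'(1/2))e^{\pm i\phi(1/2)}$, so $W_x(w_+,w_-)(1/2) = 2i\phi'(1/2) = -4ik$, and conservation $\partial_x W_x \equiv 0$ extends this to the whole range.

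The main obstacle will be tracking the sharp weights in the regime $x \sim k^{-1}$ that is required for the refined bounds \eqref{WKB.error1sp}--\eqref{WKB.error3sp}, where $P^{-2}(x) \sim k^{1/2}$ compounds with the $k^3$ loss from $\omega$-differentiation; an additional integration by parts on the outer $z$-integral will likely be needed to avoid spurious $k^{1/2}$ losses in \eqref{WKB.error2sp}. Careful bookkeeping is also required to absorb the non-oscillatory contributions from $\partial_\omega\Psi$ into the weight $\tilde\Phi$ of Theorem \ref{thm:volterraparam} without degrading the final estimates.
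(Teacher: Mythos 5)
Your plan is the same as the paper's in spirit — a Liouville--Green/WKB argument set up as a Volterra fixed point, exploiting the oscillation $e^{2i\phi}$ to extract a factor $k^{-1}$, then closing via Theorems~\ref{thm:volterra} and~\ref{thm:volterraparam} — and your derivation of the residual equation, the source $\Psi = \tfrac{G}{x^2}-\tfrac{P''}{P}$, the bound $|\Psi|\lesssim\log(k)\,x^{-2}$, and the Wronskian at $x=1/2$ all match ($\partial_\omega\Psi = O(k^2x^{-2})$, without your extra $\log k$, a harmless slip). The structural difference is that the paper first changes to the Liouville--Green variable $\xi = \tfrac{4}{3}(-\zeta(x))^{3/2}$ (Olver, Ch.~6, \S 2.4). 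In $\xi$ the phase is linear, $\phi = k\xi$, the Volterra kernel $K(\xi,v) = -(2ik)^{-1}(1-e^{2ik(v-\xi)})$ is $O(k^{-1})$ uniformly and vanishes on the diagonal, and $\Psi_0 = \int|\psi(v)|\,dv\sim\log(k)\,x^{-1/2}$, so $\kappa_0\Psi_0\sim\log(k)/(kx^{1/2}) = o(1)$ on the whole interval $[k^{-2+\delta},1/2]$ and Theorem~\ref{thm:volterra} applies out of the box.

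Staying in $x$, your plan to integrate by parts once and then ``apply Theorem~\ref{thm:volterra} with weight $\Phi = Ak^{-1}\log(k)\,x^{-1/2}$'' has a gap. The integrated-by-parts ``main'' equation $\epsilon^{WKB} = -\int_x^{1/2}\tfrac{\Psi}{2i\phi'}(1+\epsilon^{WKB})\,dz + \cdots$ has kernel $K\equiv 1$, violating the hypothesis $K(\xi,\xi)=0$; while the Fubini-combined single kernel, bounded naively by $|K|\lesssim k^{-1}$ with $\psi_0 = \Psi\sim\log(k)\,y^{-2}$, gives $\Psi_0 = \int_x^{1/2}|\Psi|\,dy\sim\log(k)/x$ and hence $\kappa_0\Psi_0\sim\log(k)/(kx)\sim k^{1-\delta}\log(k)$ at $x\sim k^{-2+\delta}$, so the smallness hypothesis of Theorem~\ref{thm:volterra} fails. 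The missing observation is $P^2\phi'\equiv -2k$ (i.e.\ $P$ is exactly the Liouville--Green amplitude $(-\phi'/(2k))^{-1/2}$), which renders the outer $z$-integral exact: $\int_x^y P^{-2}(z)e^{-2i\phi(z)}\,dz = \tfrac{1}{4ik}\bigl(e^{-2i\phi(y)}-e^{-2i\phi(x)}\bigr)$. Fubini then collapses your double integral to
\[
\epsilon^{WKB}(x) = \int_x^{1/2}\frac{P^2(y)}{4ik}\bigl(1-e^{2i(\phi(y)-\phi(x))}\bigr)\,\Psi(y)\,\bigl(1+\epsilon^{WKB}(y)\bigr)\,dy,
\]
whose kernel vanishes on the diagonal and satisfies $|K(x,y)|\le P^2(y)/(2k)$. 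Factoring the integrable $P^2(y)\sim y^{1/2}$ into the source (take $\tilde{K} = K/P^2$ with $|\tilde{K}|\leq (2k)^{-1}$ and $\psi_0 = P^2\Psi\sim\log(k)\,y^{-3/2}$) restores $\Psi_0\sim\log(k)\,x^{-1/2}$ and $\kappa_0\Psi_0\sim\log(k)/(kx^{1/2})$, small. This is exactly the paper's $\xi$-variable Volterra equation rewritten in $x$, and from here your remaining steps — the $\partial_x$ and $\partial^2_{xx}$ bounds, the $\omega$-derivatives via Theorem~\ref{thm:volterraparam} driven by $\partial_\omega\phi = O(k^3)$, and the Wronskian at $x=1/2$ — close as you describe.
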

		\begin{proof} We start by observing that 
			\[\left(\frac{\partial}{\partial x}\left(\frac{4k}{3}\left(-\zeta(x)\right)^{3/2}\right)\right)^2 = 4k^2\left(\frac{1}{x}-1\right).\]
			In particular, from Section 2.4 of Chapter 6 in~\cite{olver} (and a simple rescaling) we have $w_{\pm}$ will exist if $\epsilon^{WKB}$ solves the following Volterra equation
			\begin{equation}\label{tosolvethewkbwkb}
				\epsilon^{WKB}\left(x,\omega\right) = -\frac{1}{2ik}\int_{\xi(x)}^{\xi(1/2)}\left(1-e^{2ik\left(v-\xi(x)\right)}\right)\psi(v)\left(1+\epsilon^{WKB}\right)\, dv,
			\end{equation}
			where $dv$ refers to an integration with respect to the $\xi = \frac{4}{3}(-\zeta(x))^{3/2}$ variable, and where $\psi$ is defined by
			\[\psi(v) \doteq \hat{f}^{-3/4}\frac{\partial^2}{\partial x^2}\left(\hat{f}^{-1/4}\right) - \frac{g}{\hat{f}},\]
			for 
			\[\hat{f}(v) = 4\left(\frac{1}{x(v)} - 1\right),\qquad g(v) = \frac{G(\nofa^2 k^2 x(v))}{  x(v)^2}.\]
			The following estimates are straightforward to establish:
			\[\left|\psi(v)\right| \lesssim \frac{\log(k^2x)}{x},\qquad \left|\partial_{\omega}\psi(v)\right| \lesssim  k^2 x^{-1}.\]
			Then we apply Theorem~\ref{thm:volterra} where we may take $K = -(2ik)^{-1}(1-e^{2ik(v-\xi)})$, $P_0 = k^{-1}$, $P_1$ and $Q$ to be constants, and take $\Phi = \log(k)x^{-1/2}$. The estimate~\eqref{WKB.error1} then follows after changing variables from $\partial_{\xi}$ to $\partial_x$. Next we will apply Theorem~\ref{thm:volterraparam} with $\tilde{P}_0 = P_0$, $\tilde{P}_1 = P_1$, and $\tilde{Q} = Q$, and $\tilde{\Phi} = k^3\log(k) \log(x^{-1})+ k^2 \log(k)x^{-1/2}$. Then the estimate~\eqref{WKB.error2} follow after switching into $x$-coordinates from $\xi$-coordinates. Finally,~\eqref{WKB.error3} follows by differentiating the Volterra equation for $\epsilon^{WKB}$ and again converting from $\xi$ to $x$-coordinates.
			
			It remains to verify~\eqref{wronskian.WKB}. To do this, we note that $\epsilon^{WKB}$ and $\partial_x\epsilon^{WKB}$ will both vanish at $x = 1/2$, and thus, computing at $x = 1/2$: 
			\begin{align}
				W\left(w_+,w_-\right) &= W\left((x^{-1}-1)^{-\frac{1}{4}} \exp[i \frac{4k}{3}[-\zeta]^{\frac{3}{2}}(x)],(x^{-1}-1)^{-\frac{1}{4}} \exp[-i \frac{4k}{3}[-\zeta]^{\frac{3}{2}}(x)]\right) 
				\\ &= 2i\Im\left(\partial_x\left((x^{-1}-1)^{-\frac{1}{4}} \exp[i \frac{4k}{3}[-\zeta]^{\frac{3}{2}}(x)]\right)(x^{-1}-1)^{-\frac{1}{4}} \exp[-i \frac{4k}{3}[-\zeta]^{\frac{3}{2}}(x)]\right)
				\\ \nonumber &= -4ik
			\end{align}
		\end{proof}
		
		We now express the function $\tilde{u}_I$ in terms of the functions $w_{\pm}$. This will then provide us with a good approximate form of $\tilde{u}_I$ which is valid in the entire region $x \in [k^{-2+\delta},1/2]$.

		\begin{prop} \label{A.prop}We can write for all $k^{-2+\delta}\leq x \leq \frac{1}{2}$: \begin{equation}\label{writetildeuintermsofwpm}\tuI(\omega,x)= A(\omega) w_+(\omega,x)+ \bar{A}(\omega) w_-(\omega,x),\end{equation}
			with 
			
			\begin{align}\label{A.est}
				A(\omega) &= \frac{\pi^{-\frac{1}{2}}}{2} (2k)^{-\frac{1}{6}} \left[e^{-i\frac{\pi}{4}}+  \ep_{a}(\omega) \right],
			\end{align}
			with the following estimates: \begin{align}\label{error.AB}
				&|\ep_{a}|(\omega) \lesssim k^{-1}  \log(k),\\ \label{derror.AB}  & |\rd_{\omega}\ep_{a}|(\omega) \lesssim k^2\log^2(k).
			\end{align}
			
		\end{prop}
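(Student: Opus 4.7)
The plan is to first observe that $\tuI$ is real-valued on the real-frequency interval $|\omega|<m$: the Volterra equation~\eqref{etpthisisthevolt} underlying Lemma~\ref{TP.lemma} has real kernel, real $\psi$, and real inhomogeneity (the Airy function $Ai$ is real-valued), so $\ep^{TP}$ is real and the leading Airy term in Lemma~\ref{TP.lemma} is manifestly real. Since $w_-=\overline{w_+}$ and $\{w_+,w_-\}$ is a basis by~\eqref{wronskian.WKB}, writing $\tuI=Aw_++Bw_-$ and conjugating forces $B=\bar A$ by uniqueness of the decomposition. Consequently
\[A(\omega)=\frac{W(\tuI,w_-)}{W(w_+,w_-)}=\frac{W(\tuI,w_-)}{-4ik}.\]

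The main step is to evaluate this Wronskian at the distinguished point $x=1/2$, which lies both in the turning-point regime $(1/4,3/4)$ of Lemma~\ref{TP.lemma} and at the right endpoint of the WKB regime of Lemma~\ref{WKB.lemma}. By the normalization chosen in the proof of Lemma~\ref{WKB.lemma} one has $\ep^{WKB}(\omega,1/2)=\rd_x\ep^{WKB}(\omega,1/2)=0$; differentiating in $\omega$ and commuting partials gives also $\rd_\omega\ep^{WKB}(\omega,1/2)=\rd^2_{\omega x}\ep^{WKB}(\omega,1/2)=0$. Therefore, at $x=1/2$, formula~\eqref{uIformula} reduces exactly to
\[\tuI(1/2)=\tfrac{\pi^{-1/2}}{2}(2k)^{-1/6}\bigl[e^{-i\pi/4}w_+(1/2)+e^{i\pi/4}w_-(1/2)\bigr]+|\zeta(1/2)|^{1/4}\eta^{TP}(\omega,1/2),\]
and similarly for $\rd_x\tuI(1/2)$, with the main part matching $\rd_x\bigl[e^{-i\pi/4}w_++e^{i\pi/4}w_-\bigr]$ at $x=1/2$. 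Substituting into the Wronskian and using $W(w_+,w_-)=-4ik$ and $W(w_-,w_-)=0$ yields
\[W(\tuI,w_-)\bigl|_{x=1/2}=\tfrac{\pi^{-1/2}}{2}(2k)^{-1/6}e^{-i\pi/4}(-4ik)+E(\omega),\]
where $E(\omega)$ is an explicit bilinear expression in $\{\eta^{TP}(1/2),\rd_x\eta^{TP}(1/2)\}$ and $\{w_-(1/2),\rd_xw_-(1/2)\}$. Dividing by $-4ik$ produces~\eqref{A.est} and identifies $\ep_a$ as a constant multiple of $(2k)^{1/6}E/k\sim E\cdot k^{-5/6}$.

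The remaining task is to estimate $|E|$ and $|\rd_\omega E|$. At $x=1/2$, all geometric factors and their $x$- and $\omega$-derivatives are $O(1)$, $|w_-(1/2)|=1$, and $|\rd_xw_-(1/2)|\ls k$ (from $\rd_x\phi(1/2)\sim k$). Lemma~\ref{lemma.eta} yields $|\eta^{TP}(1/2)|\ls k^{-7/6}\log(k)$ and $|\rd_x\eta^{TP}(1/2)|\ls k^{-1/6}\log(k)$, so $|E|\ls k^{-1/6}\log(k)$ and $|\ep_a|\ls k^{-1}\log(k)$, which is~\eqref{error.AB}. For~\eqref{derror.AB} one further uses $|\rd_\omega k|\ls k^3$, $|\rd_\omega w_-(1/2)|\ls k^3$, $|\rd_\omega\rd_xw_-(1/2)|\ls k^4$ (all coming from the phase $\tfrac{4k}{3}(-\zeta)^{3/2}$), together with $|\rd_\omega\eta^{TP}|\ls k^{11/6}\log^2(k)$ and $|\rd^2_{\omega x}\eta^{TP}|\ls k^{17/6}\log^2(k)$ from Lemma~\ref{lemma.eta}. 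The main obstacle is the ensuing bookkeeping: one must verify that each of the four terms in $\rd_\omega E$ saturates at exactly the same rate $k^{17/6}\log^2(k)$ --- in particular that the worst term $E_2\cdot\rd_\omega\rd_xw_-(1/2)\sim k^{-7/6}\log(k)\cdot k^4$ matches $\rd_\omega E_1\cdot w_-(1/2)\sim k^{17/6}\log^2(k)$ --- after which the $k^{-5/6}$ conversion factor (plus the much smaller contribution $|E|\cdot k^{7/6}\ls k\log(k)$ from differentiating the $k^{-5/6}$ in $\omega$) yields $|\rd_\omega\ep_a|\ls k^2\log^2(k)$ as required. No conceptual difficulty arises beyond this accounting.
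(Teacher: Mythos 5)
Your proof is correct and follows the paper's strategy at its core: express $A(\omega)$ as a Wronskian ratio $W(\tuI, w_-)/W(w_+, w_-)$, substitute the decomposition \eqref{uIformula}, and track the error coming from $\eta^{TP}$. However, you make a genuine (and clean) simplification that the paper does not exploit: since the Volterra equation \eqref{tosolvethewkbwkb} is posed in the $\xi$-variable, which is $\omega$-independent and has $\omega$-independent upper limit $\xi(1/2)$, the remainder $\ep^{WKB}$, its $\xi$-derivative, and hence also $\rd_\omega\ep^{WKB}$ and $\rd^2_{\omega x}\ep^{WKB}$ all vanish identically at $x=1/2$. As a result, two of the three error contributions the paper tracks (their $\ep_{a_0}$, built from $(1+\bar\ep^{WKB})^{-1}-1$ and $\rd_x\bar\ep^{WKB}$, and their $\ep_{a_2}$, proportional to $\rd_x\ep^{WKB}$) are identically zero at $x=1/2$, and the only surviving error is the one coming from $\eta^{TP}$, which is your $E(\omega)$. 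The paper instead bounds all three error terms at general $x\in(1/4,1/2)$ using the general estimates \eqref{WKB.error1}--\eqref{WKB.error2} and only specializes to $x=1/2$ at the very end; the end result is the same, and your numerics (conversion factor $\sim k^{-5/6}$, $|E|\lesssim k^{-1/6}\log k$, $|\rd_\omega E|\lesssim k^{17/6}\log^2 k$) check out. Your explicit justification that $\tuI$ is real-valued from the reality of the Airy kernel, $\psi$, and inhomogeneity is also welcome, since the paper only asserts this. In short: same key ingredients (Lemmas~\ref{TP.lemma},~\ref{TP.lemma2},~\ref{lemma.eta},~\ref{WKB.lemma}), same decomposition, slightly cleaner bookkeeping by exploiting the basepoint of the Volterra equation.
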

		
		\begin{proof} 
			
			Since $w_{\pm}$ are two independent solutions to the equation~\ref{eq:KGnice} and $\tilde{u}_I$ is real valued, it follows immediately that there exists a function $A(\omega)$, depending only on $\omega$, so that~\eqref{writetildeuintermsofwpm} holds. Our goal is then to estimate $A(\omega)$ and $\bar{A}(\omega)$. We will always  assume in what follows that $\frac{1}{4}\leq x \leq \frac{1}{2}$.

			We have  by \eqref{uIformula}, \eqref{w+WKB}, \eqref{w-WKB}\begin{equation*}
				\tuI(x)= \frac{\pi^{-\frac{1}{2}}}{2} (2k)^{-\frac{1}{6}}\left[e^{-i\frac{\pi}{4}}\frac{w_+(\omega,x)}{1+\ep^{WKB}(\omega,x)}+e^{i\frac{\pi}{4}}\frac{w_-(\omega,x)}{1+\bar{\ep}^{WKB}(\omega,x)}\right] + |1-x^{-1}|^{-\frac{1}{4}} |\zeta|^{\frac{1}{4}}(x) \eta^{TP}(\omega,x),
			\end{equation*}
			hence 
			\begin{equation*} \begin{aligned}
					W_x(\tuI, w_{+})&= \frac{\pi^{-\frac{1}{2}}}{2} (2k)^{-\frac{1}{6}} \Big[e^{i\frac{\pi}{4}}\frac{W(w_-,w_+)}{1+\bar{\ep}^{WKB}(\omega,x)}-e^{-i\frac{\pi}{4}}\partial_x \ep^{WKB}(\omega,x)\cdot  (x^{-1}-1)^{-\frac{1}{2}} e^{i \frac{8k}{3}[-\zeta]^{\frac{3}{2}}(x)}  \\& \qquad \qquad -e^{i\frac{\pi}{4}}\partial_x \bar{\ep}^{WKB}(\omega,x) \frac{1+{\ep}^{WKB}(\omega,x) }{1+\bar{\ep}^{WKB}(\omega,x)}\cdot  (x^{-1}-1)^{-\frac{1}{2}}\Big] \\
					& +  (x^{-1}-1)^{-\frac{1}{2}} W(|\zeta|^{\frac{1}{4}} \eta^{TP}(\omega,x), e^{i \frac{4k}{3}[-\zeta(x)]^{\frac{3}{2}}(x)} [1+\ep^{WKB}(\omega,x)]).
				\end{aligned}
			\end{equation*}
			We also find immediately that \begin{equation}\label{A.formula}
				\bar{A}(\omega) =  \frac{W(\tuI, w_{+})}{W(w_-, w_{+})}.
			\end{equation}
			
			Let us write	\begin{align}
				&\bar{\ep}_{a_0}(\omega,x):= \frac{e^{i\frac{\pi}{4}}\frac{W(w_-,w_+)}{1+\bar{\ep}^{WKB}(\omega,x)}  -e^{i\frac{\pi}{4}}\partial_x \bar{\ep}^{WKB}(\omega,x) \frac{1+{\ep}^{WKB}(\omega,x) }{1+\bar{\ep}^{WKB}(\omega,x)}\cdot  (x^{-1}-1)^{-\frac{1}{2}}}{ W(w_-, w_{+})}	-  e^{i\frac{\pi}{4}}, \\ & \bar{\ep}_{a_1}(\omega,x):=   \frac{2 (2k)^{\frac{1}{6}}} {\pi^{-\frac{1}{2}} }\ \frac{  (x^{-1}-1)^{-\frac{1}{2}} W(|\zeta|^{\frac{1}{4}} \eta^{TP}(\omega,x), e^{i \frac{4k}{3}[-\zeta]^{\frac{3}{2}}(x)} [1+\ep^{WKB}(\omega,x)])}{ W(w_-, w_{+})} ,\\ & \bar{\ep}_{a_2}(\omega,x):=  e^{i \frac{8k}{3}[-\zeta]^{\frac{3}{2}}(x)}  \frac{ -e^{-i\frac{\pi}{4}}\partial_x \ep^{WKB}(\omega,x)\cdot  (x^{-1}-1)^{-\frac{1}{2}}}{ W(w_-, w_{+})}
			\end{align} so that \begin{equation*}
				\bar{A}(\omega)= \frac{\pi^{-\frac{1}{2}}}{2} (2k)^{-\frac{1}{6}} \left[e^{i\frac{\pi}{4}}+  \bar{\ep}_{a_0}(\omega,x) + \bar{\ep}_{a_2}(\omega,x)+ \bar{\ep}_{a_1}(\omega,x) \right]
			\end{equation*}
			For now, notice from \eqref{wronskian.WKB}, \eqref{WKB.error1}, \eqref{WKB.error2} 
			that \begin{align}\label{EB02.proof}
				&|\ep_{a_0}|(\omega,x),\ |\ep_{a_2}|(\omega,x) \lesssim k^{-1} \log(k), \\ &  	|\partial_{\omega}\ep_{a_0}|(\omega,x),\ |\partial_{\omega}\ep_{a_2}|(\omega,x) \lesssim k^2\log^2(k).\label{dEB02.proof}
			\end{align}
			for $\ep_{a_1}$,   note that in view of Lemma~\ref{lemma.eta} and \eqref{WKB.error1}, \eqref{WKB.error2} we have (noting that the most singular contributions come from  $\rd_{\omega} \eta^{TP}$  and $\rd_{x \omega}^2 \eta^{TP}$): \begin{align}\label{EB1.proof}
				&|\ep_{a_1}|(\omega,x) \lesssim k^{-\frac{5}{6}}  \left[ |\rd_x \eta^{TP}(\omega,x)|+ |\eta^{TP}(\omega,x)| [k  + |\rd_x \ep^{WKB}|(\omega,x)] \right]\lesssim k^{-1} \log(k),\\ &  |\rd_{\omega}\ep_{a_1}|(\omega,x) \lesssim[  k^{-\frac{5}{6}}  |\frac{dk}{d\omega}|  ] \left[ |\rd_x \eta^{TP}(\omega,x)|+ |\eta^{TP}(\omega,x)| [k  + |\rd_x \ep^{WKB}|(\omega,x)] \right]\nonumber\\+&  k^{-\frac{5}{6}}  |\rd_{\omega} \ep^{WKB}|(\omega,x) \left[ |\rd_x \eta^{TP}(\omega,x)|+ |\eta^{TP}(\omega,x)|k   \right]\nonumber\\+& k^{-\frac{5}{6}} \left[ |\rd_{x \omega}^2 \eta^{TP}(\omega,x)|+ |\rd_{\omega}\eta^{TP}(\omega,x)| [k + |\rd_x \ep^{WKB}|(\omega,x)] \right] \nonumber \\+ & k^{-\frac{5}{6}}   |\eta^{TP}(\omega,x)| [k^3  + |\rd^2_{x \omega} \ep^{WKB}|(\omega,x)]    
				\lesssim  k^2\log^2(k)\label{dEB1.proof}.
			\end{align}

			We evaluate  \eqref{EB02.proof}, \eqref{dEB02.proof}, \eqref{EB1.proof}, \eqref{dEB1.proof} at $x=\frac{1}{2}$: this gives \eqref{error.AB}, \eqref{derror.AB} for the $\bar{A}$-terms. The similar estimates for the $A$-terms follow identically.

		\end{proof}

		Now, we write the solution in an even more explicit way, with in mind the future matching with the modified Bessel functions. We will make use of the following Taylor expansion \begin{align}
			&\frac{4}{3} [-\zeta(x)]^{\frac{3}{2}}= \pi - 4x^{\frac{1}{2}}+ \ep^{AT}(x),\\ & \ep^{AT}(x)= O(x^{\frac{3}{2}}),\   \frac{d}{dx}\ep^{AT}(x)= O(x^{\frac{1}{2}}) \text{ as } x \rightarrow 0. \label{ep.AT}
		\end{align}
		We thus obtain immediately from Lemma~\ref{WKB.lemma}:
		\begin{equation}\begin{split}
				w_{+}(x)=   (x^{-1}-1)^{-\frac{1}{4}} e^{i \pi k} e^{-4ik  x^{\frac{1}{2}} } e^{ik \cdot \ep^{AT}(x)} \cdot(1+\ep^{WKB}(\omega,x)),\end{split}\label{w+}
		\end{equation}\begin{equation}\begin{split}
				w_{-}(x)=    (x^{-1}-1)^{-\frac{1}{4}}  e^{-i \pi k} e^{4ik  x^{\frac{1}{2}} } e^{-ik \cdot \ep^{AT}(x)} \cdot(1+\bar{\ep}^{WKB}(\omega,x))\end{split}\label{w-}.
		\end{equation}

		\subsection{Matching with Bessel functions}\label{Bessel.section}

		We note that the fundamental solutions for~\eqref{odewithG} when $\omega=\pm m $ (i.e.\ $k=+\infty$)  and $G=0$  are the following renormalized Bessel functions (see Appendix~\ref{bessel.appendix})
		\begin{align}
			\label{cB1}
			&\check{B}_1(s)= \sqrt{s}\ J_1(\sqrt{8Mm^2}\sqrt{s}) = \sqrt{2}\pi^{-\frac{1}{2}} s^{\frac{1}{4}}\left(8Mm^2\right)^{-1/4} \cos(\sqrt{8Mm^2}\sqrt{s}-\frac{3\pi}{4})+\delta^{B}_1(s)\\ &\check{B}_2(s)= \sqrt{s}\ Y_1(\sqrt{8Mm^2}\sqrt{s}) = \sqrt{2}\pi^{-\frac{1}{2}} s^{\frac{1}{4}} \left(8Mm^2\right)^{-1/4}\sin(\sqrt{8Mm^2}\sqrt{s}-\frac{3\pi}{4})+\delta^{B}_2(s)\label{cB2},\\ &|\delta^{B}_i|(s)+ s^{\frac{1}{2}} |\frac{d}{ds}\delta^{B}_i|(s) +  s |\frac{d^2}{ds^2}\delta^{B}_i|(s) \lesssim  s^{-\frac{1}{4}}\label{deltaB}.
		\end{align} Note that $W(\check{B}_1,\check{B}_2)=-\pi^{-1}$. 
		
		In the following lemma we will construct a pair of solutions to~\eqref{odewithG} by perturbing these rescaled Bessel functions. We will have good control of these new solutions in the range $\Delta k^{-2} \leq x \leq k^{-2/3}$ for a sufficiently large constant $\Delta > 0$ (independent of $k$). In particular, since $s \sim k^2x$, we will have good control both in the region $\{s \gg 1\}$ and also in a region where the $w_{\pm}$ functions are well controlled.
		
		\begin{lemma}\label{Bessel.lemma}
			There exists two independent solutions of the equation~\eqref{odewithG} given by in $(\omega,x)$ coordinates as \begin{align*}
				& \tilde{B}_+(\omega,x) =  \pi^{-1/2}\left(Mm^2\right)^{-1/2} k^{\frac{1}{2}} x^{\frac{1}{4}} e^{-i\frac{3\pi}{4}} e^{4ik\sqrt{x}}  + \eta_+(\omega,x)  ,\\ &  \tilde{B}_-(\omega,x) =  \overline{\tilde{B}_+} = \pi^{-1/2}\left(Mm^2\right)^{-1/2}k^{\frac{1}{2}} x^{\frac{1}{4}} e^{i\frac{3\pi}{4}} e^{-4ik\sqrt{x}}  + \eta_-(\omega,x)  .
			\end{align*}
			Assuming $k^{-2}\Delta \leq x \leq k^{-2/3}$, for a sufficiently large constant $\Delta$, we have
			\begin{align}
				&|\eta_{\pm}|(\omega,x)+ k^{-1} \cdot x^{\frac{1}{2}}|\rd_{x}\eta_{\pm}|(\omega,x)
				\lesssim k^{\frac{3}{2}} x^{\frac{7}{4}}+k^{-\frac{1}{2}}\log(k)  x^{-\frac{1}{4}},\label{eta1}\\  	&|\rd_{\omega}\eta_{\pm}|(\omega,x)\lesssim    x^{7/4}k^{7/2} + x^{1/4}k^{5/2}\log(k) + x^{9/4}k^{9/2},
				\label{eta2}\\ & |\rd_{x \omega}^2 \eta_{\pm}|(\omega,x)\lesssim    x^{5/4}k^{9/2} +k^{11/2}x^{7/4}.
			\end{align} 
			For later convenience, we rewrite these estimates at the point $x = k^{-1}$:
			\begin{align}
				&|\eta_{\pm}| \lesssim k^{-1/4} \log(k),\qquad |\rd_{x}\eta_{\pm}|
				\lesssim k^{5/4}\log(k)
				,\\  	&|\rd_{\omega}\eta_{\pm}|\lesssim      k^{9/4} \log(k),
				\\ & |\rd_{x \omega}^2 \eta_{\pm}|(\omega,x)\lesssim    k^{15/4}.
			\end{align}
			
			Finally we note  that  $W_s(\tilde{B}_+,\tilde{B}_-)=2i\pi^{-1}$ hence  \begin{equation}\label{Wronskian.B}
				W_x(\tilde{B}_+,\tilde{B}_-)= 2i \pi^{-1} d^2  k^2.
			\end{equation} 
			
		\end{lemma}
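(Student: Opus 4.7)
}

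The key observation is that at the endpoint case $\omega = \pm m$, $G \equiv 0$, the ODE \eqref{odewithG} reduces exactly to $\partial_s^2 u = -(2Mm^2/s)u$, whose fundamental solutions are the rescaled Bessel functions $\check{B}_1, \check{B}_2$ of \eqref{cB1}--\eqref{cB2}. The plan is to treat the full equation as a perturbation of this Bessel equation, with perturbing potential
\[
V_{\mathrm{pert}}(s,\omega) \doteq (m^2-\omega^2) + \frac{G(s)}{s^2} - 0 = (m^2-\omega^2) + \frac{G(s)}{s^2},
\]
and solve for $\tilde{B}_+$ via Volterra iteration starting from the ``natural'' unperturbed combination $\check{B}_+ \doteq \check{B}_1 + i\check{B}_2$ (which by \eqref{cB1}--\eqref{cB2} has leading asymptotic $\pi^{-1/2}(Mm^2)^{-1/2}(d^2k^2x)^{1/4}\cdot \pi^{-1/2}\cdot(\text{constants})\cdot e^{-3i\pi/4}e^{4ik\sqrt{x}}$ matching the claimed form in $(\omega,x)$ coordinates, as a short calculation using $s = d^2k^2 x$ and $\sqrt{8Mm^2 s} = 4k\sqrt{x}$ confirms). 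Then $\tilde{B}_- \doteq \overline{\tilde{B}_+}$ is automatically another independent solution since the potential in \eqref{odewithG} is real.

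First, using $W_s(\check{B}_1,\check{B}_2) = -\pi^{-1}$ as fundamental pair, I would set up the Volterra integral equation
\[
\tilde{B}_+(s) = \check{B}_+(s) + \pi\int_{s_0}^{s}\bigl[\check{B}_1(s)\check{B}_2(S) - \check{B}_1(S)\check{B}_2(s)\bigr]\, V_{\mathrm{pert}}(S,\omega)\,\tilde{B}_+(S)\,dS,
\]
with anchor $s_0 = d^2\Delta$ chosen at the lower endpoint $x = \Delta k^{-2}$. Apply Theorem~\ref{thm:volterra} with majorants $P_0, P_1, Q \sim s^{1/4}$ (reflecting $|\check{B}_i|, s^{1/2}|\check{B}_i'| \lesssim s^{1/4}$ via \eqref{deltaB}), and choose
\[
\Phi(s) = A\Bigl[(m^2-\omega^2)\, s^{3/2} + \log(s)\cdot s^{-1/2}\Bigr] = A\Bigl[k^{-2}s^{3/2} + \log(s)s^{-1/2}\Bigr]
\]
(up to constants), which controls $\int_{s_0}^{s} Q(S)\,|V_{\mathrm{pert}}(S,\omega)|\,dS$ using $(m^2-\omega^2) \sim k^{-2}$ and $|G|(s) \lesssim \log(s)$. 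This yields
\[
\bigl|\tilde{B}_+ - \check{B}_+\bigr| \lesssim s^{1/4}\Phi(s) \lesssim k^{-2} s^{7/4} + \log(s)\, s^{-1/4},
\]
and a corresponding bound for the $s$-derivative. Folding in the Bessel-asymptotic error $\delta^B_1 + i\delta^B_2$ via \eqref{deltaB} (itself bounded by $s^{-1/4}$) and converting to $(\omega,x)$ via $s = d^2k^2x$ and $\partial_s = (d^2k^2)^{-1}\partial_x$, the bounds translate into $|\eta_\pm| \lesssim k^{3/2} x^{7/4} + k^{-1/2}\log(k)\, x^{-1/4}$ and the corresponding $\partial_x$-estimate \eqref{eta1}.

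Second, for the $\omega$-derivatives I would apply Theorem~\ref{thm:volterraparam} with the same majorants, tracking $\omega$-dependence through $k(\omega) = Mm^2(m^2-\omega^2)^{-1/2}$. Since $|dk/d\omega| \lesssim k^3$ and $\partial_\omega V_{\mathrm{pert}}(s,\omega) = -2\omega$ is bounded (with $\partial_\omega$ of the remaining $s$-dependent coefficients inherited trivially), one picks up extra factors of $k^2$ at each stage. The resulting $\tilde{\Phi}$ majorant scales as $k^2 \Phi(s)$ plus contributions from differentiating $\check{B}_\pm$ in $\omega$ (which is entirely through $k$), giving \eqref{eta2} and its derivative. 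A similar direct computation, differentiating the Volterra identity in both $x$ and $\omega$, yields the $\partial_{x\omega}^2\eta_\pm$ bound.

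Third, the Wronskian identity \eqref{Wronskian.B} follows by $s$-independence: $\partial_s W_s(\tilde{B}_+,\tilde{B}_-) = 0$ from the ODE. Evaluating in the limit $s \to s_0$ and using that the Volterra correction $\tilde{B}_\pm - \check{B}_\pm$ together with its derivative vanishes at the anchor (or is controllably small), one gets $W_s(\tilde{B}_+,\tilde{B}_-) = W_s(\check{B}_+,\check{B}_-) = -2i\,W_s(\check{B}_1,\check{B}_2) = 2i\pi^{-1}$ at the anchor point, and hence everywhere. The chain rule $\partial_s = (d^2k^2)^{-1}\partial_x$ converts this to $W_x = 2i\pi^{-1}d^2k^2$. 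The main obstacle is bookkeeping: one must track two genuinely different sources of error—the $\delta^B$ error in the Bessel asymptotic expansion (responsible for the $x^{-1/4}k^{-1/2}\log k$ term) versus the Volterra correction from $V_{\mathrm{pert}}$ (responsible for the $k^{3/2}x^{7/4}$ term growing with $x$)—and verify that the anchor choice $s_0 \sim 1$ is compatible with both the WKB matching region in Lemma~\ref{WKB.lemma} and the Bessel asymptotic regime $s \gg 1$, which is precisely why the validity window is $\Delta k^{-2} \leq x \leq k^{-2/3}$.
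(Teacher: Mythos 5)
Your overall architecture — perturb off the rescaled Bessel functions, run Volterra iterations using Theorem~\ref{thm:volterra}/\ref{thm:volterraparam}, track the errors, and read off the Wronskian from the anchor — matches the paper's strategy. However, collapsing the two perturbations $(m^2-\omega^2)$ and $G(s)/s^2$ into a single Volterra step anchored at the lower endpoint $s_0 = d^2\Delta$ is a genuine gap. The paper constructs $\tilde{B}_{\pm}$ in \emph{two} Volterra steps with \emph{opposite} anchoring directions: first it absorbs the $g = G(s)/s^2$ correction with the integral $\int_s^{+\infty}$ (anchored at $s=+\infty$), which exploits the fact that the tail
\[
\int_s^{+\infty} S^{1/4}\cdot\frac{\log S}{S^2}\cdot S^{1/4}\,dS \;\lesssim\; \frac{\log s}{s^{1/2}}
\]
is \emph{small} for $s$ large, producing $|h_i|\lesssim s^{-1/4}\log s$ (the decaying piece of $\eta_\pm$). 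Only then does it absorb the nondecaying $-k^{-2}$ term with $\int_{s_0}^s$, which is what generates the growing piece $k^{-2}s^{7/4}$.

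If you instead integrate the $G$-perturbation forward from $s_0$, the inhomogeneity
\[
R_{\mathrm{eff}}(s) = \pi\int_{s_0}^s\bigl[\check{B}_2(s)\check{B}_1(S)-\check{B}_1(s)\check{B}_2(S)\bigr]V_{\mathrm{pert}}(S)\,\check{B}_+(S)\,dS
\]
has size $|R_{\mathrm{eff}}(s)|\lesssim s^{1/4}\int_{s_0}^s S^{1/2}\bigl(k^{-2}+\log(S)S^{-2}\bigr)dS \lesssim k^{-2}s^{7/4} + C(s_0)\, s^{1/4}$, because $\int_{s_0}^{\infty}\log(S)S^{-3/2}\,dS$ is an $O(1)$ constant rather than a decaying quantity. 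In $(\omega,x)$ coordinates this is $k^{3/2}x^{7/4} + C\,k^{1/2}x^{1/4}$; at $x\sim k^{-1}$ the second term is $\sim k^{1/4}$, which is larger than the claimed bound $k^{-1/4}\log k$ from~\eqref{eta1} by a factor $k^{1/2}/\log k$. Equivalently, your stated majorant $\Phi(s)=A[k^{-2}s^{3/2}+\log(s)\,s^{-1/2}]$ does not satisfy the hypothesis $|R|\le P_0\Phi$ of Theorem~\ref{thm:volterra} for the forward anchoring, and it is also not monotone (the first summand increases, the second decreases), which the theorem requires; the paper's two-step split is precisely what gives two \emph{separately} monotone $\Phi$'s. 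The fix is to first match at $s=+\infty$ for the $G$-term (giving the $s^{-1/4}\log s$ piece) and then to set $\tilde B_i = B_i + \epsilon_i$ and run the forward Volterra for the $-k^{-2}$ term from a fixed large $s_0$, as the paper does.
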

		\begin{proof}We start by constructing a pair of solutions in the region $s \gg 1$, to
			\[\frac{d^2}{ds^2}v(s)+[2Mm^2s^{-1}-G(s)s^{-2}]v(s)=0,\]
			which is~\eqref{odewithG} when $\omega = m$. Namely, we find $B_i(s)$ by solving the Volterra equation:
			\begin{equation*}\begin{split}
					&B_i(s)= \check{B}_i(s) + h_i(s),\\ 
					& h_i(s)= \int_s^{+\infty}\pi\left(\check{B}_2(s)\check{B}_1(y)-\check{B}_1(y)\check{B}_2(s)\right)g(y)\left(\check{B}_i(y)+h_i(y)\right)\, dy,
				\end{split}
			\end{equation*} where, for convenience, we have set $g(s) \doteq G(s)s^{-2}$. Now we apply Theorem~\ref{thm:volterra} in the region $s \gg 1$, with 
			\[K(s,y) = \pi\left(\check{B}_2(s)\check{B}_1(y)-\check{B}_1(s)\check{B}_2(y\right)y^{-1/2},\]
			$P_0(s)= As^{1/4}$,  $P_1(s) = As^{-1/4}$  for a suitable constant $A$, $Q(y) = y^{-1/4}$, and $\Phi(s) =   s^{-1/2}\log(s)$. We then obtain the following bounds:
			\begin{equation}\label{h.bounds}\begin{split}
					&|h_i|(s),\ s^{\frac{1}{2}}|\frac{d}{ds}h_i|(s) \lesssim s^{-\frac{1}{4}}\log(s),\\  &|\frac{d^2}{ds^2}h_i|(s)\lesssim s^{-\frac{5}{4}}\log(s). \end{split}	\end{equation}
			Furthermore, in view of the asymptotics as $s\to +\infty$, we will have that $W(B_1,B_2)=-\pi^{-1}$.

			We will next introduce the dependence on $\omega$ and find two solutions to  
			\[\frac{d^2}{ds^2}v(s)+[\omega^2-m^2 +2Mm^2s^{-1}-g(s)]v(s)=0,\]
			
			by setting $\tilde{B}_i(s)=B_i(s)+ \ep_i(s)$ and solving the equation 
			\[\partial_s^2\epsilon_i + \left(\frac{2Mm^2}{s}-g(s)\right)\epsilon_i = -k^{-2}\left(B_i + \epsilon_i\right),\]
			via the following Volterra equation, where $s_0$ is sufficiently large (independent of $k$): 
			\[\epsilon_i(s) = \pi k^{-2}\int_{s_0}^s\left(B_2(s)B_1(y) - B_2(y)B_1(s)\right)\left(B_i(y) + \epsilon_i(y)\right)\, dy.\]

			We now apply Theorem~\ref{thm:volterra} (with Remark~\ref{switchorder} in mind) in a region $s \in [s_0,s_f]$ with $s_f \doteq d^2k^{4/3}$ and, for a suitable constant $A > 0$ (independent of $s_f$ and $k$):
			\[K\left(s,y\right) = \pi  \left(B_2(s)B_1(y) - B_2(y)B_1(s)\right),\]
			\[P_0(s) = As^{1/4},\qquad P_1(s) = As^{-1/4},\qquad Q(y) = y^{1/4},\qquad \Phi(s) = k^{-2}s^{3/2}.\]
			We obtain
			\[\left|\epsilon_i\right| + s^{1/2}\left|\partial_s\epsilon_i\right| \lesssim s^{7/4}k^{-2}\exp\left(s_f^{3/2}k^{-2}\right) \lesssim s^{7/4}k^{-2}.\]
			Next, we turn to the derivatives with respect to $\omega$. In order to prevent confusion we will $\partial_{\mathring{\omega}}$ to denote $\omega$ derivatives computed in the $\left(s,\omega\right)$ coordinate system and reserve $\partial_{\omega}$ for the $\omega$ derivative computed in the $(x,\omega)$ coordinate system. We now apply Theorem~\ref{thm:volterraparam} with, for a suitable constant $A$ (independent of $k$): 
			\[\tilde{P}_0(s) = As^{1/4},\qquad \tilde{P}_1(s) = As^{-1/4},\qquad Q(y) = y^{1/4},\qquad \tilde{\Phi}(s) = s^{3/2}.\]
			We obtain
			\[\left|\partial_{\mathring{\omega}}\epsilon_i\right| + s^{1/2}\left|\partial^2_{\mathring{\omega}s}\epsilon_i\right| \lesssim s^{7/4}.\]
			Appealing directly to $\epsilon_i$'s equations, we also obtain
			\[\left|\partial_s^2\epsilon_i\right| \lesssim s^{3/4}k^{-2} + k^{-4}s^{7/4} \lesssim s^{3/4}k^{-2}.\]

			Note that once again 
			\begin{equation}\label{thisisactuallywhatsused}
				W(\tilde{B}_1,\tilde{B}_2)=-\pi^{-1}.
			\end{equation}
			
			Finally, we may define $\tilde{B}_{\pm}$ by
			\begin{align*}
				& \tilde{B}_+(s) = \tilde{B}_1(s)+i  \tilde{B}_2(s) = \sqrt{\frac{2}{\pi}}\left(8Mm^2\right)^{-1/4} s^{\frac{1}{4}} e^{-i\frac{3\pi}{4}} e^{i\sqrt{8Mm^2 s}}  + \underbrace{\delta_{+}^B(s)+ h_+(s)+  \ep_+(\xi,s)}_{\eta_+},\\ &  \tilde{B}_-(s) = \tilde{B}_1(s)-i  \tilde{B}_2(s) = \sqrt{\frac{2}{\pi}}\left(8Mm^2\right)^{-1/4} s^{\frac{1}{4}}e^{i\frac{3\pi}{4}} e^{-i\sqrt{8Mm^2s}}  + \underbrace{\delta_{-}^B(s)+  h_-(s)+ \ep_-(\xi,s)}_{\eta_-},
			\end{align*}  where $\delta_{\pm}^{B}= \delta_{1}^{B}\pm i \delta_{2}^{B} $, $h_{\pm}= h_1 \pm h_2$ and  $\ep_{\pm}= \ep_1 \pm i \ep_2$. The estimates for $\eta_{\pm}$ are then obtained by translating the previously obtained estimates in $\left(\mathring{\omega},s\right)$ coordinates into $\left(\omega,x\right)$ coordinates.
			
			Finally, we note that~\eqref{Wronskian.B} is straightforward to establish.
		\end{proof}

		In the next lemma, we express the function $\tilde{u}_I$ in terms of the perturbed Bessel functions $\tilde{B}_{\pm}$.
		
		\begin{lemma}\label{Bessel.error.lemma}Since the Wronskian of $\tilde{B}_+$ and $\tilde{B}_-$ is non-zero and $\tilde{u}_I$ is real valued, we may find functions $a_{\pm}(\omega)$ with $a_- = \overline{a}_+$ so that 
			\begin{equation}
				\tuI(\omega,x)= a_+(\omega)\tilde{B}_+(\omega,x)+a_-(\omega)\tilde{B}_-(\omega,x).
			\end{equation} 
			We then have the following estimates for $a_{\pm}$:
			\begin{align}
				&a_{-}(\omega) = \bar{a}_+(\omega)= A(\omega) k^{-\frac{1}{2}}\pi^{\frac{1}{2}} (Mm^2)^{\frac{1}{2}} e^{i \pi k  - i\frac{3\pi}{4}} \left[1 +\alpha_{-}(\omega)\right],\\  & |\alpha_{-}|(\omega)\ls  k^{-\frac{1}{2}} \log(k),\\ & |\frac{d \alpha_-(\omega)}{d\omega}|\ls  k^2 \log^2(k).
			\end{align} 
		\end{lemma}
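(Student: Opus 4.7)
The existence and uniqueness of $a_\pm(\omega)$ follow from the fact that $\tilde{B}_+, \tilde{B}_-$ form a basis of solutions to \eqref{eq:KGnice}/\eqref{odewithG} (their Wronskian \eqref{Wronskian.B} is non-zero). Since $\tilde{B}_- = \overline{\tilde{B}_+}$ and $\tilde{u}_I$ is real-valued on $|\omega|<m$ (being a real multiple of $u_I$), the identity $a_- = \overline{a_+}$ is automatic from uniqueness. By Cramer's rule,
\begin{equation*}
a_- = \frac{W_x(\tilde{u}_I, \tilde{B}_+)}{W_x(\tilde{B}_-, \tilde{B}_+)} = -\frac{\pi}{2i\,\nofa^2 k^2}\; W_x(\tilde{u}_I, \tilde{B}_+),
\end{equation*}
so the task reduces to evaluating the numerator, which we will do at a matching point $x_0 = k^{-1}$ lying in the overlap $[\Delta k^{-2}, k^{-2/3}]\cap [k^{-2+\delta}, 1/2]$ where both Proposition~\ref{A.prop} and Lemma~\ref{Bessel.lemma} apply.

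The key algebraic observation is that at leading order the Bessel solutions are proportional to the WKB solutions. Define $\lambda \doteq \pi^{-1/2}(Mm^2)^{-1/2} k^{1/2} e^{i\pi k - 3i\pi/4}$. Comparing the explicit leading expression for $\tilde{B}_+$ in Lemma~\ref{Bessel.lemma} with \eqref{w-} and using $(x^{-1}-1)^{-1/4}=x^{1/4}(1+O(x))$ for small $x$, one has $\tilde{B}_+(\omega,x) = \lambda\, w_-(\omega,x) + \Delta_+(\omega,x)$ where $\Delta_+$ packages four sources of error: the Bessel remainder $\eta_+$, the Airy-phase remainder $e^{-ik\ep^{AT}(x)}-1$, the WKB remainder $\bar\ep^{\,WKB}$, and the amplitude mismatch $(x^{-1}-1)^{-1/4} - x^{1/4}$. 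Substituting $\tilde{u}_I = A(\omega) w_+ + \bar{A}(\omega) w_-$ from Proposition~\ref{A.prop} and using bilinearity together with \eqref{wronskian.WKB},
\begin{equation*}
W_x(\tilde{u}_I, \tilde{B}_+) = A\,\lambda\, W_x(w_+, w_-) + A\, W_x(w_+, \Delta_+) + \bar{A}\, W_x(w_-, \Delta_+) = -4ik\lambda\, A + \text{errors}.
\end{equation*}
Inserting this into Cramer's formula and using $\nofa^2 = 2M^{-1}m^{-2}$ produces exactly the advertised leading coefficient $A(\omega) k^{-1/2}\pi^{1/2}(Mm^2)^{1/2} e^{i\pi k - 3i\pi/4}$.

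To estimate $\alpha_-(\omega)$, I evaluate the error Wronskians at $x_0 = k^{-1}$ using the specialized bounds \eqref{WKB.error1sp}--\eqref{WKB.error3sp} and the $x_0=k^{-1}$ versions of \eqref{eta1}--\eqref{eta2} listed in Lemma~\ref{Bessel.lemma}, together with \eqref{ep.AT} giving $k\,\ep^{AT}(x_0) = O(k^{-1/2})$ and $|\ep_a|\lesssim k^{-1}\log(k)$ from Proposition~\ref{A.prop}. The four error sources each contribute at most $k^{-1/2}\log(k)$ after dividing by $|\lambda|\sim k^{1/2}$ and by the denominator $\sim k^2$, yielding $|\alpha_-| \lesssim k^{-1/2}\log(k)$. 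For the derivative $d\alpha_-/d\omega$, the same computation is differentiated and combined with $|dk/d\omega|\lesssim k^3$ together with the $\partial_\omega$-estimates from Lemma~\ref{WKB.lemma}, Lemma~\ref{Bessel.lemma}, and Proposition~\ref{A.prop}; the dominant contributions arise from $|\rd_\omega\ep^{WKB}|\lesssim k^2\log^2(k)$ at $x_0$ and $|\rd_\omega A|\lesssim k^2\log^2(k)\cdot |A|$, both giving the claimed $k^2\log^2(k)$ bound.

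\textbf{Main difficulty.} The leading phases of $\tilde{B}_+$ and $w_-$ cancel in the Wronskian structure, so the answer for $a_-$ is sensitive to subleading corrections. The delicate point is balancing the four distinct error contributions against one another through the choice of the matching point $x_0$: moving $x_0$ closer to $0$ improves the Airy-phase and WKB errors (which scale as $kx_0^{3/2}$ and $k^{-1}\log(k) x_0^{-1/2}$ respectively) but worsens the Bessel remainder $\eta_+$ (which grows like $k^{3/2}x_0^{7/4}$). The choice $x_0=k^{-1}$ is precisely the point where these competing errors all balance at $k^{-1/2}\log(k)$, and verifying that no hidden cancellation with the $\bar{A}$-piece produces a worse error is the technical heart of the argument.
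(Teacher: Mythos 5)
Your proposal is correct and follows essentially the same route as the paper: both proofs compute $a_- = W(\tilde{u}_I,\tilde{B}_+)/W(\tilde{B}_-,\tilde{B}_+)$ by evaluating the numerator in the overlap region $x\sim k^{-1}$ where Proposition~\ref{A.prop} and Lemma~\ref{Bessel.lemma} both apply, with the constant $\lambda$ accounting for the leading coefficient and the same four error sources (Bessel remainder, Airy-phase remainder $k\ep^{AT}$, WKB remainder, and amplitude mismatch) contributing $O(k^{-1/2}\log k)$ each. The only organizational difference is that you write $\tilde{B}_+=\lambda w_-+\Delta_+$ and expand the Wronskian by bilinearity directly, whereas the paper substitutes the Bessel leading form back into the WKB expression for $\tilde{u}_I$, obtaining an $x$-dependent prefactor of $\tilde{B}_-$, and then uses the Wronskian identities $W(Hf,f)=H'f^2$ and $W(Hf,\bar f)=H'|f|^2+HW(f,\bar f)$; these are algebraically equivalent ways of arranging the same computation, and both yield the identical leading coefficient and the stated bounds on $\alpha_-$ and $\partial_\omega\alpha_-$.
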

		
		\begin{proof}
			In this proof we will  always work in a region where  $C^{-1}k^{-1} \leq x \leq C k^{-1}$ for some large constant $C$ which is independent of $k$.We need to compute
			\[a_- = \frac{W\left(\tilde{u}_I,\tilde{B}_+\right)}{W\left(\tilde{B}_-,\tilde{B}_+\right)}.\]
			We write \begin{equation}\begin{split}
					&	\tuI(x) = (1-x)^{-\frac{1}{4}} k^{-\frac{1}{2}} \pi^{\frac{1}{2}}(Mm^2)^{\frac{1}{2}} \Re\left[ A(\omega)e^{i\pi k - i\frac{3\pi}{4}} e^{ik \ep^{AT}(x)} (1+\ep^{WKB}(\omega,x)) \tilde{B}_{-}(\omega,x)\right]- \ep_F(\omega,x),\\& \ep_F(\omega,x):= \Re\left[  (1-x)^{-\frac{1}{4}} k^{-\frac{1}{2}}\pi^{\frac{1}{2}}(Mm^2)^{\frac{1}{2}}\cdot A(\omega)e^{i\pi k - i\frac{3\pi}{4}} e^{ik \ep^{AT}(x)} (1+\ep^{WKB}(\omega,x))\eta_{-}(\omega,x)\right]. \end{split}
			\end{equation}
			We take advantage of the fact that $\tilde{B}_+= \overline{		\tilde{B}_-}$ and the formulas $W(Hf,f)= H' f^2$, $W(Hf,\bar{f})= H' |f|^2 + H W(f,\bar{f})$ to obtain	\begin{align*}
				& W\left( [1-x]^{-\frac{1}{4}} e^{ik\ep^{AT}(x)}[1+\ep^{WKB}(\omega,x)]\tilde{B}_-,\tilde{B}_+\right)= [1-x]^{-\frac{1}{4}} e^{ik\ep^{AT}(x)}[1+\ep^{WKB}(\omega,x)] W(\tilde{B}_-,\tilde{B}_+)\\+\ & e^{ik\ep^{AT}} (1-x)^{-\frac{1}{4}}\left[ \frac{[1+\ep^{WKB}](1-x)^{-1}}{4} + \rd_{x}\ep^{WKB} +i[1+\ep^{WKB}]k \frac{d}{dx} \ep^{AT}\right] |\tilde{B}_{+}|^2,\\ &  W\left( [1-x]^{-\frac{1}{4}} e^{-ik\ep^{AT}(x)}[1+\bar{\ep}^{WKB}(\omega,x)]\tilde{B}_+,\tilde{B}_+\right)= \\ & e^{-ik\ep^{AT}} (1-x)^{-\frac{1}{4}}\left[ \frac{[1+\bar{\ep}^{WKB}](1-x)^{-1}}{4} + \rd_{x}\bar{\ep}^{WKB} -i[1+\bar{\ep}^{WKB}]k \frac{d}{dx} \ep^{AT}\right] \tilde{B}_{+}^2.
			\end{align*} Now note the following estimates, which follow from \eqref{Wronskian.B}, \eqref{ep.AT}, \eqref{WKB.error1} and the fact that $|\tilde{B}_+|\ls k^{1/4}$ (keeping in mind the restriction in the proof to $x\sim k^{-1}$): \begin{align*}
				&\bigl| [1-x]^{-\frac{1}{4}} e^{ik\ep^{AT}(x)}[1+\ep^{WKB}(\omega,x)]-1 \bigr|\lesssim  x+ k|\ep^{AT}|(x)+ |\ep^{WKB}|(\omega,x)\ls  k^{-1/2}\log(k), \\& \bigl|\rd_{\omega}\left( [1-x]^{-\frac{1}{4}} e^{ik\ep^{AT}(x)}[1+\ep^{WKB}(\omega,x)]\right) \bigr|\lesssim |\frac{dk}{d\omega}| |\ep^{AT}|(x)+|\rd_{\omega}\ep^{WKB}|(\omega,x) \ls k^2\log^2(k),\\ & \bigl| (1-x)^{-\frac{1}{4}} \tilde{B}_{+}^2\frac{ \frac{[1+\bar{\ep}^{WKB}](1-x)^{-1}}{4} + \rd_{x}\bar{\ep}^{WKB} -[1+\bar{\ep}^{WKB}]k \frac{d}{dx} \ep^{AT} }{W(\tilde{B}_+,\tilde{B}_-)}\bigr| \ls k^{-1/2}\log(k),\\ & \bigl|\rd_{\omega}\left( (1-x)^{-\frac{1}{4}}e^{-ik\ep^{AT}(x)}\tilde{B}_{+}^2\frac{ \frac{[1+\bar{\ep}^{WKB}](1-x)^{-1}}{4} + \rd_{x}\bar{\ep}^{WKB} -[1+\bar{\ep}^{WKB}]k \frac{d}{dx} \ep^{AT} }{W(\tilde{B}_+,\tilde{B}_-)}\right)\bigr|\\ & \ls |\frac{d}{d\omega}\log[W(\tilde{B}_+,\tilde{B}_-)]|k^{-1/2}\log(k)  + \frac{ |\rd_{\omega}\overline{\ep}^{WKB}|[1+k|\frac{d}{dx}\ep^{AT}|(x)]+|\rd_{x \omega}^2\ep^{WKB}|+|\frac{dk}{d\omega}||\frac{d}{dx}\ep^{AT}|(x) }{|W(\tilde{B}_+,\tilde{B}_-)|}|\tilde{B}_{+}|^2\\ &+ \left|\frac{d\log\left(\tilde{B}_+^2\right)}{d\omega}\right|k^{-1/2}\log(k)+ |\frac{dk}{d\omega}| |\ep^{AT}|(x)k^{-1/2}\log(k) \ls k^2\log^2(k),\\ & \bigl|\rd_{\omega}\left( (1-x)^{-\frac{1}{4}}e^{-ik \epsilon^{AT}(x)}|\tilde{B}_{+}|^2\frac{ \frac{[1+\ep^{WKB}](1-x)^{-1}}{4} + \rd_{x}\ep^{WKB} -[1+\ep^{WKB}]k \frac{d}{dx} \ep^{AT} }{W(\tilde{B}_+,\tilde{B}_-)}\right)\bigr|   \ls k^2\log^2(k),
			\end{align*} 
			where we have  used the fact which may be deduced from Lemma~\ref{Bessel.lemma} that \begin{align}
				\bigl|\frac{d\log\tilde{B}_+}{d\omega} \bigr|\ls k^{5/2},\qquad \left|\partial_{\omega}\tilde{B}_+\right| \lesssim k^{11/4}.\label{B+.omega}
			\end{align}
			It is also useful to keep in mind that
			\[\left|\partial_x\tilde{B}_+\right| \lesssim k^{7/4},\qquad \left|\partial^2_{x\omega}\tilde{B}_+\right| \lesssim k^{17/4}.\]
			Now we turn to estimating the Wronskians involving $\ep_{F}(\omega,x)$. We start with the following explicit computation: \begin{align*}
				& W\left( [1-x]^{-\frac{1}{4}} e^{ik\ep^{AT}(x)}[1+\ep^{WKB}(\omega,x)]\eta_+,\tilde{B}_{+}\right)= [1-x]^{-\frac{1}{4}} e^{ik\ep^{AT}(x)}[1+\ep^{WKB}(\omega,x)] W(\eta_+,\tilde{B}_+)\\+\ & [1-x]^{-\frac{1}{4}}e^{ik\ep^{AT}} \left[ \frac{[1+\ep^{WKB}](1-x)^{-1}}{4} + \rd_{x}\ep^{WKB} +[1+\ep^{WKB}]k \frac{d}{dx} \ep^{AT}\right] \eta_+\  \tilde{B}_+.
			\end{align*} 
			
			We then estimate this by the following: \begin{align*}
				& \left|  \frac{W\left( [1-x]^{-\frac{1}{4}} e^{ik\ep^{AT}(x)}[1+\ep^{WKB}(\omega,x)]\eta_+,\tilde{B}_{+}\right)}{W(\tilde{B}_{+},\tilde{B}_{-})}\right|\lesssim k^{-2} |\eta_+ | |\tilde{B}_{+}|\left( 1+ |\rd_x \ep^{WKB}|(\omega,x)+ k|\frac{d}{dx}\ep^{AT}|(x)\right)\\ & + k^{-2} \left( |\rd_x \eta_+| |\tilde{B}_+|+  | \eta_+| |\rd_x \tilde{B}_+|   \right) \lesssim k^{-1/2}\log(k),
				\\ &  \left|  \rd_{\omega}\left(\frac{W\left( [1-x]^{-\frac{1}{4}} e^{ik\ep^{AT}(x)}[1+\ep^{WKB}(\omega,x)]\eta_+,\tilde{B}_{+}\right)}{W(\tilde{B}_{+},\tilde{B}_{-})}\right)\right|\lesssim 
				\\ \nonumber &\qquad k^2 \left|  \frac{W\left( [1-x]^{-\frac{1}{4}} e^{ik\ep^{AT}(x)}[1+\ep^{WKB}(\omega,x)]\eta_+,\tilde{B}_{+}\right)}{W(\tilde{B}_{+},\tilde{B}_{-})}\right|\\&+ k^{-2} \left[|\rd_{\omega}\eta_+ | |\tilde{B}_{+}|+ |\eta_+ | | \rd_{\omega}\tilde{B}_{+}|\right]\left( 1+ |\rd_x \ep^{WKB}|(\omega,x)+ k|\frac{d}{dx}\ep^{AT}|(x)\right)
				\\ \nonumber &\qquad + k^{-2} |\eta_+ | |\tilde{B}_{+}|\left(  |\rd_{x \omega}^2 \ep^{WKB}|(\omega,x)+ k^3|\frac{d}{dx}\ep^{AT}|(x)\right)\\ & + k^{-2} \left( |\rd_x \eta_+| |\tilde{B}_+|+  | \eta_+| |\rd_x \tilde{B}_+|   \right)\left(  |\rd_{\omega} \ep^{WKB}|(\omega,x)+ k^3|\ep^{AT}|(x)\right)\\ &+  k^{-2} \left( |\rd^2_{x \omega} \eta_+| |\tilde{B}_+|+  | \eta_+| |\rd^2_{x \omega}  \tilde{B}_+| + |\rd_x \eta_+| |\rd_{\omega}\tilde{B}_+|+  |\rd_{\omega} \eta_+| |\rd_x \tilde{B}_+|  \right) \lesssim k^2\log^2(k).
			\end{align*}  which concludes the proof.
			
		\end{proof}
		\subsection{The energy identity and the scattering of the solution regular at the event horizon}
		Recall the solutions $\tilde{B}_1 = \Re\left(\tilde{B}_+\right)$ and $\tilde{B}_2 = \Im\left(\tilde{B}_+\right)$ which where constructed in the course of  the proof of Lemma~\ref{Bessel.lemma}. Since these are linearly independent, we may define uniquely $E_{1}(\omega)$ and $E_{2}(\omega)$  by the formula $$ u_H = E_1(\omega) \tilde{B}_1 +    E_2(\omega) \tilde{B}_2.$$
		In the following lemma we establish an important estimate for the $E_i$'s which eventually is used to show that a potential cancellation in the most singular part of the Green's formula does not occur. For the first time in this section, the proof will involve an argument which is \emph{not} restricted to the region $r \gg 1$.
		\begin{lemma}\label{uH.scat.lemma} For $i=1,2$, $\omega \rightarrow E_i(\omega)$ is analytic  for $\omega \in \mathbb{R}$  and satisfies the following energy identity:
			\begin{equation}\label{energy}
				|E_1(\omega)+ i E_2(\omega)|^2-|E_1(\omega)- i E_2(\omega)|^2  = 4\pi \omega.
			\end{equation}
		\end{lemma}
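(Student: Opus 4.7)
The plan is to exploit the reality of the potential $V_\omega$ in \eqref{eq:mainV}: this implies that
\[\mathscr{Q}(s) := \Im\bigl(\partial_s u_H\,\overline{u_H}\bigr)\]
is independent of $s$, since $\partial_s \mathscr{Q} = \Im(\partial_s^2 u_H\,\overline{u_H}) = \Im(V_\omega\,|u_H|^2) = 0$. The identity~\eqref{energy} will then be obtained by computing the (constant) value of $\mathscr{Q}$ in two complementary regimes: in the limit $s\to -\infty$, where the horizon asymptotics $u_H\sim e^{-i\omega s}$ of Section~\ref{regsolsection} apply; and in the Bessel window $\Delta k^{-2}\leq x \leq k^{-2/3}$, where the expansion $u_H = E_1\tilde B_1 + E_2\tilde B_2$ is natural.

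For the analyticity claim, since $\tilde B_1,\tilde B_2$ are real-valued and linearly independent with $W(\tilde B_1,\tilde B_2) = -\pi^{-1}\neq 0$ (cf.~\eqref{thisisactuallywhatsused}), evaluating at any fixed $s$ in the Bessel window determines $(E_1,E_2)$ by Cramer's rule applied to the $2\times 2$ linear system with right-hand side $(u_H,\partial_s u_H)$ and matrix entries $\tilde B_i, \partial_s \tilde B_i$. Lemma~\ref{makeuh} gives holomorphicity of $u_H$ in $\omega$; the Volterra construction in Lemma~\ref{Bessel.lemma} (with $\omega$-holomorphic coefficients in a neighborhood of $(-m,m)$) gives holomorphicity of $\tilde B_i$. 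Hence $E_1,E_2$ are holomorphic, in particular real-analytic on the real line in the regime $k\gg 1$ considered here.

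For the energy identity, I first evaluate $\mathscr{Q}$ as $s\to -\infty$: from $u_H\sim e^{-i\omega s}$ together with the companion derivative asymptotics $\partial_s u_H\sim -i\omega e^{-i\omega s}$ (obtained by differentiating the Volterra equation~\eqref{anothereqnforhthingsaregood} for the perturbation $h$), one reads off $\partial_s u_H\,\overline{u_H}\to -i\omega$, so that $\mathscr{Q} = -\omega$. Second, in the Bessel window I pass to the complex basis $\tilde B_\pm = \tilde B_1 \pm i\tilde B_2$, so that
\[u_H = c_+\tilde B_+ + c_-\tilde B_-, \qquad c_\pm = \tfrac{1}{2}\bigl(E_1 \mp iE_2\bigr),\qquad \overline{u_H} = \bar c_+\,\tilde B_- + \bar c_-\,\tilde B_+,\]
using $\tilde B_- = \overline{\tilde B_+}$. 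The cross terms involve $W_s(\tilde B_\pm,\tilde B_\pm) = 0$ and drop out, leaving
\[2i\,\mathscr{Q} = \partial_s u_H\,\overline{u_H} - u_H\,\overline{\partial_s u_H} = \bigl(|c_+|^2 - |c_-|^2\bigr)\,W_s(\tilde B_+,\tilde B_-) = \bigl(|c_+|^2 - |c_-|^2\bigr)\cdot \tfrac{2i}{\pi},\]
where $W_s(\tilde B_+,\tilde B_-) = 2i\pi^{-1}$ by~\eqref{Wronskian.B}. Equating with $2i\,\mathscr{Q}=-2i\omega$ and multiplying by $2\pi$ yields $|E_1 + iE_2|^2 - |E_1 - iE_2|^2 = 4\pi\omega$, which is~\eqref{energy}.

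The only substantive issue is that the Bessel basis $\tilde B_\pm$ is constructed only for $k\gg 1$ and within the $x$-window above, whereas the horizon asymptotics live at $s=-\infty$; there is no overlap region. However, since $\mathscr{Q}$ is globally conserved along the single solution $u_H$, each regime independently computes the same scalar, and no matching argument is needed. The rest is algebra.
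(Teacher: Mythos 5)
Your proof is correct and follows essentially the same route as the paper: conservation of $\mathscr{Q}=\Im(\partial_su_H\,\overline{u_H})$, evaluated once via the horizon asymptotics $u_H\sim e^{-i\omega s}$ to get $-\omega$, and once via the decomposition $u_H = c_+\tilde B_+ + c_-\tilde B_-$ with the constant Wronskian $W_s(\tilde B_+,\tilde B_-)=2i\pi^{-1}$ to get $(|c_+|^2-|c_-|^2)\pi^{-1}$ (the paper writes $E_\pm$ for your $c_\pm$), together with $\omega$-holomorphicity of $u_H$ and $\tilde B_i$ for analyticity of $E_i$. One tiny slip: going from $|c_-|^2-|c_+|^2=\pi\omega$ to $|E_1+iE_2|^2-|E_1-iE_2|^2=4\pi\omega$ is multiplication by $4$, not by $2\pi$, though the stated conclusion is the right one.
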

		
		\begin{proof}  Recalling that $\tilde{B}_{\pm} =\tilde{B}_{1} \pm i\tilde{B}_{2}$.  We define also $E_{\pm}(\omega)$  by the formula \begin{align*}
				&  E_1= E_+ + E_ -, \\ & E_2= i [E_+ - E_ -]. 
			\end{align*}  so that $$ u_H = E_+(\omega) \tilde{B}_+ +    E_-(\omega) \tilde{B}_-.$$ 
			The energy identity can be derived from the following equality: $$ \frac{d}{ds}\left[\Im ( \frac{d u_H}{ds}\bar{u}_H )  \right] =0$$ Recall that $u_H(\omega,s) \sim e^{-i \omega s}$ as $s\rightarrow -\infty$, and therefore $$\Im ( \frac{d u_H}{ds}\bar{u}_H ) =-\omega.$$ 
			
			Using the fact that $\overline{\tilde{B}_+}=\tilde{B}_-$, we get that \begin{align*}
				&-\omega = \left[|E_+|^2- |E_-|^2 \right] \Im[\frac{d}{ds}\tilde{B}_+\tilde{B}_-]= \left[|E_+|^2- |E_-|^2 \right] \frac{W(\tilde{B}_+,\tilde{B}_-)}{2i} =\left[|E_+|^2- |E_-|^2 \right] \pi^{-1}.
			\end{align*}  where for the one before last equality we have used the identity $W_s(\tilde{B}_+,\tilde{B}_-) =2i\pi^{-1}$ from Lemma~\ref{Bessel.lemma}. We conclude that \begin{equation*}
				-|E_+|^2+ |E_-|^2 = \pi \omega,
			\end{equation*} which in view of the fact that $E_{\pm}= \frac{E_1 \mp i E_2}{2}$ gives \eqref{energy}.
			
			The analyticity of $\omega \rightarrow E_i(\omega)$ follows from  Lemma~\ref{makeuh} which yields the analyticity in $\omega$ of $u_H$, and the immediate fact that $\tilde{B}_1$ and $\tilde{B}_2$ depend analytically on $\omega$.

		\end{proof}

		\subsection{Putting everything together}
		We now are ready to give the proof of Proposition~\ref{ODE.interior.prop}.
		\begin{proof}

			We write $u_H = E_1 \tilde{B}_1 + E_2 \tilde{B}_2$, and we know that $\tuI= (a_+ + a_-) \tilde{B}_1+ i (a_+ -  a_-) \tilde{B}_2$. We also have already seen that $a_+ = \bar{a}_-$. We thus write  $\tuI= 2\Re(a_-) \tilde{B}_1 +2 \Im(a_-) \tilde{B}_2$. In view of the above Proposition~\ref{A.prop} and  Lemma~\ref{Bessel.error.lemma}, we get 
			
			\begin{equation}\begin{split}
					\tuI= -2^{-1/6}k^{-2/3}(Mm^2 )^{-1/2} \left[\cos(\pi k)+ \Re(    \tilde{\ep}_{a} e^{i\pi k })\right] \tilde{B}_1\\- 2^{-1/6}k^{-2/3}(Mm^2)^{1/2}  \left[\sin(\pi k)+  \Im(    \tilde{\ep}_{a} e^{i\pi k })\right] \tilde{B}_2\end{split}
			\end{equation} where $\tilde{\ep}_{a}(\omega)$ is chosen so that $\left[1+  \tilde{\ep}_{a}(\omega) \right]= [1+\alpha_-(\omega)] \left[1+   e^{i\frac{\pi}{4}}\ep_{a} (\omega) \right]$. Now we get, using \eqref{thisisactuallywhatsused} that \begin{equation*}\begin{split}
					&W\left(\tuI,u_H\right) = W_s(\tuI,u_H)\\&= \pi^{-1} 2^{-1/6}k^{-2/3}(Mm^2)^{-1/2} \left[ 
					E_2\cos(\pi k)- E_1 \sin(\pi k )+ E_2\Re(    \tilde{\ep}_{a} e^{i\pi k })-E_1\Im(    \tilde{\ep}_{a} e^{i\pi k })\right]\\&= \pi^{-1} 2^{-7/6}k^{-2/3}(Mm^2)^{-1/2}  \left[ (E_2 +iE_1+ \nu_1(\omega)) e^{i \pi k }+(E_2 -iE_1+\nu_{-1}(\omega)) e^{-i \pi k }
					\right]. \end{split}
			\end{equation*} where \begin{align} & \nu_1(\omega):= \frac{(E_2+ i E_1) }{2} \tilde{\ep}_{a},\\&  \nu_{-1}(\omega):= \frac{(E_2- i E_1)}{2}  \overline{\tilde{\ep}}_{a}.
			\end{align} 
			
			Finally, introduce
			\begin{equation}\label{eq:gammas}
				\begin{aligned}
					&  \Gamma_+(\omega)= E_2(\omega) +iE_1(\omega)+ \nu_1(\omega), \\  &   \Gamma_-(\omega)= E_2(\omega) -iE_1(\omega)+ \nu_{-1}(\omega),\\ & .
				\end{aligned}
			\end{equation}
			
			so that we may write \begin{align*}
				&W(\tuI,u_H)(\omega)= \pi^{-1} 2^{-1/6}k^{-2/3}(Mm^2)^{-1/2} \left[ \Gamma_+(\omega) e^{i \pi k}+ \Gamma_-(\omega) e^{-i \pi k} \right].
			\end{align*}
			As a consequence of our previous estimates on $\alpha_-$ and $\epsilon_a$ (see Proposition~\ref{A.prop} and Lemma~\ref{Bessel.error.lemma}), we have that 
			\begin{equation}\label{sowecanignorethisstuff}
				\left|\tilde{\epsilon}_a\right| \lesssim k^{-1/2}\log(k).
			\end{equation}
			In particular, the following limits exist
			\[\Gamma_{\pm}^m \doteq \lim_{k\to \infty}\Gamma_{\pm}\left(\omega\right) = -E_1(m) \pm iE_2(m).\]
			It is convenient to introduce also a function $\gamma(\omega)$
			\[\gamma(\omega)= \frac{\Gamma_+(\omega)}{\Gamma_-(\omega)}.\]
			In view of~\eqref{sowecanignorethisstuff} and~\eqref{energy} we have the fundamental facts that
			\begin{equation}\label{fundamentalfactyay}
				\left|\gamma\right| < 1,\qquad \left|\Gamma_-\right| > 0.
			\end{equation}
			We also introduce \begin{align}
				& \label{thisiswherephipmisdefined} e^{i \phi_-(\omega)} = \frac{\bar{\Gamma}_-(\omega)}{\Gamma_-(\omega)},\\
				& \gamma_m=\frac{\Gamma_{+}^{m}}{\Gamma_{-}^{m}},\\ & \delta  \gamma(\omega)= \gamma(\omega)-\gamma_m,
				\\& F_{0}(\omega)=  \bar{\gamma}(\omega) e^{i \phi_{-}(\omega)}-\bar{\gamma_m}e^{i \phi_{-}(m)}, \\ & F_{N}(\omega)=   e^{i \phi_{-}(\omega)} (1-|\gamma|^2(\omega)) (-\gamma)^{N-1} (\omega)-e^{i \phi_{-}(m)} (1-|\gamma_m|^2) (-\gamma_m)^ {N-1}.\\
			\end{align}

			Our next goal is to show that there exists $\eta_0 \in (0,1)$ independent of $\omega$ and $\theta \in (0,1)$ independent of $\omega$ such that for all $(1-\theta)m<\omega<m$:
			\begin{align}
				&\label{F.est}    |F_N|(\omega) \ls \eta_0^N \cdot k^{-\frac{1}{2}} \log(k),\\ &  \label{dF.est}  |\frac{dF_N}{d\omega}|(\omega) \ls  \eta_0^N \cdot k^2\log^2k.
			\end{align}
			
			We will make use of the fact that $\omega \rightarrow E_i(\omega)$ is real-analytic  for $(1-\theta)m<\omega<m$ (by Lemma~\ref{uH.scat.lemma}). 	First note that from  Proposition~\ref{A.prop} and Lemma~\ref{Bessel.error.lemma} \begin{equation}
				|\delta \gamma|(\omega)+ |e^{i\phi_-(\omega)} -e^{i\phi_-(m)}|\lesssim k^{-\frac{1}{2}} \log(k),
			\end{equation} so clearly \begin{equation}
				|F_N|(\omega) \lesssim  |\gamma|^N (\omega )k^{-\frac{1}{2}} \log(k) \ls  \eta_0^N k^{-\frac{1}{2}} \log(k),
			\end{equation} for some $|\gamma|_m < \eta_0 <1$. Then, taking the $\frac{d}{d\omega}$ derivative we obtain
			\begin{equation}
				|\frac{d\delta \gamma(\omega)}{d\omega}|+ |\frac{de^{i\phi_-(\omega)}}{d\omega} |\lesssim k^{2} \log^2(k),
			\end{equation} so similarly \begin{equation}
				|\frac{dF_N(\omega)}{d\omega}| \lesssim \eta_0^N  k^{2} \log^2(k).
			\end{equation}

			Keeping~\eqref{fundamentalfactyay} in mind, one can write, after some algebraic manipulation:  \begin{equation*}
				\frac{\bar{W}(\tuI,u_H)}{W(\tuI,u_H)}(\omega)= \frac{\bar{\Gamma}_-(\omega)}{\Gamma_-(\omega)} \left( \bar{\gamma}(\omega) + (1-|\gamma|^2(\omega))\sum_{N=1} ^{+\infty}  (-\gamma)^{N-1}(\omega) e^{2i\pi k N }\right).
			\end{equation*}
			
			Then, finally, we obtain
			\begin{equation*}
				\frac{\bar{W}(\tuI,u_H)}{W(\tuI,u_H)}(\omega)=\bar{\gamma}_me^{i \phi_{-}(m)}+ e^{i\phi_-(m)}[1-|\gamma_m|^2]\sum_{q=1}^{+\infty}  (-\gamma_m)^{q-1} e^{2i\pi k q}+\sum_{N=0}^{+\infty} F_N(\omega) e^{2i \pi k N}.
			\end{equation*}

		\end{proof}
		\section{Inverse Fourier transform and stationary phase}\label{FT.section}
		\subsection{Set-up}\label{setup.inverse}
		Throughout the section we will use the notation $\check{u}_H$, $\check{u}_I$, $\check{u}$, and $\check{H}$ from Remark~\ref{horizonrepform}. We  first consider the inhomogeneous Klein--Gordon equation~\eqref{eq:main} with a  smooth spacetime compactly supported source $H(t^*,r)$, and where both $H$ and $\psi$ vanish for $\{t < 0\}$. In Proposition~\ref{repuform} and Remark~\ref{horizonrepform} we have established a representation formula~\eqref{agoodrepformonthehorizon} for $\tilde{u}\left(\omega,r\right)$, the Fourier transform $\psi\left(t^*,r\right)$. When $\omega < m$ it is useful to use Remark~\ref{switcharoo} to re-write~\eqref{agoodrepforu} so that it involves $\tilde{u}_I$ instead of $u_I$. Setting then $\mathring{u}_I\left(\omega,r\right) \doteq e^{i\omega s(r)}\tilde{u}_I\left(\omega,s(r)\right)$, we have that
		\begin{equation}\label{first.split}
			u(\omega,s) =  \frac{u_H(\omega,s) \int_{s}^{+\infty} \tuI(\omega,S) \cdot \hat{H}(\omega,S) dS +   \tuI(\omega,s) \int^{s}_{-\infty} u_H(\omega,S) \cdot\hat{H}(\omega,S) dS}{W(\tuI,u_H)(\omega)}
		\end{equation}
		It is furthermore useful to slightly re-write~\eqref{first.split}. We begin by noting (keeping in mind that $\tuI$ is real valued) that whenever $\omega \in (0,m)$:
		$$\tuI(\omega,s) = \alpha(\omega) u_H(\omega,s)+ \bar{\alpha}(\omega) \bar{u}_H(\omega,s),$$ where
		\[\bar{\alpha} = \frac{W\left(\tilde{u}_I,u_H\right)}{W\left(\bar{u}_H,u_H\right)} = \left(2i\omega\right)^{-1}W\left(\tilde{u}_I,u_H\right).\]
		We then obtain
		\begin{align}\label{second.split}\begin{split}
				&u(\omega,s) =  \frac{-1}{2i\omega} \frac{\overline{W}(\tuI,u_H)(\omega)}{W(\tuI,u_H)(\omega)}  u_H\left(\omega,s\right)\int_{-\infty}^{+\infty} u_H(\omega,S) \cdot \hat{H}(\omega,S) dS\\
				\\ &\qquad + \frac{1}{2i\omega}  \left(  u_H(\omega,s) \int_s^{+\infty} \bar{u}_H(\omega,S)  \hat{H}(\omega,S) dS+  \bar{u}_H(\omega,s) \int^s_{-\infty} u_H(\omega,S)  \hat{H}(\omega,S) dS\right).\end{split}
		\end{align}
		The first line of~\eqref{second.split} contains the crucial ratio of Wronskians which was the focus of Proposition~\ref{ODE.interior.prop}. Finally, we re-write~\eqref{second.split} in terms of $\check{u}$ and the $r$-coordinate:
		\begin{align}\label{third.split}\begin{split}
				&\check{u}(\omega,r) =  \frac{-1}{2i\omega} \frac{\overline{W}(\tuI,u_H)(\omega)}{W(\tuI,u_H)(\omega)}  \check{u}_H\left(\omega,r\right)\int_{-\infty}^{+\infty} u_H(\omega,S) \cdot \hat{H}(\omega,S) dS\\
				\\ &\qquad + \frac{1}{2i\omega}  \left(  \check{u}_H(\omega,r) \int_{s(r)}^{+\infty} \bar{u}_H(\omega,S)  \hat{H}(\omega,S) dS+  e^{i\omega s(r)}\bar{u}_H(\omega,s(r)) \int^{s(r)}_{-\infty} u_H(\omega,S)  \hat{H}(\omega,S) dS\right).\end{split}
		\end{align}

		\subsection{Stationary phase lemmas}\label{stat.section} 
		
		We start by recalling a few standard stationary phase type lemmas. The first is proven with a simple integration by parts whose proof we omit.
		\begin{lemma}\label{stat.phase.basic}
			Let  $a<b$ with $a \in [-\infty,\infty)$, $b \in (-\infty,\infty]$, and $h  \in C^1(a,b) \cap L^1(a,b)$. When $ t  \gtrsim 1$,  we have
			\begin{align}
				\bigl|\int_{a}^{b}e^{-it\omega} h(\omega)\, d\omega\bigr| \ls  t^{-1} \left(\sup_{ a < \omega< b} |h|(\omega) + \int_{a}^{b} |\frac{dh}{d\omega}|(\omega)d\omega\right).\end{align}
		\end{lemma}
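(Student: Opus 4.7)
The plan is to prove the bound by a single integration by parts, writing $e^{-it\omega} = \frac{d}{d\omega}\left(\frac{e^{-it\omega}}{-it}\right)$. Assuming without loss of generality that $\int_a^b |h'|\, d\omega < \infty$ (otherwise the right-hand side of the claimed estimate is infinite and there is nothing to prove), integration by parts gives
\begin{equation*}
\int_a^b e^{-it\omega} h(\omega)\, d\omega = \left[\frac{e^{-it\omega}}{-it} h(\omega)\right]_{a}^{b} + \frac{1}{it}\int_a^b e^{-it\omega} h'(\omega)\, d\omega.
\end{equation*}
The interior integral is trivially bounded by $t^{-1}\int_a^b |h'|(\omega)\, d\omega$, so the main question is the boundary term.

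For a finite endpoint the boundary contribution is controlled directly by $t^{-1}\sup_{a<\omega<b}|h|(\omega)$. For an infinite endpoint (say $b = +\infty$), the key observation is that $h' \in L^1(a,b)$ together with $h \in C^1$ implies that $\lim_{\omega \to +\infty} h(\omega)$ exists in $\mathbb{R}$; combined with $h \in L^1(a,b)$, this limit must equal $0$. The analogous statement holds at $a = -\infty$. Hence the boundary term vanishes at any infinite endpoint, and the total boundary contribution is still bounded by $t^{-1}\sup_{a<\omega<b}|h|(\omega)$.

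Combining both contributions yields
\begin{equation*}
\left|\int_a^b e^{-it\omega} h(\omega)\, d\omega\right| \lesssim t^{-1}\left(\sup_{a<\omega<b}|h|(\omega) + \int_a^b |h'|(\omega)\, d\omega\right),
\end{equation*}
which is the desired estimate. There is no real obstacle here: the only mild subtlety is the justification that $h$ decays at an infinite endpoint under the integrability hypotheses, which is a standard calculus fact.
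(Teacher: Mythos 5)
Your proof is correct and follows exactly the integration-by-parts approach the paper indicates (the paper explicitly omits the proof, remarking only that it is "proven with a simple integration by parts"). You correctly handle the one mild subtlety — that $h' \in L^1$ forces $h$ to have a limit at an infinite endpoint, which $h \in L^1$ then forces to be zero, so the boundary term is well defined and controlled.
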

		
		Next we recall the Van der Corput lemma.
		\begin{lemma}\label{vandercorput}Let $\phi(x) : (a,b) \to \mathbb{R}$ be smooth and suppose that there exists $k \in \mathbb{Z}_{\geq 1}$ and $\lambda > 0$ such that at least one of the following holds
			\begin{enumerate}
				\item $k \geq 2$ and ${\rm inf}_{(a,b)}\left|\frac{d^k\phi}{dx^k}\right| \geq \lambda$.
				\item $k = 1$, ${\rm inf}_{(a,b)}\left|\phi'(x)\right| \geq \lambda$, and $\phi'$ is monotonic.
			\end{enumerate}
			Then there exists a constant $c_k$ independent of $\phi$, $\lambda$, and $(a,b)$ so that
			\[\left|\int_a^be^{i\phi(x)}\, dx\right| \leq c_k \lambda^{-1/k}.\]
		\end{lemma}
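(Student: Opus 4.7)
The plan is to proceed by induction on $k$, following the standard harmonic-analysis approach.

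For the base case $k=1$, the monotonicity of $\phi'$ is what makes the argument work. I would write
\[
\int_a^b e^{i\phi(x)}\,dx = \int_a^b \frac{1}{i\phi'(x)}\,\frac{d}{dx}\bigl(e^{i\phi(x)}\bigr)\,dx
\]
and integrate by parts. The boundary contribution is bounded by $2/\lambda$ using $|\phi'| \geq \lambda$. The remaining term equals $-\int_a^b e^{i\phi(x)}\,\frac{d}{dx}\bigl(1/\phi'(x)\bigr)\,dx$, and because $\phi'$ is monotonic the function $1/\phi'$ is monotonic too; hence $d(1/\phi')/dx$ does not change sign, so the integral is bounded in absolute value by $|1/\phi'(b) - 1/\phi'(a)| \leq 2/\lambda$. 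This gives $c_1 = 4$ (or any similar explicit constant).

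For the inductive step, assume the result at level $k-1$ and suppose $|\phi^{(k)}| \geq \lambda$ on $(a,b)$. By continuity $\phi^{(k)}$ does not change sign, so $\phi^{(k-1)}$ is strictly monotonic. For a parameter $\delta > 0$ to be chosen, let $S_\delta = \{x : |\phi^{(k-1)}(x)| < \delta\}$. Monotonicity of $\phi^{(k-1)}$ implies $S_\delta$ is an interval, and since $|\phi^{(k)}| \geq \lambda$ forces $\phi^{(k-1)}$ to change by at least $\lambda \cdot |S_\delta|$ across it, while that change is at most $2\delta$, one has $|S_\delta| \leq 2\delta/\lambda$. On the (at most two) complementary intervals one has $|\phi^{(k-1)}| \geq \delta$, and when $k-1 \geq 2$ the inductive hypothesis applies directly; when $k-1 = 1$ (the case $k=2$) one additionally uses that $\phi'' = \phi^{(k)}$ has constant sign, so $\phi'$ is monotonic, and the base case applies. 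Combining the trivial bound $|S_\delta|$ with two applications of the inductive bound yields
\[
\left|\int_a^b e^{i\phi(x)}\,dx\right| \leq \frac{2\delta}{\lambda} + 2c_{k-1}\delta^{-1/(k-1)}.
\]

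The final step is to optimize by taking $\delta = \lambda^{(k-1)/k}$, so that both terms become comparable to $\lambda^{-1/k}$, yielding a constant $c_k = 2 + 2c_{k-1}$ and completing the induction. There is no real obstacle here; the only delicate points are verifying that $S_\delta$ is indeed an interval (which is where monotonicity of $\phi^{(k-1)}$ is crucial) and keeping track of the number of complementary pieces (at most two, since removing an interval from an interval leaves at most two intervals). Since the constant $c_k$ only depends on $k$ and not on $\phi$, $\lambda$, or $(a,b)$, the bound has the required form.
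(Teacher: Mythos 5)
Your proof is correct, and it is exactly the classical induction-on-$k$ argument (base case by integration by parts using monotonicity of $\phi'$, inductive step by splitting off the small set where $|\phi^{(k-1)}|<\delta$ and optimizing $\delta$) that appears in Chapter VIII of Stein's \emph{Harmonic Analysis}, which is precisely the reference the paper gives in lieu of a proof. So you have reconstructed the paper's intended proof.
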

		\begin{proof}See Chapter VIII of~\cite{BigStein}.
		\end{proof}
		
		Finally, we have the following two useful corollaries of Lemma~\ref{vandercorput}.
		\begin{cor}\label{vanitI}Let $\phi(x):[a,b] \to \mathbb{R}$ be smooth, suppose ${\rm inf}_{(a,b)}\left|\phi'(x)\right| \geq \lambda$, and let $\psi: (a,b) \to \mathbb{C}$ be $C^1$ with a continuous extension to $[a,b]$. Then
			\[\left|\int_a^be^{i\phi(x)}\psi(x)\, dx\right| \lesssim  \lambda^{-1}\left[\sup_{x\in (a,b)}\left|\psi(x)\right| + \int_a^b\left|\psi'(x)\right|\, dx\right].\]
			The implied constant is independent of $\phi$, $(a,b)$, $\lambda$, and $\psi$.
		\end{cor}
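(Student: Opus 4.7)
My plan is to reduce the weighted oscillatory integral to a uniform bound on an unweighted antiderivative, via integration by parts, and then to invoke Lemma~\ref{vandercorput} with $k=1$ to control the antiderivative. The key observation is that transferring the weight $\psi$ outside the oscillatory integral costs only a boundary value of $\psi$ plus the $L^{1}$-norm of $\psi'$, which is exactly the combination appearing on the right-hand side of the corollary.

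Concretely, I would introduce the auxiliary function
\[
F(x) \doteq \int_{a}^{x} e^{i\phi(y)}\, dy,\qquad x \in [a,b],
\]
which is absolutely continuous on $[a,b]$ and satisfies $F(a)=0$. Since $\psi$ is $C^{1}$ on $(a,b)$ with a continuous extension to $[a,b]$, a standard integration by parts gives
\[
\int_{a}^{b} e^{i\phi(x)}\psi(x)\, dx \;=\; F(b)\,\psi(b) \;-\; \int_{a}^{b} F(x)\,\psi'(x)\, dx.
\]
Taking absolute values and factoring out the uniform bound on $F$ yields
\[
\left|\int_{a}^{b} e^{i\phi(x)}\psi(x)\, dx\right|
\;\leq\; \Bigl(\sup_{x\in[a,b]}|F(x)|\Bigr)\cdot \left[\sup_{x\in(a,b)}|\psi(x)| \;+\; \int_{a}^{b}|\psi'(x)|\, dx\right].
\]

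It therefore suffices to show $\sup_{x\in[a,b]}|F(x)| \lesssim \lambda^{-1}$ with an implied constant independent of $\phi$, $(a,b)$, and $\lambda$. I would apply Lemma~\ref{vandercorput} in the case $k=1$ on each sub-interval $[a,x]\subset[a,b]$: since $\inf_{(a,b)}|\phi'(y)| \geq \lambda > 0$ and $\phi'$ is monotonic (this monotonicity of $\phi'$ is implicit from the $k=1$ hypothesis of Lemma~\ref{vandercorput} and can be assumed in all the applications made of this corollary in Section~\ref{FT.section}), the lemma produces a constant $c_{1}$ independent of $x$ such that
\[
|F(x)| = \left|\int_{a}^{x} e^{i\phi(y)}\, dy\right| \;\leq\; c_{1}\,\lambda^{-1}.
\]

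Combining these two displays yields the claimed estimate. The step I expect to be the only one requiring care is the application of Lemma~\ref{vandercorput}: if for some applications $\phi'$ is not monotonic on the whole of $(a,b)$, one partitions $(a,b)$ into finitely many sub-intervals on which $\phi'$ is monotonic, applies the bound $c_{1}\lambda^{-1}$ on each, and sums — producing the same estimate up to an absolute multiplicative constant. Aside from this (minor) bookkeeping, the proof is a direct two-line assembly: antiderivative, integration by parts, Van der Corput.
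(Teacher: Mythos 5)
Your proof is correct and is essentially the proof the paper implicitly invokes: the paper's entire ``proof'' of Corollary~\ref{vanitI} is the citation to Chapter~VIII of~\cite{BigStein}, and the argument recorded there is precisely yours (write $F(x)=\int_a^x e^{i\phi}$, integrate by parts to transfer $\psi$ onto $F'$, then bound $\sup|F|$ by the $k=1$ case of Van der Corput). You are also right to flag the monotonicity point: as stated, Corollary~\ref{vanitI} does not assume $\phi'$ is monotonic, yet Lemma~\ref{vandercorput} with $k=1$ requires it; this hypothesis has to be supplied either by partitioning $(a,b)$ into finitely many pieces on which $\phi'$ is monotonic (incurring a constant proportional to the number of pieces) or by checking that $\phi'$ is monotonic in each application, which is what the paper actually does — compare the explicitly stated monotonicity hypothesis in the companion Corollary~\ref{vanitII}, and note that in Lemma~\ref{stat.phase.hard} the phase $\Phi(t,\omega)$ has $\partial^2_{\omega\omega}\Phi<0$ everywhere so $\partial_\omega\Phi$ is monotonic, which justifies the applications of Corollary~\ref{vanitI} on the interval $I_2$.
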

		\begin{proof}See Chapter VIII of~\cite{BigStein}.
		\end{proof}
		\begin{cor}\label{vanitII}Let $\phi:[a,b] \to \mathbb{R}$ be smooth, suppose $\left|\phi'\right|$ never vanishes and is monotonically decreasing, and let $\psi: (a,b) \to \mathbb{C}$ be $C^1$ with a continuous extension to $[a,b]$. Then
			\[\left|\int_a^be^{i\phi(x)}\psi(x)\, dx\right| \lesssim \frac{\limsup_{x\to b}\left|\psi(x)\right|}{\left|\phi'(b)\right|} +  \int_a^b\frac{\left|\psi'(x)\right|}{\left|\phi'(x)\right|}\, dx. \]
			The implied constant is independent of $\phi$, $(a,b)$, and $\psi$.
		\end{cor}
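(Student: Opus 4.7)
The natural first attempt would be to integrate by parts via $e^{i\phi} = (i\phi')^{-1}(e^{i\phi})'$, which produces a boundary contribution at $b$ of the correct form $|\psi(b)|/|\phi'(b)|$ but also a boundary term at $a$ and a volume term involving $\psi\phi''/(\phi')^2$ that do not obviously fit in the stated bound. Although one could try to absorb these using the monotonicity of $1/|\phi'|$ (which follows from $|\phi'|$ decreasing), the cleaner route is to instead feed the problem directly into Lemma~\ref{vandercorput} via a Fubini trick.

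The plan is as follows. First, since $\psi$ extends continuously to $[a,b]$, $\psi(b)$ is well-defined and $\limsup_{x\to b}|\psi(x)|=|\psi(b)|$, so I may write $\psi(x)=\psi(b)-\int_x^b\psi'(t)\,dt$ and split
\[
\int_a^b e^{i\phi(x)}\psi(x)\,dx \;=\; \psi(b)\int_a^b e^{i\phi(x)}\,dx \;-\; \int_a^b e^{i\phi(x)}\!\int_x^b\psi'(t)\,dt\,dx.
\]
Since $\phi'$ is continuous and nonvanishing, it has a fixed sign, so $\phi'$ itself is monotonic (decreasing if positive, increasing if negative) and $\inf_{[a,b]}|\phi'|=|\phi'(b)|$ by the assumed monotonic decrease of $|\phi'|$. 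Applying Lemma~\ref{vandercorput} with $k=1$ to the first integral thus controls the first piece by $|\psi(b)|/|\phi'(b)|$, which is the desired boundary term.

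For the second piece, I would swap the order of integration via Fubini, obtaining
\[
\int_a^b e^{i\phi(x)}\!\int_x^b\psi'(t)\,dt\,dx \;=\; \int_a^b \psi'(t)\!\int_a^t e^{i\phi(x)}\,dx\,dt,
\]
and then apply Lemma~\ref{vandercorput} a second time, now on the inner integral over $[a,t]$. Here the monotonic decrease of $|\phi'|$ is crucial: it forces $\inf_{[a,t]}|\phi'|=|\phi'(t)|$, giving $\bigl|\int_a^t e^{i\phi(x)}\,dx\bigr|\lesssim 1/|\phi'(t)|$. Inserting this into the outer integral yields the bound $\int_a^b |\psi'(t)|/|\phi'(t)|\,dt$, completing the estimate.

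No step is a real obstacle; the only mild subtlety is the need for the constant in Lemma~\ref{vandercorput} to be independent of the endpoints (which it is), so that applying it on the sliding interval $[a,t]$ uniformly in $t$ produces the pointwise factor $1/|\phi'(t)|$ inside the outer integral. The absolute-value monotonicity assumption on $\phi'$ is used in exactly two places: once to evaluate the infimum on $[a,b]$ and once to evaluate it on each $[a,t]$; without it, one would at best obtain $1/\inf_{[a,b]}|\phi'|$ in both integrals, which is strictly weaker than the claimed estimate.
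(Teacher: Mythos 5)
Your decomposition of $\psi(x)=\psi(b)-\int_x^b\psi'(t)\,dt$ followed by Fubini is algebraically identical to the paper's integration by parts against the antiderivative $F(x)=\int_a^x e^{i\phi(y)}\,dy$: both yield $\psi(b)F(b)-\int_a^b\psi'(t)F(t)\,dt$ and then invoke Lemma~\ref{vandercorput} on the sliding interval $[a,t]$, using the monotonic decrease of $|\phi'|$ (and the consequent monotonicity of $\phi'$ itself) to get $|F(t)|\lesssim 1/|\phi'(t)|$. Your proof is correct and takes essentially the same approach as the paper's.
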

		\begin{proof}We set $F(x) = \int_a^xe^{i\phi(y)}\, dy$. Then we have
			\begin{align}
				\left|\int_a^be^{i\phi(x)}\psi(x)\, dx\right| &= \left|\int_a^b F'(x) \psi(x)\, dx\right|
				\\ \nonumber &\leq \left|F(b)\right|\limsup_{x\to b}\left|\psi(x)\right| + \int_a^b\left|\psi'(x)\right|\left|F(x)\right|\, dx.
			\end{align}
			The proof then follows from Lemma~\ref{vandercorput}.
		\end{proof}

		We now apply the above stationary phase lemmas from above in order to prove a lemma which will help to control the errors appearing in \eqref{main.formula}.
		\begin{lemma}\label{stat.phase.hard}Let  $A$ be a positive constant satisfying $A\gtrsim 1$, $\delta > 0$ be sufficiently small, and $F :[0,m) \to \mathbb{C}$ be $C^1$. For any $\theta \in [0,1]$, when $ \frac{t}{A}  \gg_{\theta} 1$, we have
			\begin{align}\label{expandityay}
				\bigl|\int_{(1-\theta)m}^{m}e^{-it\omega}e^{i A \left(m^2-\omega^2\right)^{-1/2}}F(\omega)\, d\omega\bigr| \ls_{\delta}  [\frac{t}{A}]^{-1+\delta} \left(\sup_{0 \leq \omega < m} \frac{k^{\frac{1}{2}}}{\log(k)} |F|(\omega)+ \sup_{ 0 \leq \omega < m} k^{-2-\delta} |\rd_{\omega}F|(\omega) \right).
			\end{align}
		\end{lemma}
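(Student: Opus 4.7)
The phase $\Phi(\omega) = -t\omega + A(m^2-\omega^2)^{-1/2}$ has a unique non-degenerate critical point $\omega_c \in ((1-\theta)m, m)$ for $t/A$ sufficiently large (depending on $\theta$), determined by $t = A\omega_c(m^2-\omega_c^2)^{-3/2}$, and satisfying $m-\omega_c \sim (A/t)^{2/3}$, $\Phi''(\omega_c) \sim A^{-2/3}t^{5/3}$. A heuristic stationary-phase calculation gives decay $\sim |\Phi''(\omega_c)|^{-1/2}|F(\omega_c)| \sim A^{1/2}t^{-1}\log(t/A)$, which fits inside the claimed $(t/A)^{-1+\delta}$. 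The challenge is to obtain this rigorously under only $C^1$ regularity of $F$ with the mild growth $|F'|\lesssim k^{2+\delta}$. Writing $C_F$ for the bracketed factor on the right-hand side of~\eqref{expandityay}, the hypotheses read $|F|\lesssim C_F k^{-1/2}\log k$ and $|F'|\lesssim C_F k^{2+\delta}$.

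The first step is to substitute $u=k(\omega)=Mm^2(m^2-\omega^2)^{-1/2}$, which maps $[(1-\theta)m,m)$ monotonically onto $[u_0,\infty)$ (with $u_0 = k((1-\theta)m)\lesssim_\theta 1$) and converts $e^{iA(m^2-\omega^2)^{-1/2}}$ into $e^{icu}$ with $c=A/(Mm^2)$. The integral becomes $\int_{u_0}^\infty e^{i\Psi(u)}G(u)\,du$ with $\Psi(u) = cu - t\omega(u)$, $G(u) = F(\omega(u))\partial_u\omega(u)$, and $\partial_u\omega(u) = M^2m^4 u^{-3}\omega(u)^{-1}$. Since $\partial_u\omega\sim u^{-3}$ and $k(\omega(u))=u$, the pointwise bounds give $|G(u)|\lesssim C_F u^{-7/2}\log u$ and $|G'(u)|\lesssim C_F u^{-4+\delta}$ (the dominant contribution coming from $F'(\omega)(\partial_u\omega)^2$). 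The new phase satisfies $\Psi'(u)=c-t\partial_u\omega(u)$ and $\Psi''(u)\asymp tu^{-4}$ near the critical point $u_c\sim (t/A)^{1/3}$; moreover $\Psi'$ is monotonically increasing (since $\Psi''>0$) and vanishes only at $u_c$.

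I would then partition $[u_0,\infty) = I_-\cup I_c\cup I_+$ with $I_-=[u_0,u_c-\eta]$, $I_c=[u_c-\eta,u_c+\eta]$, $I_+=[u_c+\eta,\infty)$, for a parameter $\eta>0$ to be chosen. On $I_-$ (where $|\Psi'|$ is monotonically decreasing toward $u_c$) and on $I_+$ (where $|\Psi'|$ is monotonically increasing away from $u_c$), a single integration by parts in the direction of oscillation (as in the proof of Corollary~\ref{vanitII}, suitably reflected on $I_+$) yields
\begin{equation*}
\Bigl|\int_{I_\pm}e^{i\Psi}G\,du\Bigr|\lesssim \frac{|G(u_c\pm\eta)|}{|\Psi'(u_c\pm\eta)|}+\int_{I_\pm}\frac{|G'(u)|}{|\Psi'(u)|}\,du\lesssim C_F\frac{u_c^{1/2}\log u_c}{t\eta}+C_F\frac{u_c^\delta \log(u_c/\eta)}{t},
\end{equation*}
where the boundary estimate uses $|\Psi'(u_c\pm\eta)|\gtrsim\eta tu_c^{-4}$, and the $G'$-integral is split at $u=2u_c$ with $|\Psi'|^{-1}\asymp u_c^4/(t|u-u_c|)$ near $u_c$ and $|\Psi'|^{-1}\lesssim 1/c$ for $u\geq 2u_c$. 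On $I_c$, the trivial bound gives $\int_{I_c}|G|\,du\lesssim C_F\eta u_c^{-7/2}\log u_c$. Choosing $\eta\sim u_c^2/\sqrt{t}\sim (t/A)^{2/3}t^{-1/2}$ balances the first IBP term against the $I_c$ bound, both becoming $\lesssim C_F u_c^{-3/2}\log(u_c)/\sqrt{t}\sim C_F A^{1/2}(t/A)^{-1}\log(t/A)$, which is absorbed into $C_F(t/A)^{-1+\delta}$ for $t/A$ large enough (in terms of $\delta$ and $A$). The remaining IBP term, $\lesssim C_F(t/A)^{\delta/3}\log(t/A)/t$, is similarly absorbed.

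The principal technical obstacle is the balancing step: because $G$ is only $C^1$ with $|G'|$ mildly growing ($\sim u^{-4+\delta}$) while $1/|\Psi'|$ has a non-integrable singularity $\sim 1/(u-u_c)$ at the critical point, a second integration by parts is not available. The cutoff $\eta$ must be chosen to trade the boundary contribution (which grows as $\eta\to 0$) against the interior trivial bound (which grows with $\eta$), and the optimal scaling $\eta^2\sim u_c^4/t$ reproduces the stationary-phase rate up to logarithmic factors. The $\delta$-loss in the final exponent is the direct footprint of the $k^{2+\delta}$ hypothesis on $|F'|$; with $C^2$ regularity one could perform a second integration by parts and recover the sharp $(t/A)^{-1}$ rate without loss.
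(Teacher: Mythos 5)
Your proof is correct but takes a genuinely different route from the paper. You change variables to $u = k(\omega) = Mm^2(m^2-\omega^2)^{-1/2}$, which linearizes the singular factor into $e^{icu}$ and pushes the problematic endpoint $\omega=m$ to $u=\infty$; the paper works directly in $\omega$. More substantively, on the central interval you use the trivial $L^1$ bound $\int_{I_c}|G|\lesssim\eta\sup|G|$ and optimize the half-width $\eta$ by balancing against the integration-by-parts boundary terms, whereas the paper instead invokes the second-derivative Van der Corput lemma (Corollary~\ref{vanitI}) on an interval of half-width $\tilde t^{-5/6}$. These are equivalent in strength: your optimal $\eta\sim u_c^2/\sqrt t$ corresponds to $\sim\tilde t^{-5/6}A^{-1/2}$ in the $\omega$-variable, and the balanced bound $u_c^{-3/2}\log u_c/\sqrt t$ equals, up to constants, the Van der Corput estimate $|\Psi''(u_c)|^{-1/2}\sup|G|$; effectively you rederive the $k=2$ Van der Corput bound inline, which makes your argument slightly more self-contained. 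Two small corrections to the write-up, neither of which affects the conclusion: (i) the balanced bound should read $C_F\,A^{-1/2}(t/A)^{-1}\log(t/A)$, not $C_F\,A^{1/2}(t/A)^{-1}\log(t/A)$ --- this matches your own heuristic $A^{1/2}t^{-1}\log(t/A)$ after rewriting $t^{-1}=A^{-1}(t/A)^{-1}$, and is important because with $A^{+1/2}$ the estimate would fail to close when $A$ is large; (ii) on $I_-$ away from the critical point one should note $|\Psi'(u)|\approx t u^{-3}$ uniformly for $u\le u_c/2$, so $\int_{u_0}^{u_c/2}|G'|/|\Psi'|\lesssim t^{-1}\int u^{-1+\delta}\,du\lesssim u_c^\delta/t$ independently of $u_0$, confining the $\theta$-dependence to the hypothesis $t/A\gg_\theta 1$.
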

		\begin{proof}Throughout we assume that $\frac{t}{A} \gg 1$. We emphasize that the implied constants (as per the convention of the paper) may depend on $m$, but will not depend on $A$ or $\omega$. We also allow, in this proof, for all implied constants to depend on $\delta$. It is convenient to set
			\[D \doteq \sup_{0 \leq \omega < m} \frac{k^{\frac{1}{2}}}{\log(k)} |F|(\omega)+ \sup_{ 0 \leq \omega < m} k^{-2-\delta} |\rd_{\omega}F|(\omega).\]
			
			We define the phase function
			\[\Phi\left(t,\omega\right) \doteq \omega - \frac{A}{t} \left(m^2-\omega^2\right)^{-1/2}.\]
			We have
			\begin{equation}\label{phiprimeprime}	
				\rd_{\omega}\Phi(t,\omega) = 1 - t^{-1} \omega A\left(m^2-\omega^2\right)^{-3/2},
			\end{equation}
			\[\rd_{\omega \omega}^2\Phi=- \frac{A}{t} \left(\left(m^2-\omega^2\right)^{-\frac{3}{2}} + 3\omega^2 \left(m^2-\omega^2\right)^{-\frac{5}{2}}\right).\]
			
			In particular, we note that $\partial_{\omega\omega}^2\Phi < 0$ everywhere. One then easily shows that for each value of $t$, there exists a unique value $\omega_c\left(t\right)$, so that
			\[\rd_{\omega}\Phi\left(t,\omega_c\left(t\right)\right) = 0.\]
			And moreover, $\omega_c(t)>0$ obeys the following asymptotics	 	\[\omega_c(t)  = m - \frac{1}{2m}(\frac{t}{Am})^{-\frac{2}{3}} + O\left(\frac{t}{A}\right)^{-\frac{4}{3}}.\]
			In what follows, we will use the notation $\tilde{t}=\frac{t}{A}$. As a consequence of the formulas for $\partial_{\omega}\Phi$ and $\partial^2_{\omega\omega}\Phi$ we can find a small constant $b > 0$ and a large constant $B > 0$ (both independent of $A$ and $t$) so that
			\begin{align}\label{lowerboundphi1}
				\omega \in [m-B\tilde{t}^{-2/3},m-b\tilde{t}^{-2/3}] &\Rightarrow \left|\partial_{\omega}\Phi\right| \gtrsim \tilde{t}^{2/3}\left|\omega-\omega_c\right|,
				\\ \label{lowerboundphi2} \omega \in [0,m-B\tilde{t}^{-2/3}] &\Rightarrow \left|\partial_{\omega}\Phi\right| \gtrsim 1,
				\\ \label{lowerboundphi3} \omega \in [m-b\tilde{t}^{-2/3},m] &\Rightarrow \left|\partial_{\omega}\Phi\right| \gtrsim \tilde{t}^{-1}\left(m-\omega\right)^{-3/2}.
			\end{align}

			We split the interval $[0,m]$ into three sub-intervals as such, defining $\omega_{\pm}(t) = \omega_c(t)\pm \tilde{t}^{-\beta} $: \begin{align*}
				& 	I_1 = [(1-\theta)m, \omega_-(t)],\\   & I_2 = [\omega_-(t), \omega_+(t)],\\  &I_3 = [\omega_+(t), m],
			\end{align*}
			for some $\beta> \frac{2}{3}$ to be determined later (recall that $m-\omega_c(t)=O(\tilde{t}^{-\frac{2}{3}})$). Starting with $I_1$, note that for all $\omega \in I_1$ we have \begin{align*}&  |\rd_{\omega} \Phi|(t,\omega) \gtrsim  {\rm min}\left(1,\tilde{t}^{\frac{2}{3}} |\omega-\omega_c(t)|\right).
			\end{align*} Hence,  Corollary~\ref{vanitII} yields
			\begin{align}
				&\left|\int_{I_1}e^{-it\Phi(\omega)}F\left(\omega\right)\, dw\right| \lesssim t^{-1}\tilde{t}^{-2/3+\beta}\left|F\left(\omega_-(t)\right)\right| + t^{-1}\int_{(1-\theta)m}^{\omega_-(t)}\frac{\left|\partial_{\omega}F\right|}{\left|\partial_{\omega}\Phi\right|}\, d\omega
				\\ \nonumber &\lesssim D\left[t^{-1}\tilde{t}^{-2/3+\beta}\tilde{t}^{-1/6+\delta} + t^{-1}\left(\int_{(1-\theta)m}^{m-B\tilde{t}^{-2/3}} +\int_{m-B\tilde{t}^{-2/3}}^{\omega_-(t)} \right)\left(\omega-m\right)^{-1-\delta/2}\left({\rm min}\left(1,\tilde{t}^{2/3}\left|\omega-\omega_c(t)\right|\right)\right)^{-1}\, d\omega\right]
				\\ \nonumber &\lesssim D \left(\tilde{t}^{-11/6+\beta + \delta} + \tilde{t}^{-1+\frac{\delta}{3}}\right).
			\end{align}
			
			Now we turn to $I_2$. From~\eqref{phiprimeprime}, one may easily check that
			\[{\rm inf}_{\omega \in [\omega_-,\omega_+]} \left|\partial^2_{\omega\omega}\Phi\right|\gtrsim \tilde{t}^{2/3}.\]
			Thus, we may apply Corollary~\ref{vanitI} to obtain
			\begin{align}
				\left|\int_{I_2}e^{-it\Phi}F(\omega)\, d\omega \right| &\lesssim t^{-1/2} \tilde{t}^{-1/3}\left[\sup_{\omega \in [\omega_-,\omega_+]}\left|F(\omega)\right| + \int_{\omega_-}^{\omega_+}\left|\partial_{\omega}F\right|\, d\omega\right]
				\\ \nonumber &\lesssim D  t^{-1/2}\tilde{t}^{-1/3-1/6+\delta} + D  t^{-1/2}\tilde{t}^{-1/3}\tilde{t}^{(2/3)(1+\delta/2)-\beta}
				\\ \nonumber &\lesssim D\left(\tilde{t}^{-1+\delta} + \tilde{t}^{-5/6 + (2/3-\beta) + \delta}\right)
			\end{align}
			
			Now we come to $I_3$.  For this we may apply Corollary~\ref{vanitII} (after sending $x\mapsto -x$) to obtain that (keeping~\eqref{lowerboundphi2} and~\eqref{lowerboundphi3} in mind):
			\begin{align}
				&\left|\int_{\omega_+}^me^{-it\Phi}F(\omega)\, d\omega\right| \lesssim  t^{-1}\tilde{t}^{-2/3+\beta}\left|F\left(\omega_+\right)\right| + t^{-1}\int_{\omega_+}^m\frac{\left|\partial_{\omega}F\right|}{\left|\partial_{\omega}\Phi\right|}\, d\omega
				\\ \nonumber &\lesssim D\Bigg[t^{-1}\tilde{t}^{-2/3+\beta}\tilde{t}^{-1/6+\delta} + t^{-1}\tilde{t}^{(2/3)(1+\delta/2) - (2/3)}\int_{\omega_+}^{m-b\tilde{t}^{-2/3}}\left|\omega-\omega_c\right|^{-1}\, d\omega
				\\ \nonumber &\qquad \qquad \qquad \qquad \qquad \qquad \qquad \qquad \qquad \qquad \qquad \qquad + \int_{m-b\tilde{t}^{-2/3}}^m\left(m-\omega\right)^{1/2-\delta/2}\, d\omega\Bigg]
				\\ \nonumber &\lesssim D\left[\tilde{t}^{-11/6+\beta + \delta} + \tilde{t}^{-1+\delta}\right]
			\end{align}

			We now choose $\beta=\frac{5}{6}$, which concludes the proof of the lemma.
			
		\end{proof}
		Now we turn to the generation of the main term in  \eqref{main.formula}.
		\begin{lemma}\label{stat.phase.medium} Let $m > 0$, $A>0$, $\theta \in (0,1)$. When $ \frac{t}{A}  \gg 1$,  we have, for any small $\delta>0$:
			\begin{align}
				& \int_{(1-\theta)m}^{m}e^{-it\omega}e^{i A \left(m^2-\omega^2\right)^{-\frac{1}{2}}}\, d\omega = e^{-it\omega_c(t)+i A \left(m^2-\omega_c^2(t)\right)^{-\frac{1}{2}}}  (\frac{2\pi i}{3 m^{\frac{1}{3}} A})^{\frac{1}{2}} (\frac{t}{A})^{-\frac{5}{6}} + O_{\delta}\left(\tilde{t}^{-1+\delta}\right),\\ 
			\end{align}
			where the function $\omega_c(t)$ corresponds to the unique critical point of the phase function
			\[\Phi\left(t,\omega\right) \doteq \omega - \frac{A}{t} \left(m^2-\omega^2\right)^{-1/2},\]
			and satisfies
			\begin{align}
				& \omega_c(t)= m - \frac{1}{2m}(\frac{t}{Am})^{-\frac{2}{3}} + O\left(\frac{t}{A}\right)^{-\frac{4}{3}},
				\\  & -t\omega_c(t) +t A \left(m^2-\omega_c^2(t)\right)^{-\frac{1}{2}}= -m t  +\frac{3}{2 }( \frac{A^2 t}{m})^{\frac{1}{3}} +O_{m}(t^{-\frac{1}{3}}).
			\end{align}
		\end{lemma}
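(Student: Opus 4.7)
This is a direct application of the method of stationary phase to the single-critical-point phase $\phi(t,\omega) \doteq -t\omega + A(m^2-\omega^2)^{-1/2}$. The two asymptotic identities in the statement are purely algebraic: the critical-point equation $\partial_\omega\phi(\omega_c) = 0$ reads $t = A\omega_c(m^2-\omega_c^2)^{-3/2}$, so setting $\omega_c = m-\epsilon$ and using $m^2-\omega^2 = 2m\epsilon(1+O(\epsilon))$ converts this to $(2m\epsilon)^{3/2} = (Am/t)(1+O(\epsilon))$. The implicit function theorem then yields $\epsilon = (2m)^{-1}(Am/t)^{2/3}(1+O(\tilde{t}^{-2/3}))$ with $\tilde{t} \doteq t/A$, which is the claimed expansion of $\omega_c(t)$. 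Substituting back and using $A(m^2-\omega_c^2)^{-1/2} = A(2m\epsilon)^{-1/2}(1+O(\epsilon)) = A^{2/3}(t/m)^{1/3}(1+O(\tilde{t}^{-2/3}))$ together with $t\epsilon = \tfrac{1}{2}(A^2t/m)^{1/3}(1+O(\tilde{t}^{-2/3}))$ produces $\phi(t,\omega_c) = -mt + \tfrac{3}{2}(A^2t/m)^{1/3} + O_m(t^{-1/3})$.

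For the main-term extraction I compute $\partial_\omega^2\phi(\omega) = A(m^2-\omega^2)^{-3/2} + 3A\omega^2(m^2-\omega^2)^{-5/2}$, so that $\partial_\omega^2\phi(t,\omega_c) = 3m^{1/3}A^{-2/3}t^{5/3}(1+O(\tilde{t}^{-2/3}))$; similarly $|\partial_\omega^k\phi(t,\omega)| \lesssim A^{(2-2k)/3}t^{(2k+1)/3}$ uniformly on any $\tilde{t}^{-\beta}$-neighborhood of $\omega_c$ for $\beta > 2/3$ and $k \geq 2$. I then split $[(1-\theta)m,m]$ into $I_1 = [(1-\theta)m,\omega_-]$, $I_2 = [\omega_-,\omega_+]$, $I_3 = [\omega_+,m]$, with $\omega_\pm(t) = \omega_c(t) \pm \tilde{t}^{-\beta}$ for a suitable $\beta$, exactly as in the proof of Lemma~\ref{stat.phase.hard}. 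On $I_1 \cup I_3$, Corollary~\ref{vanitII} applied with $F \equiv 1$, combined with the quantitative lower bounds on $|\partial_\omega\phi|$ away from $\omega_c$ established in the proof of the previous lemma, yields a contribution of order $O_\delta(\tilde{t}^{-1+\delta})$.

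On $I_2$, I Taylor-expand around $\omega_c$,
\begin{equation*}
\phi(t,\omega) = \phi(t,\omega_c) + \tfrac{1}{2}\partial_\omega^2\phi(t,\omega_c)(\omega-\omega_c)^2 + R(t,\omega),
\end{equation*}
change variables to $u \doteq (\partial_\omega^2\phi(t,\omega_c)/2)^{1/2}(\omega-\omega_c)$, with new endpoints $u_\pm = \pm(\partial_\omega^2\phi(t,\omega_c)/2)^{1/2}\tilde{t}^{-\beta} \sim \pm \tilde{t}^{5/6-\beta}$, and separate the leading Gaussian from the Fresnel tail and the Taylor remainder. The classical Fresnel identity $\int_{-\infty}^{+\infty}e^{iu^2}\,du = (\pi i)^{1/2}$, the tail bound $|\int_{|u|>u_+}e^{iu^2}\,du| \lesssim u_+^{-1}$, the exact parity-based cancellation of the odd cubic contribution $\propto \partial_\omega^3\phi(\omega_c)(\omega-\omega_c)^3$ against the even weight $e^{iu^2}$ on the symmetric interval $[-u_+,u_+]$, and standard bounds on the quartic (and higher) Taylor remainders then give
\begin{equation*}
\int_{I_2} e^{i\phi(t,\omega)}\,d\omega = e^{i\phi(t,\omega_c)}\left(\frac{2\pi i}{\partial_\omega^2\phi(t,\omega_c)}\right)^{1/2} + O_\delta(\tilde{t}^{-1+\delta}).
\end{equation*}
Inserting the asymptotics of $\partial_\omega^2\phi(t,\omega_c)$ turns the prefactor into $(2\pi i/(3m^{1/3}A))^{1/2}(t/A)^{-5/6}$, matching the claimed leading coefficient. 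The chief obstacle is the joint choice of $\beta$: it must simultaneously control the endpoint contribution from $I_1 \cup I_3$, the Fresnel-tail error $u_+^{-1}$ (demanding $\beta < 5/6$), and the quartic Taylor remainder inside $I_2$ (demanding $\beta$ not too small); the intermediate choice $\beta = 5/6$ (with an infinitesimal loss absorbed in $\delta$) balances all three errors at $O_\delta(\tilde{t}^{-1+\delta})$, yielding the advertised remainder.
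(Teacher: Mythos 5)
The algebraic preliminaries — the critical-point equation, the expansion of $\omega_c(t)$, the value of $\phi$ at $\omega_c$, and the scaling $\partial_\omega^2\phi(t,\omega_c) \sim 3m^{1/3}A^{-2/3}t^{5/3}$ — are all correct and match the paper's computation. However, the stationary-phase argument on $I_2$ contains a genuine gap, located precisely in the claim that ``the intermediate choice $\beta = 5/6$ (with an infinitesimal loss absorbed in $\delta$) balances all three errors at $O_\delta(\tilde{t}^{-1+\delta})$.''

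The Fresnel-tail error you need to control is the error in the $\omega$-integral, not in the rescaled $u$-integral. After substituting $u = (\phi''_c/2)^{1/2}(\omega-\omega_c)$, the window endpoints are $u_\pm \sim \pm(\phi''_c)^{1/2}\tilde{t}^{-\beta} \sim \pm A^{1/2}\tilde{t}^{5/6-\beta}$. With $\beta = 5/6$ this is a $t$-independent constant, so $\int_{-u_+}^{u_+}e^{iu^2}\,du$ differs from $(\pi i)^{1/2}$ by an $O(1)$ amount and the resulting error in the $\omega$-integral is $O\left((\phi''_c)^{-1/2}\right)$, the same size as the main term. More generally the tail error in $\omega$ is $(\phi''_c)^{-1/2}u_+^{-1}\sim A^{-1}\tilde{t}^{\beta-5/3}$, so to get $O(\tilde{t}^{-1+\delta})$ one must take $\beta \leq 2/3+\delta$ — your stated constraint ``$\beta<5/6$'' (which only ensures $u_+\to\infty$, i.e.\ controls the \emph{relative} error) is far weaker than what is needed. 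The same issue appears in the $I_1\cup I_3$ contribution: by Corollary~\ref{vanitII}, $\bigl|\int_{I_3}e^{i\phi}\,d\omega\bigr|\lesssim |\phi'(\omega_+)|^{-1} \sim (t\,\tilde{t}^{2/3-\beta})^{-1}\sim A^{-1}\tilde{t}^{\beta-5/3}$, again forcing $\beta\leq 2/3+\delta$. But with $\beta\leq 2/3+\delta$, the cubic correction $\tfrac{1}{6}\phi'''_c(\omega-\omega_c)^3 \sim A\tilde{t}^{7/3-3\beta}\gtrsim A\tilde{t}^{1/3-3\delta}$ is \emph{large} near the edges of $I_2$, so $e^{iR}$ cannot be expanded as $1+iR-\tfrac12 R^2+\cdots$ and the parity cancellation of the cubic, as well as your quartic remainder bounds, no longer applies. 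The two constraints (from the tails: $\beta\leq 2/3+\delta$; from the validity of the phase-Taylor expansion: $\beta\gtrsim 7/9$) are incompatible, so there is no admissible $\beta$ for this decomposition.

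The paper escapes the dichotomy with a different device. It takes the inner interval of half-width $\tilde{t}^{-2/3-\delta}$ (your $\beta = 2/3+\delta$), which does control the endpoint and Fresnel-tail errors, and then — instead of Taylor-expanding the phase — makes the Morse-type change of variable $x = \sqrt{(\Phi(\omega)-\Phi(\omega_c))/\Phi''(\omega_c)}$, which renders the phase \emph{exactly} quadratic with no remainder at all. The Taylor expansion is then performed on the Jacobian $\tilde H(t,x) = d\omega/dx$ rather than the phase; its coefficients satisfy $\tilde a_n(t) \lesssim \tilde{t}^{2n/3}$, so $\tilde a_n|\omega-\omega_c|^n \lesssim \tilde{t}^{-n\delta}$ is genuinely small on the inner interval, unlike the phase Taylor remainders. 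The resulting $J_n$ integrals are then bounded by iterated integration by parts. This exact phase-straightening step is what your proposal is missing.
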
\begin{proof}
			We assume familiarity with the proof and notation of  Lemma~\ref{stat.phase.hard}. Let $\delta > 0$ be sufficiently small. By Corollary~\ref{vandercorput}, we have
			\[\int_{(1-\theta)m}^me^{-it\Phi}\, d\omega =\int_Ie^{-it\Phi}\, d\omega + O\left(\tilde{t}^{-1+\delta}\right),\]
			where $I \doteq [w_c(t) - \tilde{t}^{-2/3-\delta},w_c(t) + \tilde{t}^{-2/3-\delta}] \doteq [\tilde{\omega}_-,\tilde{\omega}_+].$

			Evaluating at $\omega_c\left(t\right)$ yields:
			
			\begin{equation}\label{secondder}
				\rd_{\omega \omega}^2	\Phi\left(t,\omega_c\right)
				= -3m^{\frac{1}{3}}(\frac{t}{A})^{\frac{2}{3}}\left(1+O(\frac{t}{A})^{-\frac{2}{3}}\right).
			\end{equation}
			Furthermore, for each $n \geq 2$, we have the upper bound
			\[\sup_{\omega \in I}\left|\left(\frac{\partial}{\partial \omega}\right)^n\Phi\right| \lesssim_n \tilde{t}^{\frac{2(n-1)}{3}}.\]
			Now let  $N$ be a positive integer, depending on $\delta$, so that $N\delta \gg 1$. Taylor expanding yields that, for any $\omega \in I$, we have that
			\begin{equation}\label{taylor2}
				\Phi\left(t,\omega\right) = \Phi\left(t,\omega_c\right) + \frac{1}{2}\partial_{\omega\omega}^2\Phi\left(t,\omega_c\right)\left(\omega-\omega_c\right)^2 + a_n(t)\sum_{n=3}^N\left(\omega-\omega_c\right)^n + e_N(t,\omega),
			\end{equation}
			where 
			\[\left|a_n(t)\right| \lesssim_n \tilde{t}^{\frac{2(n-1)}{3}},\qquad \left|e_N(t)\right| \lesssim_N \tilde{t}^{\frac{2N}{3}}\left(\omega-\omega_c\right)^{N+1}.\]
			
			In view of~\eqref{taylor2} and~\eqref{secondder}, we have that $\Phi\left(t,\omega\right) - \Phi\left(t,\omega_c\right)$ is negative on $I$. In particular, we may define a new variable of integration $x$ by
			\begin{equation}\label{formforx}
				x \doteq \sqrt{\frac{\Phi\left(t,\omega\right) - \Phi\left(t,\omega_c\right)}{\partial^2_{\omega\omega}\Phi\left(t,\omega_c\right)}}.
			\end{equation}
			We then have 
			\begin{equation}\label{whatnow}
				\int_Ie^{-it\Phi}\, d\omega = 2e^{-it\Phi\left(t,\omega_c\right)}\int_{x\left(\tilde{\omega}_-\right)}^{x\left(\tilde{\omega}_+\right)}e^{-i t x^2\partial^2_{\omega\omega}\Phi\left(t,\omega_c\right)}\, \underbrace{ \frac{\left|\Phi\left(t,\omega\right) - \Phi\left(t,\omega_c\right)\right|^{1/2}\partial^2_{\omega\omega}\Phi\left(t,\omega_c\right)}{\Phi_{\omega}\left(t,\omega\right)\left|\partial^2_{\omega\omega}\Phi\left(t,\omega_c\right)\right|^{1/2}}}_{\doteq \tilde{H}\left(t,x\right)} dx
			\end{equation}
			Note that we have
			\[\tilde{H}\left(t,x\left(\omega_c\right)\right) =\tilde{H}\left(t,0\right)= \frac{\sqrt{2}}{2},\]
			\begin{equation}\label{derboundderbound}
				\sup_{\omega \in I}\left|\frac{\partial \omega}{\partial x}\right| =  \sup_{\omega \in I}\left|\tilde{H}\right| \lesssim 1.
			\end{equation}
			Furthermore, a calculation with~\eqref{taylor2} and~\eqref{formforx} yields 
			\begin{equation}\label{taylortildeH}
				\tilde{H}\left(t,x\right) = \frac{\sqrt{2}}{2} + \sum_{n=1}^{N-2}\tilde{a}_n(t)\left(\omega(x)-\omega_c\right)^n + \tilde{e}_{N-2}(t,\omega(x)),
			\end{equation}
			where
			\[\left|\tilde{a}_n(t)\right| \lesssim_n \tilde{t}^{\frac{2n}{3}},\qquad \left|\tilde{e}_{N-2}\right|\lesssim \tilde{t}^{\frac{2(N-1)}{3}}\left(\omega(x)-\omega_c\right)^{N-1}.\]
			We then split~\eqref{whatnow} into two pieces
			\begin{align}\label{splittingit}
				&2e^{-it\Phi\left(t,\omega_c\right)}\int_{x\left(\tilde{\omega}_-\right)}^{x\left(\tilde{\omega}_+\right)}e^{-i t x^2
					\partial^2_{\omega\omega}\Phi\left(t,\omega_c\right)}\,\tilde{H}\left(t,x\right) dx = \sqrt{2} e^{-it\Phi\left(t,\omega_c\right)} \int_{x\left(\tilde{\omega}_-\right)}^{x\left(\tilde{\omega}_+\right)}e^{-i t x^2\partial^2_{\omega\omega}\Phi\left(t,\omega_c\right)}dx 
				\\ \nonumber &\qquad + 2e^{-it\Phi\left(t,\omega_c\right)}\int_{x\left(\tilde{\omega}_-\right)}^{x\left(\tilde{\omega}_+\right)}e^{-i t x^2\partial^2_{\omega\omega}\Phi\left(t,\omega_c\right)}\, \left(\tilde{H}\left(t,x\right)-\tilde{H}\left(t,x\left(\omega_c\right)\right)\right)dx \doteq I_1+I_2.
			\end{align}
			
			For the first integral, a standard argument (see Proposition 3 in Chapter VIII of~\cite{BigStein}) yields that 
			\[I_1 = e^{-it\omega_c(t)+i A \left(m^2-\omega_c^2(t)\right)^{-\frac{1}{2}}}  (\frac{2\pi i}{3 m^{\frac{1}{3}} A})^{\frac{1}{2}} (\frac{t}{A})^{-\frac{5}{6}} + O\left(\tilde{t}^{-1+\delta}\right).\]
			Next, setting $x_{\pm} \doteq x\left(\tilde{\omega}_{\pm}\right)$, we consider the integrals
			\[J_n \doteq \tilde{a}_n(t)\int_{x_-}^{x_+}e^{-i t x^2\partial^2_{\omega\omega}\Phi\left(t,\omega_c\right)}\left(\omega-\omega_c\right)^ndx.\]
			We carry out a change of variables $u = t\partial_{\omega\omega}^2\Phi\left(t,\omega_c\right)x^2$ on $[x_-,0]$ and $[0,x_+]$ separately, and we obtain that 
			\begin{align}\label{firstboundyayi2}
				\left|J_n\right| &\lesssim \tilde{t}^{-\frac{n}{6}-\frac{5}{6}}\left[\left|\int_{u\left(x_-\right)}^0e^{iu}\frac{(\omega(u)-\omega_c)^n}{(x(u))^n}u^{\frac{n-1}{2}}\, du\right|+\left|\int_0^{u\left(x_+\right)}e^{iu}\frac{(\omega(u)-\omega_c)^n}{(x(u))^n}u^{\frac{n-1}{2}}\, du\right|\right]
				\\ \nonumber &\lesssim_n \tilde{t}^{\frac{1}{6}\left(-n + n-1\right) - 5/6} \\ \nonumber &\lesssim \tilde{t}^{-1}.
			\end{align}
			To obtain the second  inequality we have integrated by parts (repeatedly) with the exponential, used that $\left|u(x_{\pm})\right|^{\frac{n-1}{2}} \lesssim \tilde{t}^{\frac{n-1}{6}}$, used that $x \sim \left(\omega-\omega_+\right)$ within $I$, used that $\left|x^{-1}\partial_ux\right| \lesssim u^{-1}$, and also~\eqref{derboundderbound}. 
			
			Next, we note that we have
			\[\left|x\left(\omega_+\right) - x\left(\omega_-\right)\right| \lesssim \tilde{t}^{-2/3-\delta}.\]
			In view of bound for $J_n$ and~\eqref{taylortildeH}, we have 
			\begin{align}
				\left|I_2\right| &\lesssim_n \tilde{t}^{-1} + \tilde{t}^{\frac{2(N-1)}{3}}\int_{x_-}^{x_+}\left|\omega-\omega_c\right|^{N-1}\, dx
				\\ \nonumber &\lesssim \tilde{t}^{-1} + \tilde{t}^{\frac{2(N-1)}{3}}\tilde{t}^{-2/3-\delta}\tilde{t}^{-\left(2/3+\delta\right)(N-1)} \\ \nonumber &\lesssim \tilde{t}^{-1} + \tilde{t}^{-2/3 -\delta N} 
				\\ \nonumber &\lesssim \tilde{t}^{-1}.
			\end{align}
			Summing up all of the bounds then completes the proof.

		\end{proof}

		\subsection{Decay for  inhomogeneous Klein--Gordon equation with zero  data}
		
		In this subsection, we will establish the decay result, combining the frequency-analysis of Section~\ref{yakov.section} and Section~\ref{maxime.section}  with the stationary phases results of Section~\ref{stat.section}. We will split the three regimes: $\omega \in  \left((1-\theta)m, m \right)$ (by far the most delicate), $\omega \in  \left(m, (1+\theta)m \right)$ and  $\omega > (1+\theta) m $ (the easiest regime). 
		
		Before we turn to the decay result, we record a useful difference estimate between  $\check{u}_H(\omega,r)$ and $\check{u}_H(m,r)$.  \begin{lemma}\label{uH.lemma} We introduce the notations $\uhs=  \check{u}_H(m,r)$ and $\delta\check{u}_H(\omega,r) = \check{u}_H(\omega,r)-\uhs$.  Let $\theta \in (0,1)$. For any $R > r_+$, there exists $D(R)>0$, such that for all $\omega \in ( (1-\theta)m , (1+\theta)m)$ and $r \in [r_+,R]$:
			\begin{align}
				& |\duh|(\omega,r),|\rd_{\omega} \duh|(\omega,r)  \ls  D(R) \cdot  k^{-2},\label{DUH} \end{align}
			
		\end{lemma}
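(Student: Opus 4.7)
My plan is to deduce both estimates from a Taylor expansion of $\omega \mapsto \check{u}_H(\omega, r)$ around $\omega = m$, after establishing uniform $C^2$ control on the compact region $[(1-\theta)m, (1+\theta)m] \times [r_+, R]$.

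First I would write $\check{u}_H(\omega, r) = e^{i\omega(p(r) - s(r))} Y(\omega, r)$, where $Y(\omega, r) = e^{i\omega s(r)} u_H(\omega, s(r))$ is the function from Lemma~\ref{makeuh}. The factor $p(r) - s(r)$ is smooth on $(r_+, \infty)$ and, thanks to the matching singular behavior $p(r) \sim s(r) \sim (2\kappa_+)^{-1}\log(r - r_+)$ as $r \to r_+$, extends to a smooth bounded function on $[r_+, R]$. Lemma~\ref{makeuh} already supplies $\sum_{i,j\leq 1}|\partial_r^i \partial_\omega^j Y| \lesssim 1$ uniformly. To add the second-order estimate $|\partial_\omega^2 Y| \lesssim_R 1$, I would differentiate the Volterra equation~\eqref{anothereqnforhthingsaregood} for $h(\omega,s) = e^{i\omega s}u_H - 1$ a second time in $\omega$ and re-apply Theorem~\ref{thm:volterraparam}, using the exponential decay of $V_\omega - \omega^2$ as $s \to -\infty$ to keep the inhomogeneity integrable. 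These bounds transfer to $\check{u}_H$ via the product rule, and they depend only on $R$ and the fixed $\omega$-window.

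With uniform $C^2$ control in hand, Taylor expansion at $\omega = m$ gives
\[\check{u}_H(\omega, r) - \check{u}_H(m, r) = (\omega - m) \int_0^1 \partial_\omega \check{u}_H\bigl(m + t(\omega - m), r\bigr) \, dt,\]
whence $|\delta\check{u}_H|(\omega, r) \lesssim D(R)|\omega - m|$; the analogous identity for $\partial_\omega \check{u}_H(\omega, r) - \partial_\omega \check{u}_H(m, r)$ (which is how $\partial_\omega \delta\check{u}_H$ should be read in view of the $\delta$-notation) similarly yields $|\partial_\omega \delta\check{u}_H|(\omega, r) \lesssim D(R)|\omega - m|$. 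Since $k^{-2} = (m^2 - \omega^2)/(Mm^2)^2 \sim |\omega - m|/(M^2 m^3)$ on the stated range, where $m + \omega$ stays bounded away from zero, both bounds reduce to the claimed $D(R) k^{-2}$.

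The main technical obstacle is obtaining uniform regularity all the way up to the horizon, where $s \to -\infty$ and $u_H$ oscillates as $e^{-i\omega s}$. The prefactor $e^{i\omega s}$ in the definition of $Y$ cancels this oscillation, and the Volterra construction in the proof of Lemma~\ref{makeuh} carries through up to $r = r_+$ after switching from $s$ to the regular coordinate $r$; combined with $p(r) - s(r) = O(1)$ at $r_+$, this ensures that multiplication by $e^{i\omega(p - s)}$ preserves all the bounds uniformly on $[r_+, R]$ without introducing any factors that grow in $k$.
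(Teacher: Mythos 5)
Your argument is correct and is essentially the paper's proof, fleshed out: the paper's own argument is a two-line appeal to the uniform-in-$r$ analyticity of $\omega\mapsto\check{u}_H(\omega,r)$ furnished by Lemma~\ref{makeuh}, together with the scaling $|\omega-m|\lesssim k^{-2}$, leaving the Taylor step implicit; you make that step explicit and propose to supply the missing $C^2$-in-$\omega$ control on $Y$ by a second pass through the Volterra machinery of Theorem~\ref{thm:volterraparam}, which works (an equally serviceable alternative is to extend the Volterra bounds to a strip $\Im\omega\geq-\kappa_+/2$, where the construction of Lemma~\ref{makeuh} still applies since $|\Im\omega|<\kappa_+$, and invoke Cauchy estimates).

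You are also right to flag the subtlety in the second estimate: read literally, $\partial_\omega\delta\check{u}_H=\partial_\omega\check{u}_H(\omega,r)$ because $u_H^{*}$ is $\omega$-independent, and this quantity does not vanish at $\omega=m$ (indeed $u_H(m,\cdot)\neq0$ and $p(r)$ need not vanish), so it is only $O(1)$, never $O(k^{-2})$; the stated bound is correct only under the ``difference of derivatives'' reading $\partial_\omega\check{u}_H(\omega,r)-\partial_\omega\check{u}_H(m,r)$, which is both the reinterpretation you give and the only reading consistent with the paper's analyticity argument.
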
\begin{proof}  This follows  immediately from the analyticity of $\omega \rightarrow \check{u}_H(\omega,r)$  uniformly for $r \in [r_+,R]$ proven in Lemma~\ref{wherewemakeuhui}. Note indeed that $|\omega - m| \ls k^{-2}$.\end{proof}
		Now we address  the regime $\omega \in  \left((1-\theta)m, m \right)$.
		\begin{prop}\label{decay.prop}
			Assume that $\int_{-\infty}^{+\infty}| \left(1+s_-^2\right)\hat{H}|(\omega,S)  dS,\ \int_{-\infty}^{+\infty}| \rd_{\omega}\check{H}|(\omega,S)  dS < +\infty$, and moreover that   $\hat{H}(\omega,s(r))$, $\rd_{\omega} \hat{H}(\omega,s(r))$ are supported on $\{ r \in [r_+,R]\}$ for some $R>0$ independent of $\omega$. Here $s_- \doteq {\rm min}\left(0,s\right)$.

			Let $R_0>r_+$. Then there exists $D(R,R_0)>0$ such that the solution $u$ of \eqref{second.split} obeys the following estimate, for $\theta\in (0,1)$ sufficiently close to $1$ (recall $|\gamma_m|<1$): for all  $s<R_0$:  \begin{align}
				&\bigl| \int_{(1-\theta)m}^{m} e^{-i \omega t^* } \check{u}(\omega,r) d\omega - C(M,e,m^2)  \cdot \mathcal{L}[\hat{H}]  \cdot \uhs(r) \cdot (t^*)^{-\frac{5}{6}}\cdot [\sum_{q=1}^{+\infty} (-\gamma_m)^{q-1}q^{\frac{1}{3}}e^{it^*\Phi_q(t^*)+i [\phi_-(m)-\frac{\pi}{4}]} ] \bigr| \\ &\ls D(R,R_0) \cdot  \sup_{\omega \in ((1-\theta)m,m)} \left[\int_{-\infty}^{+\infty} \left((1+s_-^2)|  \hat{H}|(\omega,S) + |\rd_{\omega}\hat{H}|(\omega,S)\right) dS\right]\cdot    (t^*)^{-1+\delta},
				\\ \nonumber & \Phi_q(t^*)=\omega_q(t^*)-2\pi m^2M q \cdot (t^*)^{-1} \left(m^2-\omega_q^2(t)\right)^{-\frac{1}{2}}
				\\ \nonumber &\qquad = m  -\frac{3}{2} [2\pi M]^{\frac{2}{3}} m q^{\frac{2}{3}} (t^*)^{-\frac{2}{3}}  +O(t^{-\frac{4}{3}}),
				\\ \nonumber &\qquad  \omega_q(t^*):=m - \frac{(4\pi^2 m^2 M)^{\frac{1}{3}}}{2} q^{\frac{2}{3}} (t^{*})^{-\frac{2}{3}}, \\  & C(M,e,m^2):= 2^{-\frac{1}{6}} 3^{-\frac{1}{2}} \pi^{\frac{4}{3}}[1-|\gamma_m|^2]  M^{\frac{2}{3}} m^{\frac{1}{6}} >0 ,\\ & \mathcal{L}[\hat{H}]:= \int_{-\infty}^{+\infty}u_H\left(m,s\right) \hat{H}(m,S) dS,\end{align}
			where $\uhs(r)=\check{u}_H(m,r)$.
		\end{prop}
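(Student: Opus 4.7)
The starting point is the representation formula~\eqref{third.split} for $\check u(\omega,r)$, which already isolates the singular part in its first line. The second line of~\eqref{third.split} is smooth in $\omega$ on $[(1-\theta)m, m]$ (by Lemma~\ref{makeuh} and the hypothesis on $\hat H$), and so its inverse Fourier transform decays at rate $O((t^*)^{-1})$ by the basic integration-by-parts Lemma~\ref{stat.phase.basic}. The main work concerns the first line of~\eqref{third.split}, which I would write as $-\tfrac{\overline{W}}{W}(\omega)\cdot F(\omega,r)$ with
\[
F(\omega,r) := \frac{\check{u}_H(\omega,r)}{2i\omega}\int_{-\infty}^{+\infty} u_H(\omega,S)\,\hat H(\omega,S)\,dS.
\]
I would choose $\theta$ so that Proposition~\ref{ODE.interior.prop} applies throughout $((1-\theta)m, m)$, and then substitute its expansion of $\overline W/W$ into the product. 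This produces three types of contributions.

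For the constant term $\bar\gamma_m e^{i\phi_-(m)} F(\omega,r)$, smoothness in $\omega$ on $[(1-\theta)m,m]$ combined with Lemma~\ref{stat.phase.basic} gives $O((t^*)^{-1})$. For each error-series term $F_N(\omega)\,e^{2i\pi k N}\,F(\omega,r)$ (with $k=Mm^2(m^2-\omega^2)^{-1/2}$), I would apply Lemma~\ref{stat.phase.hard} with $A = 2\pi Mm^2 N$: Proposition~\ref{ODE.interior.prop}'s bounds $|F_N| \ls \eta_0^N k^{-1/2}\log(k)$ and $|\rd_\omega F_N|\ls \eta_0^N k^2\log^2(k)$, combined with the smoothness of $F(\omega,r)$ in $\omega$ (from Lemma~\ref{makeuh} and the hypothesis on $\hat H$), give an $O(\eta_0^N N^{1-\delta}(t^*)^{-1+\delta})$ contribution, summable to $O((t^*)^{-1+\delta})$ since $\eta_0<1$.

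The heart of the argument is the main series $[1-|\gamma_m|^2]\,e^{i\phi_-(m)}(-\gamma_m)^{q-1}\,e^{2i\pi k q}\,F(\omega,r)$, summed over $q\geq 1$. For each $q$ I would split $F(\omega,r) = F(m,r) + (F(\omega,r) - F(m,r))$. Using Lemma~\ref{uH.lemma} and the hypothesis on $\hat H$, the difference satisfies $|F(\omega,r)-F(m,r)| + |\rd_\omega(F(\omega,r)-F(m,r))| \ls k^{-2}$ for $r\in [r_+, R_0]$, and its contribution is controlled by Lemma~\ref{stat.phase.hard} with $A = 2\pi Mm^2 q$, bounded by $q^{1-\delta}(t^*)^{-1+\delta}$ and summable against $|\gamma_m|^{q-1}$. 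For the constant $F(m,r)$ factor, the integral $\int_{(1-\theta)m}^m e^{-i\omega t^*}\,e^{2i\pi k q}\,d\omega$ is exactly of the form treated by Lemma~\ref{stat.phase.medium}: its main $(t^*/A)^{-5/6}$ contribution is evaluated at the critical point $\omega_q(t^*)$, and writing $A = 2\pi Mm^2 q$ produces the $q^{1/3}$ factor from combining the $(t^*/A)^{-5/6}$ scaling with the $A^{-1/2}$ prefactor. Using $F(m,r) = -(2im)^{-1}\uhs(r)\,\mathcal{L}[\hat H]$ and an algebraic bookkeeping of constants reproduces the claimed $C(M,e,m^2)$.

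The principal technical obstacle is uniformity across the series: the implicit constants in Lemmas~\ref{stat.phase.medium} and~\ref{stat.phase.hard} depend on $A$ only through the dimensionless ratio $t^*/A$, so one must carefully verify that any polynomial growth in $q$ or $N$ arising from $(t^*/A)^{-5/6}$, $(t^*/A)^{-1+\delta}$, or $A^{-1/2}$ is absorbed by the geometric weights $|\gamma_m|^{q-1}$ and $\eta_0^N$. A secondary, more routine, point is verifying that the $\omega$-regularity of $F(\omega,r)$ on $[r_+, R_0]$, given by Lemma~\ref{makeuh} and the integrability assumptions on $\hat H$, suffices for the stationary-phase lemmas to apply with constants depending only on $R$ and $R_0$.
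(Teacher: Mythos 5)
Your proposal follows the same overall strategy as the paper's proof: start from the representation formula~\eqref{third.split}, insert the Wronskian-ratio expansion from Proposition~\ref{ODE.interior.prop}, send the smooth part to Lemma~\ref{stat.phase.basic}, the error-series terms $F_N e^{2i\pi kN}$ to Lemma~\ref{stat.phase.hard} with $A\propto N$, and the main series terms $(-\gamma_m)^{q-1}e^{2i\pi kq}$ to Lemma~\ref{stat.phase.medium} with $A_q\propto q$, tracking how the $q^{1/3}$ prefactor emerges from the $(t/A_q)^{-5/6}$ and $A_q^{-1/2}$ factors. This is the same set of ingredients, and the verification that polynomial-in-$q$ and polynomial-in-$N$ growth from the stationary-phase bounds is killed by the geometric weights $|\gamma_m|^{q-1}$ and $\eta_0^N$ is exactly where the $|\gamma_m|<1$ and $\eta_0<1$ hypotheses earn their keep — you identify this correctly as the main uniformity concern.

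The one genuine difference is bookkeeping of the freezing error $F(\omega,r)-F(m,r)$. You distribute it across all $q$ of the main series (so each piece carries its own $e^{2i\pi kq}$ and is handled by Lemma~\ref{stat.phase.hard} with $A_q=2\pi Mm^2 q$, summing against $|\gamma_m|^{q-1}$); the paper instead absorbs the difference $G(\omega,r)-G(m,r)$ into the $N=0$ error term $\tilde F_0$, with no oscillating prefactor. Your version is the one that correctly matches the algebra of the expansion, since $G(\omega,r)-G(m,r)$ appears multiplied by the whole geometric series $\sum_q(-\gamma_m)^{q-1}e^{2i\pi kq}$ — the paper's $\tilde F_0$ bookkeeping is slightly loose there, and your explicit $q$-by-$q$ treatment makes the summability transparent. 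One small overclaim: you assert $|\rd_\omega(F(\omega,r)-F(m,r))|\ls k^{-2}$, but differentiating the difference $F(\omega,r)-F(m,r)$ returns $\rd_\omega F(\omega,r)$, which is merely $O(1)$ (uniformly on $[r_+,R_0]$ and compactly supported data), not $O(k^{-2})$. This mirrors an imprecision in Lemma~\ref{uH.lemma}, and it is harmless: Lemma~\ref{stat.phase.hard} only requires $k^{-2-\delta}|\rd_\omega F|$ to be bounded, which $|\rd_\omega F|=O(1)$ comfortably supplies. With that caveat, the argument is sound.
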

		
		\begin{proof}  In the whole proof, we will assume that $r \in [r_+, R_0]$, for some $R_0> r_+$ so that $|\check{u}_H|(\omega,r),\ |\rd_{\omega}\check{u}_H|(\omega,r)   \leq \tilde{D}(R_0)$ for all $\omega \in \mathbb{R}$, $r \in [r_+,R_0]$. We write \eqref{third.split} as such, combining with 	\eqref{W.uI.uH.ratio} of Proposition~\ref{ODE.interior.prop}  	\begin{equation}\begin{split}
					& 	\check{u}(\omega,r) =  \tilde{G}(r)  e^{i\phi_-(m)} [1-|\gamma_m|^2]\sum_{q=1}^{+\infty}  (-\gamma_m)^{q-1}e^{2i\pi k q } + \sum_{N=0}^{+\infty}\tilde{F}_N(\omega)e^{2i\pi k N }+ h(\omega,r),\\ &  G(\omega,r)= - \frac{1}{2i\omega}  \check{u}(\omega,r) \int_{-\infty}^{+\infty} u_H(\omega,s)\cdot \hat{H}(\omega,s) ds,
					\\ \nonumber &\tilde{G}\left(r\right) = G\left(m,r\right),
					\\ & \tilde{F}_N(\omega,r)= - \frac{F_N(\omega)}{2i\omega} \check{u}_H(\omega,r)\int_{-\infty}^{+\infty} u_H(\omega,s) \cdot \hat{H}(\omega,s) ds \text{ for } N \geq 1, 
					\\ & \tilde{F}_0(\omega,r)=  \frac{\gamma(\omega) e^{i\phi_+(\omega)}- \gamma_m e^{i\phi_+(m)}}{2i\omega} \check{u}_H(\omega,r)\int_{-\infty}^{+\infty} u_H(\omega,s) \cdot \hat{H}(\omega,s) ds + G\left(\omega,r\right) - G\left(m,r\right),  \\  &   h(\omega)=\check{u}_H(\omega,r) \int_{s(r)}^{+\infty} \bar{u}_H(\omega,s)  \hat{H}(\omega,S) dS
					\\ \nonumber &\qquad +  e^{i\omega p(r)}\bar{u}_H(\omega,s) \int^{s(r)}_{-\infty} u_H(\omega,s)  \hat{H}(\omega,S) dS- \frac{\gamma_m\cdot  e^{i\phi_+(m)}}{2i\omega} \check{u}_H(\omega,r)\int_{-\infty}^{+\infty} u_H(\omega,s) \cdot \hat{H}(\omega,s) ds.\end{split}
			\end{equation}
			For $h$, we use Lemma~\ref{stat.phase.basic} with $a= (1-\theta)m$ and $b= m$   combined with the boundedness  of $\omega \rightarrow \check{u}_H(\omega,r)$ and $\omega \rightarrow \rd_{\omega} \check{u}_H(\omega,r)$ on $[r_+,R]$   (see Section~\ref{yakov.section}), taking advantage of the compactly support of $\hat{H}$   to obtain \begin{align}\label{eq1}\begin{split}
					&\bigl| \int_{(1-\theta)m}^{m} e^{-i \omega t } h(\omega,s) d\omega \bigr| \ls 
					\\  &\qquad D_1(R) \cdot  (t^*)^{-1} \cdot  \sup_{\omega \in ((1-\theta)m,m)} \left[\int_{-\infty}^{+\infty} \left((1+s_-^2)|  \hat{H}|(\omega,S) + |\rd_{\omega}\hat{H}|(\omega,S)\right) dS\right].\end{split}
			\end{align}
			Next, we apply	 Lemma~\ref{stat.phase.hard} and the bounds of Proposition~\ref{ODE.interior.prop} (including the estimate on $(\gamma(\omega) e^{i\phi_+(\omega)}- \gamma_m e^{i\phi_+(m)})$) to obtain similarly 
			\begin{align}\label{eq2}\begin{split}
					&\bigl| \sum_{N=0}^{+\infty}\int_{(1-\theta)m}^{m} e^{-i \omega t^* } \tilde{F}_N(\omega,r) d\omega \bigr| \ls 
					\\  &\qquad D_2(R) \cdot  (t^*)^{-1+\delta} \cdot  \sup_{\omega \in ((1-\theta)m,m)} \left[\int_{-\infty}^{+\infty} \left((1+s_-^2)|  \hat{H}|(\omega,S) + |\rd_{\omega}\hat{H}|(\omega,S)\right) dS\right].\end{split}
			\end{align}
			Finally, we apply Lemma~\ref{stat.phase.medium} (with $\theta_+=0$) to $\tilde{G}(r) e^{i \pi k q}$ with $A_q=   m^2 M q $. 
			We end up with \begin{equation}\label{eq3}\begin{split}
					&\bigl| \int_{(1-\theta)m}^{m} e^{-i \omega t^* } \tilde{G}(r) e^{i k \pi q } d\omega - \tilde{G}(r) \cdot(\frac{2\pi i }{3m^{\frac{1}{3}} A_1})^{\frac{1}{2}} \cdot (\frac{t^{*}}{A_1})^{-\frac{5}{6}}\cdot q^{\frac{1}{3}}e^{it^*\Phi_q(t^*)+i [\phi_-(m)-\frac{\pi}{4}]}  \bigr|\\ &\ls  D(R,R_0) \cdot (\frac{t^*}{q})^{-1+\delta} \cdot \int_{-\infty}^{+\infty} |\hat{H}|(m,S) dS.\end{split}
			\end{equation}
			Combining \eqref{eq1}, \eqref{eq2}, \eqref{eq3} concludes the proof of the proposition.
		\end{proof}
		
		Now we turn to the regime $\omega  \in (m, (1+\theta)m)$.
		\begin{prop}\label{decay.easy.prop}
			Under the assumptions of Proposition~\ref{decay.prop}, let $u$ be the solution of \eqref{first.split}. Then, $u$ obeys the following estimate: for all $r \in [r_+,\infty)$ \begin{align}
				&\bigl| \int_{m}^{(1+\theta)m} e^{-i \omega t^* } \check{u}(\omega,r) d\omega \bigr| \ls
				\\ &\qquad (t^*)^{-1} \cdot  \sup_{\omega \in (m,(1+\theta)m)}\left[  \int_{-\infty}^{+\infty}\left( \left(1+s_-^2\right)|  \hat{H}|(\omega,S) + |\rd_{\omega}\hat{H}|(\omega,S)\right) dS \right].
			\end{align}
		\end{prop}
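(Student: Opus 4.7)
The plan is to obtain $(t^*)^{-1}$ decay via a single integration by parts in $\omega$ (Lemma~\ref{stat.phase.basic}), since the phase $\omega\mapsto e^{-i\omega t^*}$ has no critical point in $[m,(1+\theta)m]$, so no stationary-phase analysis is required. The core task is to show that $\omega\mapsto \check{u}(\omega,r)$ is bounded with an integrable $\omega$-derivative, uniformly in $r\in[r_+,\infty)$, with the appropriate $L^1_s$-control by $\hat H$ and $\partial_\omega\hat H$.

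Step 1: Using Remark~\ref{switcharoo}, I would rescale $\tilde u_I \doteq u_I/(\omega^2-m^2)^{1/4}$, which is the natural WKB normalization of Proposition~\ref{betterbounduI}. With this choice the modified Wronskian $\tilde W \doteq W/(\omega^2-m^2)^{1/4}$ is bounded away from zero on $[m,(1+\theta)m]$ by Lemma~\ref{wronknotzerobigm}, and has $\omega$-derivative dominated by $(\omega-m)^{-1/2-2\delta}$ (integrable near $\omega = m$) by the second bound of Proposition~\ref{repuform}. This step removes the branch singularity of the Wronskian so that $\tilde W^{-1}$ is $W^{1,1}$ on our interval.

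Step 2: For $r$ in the compact range $[r_+,R_1]$ (where $R_1$ is slightly larger than the $r$-support $R$ of $\hat H$), I would directly combine the bounds on $u_H$ and $\partial_\omega u_H$ from Lemma~\ref{makeuh} with the corresponding bounds on $\tilde u_I$ from Proposition~\ref{betterbounduI} (in particular $|h|,|\partial_\omega h|$ are controlled by $r^{-1/2}$ and $(\omega-m)^{-1/2-2\delta}r^{-\delta}+r$) to conclude that $\check u(\omega,r)$ is bounded in $\omega$ and $\partial_\omega \check u(\omega,r)$ is integrable in $\omega$, with all constants bounded by $\sup_\omega \int ((1+s_-^2)|\hat H|+|\partial_\omega \hat H|)\,dS$. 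Lemma~\ref{stat.phase.basic} then yields the desired $(t^*)^{-1}$ bound in this regime.

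Step 3: For $r>R_1$, the crucial simplification is that $\int_{s(r)}^{+\infty} u_I\hat H\,dS=0$ (by compact support of $\hat H$), so \eqref{first.split} collapses to
\[
\check u(\omega,r)=\tilde W^{-1}(\omega)\,\check{\tilde u}_I(\omega,r)\int_{-\infty}^{+\infty} u_H(\omega,S)\hat H(\omega,S)\,dS.
\]
Here the $r$-dependence lives entirely in the oscillatory factor $\check{\tilde u}_I(\omega,r) = e^{i\omega p(r)}\tilde u_I(\omega,s(r))$, whose phase $\Psi(\omega;r,t^*) = -\omega t^* + \omega p(r) + \int_A^{s(r)}(-V_\omega)^{1/2}\,dx$ has $\partial_\omega\Psi = -t^* + p(r) + \int_A^{s(r)}\omega\,(-V_\omega)^{-1/2}\,dx$. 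In the regime $r\ll t^*$ one has $|\partial_\omega\Psi|\gtrsim t^*$; in the regime $r\gg t^*$ one has $|\partial_\omega\Psi|\gtrsim r$; in the intermediate regime $r\sim t^*$ the second derivative of $\Psi$ is bounded below by $r/(\omega-m)^{3/2}$ (which is large), and a single integration by parts via Corollary~\ref{vanitI} (outside the critical-point window) plus Van der Corput (Lemma~\ref{vandercorput}) inside a narrow window around the critical point produces decay of order $\min(t^*,r)^{-1}\leq (t^*)^{-1}$. The amplitude estimates from Proposition~\ref{betterbounduI} (together with Step~1) ensure that the resulting bound is controlled by the $L^1_s$-norm of $\hat H$ and $\partial_\omega\hat H$ as stated.

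The main obstacle is Step~3: controlling the $\omega$-derivatives of $\check{\tilde u}_I(\omega,r)$ for $r\to\infty$, since the bounds~\eqref{WKB.error2}-\eqref{WKB.error3} on $\partial_\omega h$ in Proposition~\ref{betterbounduI} contain factors of $r$ that would ruin a naive $L^1_\omega$ bound. The resolution is to integrate by parts in $\omega$ \emph{before} estimating crudely, so that the growing factors of $r$ appear only when multiplied against $(\partial_\omega\Psi)^{-1}$, producing either a $(t^*)^{-1}$ or $r^{-1}$ gain that absorbs them; in the transition regime $r\sim t^*$, one must take care to split the integration domain around the critical point of $\Psi$ and apply a Van der Corput estimate there, exactly as in the proofs of Lemmas~\ref{stat.phase.hard}-\ref{stat.phase.medium} but with the simpler non-vanishing-linear-term phase.
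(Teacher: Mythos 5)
Your Steps~1 and~2 coincide with the paper's proof. The paper defines $\check{u}_I \doteq (\omega^2-m^2)^{-1/4}u_I$ and $\check{W} \doteq (\omega^2-m^2)^{-1/4}W$, so that $\check{W}^{-1}$ is bounded with integrable $\omega$-derivative by Proposition~\ref{repuform}, records the resulting bounds
\begin{align}
  |\check u|(\omega,r) &\ls \int_{-\infty}^{+\infty}|\hat{H}|(\omega,S)\,dS,
  \\ |\partial_\omega \check u|(\omega,r) &\ls (\omega-m)^{-\frac{1}{2}-2\delta}\int_{-\infty}^{+\infty}\Bigl((1+s_-^2)|\hat{H}|(\omega,S)+|\partial_{\omega}\hat{H}|(\omega,S)\Bigr)\,dS,
\end{align}
and concludes by Lemma~\ref{stat.phase.basic}. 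That is exactly your plan.

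Your Step~3 addresses a genuine subtlety that the paper's proof is silent on. Because the phase derivative $\partial_\omega\bigl(\int_A^{s(r)}(-V_\omega)^{1/2}dx\bigr)$ and the error term $\partial_\omega h$ from Proposition~\ref{betterbounduI} both grow linearly in $r$, the bound on $|\partial_\omega\check u|$ written above cannot be uniform in $r\in[r_+,\infty)$, so the proposition as literally worded is either implicitly $r$-dependent or restricted to $r$ in a compact set. The paper never worries about this because the final Theorem~\ref{main.thm} is only asserted for $r\in[r_+,R_0]$; in that regime Step~2 alone suffices. So your Step~3 (using the phase $\Psi$ and Corollaries~\ref{vanitI}--\ref{vanitII} to trade growing $r$-factors against $(\partial_\omega\Psi)^{-1}$) is a correct observation about what a genuinely $r$-uniform statement would require, but it is not what the paper does, and it is not needed to close the argument for the main theorem. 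If you want to match the paper exactly, drop Step~3 and simply note that the implied constant is allowed to depend on the (compact) $r$-range, as it does in Proposition~\ref{decay.prop}.
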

		\begin{proof}
			For $\omega>m$, we define $\check{u}_I(\omega,s)= (\omega^2-m)^{-\frac{1}{4}} u_I(\omega,s)$ and $\check{W}(\omega) =  (\omega^2-m)^{-\frac{1}{4}} W(\omega)$ (in the notations of Proposition~\ref{repuform}), so that we have, equivalently to \eqref{first.split} multiplied by $e^{i\omega p(r)}$:  \begin{equation}
				\check{u}(\omega,s) = \check{W}^{-1}(\omega)\left[e^{i\omega p(r)}\check{u}_I(\omega,s)\int_{-\infty}^su_H(\omega,S)\hat{H}(\omega,S)\, dS+ \check{u}_H(\omega,s)\int_s^{+\infty}\check{u}_I\left(\omega,S\right)\hat{H}\left(\omega,S\right)\, dS \right]. 
			\end{equation} Then we use Proposition~\ref{repuform} and Proposition~\ref{betterbounduI} to obtain for all $\omega \in (m,(1+\theta)m)$  $$ |\check{u}|(\omega,r)  \ls \int_{-\infty}^{+\infty}|\hat{H}|\left(\omega,S\right) dS,  $$  
			$$ |\rd_\omega \check{u}|(\omega,r)  \ls (\omega-m)^{-\frac{1}{2}- 2\delta} \int_{-\infty}^{+\infty}\left((1+s_-^2)|\hat{H}|\left(\omega,S\right)+|\rd_{\omega}\hat{H}|\left(\omega,S\right) \right)dS . $$ 
			Then, applying Lemma~\ref{stat.phase.basic} gives the proof of the proposition. \end{proof}
		
		Finally, we take care of the frequencies away from $\omega = \pm m$, where the solution is regular. \begin{prop}\label{decay.trivial.prop}
			Under the assumptions of Proposition~\ref{decay.prop}, let $u$ be the solution of \eqref{first.split}. Then, $\check{u}$ obeys the following estimate: \begin{align}
				&\bigl| \int_{0}^{(1-\theta)m} e^{-i \omega t^* } \check{u}(\omega,r) d\omega \bigr|+ \bigl| \int^{+\infty}_{(1+\theta)m} e^{-i \omega t^* } \check{u}(\omega,r) d\omega \bigr| \ls
				\\ \nonumber &\qquad (t^*)^{-1} \cdot \sup_{\omega \in [0,\infty)]}\left[  \int_{-\infty}^{+\infty}\left( |  \hat{H}|(\omega,S) + |\rd_{\omega}\hat{H}|(\omega,S) \right)dS + \sup_{s \leq 0}s^2\int_{-\infty}^s\left|\hat{H}\right|(\omega,S)\, dS \right].
			\end{align}
		\end{prop}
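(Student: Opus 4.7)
} The overall strategy is to apply Lemma~\ref{stat.phase.basic} separately on the two intervals $I_{\mathrm{low}} = [0,(1-\theta)m]$ and $I_{\mathrm{high}} = [(1+\theta)m,+\infty)$. This reduces everything to showing uniform $L^{\infty}_\omega$ control of $\check{u}(\omega,r)$ and $\rd_\omega \check{u}(\omega,r)$ by the quantity on the right-hand side of the claim. The key observation is that, away from the singular thresholds $\omega = \pm m$, the Wronskian $W(\omega)$ stays away from zero with $C^1$ control, and the Jost solutions $u_H$, $u_I$ (or $\tuI$) depend smoothly on $\omega$, so no stationary-phase subtlety arises --- the mechanism is purely integration by parts in $\omega$.

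On $I_{\mathrm{low}}$ I would use the representation formula~\eqref{third.split} in the form with $\tuI$. Lemmas~\ref{wronknotzerosmallbutnottoosmallm} and~\ref{wronknotzerosmallm} yield $|W|^{-1}(\omega) + |dW/d\omega|(\omega) \ls 1$ uniformly on $I_{\mathrm{low}}$. By Lemma~\ref{makeuh}, $|u_H|(\omega,s) \ls 1$ on $\{s \le s(R)\}$, and from the factorization $u_H = e^{-i\omega s} Y(\omega,s(r))$ with $Y$ jointly $C^1$ on compact sets, one has $|\rd_\omega u_H|(\omega,s) \ls 1 + |s_-|$. By Lemma~\ref{makeuI}, $\tuI$ and $\rd_\omega \tuI$ decay exponentially as $s \to +\infty$ at rate at least $\sqrt{m^2 -(1-\theta)^2 m^2} > 0$, so the convolutions $\int_{s(r)}^{+\infty} \tuI \hat{H}\, dS$ are integrable. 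Differentiating~\eqref{third.split} in $\omega$ and exploiting the compact $r$-support of $\hat{H}$, each piece is bounded by the quantity $\int (|\hat{H}| + |\rd_\omega \hat{H}|)\,dS$, except for the one term carrying the factor $|s_-|$ from $\rd_\omega u_H$.

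On $I_{\mathrm{high}}$ I would proceed as in the proof of Proposition~\ref{decay.easy.prop}, renormalizing by setting $\check{u}_I \doteq (\omega^2-m^2)^{-1/4} u_I$ and $\check{W} \doteq (\omega^2-m^2)^{-1/4} W$. Proposition~\ref{repuform} yields $|\check{W}|^{-1}(\omega) \ls (1+\omega)^{-3/4}$ and $|d\check{W}/d\omega|(\omega) \ls 1$ uniformly on $I_{\mathrm{high}}$ (since $\omega - m \gtrsim 1$), and Proposition~\ref{betterbounduI} provides the pointwise control of $u_I$ and $\rd_\omega u_I$ needed to estimate the integrals against $\hat{H}$. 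The same kind of argument as in the low-frequency case, with the factor $|s_-|$ coming from $\rd_\omega u_H$, yields the claimed bound.

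The only genuine technical obstacle is the linear-in-$|s|$ growth of $\rd_\omega u_H$ as $s \to -\infty$, which forces us to bound $\int_{-\infty}^0 |S|\, |\hat{H}|(\omega,S)\, dS$. To express this in terms of $\sup_{s \le 0} s^2 \int_{-\infty}^s |\hat{H}|(\omega,S)\,dS$, I would set $F(s) \doteq \int_{-\infty}^s |\hat{H}|(\omega,S)\,dS$ and integrate by parts on $(-\infty,-1]$:
\begin{equation*}
\int_{-\infty}^{-1} |S|\, F'(S)\, dS = F(-1) + \int_{-\infty}^{-1} F(S)\, dS \;\le\; F(-1) + \sup_{s\le 0} s^2 F(s) \cdot \int_{-\infty}^{-1} S^{-2}\,dS,
\end{equation*}
with the integral on $[-1,0]$ trivially bounded by $\|\hat{H}\|_{L^1_s}$. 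Once uniform bounds $|\check{u}|(\omega,r) + |\rd_\omega \check{u}|(\omega,r) \ls (\text{RHS})$ are established on both intervals, Lemma~\ref{stat.phase.basic} (applied with $a=0,\,b=(1-\theta)m$ and with $a=(1+\theta)m,\,b=+\infty$ respectively) immediately delivers the $(t^*)^{-1}$ decay.
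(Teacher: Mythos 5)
Your proposal follows essentially the same route as the paper's proof: decompose into the two intervals $[0,(1-\theta)m]$ and $[(1+\theta)m,\infty)$, invoke Lemma~\ref{stat.phase.basic}, and supply the required uniform $C^1$-in-$\omega$ control via Lemma~\ref{makeuh} for $\check{u}_H$ together with the Wronskian bounds of Proposition~\ref{repuform} (built from Lemmas~\ref{wronknotzerosmallbutnottoosmallm}, \ref{wronknotzerosmallm}). The minor deviations --- invoking the $\check{u}_I$, $\check{W}$ renormalization and Proposition~\ref{betterbounduI} on the high-frequency interval where the paper simply estimates $|W|^{-1} \ls (1+\omega)^{-1}$ directly, and spelling out the $s_-^2$-weighted bookkeeping via integration by parts, which the paper leaves implicit --- are cosmetic and do not change the substance of the argument.
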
\begin{proof}
			First, we discuss $\int_{0}^{(1-\theta)m} e^{-i \omega t^{*} } u(\omega,s) d\omega$.  Using \eqref{first.split} multiplied by $e^{i\omega p(r)}$ and the fact that $\hat{H}$ is supported on $\{ s<R\}$, we get (using also that  $\frac{\tuI(\omega,s)}{W(\tuI,u_H)(\omega)}=  \frac{u_I(\omega,s)}{W(u_I,u_H)(\omega)}$) 	$$\check{u}(\omega,r) =  \frac{\check{u}_H(\omega,s) 1_{(-\infty,R)}(s) \int_{s}^{+\infty} u_I(\omega,s) \cdot \hat{H}(\omega,s) ds +  e^{i\omega p(r)} u_I(\omega,s) \int^{s}_{-\infty} u_H(\omega,s) \cdot\hat{H}(\omega,s) ds}{W(u_I,u_H)(\omega)}.$$
			
			We integrate  $\int_{0}^{(1-\theta)m} e^{-i \omega t^* } \check{u}(\omega,r) d\omega$ using the above formula and use Lemma~\ref{stat.phase.basic}. The relevant bounds are straightforward to obtain using  Lemma~\ref{makeuh} to control $\check{u}_H$ and $\rd_{\omega} \check{u}_H$ (note that we only need to do so in the region $\{ r \in [r_+,R]\}$) and Proposition~\ref{repuform} to control $|W|^{-1}(u_I,u_H)(\omega),\ |\frac{dW(u_I,u_H)(\omega)}{d\omega}|\ls 1$  (where the implicit constant depends on $\theta$). As a result, we obtain \begin{align}\label{trivial1}\begin{split}
					&\bigl| \int_{0}^{(1-\theta)m} e^{-i \omega t^* } \check{u}(\omega,r) d\omega \bigr| \ls
					\\  &\qquad (t^*)^{-1} \cdot \sup_{\omega \in (0,(1-\theta)m)}\left[  \int_{-\infty}^{+\infty} \left((1+s_-^2)|  \hat{H}|(\omega,S) + |\rd_{\omega}\hat{H}|(\omega,S)\right) dS  \right].\end{split}\end{align} For $\int^{+\infty}_{(1+\theta)m} e^{-i \omega t } \check{u}(\omega,r) d\omega$, we can use still use  Lemma~\ref{makeuh} and Proposition~\ref{repuform}, but this time  we have the improved estimate $|W|^{-1}(u_I,u_H)(\omega) \ls (1+\omega)^{-1},\ |\frac{dW(u_I,u_H)(\omega)}{d\omega}|\ls 1$ from which we get $|\frac{d}{d\omega}( \frac{1}{W(u_I,u_H)})|\ls (1+\omega)^{-2}$ is integrable: hence similarly an application of Lemma~\ref{stat.phase.basic} gives  \begin{align}\label{trivial2}\begin{split}
					&\bigl| \int^{+\infty}_{(1+\theta)m} e^{-i \omega t^* } \check{u}(\omega,r) d\omega \bigr| \ls
					\\&\qquad  (t^*)^{-1} \cdot\sup_{\omega \in ((1+\theta)m,+\infty)}\left[  \int_{-\infty}^{+\infty}\left( (1+s_-^2)|  \hat{H}|(\omega,S) + |\rd_{\omega}\hat{H}|(\omega,S)\right) dS  \right].\end{split}\end{align} Combining \eqref{trivial1} and \eqref{trivial2} concludes the proof of the lemma.
		\end{proof}                                                                                                                                                                                                             
		\subsection{The cut-off argument}
		In this section we will use a cutoff argument finish the proof of our main result. 
		
		\begin{prop}\label{final.decay.prop}
			Let $\phi$ be a smooth solution  to 
			\begin{align}
				& \left(\Box_g-m^2\right)\phi = 0, \label{main.eq}
				\\ &  (\phi,n_{\{t^*=0\}} \phi)|_{|(t^*=0,r)}= (\phi_0(r),\phi_1(r)) \label{main.eq2},
			\end{align} 
			with $\phi_0$ and $\phi_1$ vanishing for sufficiently large $r$, and $n_{\{t^*=0\}}$ denotes the normal to $\{t^* = 0\}$. Then, using the notations of Proposition~\ref{decay.prop} with $\tilde{C}(M,e,m^2)=\sqrt{\frac{2}{\pi}} \cdot C(M,e,m^2)>0$, $|\gamma_m|<1$, and $\uhs(s) =|\uhs|(s)   e^{i\check{\varphi}_H(s)}$, for all $t^*\geq 1$ and $r \in [r_+,R_0]$, we have \begin{align} \begin{split}
					&\psi(t^*,r)= 
					\\ \nonumber &\tilde{C}(M,e,m^2)  \cdot \bigl| \mathcal{L}[\phi_0,\phi_1] \bigr| \cdot |\uhs|(s) \cdot (t^*)^{-\frac{5}{6}}\cdot \left[\sum_{q=1}^{+\infty} (-\gamma_m)^{q-1}q^{\frac{1}{3}}\cos(\Phi_q(t^*)+ \phi_-(m)-\frac{\pi}{4}+ \check{\varphi}_H(s)+\theta[\phi_0,\phi_1]) \right] +\mathcal{E}(t^*,s),\\ &    \mathcal{L}[\phi_0,\phi_1] = | \mathcal{L}[\phi_0,\phi_1] | e^{i \theta[\phi_0,\phi_1]} =  2\int_{r_+}^{+\infty} e^{-ip(r) m}u_H(m,s(r))\left(-imA_1(r) \phi +2A_2(r)\phi + A_1(r) \partial_{t^*}\phi + 2A_3(r)\partial_r\phi\right)|_{t^*=0}\ r\, dr, 
					\\ &  |\mathcal{E}|(t^*,s) \ls D(R_0,R, \left(\phi_0,\phi_1\right)) \cdot (t^*)^{-1+\delta},
				\end{split}
			\end{align}
			where the functions $A_1(r)$, $A_2(r)$, and $A_3(r)$ are given by~\eqref{commutatorboxftstar}:
		\end{prop}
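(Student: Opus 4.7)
The plan is to reduce the initial-value problem to the inhomogeneous problem with vanishing past data via a smooth time cutoff, and then apply the frequency-localized decay estimates of Propositions~\ref{decay.prop}, \ref{decay.easy.prop}, and \ref{decay.trivial.prop}. First I would pick a smooth $\chi(t^*)$ with $\chi \equiv 0$ on $\{t^* \leq 0\}$ and $\chi \equiv 1$ on $\{t^* \geq 1\}$, and set $\tilde\phi := \chi\phi$. Then $(\Box_g - m^2)\tilde\phi = F := [\Box_g, \chi]\phi$, given explicitly by \eqref{commutatorboxftstar}, which is supported in $\{0 \leq t^* \leq 1\}$ and, using finite speed of propagation combined with the compact $r$-support of $(\phi_0, \phi_1)$, also compactly supported in $r$. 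Since $\tilde\phi = \phi$ for $t^* \geq 1$, it suffices to prove the statement for $\tilde\phi$; and $H = r(1-\frac{2M}{r}+\frac{\mathcal{D}}{r^2})F$ satisfies the integrability/support hypotheses of the three decay propositions with constants controlled by a suitable norm of $(\phi_0, \phi_1)$.

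Next, Fourier inversion combined with the reality $\check u(-\omega, r) = \overline{\check u(\omega, r)}$ gives $\psi(t^*, r) = \frac{1}{\pi}\,\mathrm{Re}\int_0^{+\infty} e^{-i\omega t^*}\check u(\omega, r)\, d\omega$ for $t^* \geq 1$. I would split this integral into $[0, (1-\theta)m]$, $[(1-\theta)m, m]$, $[m, (1+\theta)m]$, $[(1+\theta)m, +\infty)$. The three non-critical intervals contribute $O((t^*)^{-1+\delta})$ by Propositions~\ref{decay.trivial.prop} and \ref{decay.easy.prop}, while the critical $[(1-\theta)m, m]$ interval yields via Proposition~\ref{decay.prop} the main contribution
\[
C(M,e,m^2)\,\mathcal{L}[\hat H]\,\uhs(r)\,(t^*)^{-5/6}\sum_{q=1}^{+\infty}(-\gamma_m)^{q-1}q^{1/3}\,e^{i\Psi_q(t^*)}+O((t^*)^{-1+\delta}).
\]
Writing $\uhs=|\uhs|e^{i\check\varphi_H}$ and $\mathcal{L}[\hat H]=|\mathcal{L}[\hat H]|e^{i\theta[\hat H]}$, taking $\frac{1}{\pi}\mathrm{Re}$ converts this expression into the advertised cosine form, with $\tilde C=\sqrt{2/\pi}\,C$ accounting for the real-part/Fourier-inversion prefactor.

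The final step is identifying $\mathcal{L}[\hat H]$ with $\tfrac{1}{2}\mathcal{L}[\phi_0,\phi_1]$. Unfolding definitions (and the relation $\hat H = e^{-i\omega p(r)}\check H$) yields
\[
\mathcal{L}[\hat H]=\int_{r_+}^{+\infty}\!\!\int_{\mathbb R}F(t^*,r)\,u_H(m,s(r))\,e^{im(t^*-p(r))}r\,dt^*\,dr=\int F\cdot W\,\sqrt{-g},
\]
at the fixed angular mode $L$, where $W(t^*,r):=e^{im(t^*-p(r))}u_H(m,s(r))/r$ is a mode solution of $(\Box_g-m^2)W=0$ (the ODE~\eqref{eq:mainV} depends only on $\omega^2$). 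Applying Green's identity for the pair $(\tilde\phi,W)$ on the slab $\{0\leq t^*\leq T\}$ for any $T\geq 1$: since $\tilde\phi$ and all its derivatives vanish at $t^*=0$ and $F$ is supported in $\{t^*\leq 1\}$, one obtains $\mathcal{L}[\hat H]=\mathrm{Bdry}(\tilde\phi,W)|_{t^*=T}=\mathrm{Bdry}(\phi,W)|_{t^*=T}=\mathrm{Bdry}(\phi,W)|_{t^*=0}$, where the last equality uses conservation of the Wronskian-type boundary form for two solutions of the homogeneous equation. The boundary at $t^*=0$ involves only $\phi_0$, $\partial_{t^*}\phi|_{t^*=0}$, and $\partial_r\phi_0$, weighted by $A_1,A_2,A_3$ from \eqref{commutatorboxftstar} applied to $f(t^*)=e^{imt^*}$; this reproduces precisely $\tfrac{1}{2}\mathcal{L}[\phi_0,\phi_1]$.

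The main obstacle is executing this Green's identity rigorously, since the mode solution $W$ does not decay as $t^*\to\infty$ and so one cannot simply take $T\to\infty$. The argument must instead proceed on the finite slab $\{0\leq t^*\leq T\}$ for some fixed $T\geq 1$, exploiting that $F$ has compact time support (so the spacetime integral is a fixed quantity independent of $T$) together with conservation of the Wronskian-type form (which allows pushing the $t^*=T$ boundary back to $t^*=0$). A secondary technicality is tracking the precise constants --- powers of $\pi$, factors of $2$, and the phase emerging from Lemma~\ref{stat.phase.medium} applied with $A_q=2\pi m^2 M q$ together with the $1/\pi$ Fourier inversion factor --- to confirm the stated value $\tilde C=\sqrt{2/\pi}\,C$.
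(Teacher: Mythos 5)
Your first two steps—cutting off in time to produce an inhomogeneous problem with vanishing past data, invoking Propositions~\ref{decay.prop}, \ref{decay.easy.prop}, and \ref{decay.trivial.prop}, and then using reality to write $\psi = \frac{1}{\pi}\Re\int_0^{\infty}e^{-i\omega t^*}\check{u}\,d\omega$—are the same as the paper's, except that the paper works with a one-parameter family of cutoffs $\Upsilon_{\epsilon}(t^*) = \Upsilon(t^*/\epsilon)$ rather than a single fixed $\chi$. This is not cosmetic: the paper never invokes a Green's identity on spacetime. Instead it observes that $\mathcal{L}[\hat H_{\epsilon}]$ must be independent of $\epsilon$ (since $\psi_{\epsilon}=\psi$ for $t^*\ge 1$ and the leading $t^{-5/6}$ coefficient is determined), and then evaluates $\lim_{\epsilon\to 0}\hat H_{\epsilon}(m,s(r))$ directly, with $\Upsilon_{\epsilon}'$ converging to a delta at $t^*=0$. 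The boundary-in-$t^*$ formula for $\mathcal{L}[\phi_0,\phi_1]$ comes out of this purely one-dimensional limit, never as a genuine boundary term of a spacetime Green's formula.

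Your route instead passes through a slab Green's identity on $\{0\le t^*\le T\}$ for the pair $(\tilde\phi, W)$ with $W(t^*,r)=e^{im(t^*-p(r))}u_H(m,s(r))/r$, followed by ``conservation of the Wronskian-type boundary form'' for $(\phi,W)$. There is a genuine gap here: $W$ is not regular at the event horizon. Writing $u_H(m,s(r)) = e^{-imp(r)}\check{u}_H(m,r)$ gives $W = e^{imt^*}\,e^{-2imp(r)}\,\check{u}_H(m,r)/r$, and since $p(r)=2\kappa_+\log(r-r_+)+O(1)\to -\infty$, the factor $e^{-2imp(r)}$ oscillates at an infinite rate as $r\to r_+$; equivalently, $W\sim e^{imu}/r$ near the horizon with $u=t-s$ the retarded time, i.e.\ it is the \emph{outgoing} mode at frequency $m$, not the ingoing one that extends to $\mathcal{H}^+$. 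Consequently $W$ has no trace on the horizon, $\partial_r W\sim (r-r_+)^{-1}$ near $r=r_+$, and the flux contributions along $\{r=r_+,\,0\le t^*\le T\}$ in the slab Green's identity are neither zero nor even well-defined; the asserted equality $\mathrm{Bdry}(\phi,W)|_{t^*=T}=\mathrm{Bdry}(\phi,W)|_{t^*=0}$ is precisely the statement that this horizon flux vanishes, which has not been shown. A related symptom: the honest Green's boundary form at $\{t^*=0\}$ contains a $g^{t^*r}\partial_r W$ term, which does not appear in the formula for $\mathcal{L}[\phi_0,\phi_1]$ stated in the proposition—the missing piece is exactly the unhandled horizon contribution. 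The paper's $\epsilon$-limit trick circumvents all of this, so absent a careful treatment of the null boundary at $r=r_+$ your identification step is incomplete. (A minor secondary point: the paper's proof derives $\tilde C=\pi^{-1}C$, not $\sqrt{2/\pi}\,C$; this looks like a typo in the proposition statement rather than in your argument.)
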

		\begin{proof} 
			Let $\Upsilon\left(t^*\right)$ be a smooth function of $t^*$ which vanishes for $t^* \leq 0$ and is identically $1$ for $t^* \leq 1$. For all sufficiently small $\epsilon > 0$ we define $\Upsilon_{\epsilon} \doteq \Upsilon\left(\frac{t^*}{\epsilon}\right)$ and $\phi_{\ep}(t^*,r) = \phi(t^*,r) \ce(t^*)$. In view of~\eqref{commutatorboxftstar}, we have 
			\begin{equation}\label{cutoffequationpsiep}
				\left(\Box_g-m^2\right)\phi_{\epsilon} = F_{\epsilon},
			\end{equation}
			where 
			\[F_{\epsilon} = \epsilon^{-2} A_1\left(r\right) \ce''(t^*)\phi + \epsilon^{-1}2A_2(r) \ce'(t^*)\phi + \epsilon^{-1}2A_1\left(r\right)\ce'(t^*)\partial_{t^*}\phi + \epsilon^{-1}2A_3\left(r\right)\ce'(t^*)\partial_r\phi,\]
			and we recall that $A_1$, $A_2$, and $A_3$ are given by~\eqref{commutatorboxftstar}. In turn, $\psi_{\ep} \doteq r\phi_{\ep}$ satisfies \eqref{eq:main} with 
			\[H \doteq H_{\epsilon} \doteq r\left(1-\frac{2M}{r}+\frac{\mathcal{D}}{r^2}\right)F_{\epsilon}.\]
			
			For any $\epsilon > 0$,  it is straightforward to see that there exists $D\left(\epsilon,\left(\phi_0,\phi_1\right)\right) < \infty$ such that
			\begin{equation}\label{hepisokwithme}
				\sup_{\omega \in (0,+\infty)}\left[  \int_{-\infty}^{+\infty} \left(|  (1+s_{-}^2)\hat{H}_{\epsilon}|(\omega,S) + |\rd_{\omega}\hat{H}_{\epsilon}|(\omega,S) \right)dS  \right] \leq D\left(\epsilon,\phi_0,\phi_1\right).
			\end{equation}
			
			We fix $R_0>0$ for the rest of the proof. In view of~\eqref{hepisokwithme}, applying Proposition~\ref{decay.prop}, we thus get for all $r \in [r_+,R_0]$: \begin{align}\label{decay1}\begin{split}
					&\bigl| \int_{(1-\theta)m}^{m} e^{-i \omega t^* } \check{u}_{\epsilon}(\omega,r) d\omega - C(M,e,m^2)  \cdot \mathcal{L}[\hat{H}_{\ep}]  \cdot \uhs(r) \cdot (t^*)^{-\frac{5}{6}}\cdot [\sum_{q=1}^{+\infty} (-\gamma_m)^{q-1}q^{\frac{1}{3}}e^{it^*\Phi_q(t^*)+i [\phi_-(m)-\frac{\pi}{4}]} ] \bigr| 
					\\ &\qquad \ls D\left(\epsilon,\phi_0,\phi_1\right) \cdot  (t^*)^{-1+\delta} . \end{split}\end{align} 
			Similarly, combining~\eqref{decay1} with Proposition~\ref{decay.easy.prop} and Proposition~\ref{decay.trivial.prop} we obtain, \begin{align}\label{decay2}\begin{split}
					&\bigl| \int_{0}^{+\infty} e^{-i \omega t^* } \check{u}_{\epsilon}(\omega,r) d\omega - C(M,e,m^2)  \cdot \mathcal{L}[\hat{H}_{\epsilon}]  \cdot \uhs(r) \cdot (t^*)^{-\frac{5}{6}}\cdot [\sum_{q=1}^{+\infty} (-\gamma_m)^{q-1}q^{\frac{1}{3}}e^{it^*\Phi_q(t^*)+i [\phi_-(m)-\frac{\pi}{4}]} ] \bigr|
					\\ &\qquad \ls D\left(\epsilon,\phi_0,\phi_1\right) \cdot  (t^{*})^{-1+\delta} . \end{split}\end{align} 
			
			Now, since $\psi_{\ep}$ is real-valued, it means that $\check{u}(-\omega,r)= \overline{\check{u}_{\epsilon}}(\omega,r)$. Hence $$ \int_{-\infty}^{0} e^{-i \omega t^* } \check{u}_{\epsilon}(\omega,r) d\omega  =  \int^{+\infty}_{0} e^{i \omega t^* } \check{u}_{\epsilon}(-\omega,r) d\omega= \overline{ \int_{0}^{+\infty} e^{-i \omega t^* } \check{u}_{\epsilon}(\omega,r) d\omega}.$$ 
			
			Hence $$ \psi_{\ep}(t^*,r) = \frac{1}{2\pi}\int_{-\infty}^{+\infty}  e^{-i \omega t^* } \check{u}_{\epsilon}(\omega,r) d\omega = \frac{1}{\pi}\Re \left(  \int_{0}^{+\infty}  e^{-i \omega t^* } \check{u}_{\epsilon}(\omega,r) d\omega \right) . $$ Combining this equality to \eqref{decay2}, recalling we defined $\tilde{C}(M,e,m^2)=\frac{1}{\pi} C(M,e,m^2)$ gives, writing the constant $\mathcal{L}[\hat{H}_{\ep}] =|\mathcal{L}[\hat{H}_{\ep}]|  e^{i \theta[\hat{H}_{\ep}] }$: 
			\begin{align}\label{profile.before.limit}\begin{split}
					&	\bigl|  \psi_{\ep}(t,s) - \tilde{C}(M,e,m^2)  \cdot |\mathcal{L}[\hat{H}_{\ep}]|  \cdot |\uhs|(s) \cdot (t^{*})^{-\frac{5}{6}}\cdot \left[\sum_{q=1}^{+\infty} (-\gamma_m)^{q-1}q^{\frac{1}{3}} \cos(t^{*}\Phi_q(t^{*})+ \phi_-(m)-\frac{\pi}{4}+\check{\varphi}_H(s)+ \theta[\hat{H}_{\ep}] )\right] \bigr| 
					\\ &\qquad \ls D\left(\epsilon,\phi_0,\phi_1\right) \cdot  (t^*)^{-1+\delta}.
			\end{split}\end{align} 
			
			For all $\epsilon > 0$ sufficiently small, $\psi_{\ep} = \psi$ when $t^* \geq 1$. Thus, $\mathcal{L}[\hat{H}_{\ep}]$ must in fact be independent of $\epsilon$ when $\epsilon$ is sufficiently small.  We may thus compute $\mathcal{L}[\hat{H}_{\ep}]$ in the limit $\ep\rightarrow 0$, while taking, $\epsilon$ to be $1/2$ in \eqref{profile.before.limit}. A short computation shows that $$ \lim_{\ep \rightarrow 0} \hat{H}_{\epsilon}(m,s(r)) = e^{-im p(r)}r\left(1-\frac{2M}{r}+\frac{\mathcal{D}}{r^2}\right)\left(-imA_1(r) \phi +2A_2(r)\phi + A_1(r) \partial_{t^*}\phi + 2A_3(r)\partial_r\phi\right)|_{t^*=0}.$$
			
			In conclusion, we may rewrite the integration with respect to the regular $r$-coordinate and use the dominated convergence theorem to obtain   \begin{equation}
				\mathcal{L}[\hat{H}]  =  \int_{r_+}^{+\infty} e^{-imp(r) }u_H(m,s(r))\left(-imA_1(r) \phi +2A_2(r)\phi + A_1(r) \partial_{t^*}\phi + 2A_3(r)\partial_r\phi\right)|_{t^*=0}\ r\, dr
			\end{equation} As per the above discussion, combining this with \eqref{profile.before.limit} finishes the proof.

		\end{proof}
		\appendix
		
		\section{Special functions}

		In this section, we collect definitions and facts about special functions that we have used in the proofs. We will use the notation $f = O(g)$ to mean that there exists a constant $C$ such that $|f(x)| \leq C |g(x)|$ for $x$ in a given prescribed range.
		
		\subsection{Bessel functions}\label{bessel.appendix}
		
		The Bessel functions $J_1: (1,\infty) \to \RR$ and $Y_1: (1,\infty) \to \RR$ are solutions to the differential equation:
		\begin{equation}\label{eq:besseldef}
			x^2 \frac{d^2w}{dx^2} + x \frac{dw}{dx} + (x^2-1) w = 0.
		\end{equation}
		
		We have the follow asymptotic statements for $J_1$ and $Y_1$ as $x\to +\infty$:
		\begin{align}
			\left|J_1(x)-\sqrt{\frac{2}{\pi x}}\cos\left(x-\frac{3\pi}{4}\right)\right| = O\left(x^{-3/2}\right),\qquad \left|Y_1(x) - \sqrt{\frac{2}{\pi x}}\sin\left(x-\frac{3\pi}{4}\right)\right| = O\left(x^{-3/2}\right),
		\end{align}
		\begin{align}
			\left|J_1'(x)+\sqrt{\frac{2}{\pi x}}\sin\left(x-\frac{3\pi}{4}\right)\right| = O\left(x^{-3/2}\right),\qquad \left|Y'_1(x) - \sqrt{\frac{2}{\pi x}}\cos\left(x-\frac{3\pi}{4}\right)\right| = O\left(x^{-3/2}\right).
		\end{align}
		\subsection{Airy functions $\Ai$ and $\Bi$ and the weight functions $E$, $M$, and $N$}
		\label{app:airy}
		The functions $\Ai: \RR \to \RR$ and $\Bi: \RR\to \RR$ are solutions to the differential equation
		\begin{equation}\label{eq:airydef}
			\frac{d^2w}{dx^2} = x w. 
		\end{equation}
		satisfying the following asymptotics as $x \to -\infty$:
		\begin{equation*}
			\Big|\Ai(x) - \pi^{-\frac 12} |x|^{-\frac 14} \sin \Big( \frac 23 |x|^{\frac 32}+ \frac \pi 4\Big)\Big| = O(|x|^{- \frac 74}),  \qquad \Big|\Bi(x) - \pi^{-\frac 12} |x|^{-\frac 14} \cos\Big( \frac 23 |x|^{\frac 32}+ \frac  \pi 4 \Big)\Big| = O(|x|^{-\frac 74}),
		\end{equation*}
		
		\begin{equation*}
			\Big|\Ai'(x)  + \pi^{-\frac 12} |x|^{\frac 14} \cos \Big( \frac 23 |x|^{\frac 32}+ \frac \pi 4\Big)\Big| =O(|x|^{-\frac 54}),  \qquad \Big|\Bi'(x)  - \pi^{-\frac 12} |x|^{\frac 14} \sin\Big( \frac 23 |x|^{\frac 32}+ \frac  \pi 4 \Big)\Big| = O(|x|^{-\frac 54}),
		\end{equation*}
		
		We also have the following asymptotics as $x \to+ \infty$:
		\begin{equation*}
			\left|\Ai(x) - \frac{1}{2\pi^{1/2}x^{1/4}}\exp\left(-\frac{2}{3}x^{3/2}\right)\right| = O\left(x^{-7/4}\exp\left(-\frac{2}{3}x^{3/2}\right)\right), 
		\end{equation*}
		\begin{equation*}
			\left|\Bi(x) - \frac{1}{\pi^{1/2}x^{1/4}}\exp\left(\frac{2}{3}x^{3/2}\right)\right| = O\left(x^{-7/4}\exp\left(\frac{2}{3}x^{3/2}\right)\right),
		\end{equation*}
		
		\begin{equation*}
			\left|\Ai'(x) +   \frac{x^{1/4}}{2\pi^{1/2}}\exp\left(-\frac{2}{3}x^{3/2}\right)\right| = O\left(x^{-5/4}\exp\left(-\frac{2}{3}x^{3/2}\right)\right),
		\end{equation*}
		\begin{equation*}
			\left|\Bi'(x) - \frac{x^{1/4}}{\pi^{1/2}}\exp\left(\frac{2}{3}x^{3/2}\right)\right| = O\left(x^{-5/4}\exp\left(\frac{2}{3}x^{3/2}\right)\right).
		\end{equation*} 
		
		We also recall the definition of the functions $E(x)$, $M(x)$, $N(x)$ which are  introduced in Section 2 of Chapter 11 of~\cite{olver} and will be used throughout the paper. We first define the function $E(x)$ as follows:
		\begin{equation}
			E(x) = \left\{\begin{array}{ll}
				\Big(\frac{\Bi(x)}{\Ai(x)}\Big)^{\frac 12} & \text{for } x \geq c,\\
				1 & \text{for } x \leq c.
			\end{array} \right.
		\end{equation}
		Here, $c \approx -0.37$ is the negative root  of the smallest absolute value of the equation $\Ai(x) = \Bi(x)$. We then pick $M(x)$ and $N(x)$ such that
		\begin{equation}
			M(x)= \left\{\begin{array}{ll}
				\big(\Ai(x)\Bi(x)\big)^{\frac 12} & \text{for } x \geq c,\\
				\big(\Ai^2(x)+\Bi^2(x)\big)^{\frac 12} & \text{for } x \leq c,
			\end{array} \right. \qquad    N(x) =  \left\{\begin{array}{ll}\left(\frac{(Ai'(x))^2Bi^2(x) + (Bi'(x))^2Ai^2(x)}{Ai(x)Bi(x)}\right)^{1/2} & \text{for } x \geq c,\\
				\left((Ai'(x))^2 + (Bi'(x))^2\right)^{1/2} & \text{for } x \leq c.
			\end{array} \right.
		\end{equation}
		
		The functions $E$, $M$, and $N$ are nowhere vanishing and we have the following asymptotics:
		\begin{equation}\label{EMN.asymp}
			\begin{aligned}
				&E(x) \sim 2^{\frac 12} \exp\big(\frac 23 x^{3/2} \big), \qquad M(x) \sim \pi^{-\frac 12} x^{- \frac 14},\qquad N(x) \sim \pi^{-\frac{1}{2}}x^{\frac{1}{4}} \qquad \text{as} \qquad x \to+ \infty,\\
				&E(x) = 1,\qquad M(x) \sim \pi^{- \frac 12} (-x)^{- \frac 14},\qquad N(x) \sim  \pi^{-\frac{1}{2}}(-x)^{\frac{1}{4}}\qquad \text{as} \qquad x \to -\infty.
			\end{aligned}
		\end{equation}
		
		In practice, the asymptotics \eqref{EMN.asymp} will be used jointly with the following calculation rules (see \cite{olver}) that we will apply consistently \begin{equation}\begin{split}
				&|Ai|(x) \leq M(x) \cdot E^{-1}(x),\\&|Bi|(x) \leq M(x) \cdot E(x),\\
				&|Ai'|(x) \leq N(x) \cdot E^{-1}(x),\\
				&|Bi'|(x) \leq N(x) \cdot E(x).\\
			\end{split}
		\end{equation}
		
		\section{Volterra equations}
		The following theorem is a mild generalization of Theorem~10.1 of Chapter 6 in~\cite{olver} to allow for a general inhomogeneous term on the right hand side.
		\begin{thm}\label{thm:volterra}
			Let $\beta \in \RR \cup\{ +\infty\}$ and $\alpha \in \RR \cup \{-\infty\}$ with $\alpha < \beta$. 
			We consider the following Volterra integral equation for an unkown function $h(\xi)$:
			\begin{equation}\label{volt}
				h(\xi) = \int_{\xi}^{\beta} K(\xi, v) \{\psi_0(v) h(v) + \psi_1(v) h'(v) \}dv + R(\xi),
			\end{equation}
			Here, the complex-valued functions $\psi_0(v)$, $\psi_1(v)$ are continuous on  $(\alpha, \beta)$, the complex-valued function $R(\xi)$  and kernel $K(\xi, v)$ and their derivatives $\rd_\xi K$, $\rd^2_\xi K$, $\rd_{\xi}R$, $\rd^2_{\xi}R$ are continuous on $(\alpha, \beta)$. We also suppose that
			\begin{enumerate}
				\item $K(\xi, \xi) = 0$, 
				\item  When $\xi \in (\alpha, \beta)$ and $v \in [\xi, \beta)$, we have, for $j \in \{0,1,2\}$:
				\begin{equation}
					|\rd^j_\xi K(\xi, v)| \leq P_j(\xi) Q(v), 
				\end{equation}
				Here, $P_j(\xi)$ and $Q(v)$ are continuous, positive real-valued functions.
				\item There exists a positive decreasing function $\Phi(\xi)$ so that for $j \in \{0,1\}$:
				\[\left|\partial_{\xi}^jR\left(\xi\right)\right| \leq P_j(\xi)\Phi(\xi).\]
				\item When $\xi \in (\alpha, \beta)$, the following expressions converge:
				\begin{equation}
					\Psi_0(\xi) = \int_{\xi}^{\beta} |\psi_0(v)| dv, \quad \Psi_1(\xi) = \int_{\xi}^{\beta} |\psi_1(v)| dv,
				\end{equation}
				and the following suprema are finite:
				\begin{equation}
					\kappa_0 =  \sup_{\xi \in (\alpha, \beta)} P_0(\xi) Q(\xi), \quad \kappa_1 = \sup_{\xi \in (\alpha, \beta)} P_1(\xi) Q(\xi).
				\end{equation}
				except in the case $\psi_1(v) = 0$, in which case $\kappa_1$ need not exist.
			\end{enumerate}
			Then, there exists a unique $C^2$ solution $h(\xi)$ to~\eqref{volt}, and we have, for $\xi \in (\alpha, \beta)$,
			\begin{equation}\label{finalvolterraestimate}
				\frac{\left|h(\xi)\right|}{ P_0(\xi)}, \frac{|\partial_{\xi}h(\xi)|}{P_1(\xi)} \leq  \Phi(\xi)\exp \Big(\kappa_0 \Psi_0(\xi) + \kappa_1 \Psi_1(\xi) \Big).
			\end{equation}
		\end{thm}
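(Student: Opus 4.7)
The plan is to solve~\eqref{volt} by Picard iteration and to track both $|h|/P_0$ and $|h'|/P_1$ simultaneously. Set $h_{-1}\equiv 0$ and
\[
h_{n+1}(\xi) \doteq R(\xi) + \int_{\xi}^{\beta} K(\xi,v)\{\psi_0(v)h_n(v) + \psi_1(v)h_n'(v)\}\,dv,\qquad n\ge 0.
\]
Because $K(\xi,\xi)=0$, differentiation under the integral is justified at the first order and yields
\[
h_{n+1}'(\xi) = R'(\xi) + \int_{\xi}^{\beta} \partial_\xi K(\xi,v)\{\psi_0(v)h_n(v)+\psi_1(v)h_n'(v)\}\,dv,
\]
and likewise for $h_{n+1}''$ using $\partial_\xi^2 K$ (here the resulting boundary term at $v=\xi$ is controlled by $P_1(\xi)Q(\xi)$ and the coefficients $\psi_j$). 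Let $d_n \doteq h_n - h_{n-1}$; then $d_0 = R$ satisfies $|d_0|\le P_0\Phi$, $|d_0'|\le P_1\Phi$ by hypothesis (iii), and for $n\ge 1$ the difference $d_n$ solves the homogeneous version of~\eqref{volt} with source $0$, driven by $d_{n-1}$.

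The main estimate I want to prove by induction is
\[
\frac{|d_n(\xi)|}{P_0(\xi)}\ ,\ \frac{|d_n'(\xi)|}{P_1(\xi)}\ \le\ \Phi(\xi)\,\frac{[\kappa_0\Psi_0(\xi)+\kappa_1\Psi_1(\xi)]^n}{n!}\qquad (n\ge 0).
\]
The inductive step uses, for the $|d_{n+1}|$ bound, the kernel estimate $|K(\xi,v)|\le P_0(\xi)Q(v)$, the monotonicity of $\Phi$ (so that $\Phi(v)\le\Phi(\xi)$ for $v\ge\xi$), the pointwise bounds $P_0(v)Q(v)\le\kappa_0$ and $P_1(v)Q(v)\le\kappa_1$, and crucially the observation that
\[
\frac{d}{dv}\bigl(\kappa_0\Psi_0(v)+\kappa_1\Psi_1(v)\bigr) = -\kappa_0|\psi_0(v)|-\kappa_1|\psi_1(v)|,
\]
so that $\int_\xi^\beta\{\kappa_0|\psi_0|+\kappa_1|\psi_1|\}\frac{[\kappa_0\Psi_0+\kappa_1\Psi_1]^n}{n!}\,dv$ telescopes into $\frac{[\kappa_0\Psi_0(\xi)+\kappa_1\Psi_1(\xi)]^{n+1}}{(n+1)!}$. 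The estimate for $|d_{n+1}'|/P_1$ is identical with $P_0$ replaced by $P_1$ in the kernel bound, which is why it was essential to impose~(ii) at \emph{both} orders $j=0,1$.

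Summing the geometric-type series $h\doteq\sum_{n\ge 0} d_n$ gives absolute and locally uniform convergence, together with the announced bound~\eqref{finalvolterraestimate} via $\sum_n [\kappa_0\Psi_0+\kappa_1\Psi_1]^n/n! = \exp(\kappa_0\Psi_0+\kappa_1\Psi_1)$. Since the convergence of $h_n$ and $h_n'$ is uniform on compact subsets of $(\alpha,\beta)$, one passes to the limit in the Volterra equation and in its first derivative to see that $h$ is a $C^1$ solution; a second differentiation then shows $h\in C^2$ using continuity of $\partial_\xi^2 K$ and $R''$.

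For uniqueness, given two solutions $h^{(1)}$, $h^{(2)}$, their difference $\delta$ solves the homogeneous equation ($R\equiv 0$) and is $C^1$, hence on any compact $[a,b]\Subset(\alpha,\beta)$ one has $|\delta|/P_0 + |\delta'|/P_1\le C_{a,b}$. Iterating the homogeneous Volterra relation exactly as above propagates this bound to $C_{a,b}\,[\kappa_0\Psi_0(\xi)+\kappa_1\Psi_1(\xi)]^n/n!$ for every $n$, which tends to $0$; hence $\delta\equiv 0$. The main (indeed only) obstacle is the bookkeeping in the inductive step, making sure that the telescoping identity for $\kappa_0\Psi_0+\kappa_1\Psi_1$ combines cleanly with both the $P_0(\xi)$ prefactor (for the $|d_{n+1}|$ bound) and the $P_1(\xi)$ prefactor (for the $|d_{n+1}'|$ bound) without wasting a $\Phi(\xi)$ factor; monotonicity of $\Phi$ is exactly what allows this.
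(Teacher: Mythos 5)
Your Picard-iteration argument is exactly the standard proof (it is how Olver establishes Theorem~10.1 of Chapter~6, which the paper cites in lieu of giving its own proof). The simultaneous induction on $|d_n|/P_0$ and $|d_n'|/P_1$, the use of $P_j Q\le\kappa_j$, the monotonicity of $\Phi$, and the telescoping identity $\int_\xi^\beta(\kappa_0|\psi_0|+\kappa_1|\psi_1|)\frac{[\kappa_0\Psi_0+\kappa_1\Psi_1]^n}{n!}\,dv=\frac{[\kappa_0\Psi_0(\xi)+\kappa_1\Psi_1(\xi)]^{n+1}}{(n+1)!}$ (using $\Psi_j(\beta)=0$) are all deployed correctly, and the $j=2$ boundary term $-\partial_\xi K(\xi,\xi)(\psi_0 h+\psi_1 h')$ is handled correctly to give the $C^2$ statement. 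So the existence and the estimate~\eqref{finalvolterraestimate} are in good shape.

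The one step you should tighten is uniqueness. You observe that $\delta=h^{(1)}-h^{(2)}$ is bounded on every compact $[a,b]\Subset(\alpha,\beta)$ and then iterate the homogeneous Volterra relation; but that relation integrates over $(\xi,\beta)$, which is not contained in any such compact set when $\beta=+\infty$ (the case actually used in the body of the paper), so the compact-set bound $C_{a,b}$ does not feed the iteration. The standard fix is to work in the class of solutions for which $m(\xi)\doteq\sup_{\xi\le v<\beta}\bigl(|\delta(v)|/P_0(v)+|\delta'(v)|/P_1(v)\bigr)$ is finite for $\xi$ near $\beta$ (a class that contains the constructed solution, and is the class in which the Volterra integrals converge). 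Choosing $\xi_0$ with $\kappa_0\Psi_0(\xi_0)+\kappa_1\Psi_1(\xi_0)<1$, one application of the homogeneous relation gives $m(\xi_0)\le m(\xi_0)\bigl(\kappa_0\Psi_0(\xi_0)+\kappa_1\Psi_1(\xi_0)\bigr)$, hence $m(\xi_0)=0$ and $\delta\equiv 0$ on $[\xi_0,\beta)$; the Volterra relation then becomes genuinely local on $(\alpha,\xi_0]$ and the usual Gronwall/iteration argument concludes. With that amendment your sketch is complete.
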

		\begin{rmk}\label{switchorder}It is straightforward to establish an alternative version of this result where all $\int_{\xi}^{\beta}$ are instead replaced by $\int_{\alpha}^{\xi}$.
		\end{rmk}
		\begin{rmk}\label{volterramodify}
			It is also useful to have a version of Theorem~\ref{thm:volterra} where we allow the functions $\psi_0$ and $\psi_1$ to be non-integrable in $\xi$ as long as this is balanced by the decay of $\Phi(\xi)$ and $R(\xi)$:
			
			We then re-define $\Psi_0$ and $\Psi_1$ by
			\[\Psi_0\left(\xi\right) = \Phi^{-1}(\xi)\int_{\xi}^{\beta}\left|\psi_0(v)\right|\Phi(v)\, dv,\qquad \Psi_1\left(\xi\right) = \Phi^{-1}(\xi)\int_{\xi}^{\beta}\left|\psi_1(v)\right|\Phi(v)\, dv,\]
			and assume that these integrals converge and that, moreover, $\sup_{\xi}\left[\kappa_0\Psi_0+\kappa_1\Psi_1\right]$ is sufficiently small. Then the conclusions of Theorem~\ref{thm:volterra} continue to hold, where we replace $\exp\left(\kappa_0\Psi_0+\kappa_1\Psi_1\right)$ by $2$.
		\end{rmk}
		
		In the next theorem, we consider the case when our  equation depends on an auxiliary parameter $\Par$.
		\begin{thm}\label{thm:volterraparam}We consider again the setting of Theorem~\ref{thm:volterra} and assume $P_0$, $P_1$, $\Psi_0$, $\Psi_1$, and $\Phi$ have already been fixed in that context. We then furthermore suppose that $K$, $R$, $\psi_0$, and $\psi_1$ depend additionally on a real parameter $\Par \in (\Par_0,\Par_1)$:
			\begin{equation}\label{withtheparamsvolt}
				h(\xi,\Par) = \int_{\xi}^{\beta} K(\xi, v,\Par) \{ \psi_0(v,\Par) h(v,\Par) + \psi_1(v,\Par) h'(v,\Par) \}dv + R(\xi,\Par).
			\end{equation}
			We now assume that the assumptions of Theorem~\ref{thm:volterra} hold uniformly for $\Par \in (\Par_0,\Par_1)$ and make the additional assumptions that $K$, $\psi_0$, $\psi_1$, and $R$ are all $C^1$ functions of $\Par$ and that there exists functions $\tilde{P}_j$ for $j \in \{0,1\}$ and $\tilde{Q}$ so that 
			\begin{enumerate}
				\item  When $\xi \in (\alpha, \beta)$ and $v \in [\xi, \beta)$, we have, for $j \in \{0,1\}$:
				\begin{equation}
					\sup_{\Par \in (\Par_0,\Par_1)}|\rd^j_\xi K(\xi, v,\Par)| \leq \tilde{P}_j(\xi) \tilde{Q}(v), 
				\end{equation}
				Here, $ \tilde{P}_j(\xi)$ and $ \tilde{Q}(v)$ are continuous, positive real-valued functions.
				\item 
				
				There exists a decreasing function $\tilde{\Phi}(\xi)$ so that for $j \in \{0,1\}$:
				\[\sup_{\Par \in (\Par_0,\Par_1)}\left[\left|\partial_\Par \partial^j_{\xi}R\right| + \left|\tilde{R}_j\left(\xi,\Par\right)\right|\right] \leq \tilde{P}_j(\xi)\tilde{\Phi}(\xi),\]
				where we define $\tilde{R}_j$ by
				\begin{align}\label{tiderj1}
					\tilde{R}_0 \doteq \exp\left(\kappa_0\Psi_0 + \kappa_1\Psi_1\right)\int_{\xi}^{\beta}\Phi\left[\left(\left|\partial_{\lambda}K\right|\left|\psi_0\right| + \left|K\right|\left|\partial_{\lambda}\psi_0\right|\right)P_0 +\left(\left|\partial_{\lambda}K\right|\left|\psi_1\right| + \left|K\right|\left|\partial_{\lambda}\psi_1\right|\right)P_1\right]\, dv,
				\end{align}
				\begin{align}\label{tilderj2}
					&\tilde{R}_1 \doteq \exp\left(\kappa_0\Psi_0 + \kappa_1\Psi_1\right)\int_{\xi}^{\beta}\Phi\left[\left(\left|\partial_{\lambda}\partial_{\xi}K\right|\left|\psi_0\right| + \left|\partial_{\xi}K\right|\left|\partial_{\lambda}\psi_0\right|\right)P_0 +\left(\left|\partial_{\lambda}\partial_{\xi}K\right|\left|\psi_1\right| + \left|\partial_{\xi}K\right|\left|\partial_{\lambda}\psi_1\right|\right)P_1\right]\, dv
					\\ \nonumber &\qquad + \exp\left(\kappa_0\Psi_0 + \kappa_1\Psi_1\right)\Phi\left(\left|\partial_{\lambda}K\right|\left(\left|\psi_0\right|P_0 + \left|\psi_1\right|P_1\right)\right).
				\end{align}
			\end{enumerate}
			Then $h$ is continuously differentiable with respect to $\Par$ and satisfies the following estimates:
			\begin{equation}\label{finalvolterraestimatewithaparam}
				\sup_{\Par\in (\Par_0,\Par_1)}  \frac{\left|\partial_\Par h(\xi,\Par)\right|}{ \tilde{P}_0(\xi)},  \sup_{\Par\in (\Par_0,\Par_1)}\frac{|\partial^2_{\Par \xi}h(\xi,\lambda)|}{\tilde{P}_1(\xi)} \leq  \tilde{\Phi}(\xi)\exp \Big(\kappa_0 \Psi_0(\xi) + \kappa_1 \Psi_1(\xi) \Big).
			\end{equation}
		\end{thm}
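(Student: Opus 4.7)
The plan is to formally differentiate the Volterra equation \eqref{withtheparamsvolt} in $\Par$, derive a new Volterra equation of the same type for the putative derivative $g(\xi,\Par) := \partial_\Par h(\xi,\Par)$, and apply Theorem~\ref{thm:volterra} to that derived equation. A separate difference-quotient argument will then justify that the solution produced this way really is $\partial_\Par h$. Differentiating \eqref{withtheparamsvolt} in $\Par$ formally yields
\begin{equation*}
g(\xi,\Par) = \int_\xi^\beta K(\xi,v,\Par)\{\psi_0(v,\Par) g(v,\Par) + \psi_1(v,\Par)\partial_\xi g(v,\Par)\}\,dv + \tilde R(\xi,\Par),
\end{equation*}
with forcing $\tilde R(\xi,\Par) = \partial_\Par R(\xi,\Par) + \int_\xi^\beta \bigl[\partial_\Par K\cdot(\psi_0 h + \psi_1\partial_\xi h) + K\cdot(\partial_\Par\psi_0\,h + \partial_\Par\psi_1\,\partial_\xi h)\bigr]\,dv$. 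This has exactly the structure of \eqref{volt}, with the same kernel $K$ and coefficients $\psi_0,\psi_1$, so Theorem~\ref{thm:volterra} is applicable once I verify its hypotheses with $(\tilde P_j,\tilde Q,\tilde\Phi)$ replacing $(P_j,Q,\Phi)$.

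The kernel bounds via $\tilde P_j,\tilde Q$ are exactly the standing hypotheses, so the work is to show $|\partial_\xi^j\tilde R(\xi,\Par)| \leq \tilde P_j(\xi)\tilde\Phi(\xi)$ for $j=0,1$. The contribution from $\partial_\Par R$ is controlled by hypothesis directly. For the integral contributions, I would insert the already-known bounds \eqref{finalvolterraestimate} from Theorem~\ref{thm:volterra},
\begin{equation*}
|h(v,\Par)| \leq P_0(v)\Phi(v)e^{\kappa_0\Psi_0+\kappa_1\Psi_1}, \qquad |\partial_\xi h(v,\Par)| \leq P_1(v)\Phi(v)e^{\kappa_0\Psi_0+\kappa_1\Psi_1},
\end{equation*}
into the integrand, pull the uniform exponential outside, and observe that what remains reproduces exactly the definition \eqref{tiderj1} of $\tilde R_0$; the hypothesis $|\tilde R_0| \leq \tilde P_0\tilde\Phi$ then closes the case $j=0$. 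For $j=1$ I differentiate each integral in $\xi$, noting that the boundary contributions at $v=\xi$ of the terms involving $K$ vanish by the standing assumption $K(\xi,\xi,\Par)\equiv 0$, while the boundary contribution of the $\int\partial_\Par K\cdot(\cdots)\,dv$ term produces the non-integral piece appearing in the second line of \eqref{tilderj2}; matching all remaining interior contributions against \eqref{tilderj2} yields $|\partial_\xi\tilde R| \leq \tilde P_1\tilde\Phi$. Theorem~\ref{thm:volterra} applied to the derived equation then produces $g$ uniquely and with the bounds \eqref{finalvolterraestimatewithaparam}.

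To close the proof, I need to show that this $g$ coincides with $\partial_\Par h$ (and not merely a formal derivative). Fix $\Par$ and set $D_\delta h(\xi,\Par) := \delta^{-1}(h(\xi,\Par+\delta) - h(\xi,\Par))$. Subtracting the Volterra equations for $h$ at $\Par+\delta$ and at $\Par$, dividing by $\delta$, and Taylor-expanding $K,\psi_0,\psi_1,R$ in $\Par$ with integral remainder, one finds that $D_\delta h - g$ solves a Volterra equation of the same form with an inhomogeneous term whose $\tilde P_j\tilde\Phi$-weighted norm tends to zero as $\delta\to 0$ (by the $C^1$-in-$\Par$ hypothesis, the uniform bounds on $h,\partial_\xi h$, and dominated convergence). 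A third invocation of Theorem~\ref{thm:volterra} on this difference equation shows $D_\delta h - g\to 0$ in the weighted sup norm, giving $\partial_\Par h = g$ together with its continuity.

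The main obstacle is the computation of $\partial_\xi \tilde R$ in Step~2 and in particular the careful matching of the resulting expression with the somewhat intricate $\tilde R_1$ of \eqref{tilderj2}. Tracking which boundary terms at $v=\xi$ vanish (those with a factor of $K$) and which survive (those coming from differentiating the $\int\partial_\Par K\cdot(\cdots)\,dv$ integral in $\xi$, whose $v=\xi$ boundary contribution is \emph{not} automatically zero unless one additionally assumes $\partial_\Par K(\xi,\xi,\Par)=0$), and keeping the book-keeping in agreement with the stated form of $\tilde R_1$, is the only delicate part; everything else is a direct application of Theorem~\ref{thm:volterra}.
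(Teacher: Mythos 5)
Your argument is the same as the paper's: formally differentiate in $\Par$ to obtain a Volterra equation for $\nu = \partial_\Par h$ with the same kernel $K$ and coefficients $\psi_0,\psi_1$ and with the new forcing $\tilde{S} = \partial_\Par R + \int_\xi^\beta\left[\partial_\Par K(\psi_0 h + \psi_1 h') + K(\partial_\Par\psi_0\, h + \partial_\Par\psi_1\, h')\right]dv$, verify the hypotheses of Theorem~\ref{thm:volterra} for this derived equation against the definitions of $\tilde{R}_0,\tilde{R}_1$, invoke that theorem, and then justify $\nu = \partial_\Par h$ by passing to difference quotients. One small refinement to the book-keeping worry you flag at the end: since the standing hypothesis $K(\xi,\xi,\Par)\equiv 0$ holds for every $\Par$ and $K$ is $C^1$ in $\Par$, the boundary value $\partial_\Par K(\xi,\xi,\Par)$ vanishes automatically, so the second line of~\eqref{tilderj2} is in fact identically zero and appears in the statement only as a harmless conservative safety margin rather than an extra condition you must secure.
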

		\begin{proof}
			If we formally differentiate~\eqref{withtheparamsvolt} with respect to $\Par$, we end up with the following equation for $\nu\left(\xi,\Par\right) \doteq \partial_\Par h$
			\begin{equation}\label{eqnwithnu}
				\nu =  \int_{\xi}^{\beta} K(\xi, v,\Par) \{ \psi_0(v,\Par) \nu(v,\Par) + \psi_1(v,\Par) \partial_v\nu(v,\Par) \}dv + \tilde{S}(\xi,\Par),
			\end{equation}
			where
			\[\tilde{S} = \partial_\Par R + \int_{\xi}^{\beta}\left[\partial_\Par K\left(\psi_0h+\psi_1\partial_vh\right) + K\left(\partial_\Par \psi_0 h + \partial_\Par \psi_1\partial_vh\right)\right]\, dv.\]
			
			The assumptions of the theorem allow us to apply Theorem~\ref{thm:volterra} to solve for $\nu$ so that moreoever~\eqref{finalvolterraestimatewithaparam} holds with $\partial_\Par h$ replaced by $\nu$. By a mild adaption of this argument applied to the difference quotients $h^{(k)} \doteq k^{-1}\left(h(\xi,\Par+k)-h(\xi,\Par)\right)$, it is straightforward to show that $\nu = \partial_\Par h$. 
		\end{proof}
		\begin{rmk}\label{switchorderparam}We have the analogue of Remark~\ref{switchorder} in this context. More specifically, we have an alternative version of the theorem where all integrals $\int_{\xi}^{\beta}$ are by $\int_{\alpha}^{\xi}$. 
		\end{rmk}
		\begin{rmk}\label{volterramodifyparam}
			We also have the analogue of Remarks~\ref{volterramodify} in this context. More specifically, in the event that $\psi_0$ and $\psi_1$ are not integrable along $(\xi,\beta)$, we may redefine $\Psi_0$ and $\Psi_1$ by 
			\[\Psi_0(\xi) = \tilde{\Phi}^{-1}(\xi)\int_{\xi}^{\beta}\left|\psi_0(v)\right|\tilde{\Phi}(v)\, dv,\qquad \Psi_1(\xi) = \tilde{\Phi}^{-1}(\xi)\int_{\xi}^{\beta}\left|\psi_1(v)\right|\tilde{\Phi}(v)\, dv.\]
			Assuming then that $\sup_{\xi}\left[\kappa_0\Psi_0+\kappa_1\Psi_1\right]$ is sufficiently small, then in the final estimate~\eqref{finalvolterraestimatewithaparam} we may drop the $\exp\left(\kappa_0\Psi_0 + \kappa_1\Psi_1\right)$ and replace it with with $2$.  
		\end{rmk}
		\begin{rmk}\label{volterraholomorphic}Finally, a straightforward adaption of the proof allows us to consider the case when $\Par$ lies in a compact subset $\Omega$ of the complex plane, and the dependence of $K$, $\psi_0$, $\psi_1$, and $R$ on $\Par$ is holomorphic in $\Omega$. In this case, we will also obtain that $h$ is holomorphic in $\Omega$.
		\end{rmk}

		\bibliographystyle{plain}
		\bibliography{bibliography.bib}
		
	\end{document}